\newcounter{mnotecount}[section]
\renewcommand{\themnotecount}{\thesection.\arabic{mnotecount}}
\newcommand{\mnote}[1]
{\protect{\stepcounter{mnotecount}}$^{\mbox{\footnotesize $%
\!\!\!\!\!\!\,\bullet$\themnotecount}}$ \marginpar{
\raggedright\tiny\em $\!\!\!\!\!\!\,\bullet$\themnotecount: #1} }
\newcommand{\lb}{\label}
\newcommand{\be}{\begin{equation}}
\newcommand{\ee}{\end{equation}}
\newcommand{\ben}{\begin{eqnarray*}}
\newcommand{\een}{\end{eqnarray*}}
\newcommand{\bea}{\begin{eqnarray}}
\newcommand{\eea}{\end{eqnarray}}
\newcommand{\md}{{\mathrm{d}}}
\newcommand{\beq}{\begin{equation}}
\newcommand{\eeq}{\end{equation}}
\newcommand{\beqn}{\begin{equation}\nonumber}
\newcommand{\bean}{\begin{eqnarray}\nonumber}
\DeclareMathAlphabet{\mathpzc}{OT1}{pzc}{m}{it}
\def\cE{{\mathcal E}}
\def\cR{{\mathcal R}}
\def\cH{{\mathcal H}}
\def\cB{{\mathcal B}}
\def\cN{{\mathcal N}}
\def\cO{{\mathcal O}}
\def\cQ{{\mathcal Q}}
\def\cD{{\mathcal D}}
\def\cL{{\mathcal L}}
\def\cM{{\mathcal M}}
\def\cC{{\mathcal C}}
\def\bbN{{\mathbb N}}
\def\cI{{\mathcal I}}
\def\cO{{\mathcal O}}
\def\bbS{{\mathbb S}}
\def\dV{{\mbox{dVol}}}
\def\schere{\text{\ding{34}}}
\newtheorem{thm}{Theorem}[section]
\newtheorem{lem}[thm]{Lemma}
\newtheorem{prop}[thm]{Proposition}
\newtheorem{cor}[thm]{Corollary}
\newtheorem{rmk}[thm]{Remark}
\renewenvironment{proof}[1][Proof]{\begin{trivlist}
\item[\hskip \labelsep {\bfseries #1:}]}{\qed\end{trivlist}}
\newcommand{\nabb}{\mbox{$\nabla \mkern-13mu /$\,}}
\newcommand{\Dell}{\mbox{$\Delta \mkern-13mu /$\,}}
\newcommand{\gin}{\mbox{$g \mkern-8mu /$\,}}
\def\XXint#1#2#3{{\setbox0=\hbox{$#1{#2#3}{\int}$ }
\vcenter{\hbox{$#2#3$ }}\kern-.6\wd0}}
\def\thesection{\arabic{section}}
\def\p@subsection{}
\def\p@subsubsection{}
\def\p@paragraph{}
\begin{document}

\begin{center}

{\bf {\Large 
BOUNDEDNESS OF MASSLESS SCALAR WAVES ON KERR INTERIOR BACKGROUNDS}}\\

\bigskip
\bigskip
Anne T. Franzen\footnote{e-mail address: anne.franzen@tecnico.ulisboa.pt}

\bigskip
\bigskip
{\it Center for Mathematical Analysis, Geometry and Dynamical Systems,}\\
{\it Mathematics Department, Instituto Superior T\'ecnico,}\\ 
{\it Universidade de Lisboa, Portugal}

\end{center}
\medskip

\centerline{ABSTRACT}

\noindent
We consider solutions of the massless scalar wave equation $\Box_g\psi=0$, without symmetry, on fixed subextremal Kerr backgrounds $(\cM, g)$.
It follows from previous analyses in the Kerr exterior that for solutions $\psi$ arising from
sufficiently regular data on a two ended Cauchy hypersurface, the solution and its derivatives decay suitably fast along
the event horizon $\cH^+$.
Using the derived decay rate, we show that $\psi$ is in fact uniformly bounded, $|\psi|\leq C$, in the black hole interior
up to and including the bifurcate Cauchy horizon $\cC\cH^+$, to which $\psi$ in fact extends continuously.
In analogy to our previous paper, \cite{anne},
on boundedness of solutions to the massless scalar wave equation on fixed subextremal Reissner--Nordstr\"om backgrounds,
the analysis depends on weighted energy estimates, commutation by angular momentum operators and application of Sobolev embedding. 
In contrast to the Reissner--Nordstr\"om case the commutation leads to additional error terms that have to be controlled.
\medskip

\tableofcontents

\section{Introduction}
\lb{intro}
The Kerr spacetime $(\cM,g)$ is a fundamental axialsymmetric 2-parameter family of solutions to the vacuum Einstein field equations. A brief introduction to the spacetime is given in \cite{haw_ellis}, and for a more detailed discussion see \cite{oneill}. The Penrose diagram of the subextremal case, \mbox{$M>|a|\neq 0$}, with $a$ the angular momentum per unit mass and $M$ the mass of the black hole, is shown in Figure \ref{kerrfigure}. 
The problem of analyzing the solution of the scalar wave equation 
\bea
\lb{wave_psi}
\Box_g \psi=0
\eea 
on Kerr backgrounds is intimately related to the stability properties of the spacetime itself and to the celebrated Strong Cosmic Censorship Conjecture, see \cite{christo_sing} and \cite{penrose} for a presentation of the conjecture. 
{\begin{figure}[ht]
\centering
\includegraphics[width=0.5\textwidth]{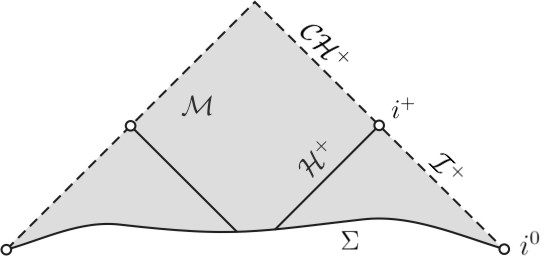}
\caption[]{Penrose diagram of the maximal future development of a Cauchy hypersurface $\Sigma$ in Kerr spacetime $(\cM,g)$. What is depicted is in fact the range of a double null coordinate system which is global in the interior and will be discussed further in Section \ref{doublenull}.}
\label{kerrfigure}\end{figure}}

The analysis of \eqref{wave_psi} for the exterior Kerr region $J^-(\cI^+)$, in the {\it full subextremal range $|a|<M$}, has already been accomplished in \cite{m_kerr}. The purpose of the present work is to extend the investigation to the interior of the black hole, up to and including the Cauchy horizon $\cC\cH^+$.
The mathematical structure and notation of this paper closely follows our work on Reissner--Nordstr\"om backgrounds, see \cite{anne}.

As already mentioned, the analysis of \eqref{wave_psi} is motivated by the Strong Cosmic Censorship Conjecture. In order to investigate its validity we need to understand stability and instability properties of black hole interiors. 
A brief overview on this topic as well as a mathematical formulation of the conjecture were already given in Section 1.2 of \cite{anne}. References discussing the instability behavior in similar settings are \cite{luk_oh, luk_jan}. Also refer to \cite{simpson} for a numerical analysis and \cite{poisson,amos,brady} for early discussions of heuristic models.

\subsection{The main result}
\lb{mainresult}
The main result of this paper can be stated as follows.
\begin{thm}
\lb{main}
On subextremal Kerr spacetime $(\cM,g)$, with mass $M$ and angular momentum per unit mass $a$ and $M>|a|\neq 0$, let $\psi$ be 
a solution of the wave equation $\Box_g \psi=0$ arising from
sufficiently regular localized Cauchy data on a two-ended asymptotically flat Cauchy surface $\Sigma$. Then
\bea
\lb{maineq}
|\psi|\leq C
\eea
globally in the black hole interior, in particular up to and including the Cauchy horizon
$\cC\cH^+$, to which $\psi$ extends in fact continuously.
\end{thm}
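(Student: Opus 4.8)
The strategy is to transport the known decay of $\psi$ (and its derivatives) along the event horizon $\cH^+$ into the interior region using energy estimates adapted to a double null foliation, exactly parallel to the Reissner--Nordström argument of \cite{anne}, but now paying attention to the lack of spherical symmetry. First I would set up a double null coordinate system $(u,v,\theta,\varphi)$ covering the interior $\cM_{\rm int}$ up to $\cC\cH^+$, with $\cH^+$ at $\{v=\infty\}$ (or $\{u=-\infty\}$ for the left component) and the two branches of $\cC\cH^+$ at $\{u=\infty\}$ and $\{v=\infty\}$ respectively, and write the wave equation $\Box_g\psi=0$ in this frame. The metric coefficients $\Delta = r^2-2Mr+a^2$ degenerate exponentially in the null coordinates near $\cC\cH^+$, which is precisely what makes the red-shift on the horizon turn into a (borderline-favourable) behaviour in the interior; I would record the relevant exponential decay/growth rates of the coefficients in terms of the surface gravities $\kappa_+$ at $\cH^+$ and $\kappa_-$ at $\cC\cH^+$.

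The heart of the proof is a weighted $T$-energy (and $N$-energy near the horizon) estimate. Contracting the energy--momentum tensor of $\psi$ with a suitable timelike multiplier and integrating over characteristic rectangles $[u_1,u_2]\times[v_1,v_2]\times\bbS^2$, Stokes' theorem gives that the flux through the "far" null segments is bounded by the flux through the "near" segments plus a spacetime error integral. Choosing the multiplier with an appropriate $v$-weight (and $u$-weight) one arranges that the bulk term has a favourable sign up to lower-order error terms; closing the estimate via Grönwall then yields uniform boundedness of the characteristic energy fluxes of $\psi$ throughout the interior, in terms of the horizon data. The key new feature relative to \cite{anne} is that, to get a pointwise bound from an energy bound, one must commute the wave equation with the angular momentum operators $\partial_\varphi$ (and more generally with the full set of Killing and, if needed, approximately-Killing angular operators) to gain enough regularity for Sobolev embedding on the $2$-spheres; in Kerr the commutator $[\Box_g,\Omega_i]$ does not vanish, so commuting produces genuine \emph{error terms} of the schematic form (coefficient)$\cdot\partial\psi$ which must be absorbed. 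I would estimate these errors using the already-established bound on the lower-order energy together with the structure of the coefficients (they carry the same exponential smallness), so the hierarchy of estimates for $\psi$, $\Omega_i\psi$, $\Omega_i\Omega_j\psi$ closes inductively.

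Once uniform energy bounds for $\psi$ and its angular derivatives are in hand on each characteristic rectangle up to $\cC\cH^+$, the pointwise bound $|\psi|\le C$ follows from a one-dimensional Sobolev/fundamental-theorem-of-calculus argument in $v$ (integrating $\partial_v\psi$ from the horizon, using that $\int |\partial_v\psi|^2\,dv$-type quantities are controlled and that the weights are integrable) combined with the Sobolev embedding $H^2(\bbS^2)\hookrightarrow C^0(\bbS^2)$ applied to the angularly-commuted quantities. Continuity of the extension to $\cC\cH^+$ is obtained by showing the relevant limits exist: from the uniform bounds on $\psi$ and $\partial_v\psi$ (resp. $\partial_u\psi$) one checks that $\psi(u,\cdot,\cdot)$ is Cauchy in $C^0$ as $v\to\infty$, so it extends continuously to the Cauchy horizon, and similarly across the bifurcation sphere by a separate localized estimate there.

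**Main obstacle.** I expect the principal difficulty to be the control of the commutation error terms: unlike in Reissner--Nordström, commuting $\Box_g$ with the angular operators in Kerr generates nontrivial terms (this is the point flagged in the abstract), and one must verify that their coefficients decay fast enough — tied to the interplay of $\kappa_+$ and $\kappa_-$ and the $r$-dependence of the Kerr metric functions — for the inductive energy hierarchy and the subsequent Grönwall argument to close uniformly up to $\cC\cH^+$. Getting the weights and the multiplier vector field compatible with both the redshift at $\cH^+$ and these error terms, without losing in the $a\to 0$ or near-extremal regimes, is the delicate quantitative core of the argument.
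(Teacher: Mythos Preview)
Your high-level strategy---vector field multipliers in a double null frame, commutation with angular momentum operators, Sobolev on the spheres---matches the paper's. The gap is in your treatment of the blueshift region near $\cC\cH^+$. You assert that with suitable $v$- and $u$-weights ``the bulk term has a favourable sign up to lower-order error terms'' and that Gr\"onwall then closes the estimate. This is not how the argument can go, and Gr\"onwall alone cannot work here: the region has infinite $v$-extent, and the blueshift means the deformation tensor of any reasonable timelike multiplier has terms of the wrong sign that do not simply integrate. In the paper the blueshift region $\cB$ is split by a spacelike hypersurface $\gamma$ placed at \emph{logarithmic} distance in $v$ from the hypersurface $r^\star=r^\star_{blue}$. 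In $J^-(\gamma)\cap\cB$ one uses the multiplier $S_0=f^q(\partial_u+\partial_v+b^{\tilde\phi}\partial_{\tilde\phi})$ with $q$ chosen large enough to force $K^{S_0}\geq 0$; this succeeds because the $r^\star$-extent is finite. In the critical region $J^+(\gamma)\cap\cB$ one uses the weighted field $S=|u|^p\partial_u+v^p(\partial_v+b^{\tilde\phi}\partial_{\tilde\phi})$ with $1<p\leq 1+2\delta$, but here $K^S$ is \emph{not} non-negative. The point of the logarithmic placement of $\gamma$ is that it guarantees the pointwise bound $\Omega^2\lesssim v^{-\beta\alpha}$ with $\alpha$ chosen so that $\beta\alpha>p+1$ throughout $J^+(\gamma)$; since every offending coefficient in $K^S$ (and in the commutation errors $\cE^S$) is controlled by $\Omega^2$ or $\Omega$, the spacetime integral of the bad part of the bulk is bounded by $\delta_1$ times the supremum of the $u$-flux plus $\delta_2$ times the supremum of the $v$-flux, with $\delta_1,\delta_2\to 0$. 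This is an absorption argument (taking $\delta_i\leq\tfrac12$), not Gr\"onwall, and without the $\gamma$-splitting and the resulting pointwise $\Omega^2$ decay your scheme does not close.

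Two smaller points. First, your coordinate description is inconsistent: you place both $\cH^+$ and a branch of $\cC\cH^+$ at $\{v=\infty\}$. In the paper's Eddington--Finkelstein-normalized double null coordinates (Pretorius--Israel), the right event horizon $\cH_A^+$ sits at $u=-\infty$ and the right Cauchy horizon $\cC\cH_B^+$ at $v=\infty$; the integration in $v$ toward $\cC\cH^+$ is exactly what makes the requirement $p>1$ (so that $\int v^{-p}\,\md v<\infty$) essential in extracting a pointwise bound from the weighted flux. Second, the interior is not handled in one piece: the paper uses distinct multipliers $N$, $U$, $S_0$, $S$ in the redshift, noshift, and two blueshift subregions respectively, propagating decay through each; the commutation errors are absorbed region by region (via the redshift for $N$, compactness/$\partial_t$-invariance for $U$, the large parameter $q$ for $S_0$, and the $\Omega^2$-smallness for $S$), rather than by a single global Gr\"onwall.
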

The constant $C$ is explicitly computable in terms of parameters $a$ and $M$ and a suitable norm on initial data.
In order to prove the above theorem, we will first derive weighted energy boundedness, as expressed in the following theorem.
\begin{thm}
\lb{energythm}
On subextremal Kerr spacetime $(\cM,g)$, with mass $M$ and angular momentum per unit mass $a$ and $M>|a|\neq 0$, let $\psi$ be 
a solution of the wave equation $\Box_g \psi=0$ arising from
sufficiently regular localized Cauchy data on a two-ended asymptotically flat Cauchy surface $\Sigma$. Then, in the black hole interior we have
\bea
\lb{energy1}
\int\limits^{\infty}_{v_{fix}}\int\limits_{\bbS^2_{u,v}}\left[ v^p (\partial_v \psi+b^{\tilde{\phi}} \partial_{\tilde{\phi}} \psi)^2(u,v, \theta^{\star}, \tilde{\phi}) + \Omega^2|\nabb \psi|^2(u,v, \theta^{\star}, \tilde{\phi}) \right]\md \sigma_{\mathbb S}^2 \md v&\leq& E,\\
 \quad \mbox{for $v_{fix} \geq 1$, $u > -\infty$}&&\\
 \lb{energy2}
\int\limits^{\infty}_{u_{fix}}\int\limits_{\bbS^2_{u,v}}\left[ u^p (\partial_u \psi)^2 (u,v, \theta^{\star}, \tilde{\phi})+ \Omega^2|\nabb \psi|^2(u,v, \theta^{\star}, \tilde{\phi}) \right]\md \sigma_{\mathbb S}^2\md u&\leq& E, \\
\quad \mbox{for $u_{fix} \geq 1$, $v > -\infty$},&&
\eea
where $p>1$ is an appropriately chosen constant, and $(u,v, \theta^{\star}, \tilde{\phi})$ are Eddington--Finkelstein normalized double null coordinates defined in Section \ref{doublenull}. The functions $\Omega^2(u,v, \theta^{\star})$ and $b^{\tilde{\phi}}(u,v, \theta^{\star})$ are defined by \eqref{R_delta_sigma} and \eqref{b_phi}, respectively, together with the metric \eqref{kerrmetric}. The expression 
\bea
\lb{sigmadef}
\md \sigma_{\mathbb S}^2=\sin\theta\md \theta^{\star} \md \tilde{\phi}
\eea
represents the volume element on the two-spheres ${\mathbb S}^2$, which due to the dependence of $\theta$ on $(u,v,\theta^{\star})$, stated in \eqref{dependence}, do not represent round spheres. Moreover, ${\mathbb S}^2_{u,v}$ are the two-spheres, which are defined by the intersection of the level sets of $u$ and $v$, we will discuss them more in Section \ref{doublenull} when we introduce the double null coordinates\footnote{We will later introduce a function $L$ which quantifies the difference between the volume elements of these spheres, see Section \ref{doublenull} and also \eqref{suv_vol}. Since $L$ is bounded we are able to omit it in the above statements.}. Further, we denote
\bea
\lb{nabb}
|\nabb \psi|^2=(\gin^{-1})^{\theta_A\theta_B}(\partial_{\theta_A} \psi \partial_{\theta_B} \psi).
\eea
\end{thm}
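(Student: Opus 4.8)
The plan is to establish \eqref{energy1}--\eqref{energy2} by the vector-field multiplier method applied to the energy--momentum tensor $T_{\mu\nu}[\psi]=\partial_\mu\psi\,\partial_\nu\psi-\tfrac12 g_{\mu\nu}(g^{-1})^{\alpha\beta}\partial_\alpha\psi\,\partial_\beta\psi$, which is divergence free since $\Box_g\psi=0$. Because we do not yet know that $\psi$ extends to $\cC\cH^+$, I would first work on a truncated characteristic rectangle $\cR_{U,V}=\{u\le U,\ v\le V\}$ in the interior, prove all bounds with constants independent of $U$ and $V$, and then obtain \eqref{energy1}--\eqref{energy2} by monotone convergence as $U,V\to\infty$. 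By the two-ended structure it suffices to treat a neighbourhood of one component of $\cC\cH^+$ at a time, the region away from $\cC\cH^+$ being handled by a soft compactness argument; the data for the interior problem sit on $\cH^+$ and the restrictions of $\psi$, $\partial_v\psi$ and $\nabb\psi$ there decay at the polynomial rate furnished by the exterior analysis of \cite{m_kerr}.

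For \eqref{energy1} the relevant multiplier is $X=v^{p}\,(\partial_v+b^{\tilde{\phi}}\partial_{\tilde{\phi}})$, i.e.\ the null generator adapted to the shift \eqref{b_phi}, weighted by $v^{p}$; for \eqref{energy2} one uses the mirror field with $v^{p}$ replaced by $u^{p}$ and $\partial_v+b^{\tilde{\phi}}\partial_{\tilde{\phi}}$ replaced by the conjugate ingoing generator. Applying the divergence theorem to $P_\mu=T_{\mu\nu}[\psi]X^\nu$ over $\cR_{U,V}$ produces an identity of the schematic form
\[
(\text{flux of }P\text{ through }\{u=U\})+(\text{flux through }\{v=V\})=(\text{flux through the }\cH^+\text{ data})+\int_{\cR_{U,V}}\tfrac12 T^{\mu\nu}[\psi]\,{}^{(X)}\pi_{\mu\nu},
\]
the first term on the left being, up to the harmless bounded factor $L$, the left-hand side of \eqref{energy1}. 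The exponent $p$ is fixed in a window $1<p<p_0$, with $p_0$ dictated by the polynomial $\cH^+$-decay so that the weighted data flux is finite; $p>1$ is what renders the borderline bulk terms integrable.

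The heart of the matter is the spacetime integral of the deformation tensor, which I would organise into three groups. First, terms carrying an explicit factor $\Omega^{2}$ — from the angular part of $\Box_g$ and from the $\theta^{\star}$-dependence of $\Omega^{2}$ and $L$ in \eqref{R_delta_sigma} — are controlled directly, because $\Omega^{2}$ decays exponentially towards $\cC\cH^+$ while the weight $v^{p}$ is only polynomial, so $v^{p}\Omega^{2}$ stays bounded and these terms are dominated by the non-weighted angular energy. Second, the borderline term $\sim p\,v^{p-1}(\partial_v\psi+b^{\tilde{\phi}}\partial_{\tilde{\phi}}\psi)^2$ is absorbed into the left-hand flux by a Gr\"onwall argument in $u$ (resp.\ $v$), using the sign dictated by the behaviour of $\Omega^{2}$ and the fact that $v^{p-1}$ is a lower weight than $v^{p}$. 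Third — and this is the feature absent in the Reissner--Nordstr\"om treatment of \cite{anne} — there are genuinely new error terms generated by $\partial_u b^{\tilde{\phi}}$, $\partial_v b^{\tilde{\phi}}$ and by $\partial_{\theta^{\star}}$ of the metric coefficients, of the schematic type $v^{p}\,(\partial_v\psi)\,(\partial_{\tilde{\phi}}\psi)$ and $v^{p}\,(\partial_v\psi)\,|\nabb\psi|$ multiplied by these coefficients; here I would use Cauchy--Schwarz together with the quantitative fact, to be extracted from \eqref{kerrmetric}, \eqref{R_delta_sigma} and \eqref{b_phi}, that $b^{\tilde{\phi}}$ and its $u,v$-derivatives approach their $\cC\cH^+$ limits at the same exponential rate at which $\Omega^{2}$ decays. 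After splitting as $\varepsilon\,v^{p}(\partial_v\psi+b^{\tilde{\phi}}\partial_{\tilde{\phi}}\psi)^2+\varepsilon^{-1}(\cdots)$, the residual pieces are again dominated by $\Omega^{2}v^{p}|\nabb\psi|^2$ and by the angular energy and can be absorbed for $\varepsilon$ small; where an undifferentiated $\partial_{\tilde{\phi}}\psi$ survives one invokes that $\partial_{\tilde{\phi}}$ is Killing, so $\partial_{\tilde{\phi}}\psi$ solves the same equation and obeys the same estimate, closing the loop.

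I expect the principal obstacle to be precisely this third group: unlike in the spherically symmetric case, the weighted energy of $\psi$ does not decouple from that of its angular derivatives, and one must verify that every coefficient produced by the shift and by the $\theta^{\star}$-dependence of the metric carries enough smallness — an extra power of $\Omega$, or of $v^{-1}$ relative to the leading flux — for the absorption and Gr\"onwall scheme to close while the constants remain uniform in $U$ and $V$. Checking that this is compatible with the limited $\cH^+$-decay available from \cite{m_kerr}, i.e.\ that the admissible window $1<p<p_0$ is non-empty, is the quantitative crux; the two-ended estimate \eqref{energy2} then follows by the symmetric argument.
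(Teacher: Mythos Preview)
Your scheme has two genuine gaps that prevent it from closing.

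First, the multiplier $X=v^{p}(\partial_v+b^{\tilde\phi}\partial_{\tilde\phi})$ is \emph{null}, so the flux $J^{X}_\mu n^{\mu}_{v=\mathrm{const}}$ contains only $v^{p}\Omega^{2}|\nabb\psi|^{2}$ and no $(\partial_u\psi)^2$ term whatsoever (compute \eqref{Jplugnv} with $X^u=0$, $X^{\theta_C}=X^{v}b^{\theta_C}$). But the bulk $K^{X}$ produces cross terms of the form $\frac{1}{\Omega^{2}}(\partial_u\psi)(\partial_v\psi+b^{\tilde\phi}\partial_{\tilde\phi}\psi)$ with coefficient $\sim v^{p}\frac{\partial_v(L^{2}\sin^{2}\theta)}{L^{2}\sin^{2}\theta}$ (see \eqref{Kplug_edd-f}), as well as $(\partial_u\psi)(\partial_{\tilde\phi}\psi)$ terms from $\partial_u b^{\tilde\phi}$. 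After Cauchy--Schwarz these require a $(\partial_u\psi)^{2}$ flux to absorb into, and you have none. This is precisely why the paper uses the \emph{timelike} multiplier $S=|u|^{p}\partial_u+v^{p}(\partial_v+b^{\tilde\phi}\partial_{\tilde\phi})$ in the crucial region: the $|u|^{p}\partial_u$ component supplies the missing $|u|^{p}(\partial_u\psi)^{2}$ in $J^{S}_\mu n^{\mu}_{v=\mathrm{const}}$, and both weighted fluxes \eqref{energy1}--\eqref{energy2} are obtained simultaneously.

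Second, the assertion that ``$v^{p}\Omega^{2}$ stays bounded'' and the appeal to ``soft compactness away from $\cC\cH^+$'' both fail. Along any constant-$r^\star$ hypersurface in the noshift region $\cN$, the quantity $\Omega^{2}=-\Delta/R^{2}$ is bounded \emph{away from zero} while $v\to\infty$, so $v^{p}\Omega^{2}\to\infty$; and the portion of the interior away from $\cC\cH^{+}$ is non-compact (it contains all of $\cH^{+}$). More to the point, the sign of $\partial_v\Omega/\Omega$, which governs whether the dominant $|\nabb\psi|^{2}$ bulk term is favourable, flips across $\cR$, $\cN$, $\cB$ (cf.\ Section~\ref{bnrsection}). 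The paper therefore cannot and does not use a single multiplier: it uses the redshift field $N$ in $\cR$, the $\partial_t$-invariant $U$ with Gr\"onwall over the \emph{finite} $r^\star$-width of $\cN$, then $S_0=f^{q}\partial_{r^\star}$ in $J^{-}(\gamma)\cap\cB$, and only in $J^{+}(\gamma)\cap\cB$ --- where the hypersurface $\gamma$ of Section~\ref{gamma_curve} guarantees the pointwise bound $\Omega^{2}\lesssim v^{-\beta\alpha}$ --- does the weighted $S$ enter. Your ``Gr\"onwall in $u$'' would have to run over $u\in(-\infty,u_{\schere}]$ and so cannot give a uniform constant. The region decomposition is not a convenience but the mechanism by which the argument closes.
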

{\em Remark.} Statement \eqref{energy1} for $u = -\infty$ and statement \eqref{energy2} for $v = -\infty$ hold by previous work on the exterior, see \cite{m_kerr, george}.

Having obtained Theorem \ref{energythm}, commutation by angular momentum operators, using the wave equation itself and estimating error terms of the form \eqref{E} leads to a higher order version of the above theorem. The presence of error terms is a difference from the spherically symmetric Reissner--Nordstr\"om case, see \cite{anne}. This will be explained in Section \ref{vectorfieldmethod}. The pointwise boundedness of Theorem \ref{main} will then follow from the higher order energy theorem and applying Sobolev embedding.

\subsection{A first look at the analysis}
\lb{firstlook}
In the analysis of this paper we are going to use global double null coordinates $(u,v,\theta^{\star},\tilde{\phi})$, which were derived by Pretorius and Israel \cite{pretorius} and will be discussed in Section \ref{doublenull}.
To prove Theorem \ref{main} and \ref{energythm} we first consider a characteristic rectangle $\Xi$ within the black hole interior, whose future {\it right} boundary coincides with the Cauchy horizon $\cC\cH^+$ in the vicinity of $i^+$, cf.~ Figure \ref{alle_bnr} a), and whose past {\it right} boundary coincides with the event horizon $\cH^+$. 
{\begin{figure}[ht]
\centering
\includegraphics[width=0.7\textwidth]{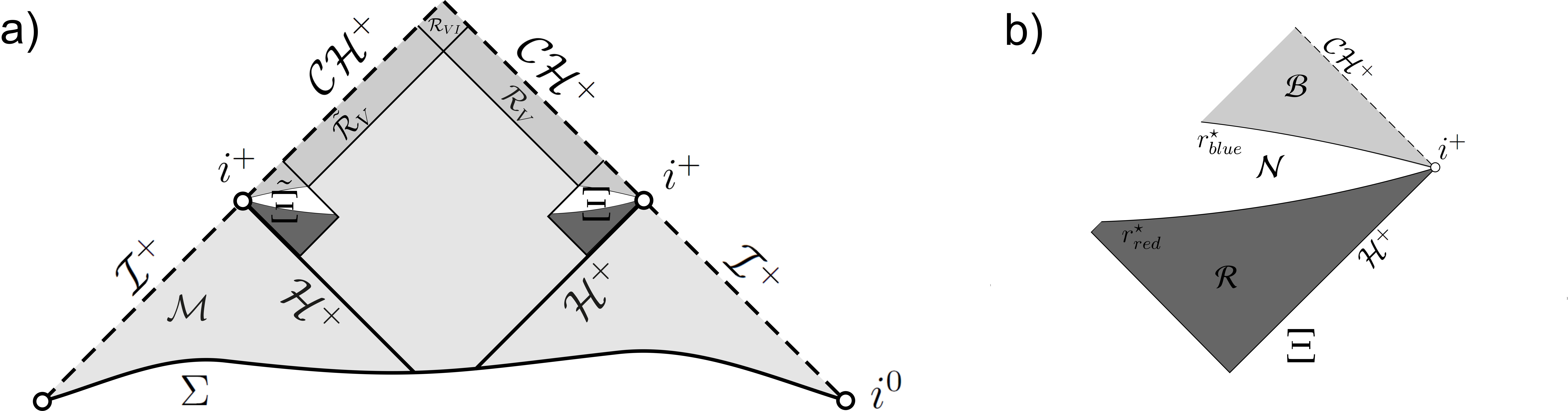}
\caption[]{a) Penrose diagram of Kerr spacetime depicting the regions considered in the proof. b) Characteristic rectangle $\Xi$, with redshift $\cR$, noshift $\cN$ and blueshift regions $\cB$.}
\label{alle_bnr}\end{figure}}

The crux of the entire proof is establishing boundedness of weighted higher order energy norms in $\Xi$. 
Once that is done, analogous results hold for a characteristic rectangle $\tilde{\Xi}$ to the {\it left}, depicted in Figure \ref{alle_bnr} a). Hereafter, boundedness of the energy can be propagated to regions $\cR_V$, $\tilde{\cR}_V$ and $\cR_{VI}$ as depicted, proving Theorem \ref{energythm}. Commutation by angular momentum operators, controlling the error terms and application of Sobolev embedding then yields Theorem \ref{main}. For brevity and simplicity we will in fact immediately carry out the commutation and control of error terms in each region. This is to say we directly derive higher order energy estimates for each region respectively.

We will now return to the discussion of $\Xi$, since that is the most involved part of the proof. 
Starting from an upper decay bound for $|\psi|$ and its derivatives on the event horizon $\cH^+$, we will
prove Theorem \ref{energythm} and its higher order version (and hence Theorem \ref{main}) restricted to $\Xi$. This upper bound along $\cH^+$ can be deduced from the work of Dafermos et al., cf.~ \cite{m_kerr} and \cite{george}. We will state the precise required result from previous work in Section \ref{horizon_estimates}.

In region $\Xi$ the proof involves distinguishing redshift \mbox{$\cR=\left\{-\infty < r^{\star} \leq r^{\star}_{red}\right\}$}, noshift \mbox{$\cN=\left\{r^{\star}_{red}\leq r^{\star}\leq r^{\star}_{blue}\right\}$} and blueshift \mbox{$\cB=\left\{r^{\star}_{blue}\leq r^{\star}< \infty\right\}$} regions, as shown in Figure \ref{alle_bnr} b).

These regions have appeared in the analysis of the wave equation on Reissner--Nordstr\"om backgrounds, cf.~ \cite{anne} and references therein.
The analysis is analogous to the previous result, but requires slightly different vector field multipliers and control of error terms. The reader familiar with \cite{anne} will recognize the structure of the previous proof, which we have maintained here for better readability.
A more detailed discussion of the separation into $\cR$, $\cN$ and $\cB$ regions is given in Section \ref{bnrsection} and Section \ref{statement}.
In region $\cB$, which is adjacent to $\cC\cH^+$, the weighted vector field 
\ben
S=|u|^p\partial_u+v^p\partial_v+v^p b^{\tilde{\phi}}\partial_{\tilde{\phi}} 
\een
in Eddington--Finkelstein-like coordinates $(u,v)$ is used for the analysis.
Note the parameter $p>1$ appearing in Theorem \ref{energythm}. The weights in this vector field, associated to region $\cB$, will allow us to prove uniform boundedness despite the blueshift instability.

\subsection{Outline of the paper}
The remaining part of the paper is organized as follows. 

Section \ref{preliminaries} contains all mathematical tools required in order to follow the analysis. In particular, we recall the vector field method in Section \ref{vectorfieldmethod}. Moreover, we specify the double-null, Eddington--Finkelstein-like coordinates, in which we will carry out the analysis, in Section \ref{kerr}, and introduce more notation in Section \ref{nota}. In Section \ref{setup} we give a brief review of the horizon estimates, that follow for the evolution of smooth compactly supported initial data on a Cauchy hypersurface by \cite{m_kerr}. We also give data on a null hypersurface transverse to the event horizon. Both of these statements together then allow us to propagate the energy estimate further inside, as shown in Section \ref{energy_interior}. 

Section \ref{energy_interior} contains the crux of the proof, in which we derive estimates inside a characteristic rectangle $\Xi$, as we already explained in Section \ref{firstlook}. The higher order propagation inside $\Xi$ is derived in Section \ref{red_region} to \ref{blue_future}, implying control over all error terms. Hereafter, we derive pointwise estimates inside $\Xi$ as well as higher order energy estimates in the remaining regions $\cR_V$, $\tilde{\cR}_V$ and $\cR_{VI}$, see Section \ref{nineten} to \ref{bifurcate}.

In Section \ref{globalenergy} we obtain the higher order weighted energy statement for the entire interior and derive pointwise boundedness by using Sobolev embedding on the spheres. We eventually give an outlook in Section \ref{outlook}, in which we put the present work into context and discuss related results.

\section{Preliminaries}
\lb{preliminaries}
\subsection{Vector field method and energy currents}
\lb{vectorfieldmethod}
In the following section we will briefly review the vector field method which we are going to use as an essential tool throughout this work.
For an introduction and the history of this method, see \cite{sergiu} by Klainerman.

The wave equation \eqref{wave_psi} can be derived from the following matter field Lagrangian 
\be
\lb{lagrangian}
\cL(\psi)=\int_{\cM} g^{\mu\nu} \partial_{\mu} \psi \partial_{\nu} \psi \dV,
\ee
on a spacetime manifold $(\cM, g)$.
A symmetric stress-energy tensor can be identified from the variation of the action as
\be
\lb{energymomentum}
T_{\mu\nu}(\psi)=\partial_\mu\psi\partial_\nu\psi-\frac12g_{\mu\nu}g^{\alpha\beta}\partial_\alpha\psi
\partial_\beta\psi.
\ee
Since $\psi$ is a solution to \eqref{wave_psi} energy-momentum conservation is implied, i.e.
\begin{equation}
\label{divfree}
\nabla^\mu T_{\mu\nu}=0.
\end{equation}

In this work we will often be interested in the inhomogenous wave equation
\be
\lb{inhomoKG}
\Box_g \tilde{\psi}=\Box_g Y{\psi}=F(Y{\psi}),
\ee
where $Y=Y^{\theta_C}\partial_{\theta_C}$ is a non-Killing commutation vector field, that will be explained more in Section \ref{angular}.
The motivation for this is that in order to obtain pointwise estimates we will have to commute with vector fields that are not Killing.

In the remaining section we will define one-currents as well as scalar currents depending on the geometry, which constitute a robust tool for obtaining $L^2$ estimates.
By contracting the energy-momentum tensor with a vector field multiplier $V$, we define the current
\be
\lb{J}
J_\mu^V(\tilde{\psi})\doteq T_{\mu\nu}(\tilde{\psi}) V^\nu.
\ee
If the vector field $V$ is timelike, then the one-form $J_\mu^V$ can be interpreted as an energy flux.

We will frequently apply the divergence theorem, often referred to as Stokes' theorem, to the above defined energy fluxes \eqref{J}. 
Consider for example a spacetime region $\mathcal{B}$ which is bound by two homologous hypersurfaces, $\Sigma_{\tau}$ and $\Sigma_0$, then we obtain
\begin{equation}
\label{divthe}
\int_{\Sigma_{\tau}} J^V_\mu (\tilde{\psi})n^\mu_{\Sigma_{\tau}} \dV_{\Sigma_{\tau}}
+\int_{\mathcal{B}} \nabla^\mu J_\mu(\tilde{\psi}) \dV=
\int_{\Sigma_0} J^V_\mu(\tilde{\psi}) n^\mu_{\Sigma_0} \dV_{\Sigma_0}.
\end{equation}
The vector $n^{\mu}_{\Sigma}$ denotes the normal to the subscript hypersurface $\Sigma$ oriented according to Lorentzian geometry convention. Further, $\dV$ denotes the volume element over the entire spacetime region and $\dV_{\Sigma}$ the volume elements on $\Sigma$, respectively.
Using the divergence theorem, we will often estimate the spacetime integral by the boundary terms; or future boundary terms from the sum of past boundary terms and the spacetime integral.
In the above example the spacetime integral refers to the second term of the left hand side of \eqref{divthe}, the future boundary to the first term and the past boundary to the term on the right hand side of the equation.
A proof of the divergence theorem in general can be found in \cite{frankel}; another useful reference is \cite{taylor}.

In view of the spacetime integral, we are interested in the divergence of the current \eqref{J} which reads as
\be
\lb{divergence}
\nabla^{\mu}J_{\mu}=\nabla^{\mu}(T_{\mu\nu}V^{\nu})=T_{\mu\nu}(\nabla^{\mu} V^{\nu})+(\nabla^{\mu}T_{\mu\nu})V^{\nu},
\ee
and suggests defining the following two scalar currents
\be
\lb{K}
K^V(\tilde{\psi})\doteq T(\tilde{\psi})(\nabla V)=(\pi^V)^{\mu\nu}T_{\mu\nu}(\tilde{\psi}),
\ee
where $(\pi^V)^{\mu\nu} \doteq \frac{1}{2} (\cL_V g)^{\mu\nu}$ is the so called deformation tensor
along $V$,
and
\be
\lb{E}
\cE^V(\tilde{\psi})\doteq (\nabla^{\mu}T_{\mu\nu})V^{\nu}=(\Box_g \tilde{\psi})V(\tilde{\psi}).
\ee
We will sometimes refer to $K^V$ as ``bulk term'' and to $\cE^V$ as ``error term''.
Thus
\be
\nabla^{\mu}J_{\mu}=K^V+\cE^V.
\ee
From \eqref{K} we see that $K^V$ is zero in case that our multiplier is a Killing vector field and $\cE^V$ is zero if $\tilde{\psi}$ is a solution to the homogeneous wave equation. Recall \eqref{inhomoKG} and suppose for arbitrary $\tilde{\psi}$ we write $\tilde{\psi}=Y\psi$, with $\psi$ being a solution to \eqref{wave_psi} and $Y$ an arbitrary vector field, then we see that $\cE^V$ is zero if $[\Box_g, Y]=0$, which is always the case when $Y$ is a Killing vector field. Since in this analysis we are interested in commuting with non-Killing vector fields the error terms will not vanish, but have to be controlled.

\subsection{The Kerr solution}
\lb{kerr}
In the following we will briefly recall Kerr spacetimes\footnote{The reader unfamiliar with this solution may for example consult \cite{haw_ellis} for a brief review, or the more detailed \cite{oneill}.} which are a family of solutions to the
Einstein vacuum field equations 
\bea
\lb{EF2}
R_{\mu\nu}=0,
\eea
where $R_{\mu \nu}$ is the Ricci tensor.
The Kerr solution represents an isolated rotating black hole in an asymptotically flat spacetime and was discovered in 1963, \cite{kerr}.

For our proofs of Theorem \ref{main} and \ref{energythm}, it will be favorable to carry out the analysis in double null coordinates. Therefore, in the following sections we will first recall the more widely known Boyer--Lindquist coordinates, to then derive the coordinate transformation leading to the double null form of the metric, first considered by Pretorius and Israel in \cite{pretorius}. 

\subsubsection{The metric, ambient differential structure and Killing vector fields}
\lb{ambient}
To set the semantic convention, whenever we refer to the Kerr solution $(\cM,g)$ we mean the maximal domain of dependence \mbox{$\cD (\Sigma)=\cM$} of complete two-ended asymptotically flat data $\Sigma$ for the parameter range $0<|a|<M$. 
The manifold $\cM$ can be expressed by \mbox{$\cM=\cQ\times \bbS^2_{u,v}$}, where $\bbS^2_{u,v}$ is defined as the intersection of the level sets of $\tilde{u}$ and $\tilde{v}$, for a representation of $\cQ$ see Figure \ref{diff_structure}. The submanifold \mbox{$\cM|_{II}=\cQ|_{II}\times \bbS^2_{u,v}$}, which is the region of interest, admits global double null coordinates $(u,v,\theta^{\star}, \tilde{\phi})$, to be defined in Section \ref{doublenull}, so that we have
{\begin{figure}[ht]
\centering
\includegraphics[width=0.35\textwidth]{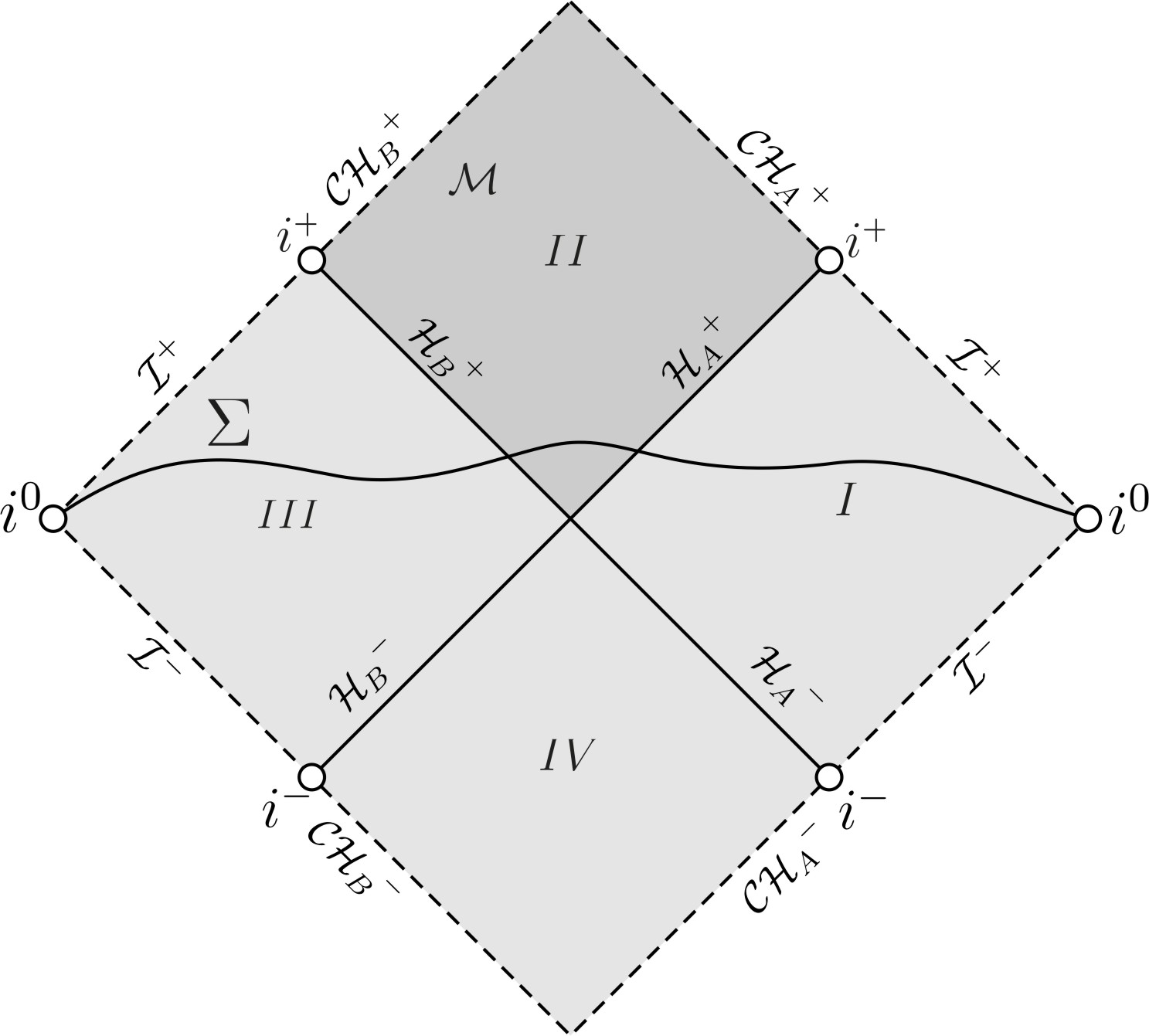}
\caption[]{Penrose diagram of maximal domain of dependence of Kerr spacetime with the region of interest shown darker shaded.}
\label{diff_structure}\end{figure}}
\mbox{$\cQ|_{II}=[-\infty, \infty) \times [-\infty, \infty)$} with  \mbox{$u,v \in (-\infty, \infty)$} and thus
\bea
\lb{MII}
\cM|_{II}=[-\infty, \infty) \times [-\infty, \infty)\times \bbS^2_{u,v}.
\eea
We can also formally parametrize the event horizon, the past boundary of $\cQ|_{II}$, namely $\cQ|_{II}\backslash\mathring{\cQ}|_{II}$ by
\bea
\cH^+=\cH_A^+\cup\cH_B^+
=\left\{-\infty\right\}\times[-\infty, \infty)\cup
[-\infty, \infty) \times \left\{ -\infty \right\},
\eea
and the future boundary, $\overline{\cQ|_{II}}\backslash\cQ|_{II}$, which is a Cauchy horizon by 
\bea
\cC\cH^+=\cC\cH_A^+\cup\cC\cH_B^+=
\left\{\infty\right\}\times[-\infty, \infty)\cup
[-\infty, \infty) \times \left\{ \infty \right\},
\eea
where $\mathring{\cQ}|_{II}$ is the interior and $\overline{\cQ|_{II}}$ is the closure of $\cQ|_{II}$. The fact that $\cC\cH^+$ is not contained in $\cQ|_{II}$ is indicated in Figure \ref{range} by the dashed line. The significance of these horizons are discussed further in the following.
{\begin{figure}[ht]
\centering
\includegraphics[width=0.2\textwidth]{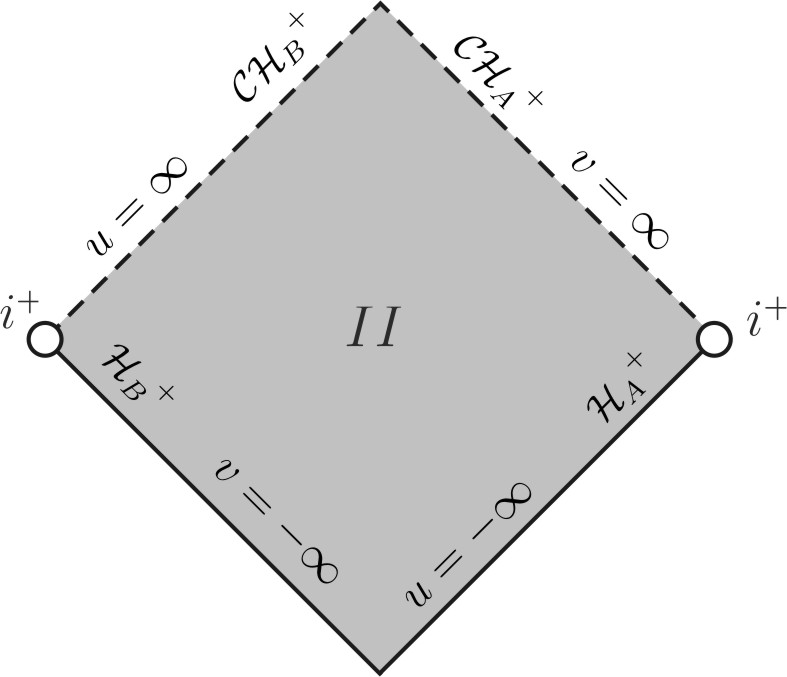}
\caption[]{Penrose diagram of region $II$, in which the double null coordinates $(u,v,\theta^{\star}, \tilde{\phi})$, to be constructed in Section \ref{doublenull}, are global.}
\label{range}\end{figure}}

Pretorius and Israel \cite{pretorius}
have already shown the coordinate transformation from the familiar Boyer--Lindquist coordinates to double null coordinates, see also Dafermos et al.~ \cite{m_rueck}. For self containment we will briefly repeat it in Section \ref{doublenull}. 

In order to do that, first recall the Kerr metric in Boyer--Lindquist coordinates $(t, r, \theta, \phi)$:
\bea
\lb{bl_metric}
g_{\mbox{\tiny{B-L}}}=\frac{\Sigma}{\Delta}\md r^2+\Sigma \md \theta^2 +R^2\sin^2\theta \md \phi^2-\frac{4Mar\sin^2\theta}{\Sigma}\md \phi \md t-\left(1-\frac{2Mr}{\Sigma}\right)\md t^2,
\eea
with 
\bea
\lb{sigma_usw}
\Delta=r^2-2Mr+a^2, \quad R^2=r^2+a^2+\frac{2Ma^2r\sin^2\theta}{\Sigma}, \quad \Sigma=r^2+a^2\cos^2\theta,
\eea
where $M$ is the mass and $a$ is the angular momentum per unit mass of the black hole. 
Moreover, note the useful identity
\bea
\lb{R_sigma}
\Sigma R^2=(r^2+a^2)^2-\Delta a^2\sin^2\theta.
\eea
Further, $\theta$ is the colatitude with 
\bea
\lb{theta_range}
0< \theta < \pi
\eea
and $\phi$ is the longitude with 
\bea
0\leq \phi < 2\pi,
\eea
so that this coordinate system covers the entire sphere except for the north and south pole.
Like in Reissner--Nordstr\"om spacetime we have an event horizon $\cH^+$ and a Cauchy horizon $\cC\cH^+$ bounding the interior region $II$. The Cauchy horizon is not part of the maximal domain of dependence which is indicated by the dashed line in Figure \ref{diff_structure}. Both horizons must be stationary null hypersurfaces. Since the normal to any stationary hypersurface is proportional to $\nabla r$ and lightlike character implies the condition \mbox{$g(\nabla r, \nabla r)=g^{\mu \nu}\partial_{\mu}r\partial_{\nu}r=g^{rr}=0$}, we obtain two positive roots 
\bea
\lb{r_pm}
r_{\pm}=M\pm\sqrt{M^2-a^2}
\eea
of $\Delta$. Relating this to our yet to be defined $(u,v)$ coordinates we can designate the position of the event horizon by
\bea
r(-\infty,v)=r|_{\cH_A^+}=r_+, \qquad r(u,-\infty)=r|_{\cH_B^+}=r_+,
\eea
and the location of the Cauchy horizon by
\bea
r(u,\infty,)=r|_{\cC\cH_A^+}=r_-, \qquad r(\infty, v)=r|_{\cC\cH_B^+}=r_-.
\eea
Despite merely having axisymmetry, the surface gravities can be shown to be independent of $\theta$, and thus constant along the horizons\footnote{This feature is predicted from the {\it zeroth law} of black hole thermodynamics, see \cite{toolkit}.}, namely
\bea
\lb{kappapm}
\kappa_{\pm}=\frac{r_{\pm}-M}{r_{\pm}^2+a^2}=\pm\frac{\sqrt{M^2-a^2}}{r_{\pm}^2+a^2},
\eea
where $\kappa_+$ is the surface gravity at $\cH^{\pm}$ and $\kappa_-$ is the surface gravity at $\cC\cH^{\pm}$, see \cite{toolkit} for further discussion.

Kerr spacetimes exhibit two Killing vector fields $\partial_t$ and $\partial_{\phi}$. The linear combination 
\bea
\lb{T_H}
T_{\cH^+}=\partial_t + \frac{a}{r_+^2 +a^2} \partial_{\phi},
\eea
here given in Boyer--Lindquist coordinates,
will play a particular role in the analysis inside the redshift region, see Section \ref{red_region}.
Using \eqref{r_pm} we can now also express region $\cQ_{II}$, the projected region of interest, in terms of a range in $r$ by 
\bea
\cQ_{II}=\left\{r_-<r\leq r_+\right\}.
\eea
We would now like to briefly clarify the causal properties of the Killing vector fields $\partial_t$, $T_{\cH^+}$ and the vector field $\partial_r$. In particular, in the interior we have
\begin{equation}
 \begin{cases}
    \partial_t & \mbox{spacelike in} \quad \mathring{\cQ}|_{II},\\
    T_{\cH^+} & \mbox{spacelike in} \quad \mathring{\cQ}|_{II},\quad \mbox{lightlike at}\quad \cH^+,\\
		 \partial_r & \mbox{timelike in}\quad \mathring{\cQ}|_{II},
  \end{cases}
\end{equation}
where $\mathring{\cQ}|_{II}=\cQ|_{II}\setminus \partial \cQ|_{II}$ is the interior of $\cQ|_{II}$.
Apart from the two Killing vector fields, there is also a Killing tensor, the so called Carter tensor which will not be used in our analysis.

\subsubsection{Defining suitable null coordinates}
\lb{r_star_section}
In our previous paper on Reissner--Nordstr\"om interiors, we chose
$t=v-u$ and $r^{\star}=u+v$,
where $r^{\star}$ was the Regge-Wheeler tortoise coordinate. We will see that the Kerr analog has to have a $\theta$ dependence, i.~e.~ $r^{\star}(r,\theta)$. Moreover, we will have to define a coordinate
\bea
\lb{thstar}
\theta^{\star}(r, \theta),
\eea
where $\theta^{\star}$ also has $r$ and $\theta$ dependence and $(t,r^{\star}, \theta^{\star}, \phi)$ forms a coordinate system in $(\cM,g)$.
All the sign differences compared to \cite{m_rueck} and \cite{pretorius} appearing in this section, arise from choices, which are appropriate for the analysis in the {\it interior} region as opposed to the {\it exterior} region of the black hole and are in agreement with the interior analysis of \cite{m_luk}. 

As explained in \cite{pretorius} and in more detail in Appendix A of \cite{m_luk}, we first define $\theta^{\star}\in [\theta, \frac{\pi}{2})$ implicitly in the following: For 
$r\in[r_-,r_+], \theta \in (0, \frac{\pi}{2})$, 
\bea
\lb{F_imp_def}
F(r, \theta, \theta^{\star})=\int\limits_{\theta^{\star}}^{\theta} \frac{\md \theta'}{P(\theta', \theta^{\star})}+\int\limits_r^{r_+}\frac{\md r'}{Q(r', \theta^{\star})}=0,
\eea
with 
\bea
\lb{PandQ}
P^2(\theta,\theta^{\star})=a^2(\sin^2\theta^{\star}-\sin^2\theta) \qquad \mbox{and} \qquad Q^2(r,\theta^{\star})=(r^2+a^2)^2-a^2\sin^2\theta^{\star}\Delta,
\eea
and
\bea
\lb{positivity_of_PQ}
P\geq 0,\qquad Q>0.
\eea
For the range $\theta \in (\frac{\pi}{2}, \pi)$, the coordinate $\theta^{\star}$ is defined by 
\bea
\lb{oth_def}
\theta^{\star}(r, \theta)=\pi-\theta^{\star}(r, \pi-\theta).
\eea
And further, we fix
\bea
\lb{moredef}
\theta^{\star}(r, 0)=0, \qquad \theta^{\star}(r, \frac{\pi}{2})=\frac{\pi}{2},  \qquad \theta^{\star}(r, {\pi})={\pi},
\eea
for $r\in[r_-,r_+]$.
It was shown in Appendix A of \cite{m_luk} that equations \eqref{F_imp_def}, \eqref{oth_def} and \eqref{moredef} define a unique $\theta^{\star}$.
Further, we define $\rho$ by
\bea
\rho(r,\theta,\theta^{\star})&=&\int\limits_{\theta^{\star}}^{\theta}{P(\theta', \theta^{\star})}{\md \theta'}+\int\limits_{r_+}^r\frac{Q(r', \theta^{\star})}{\Delta}{\md r'}
+\rho({r_+-\epsilon},0,0)\nonumber\\
&=&\int\limits_{\theta^{\star}}^{\theta}{P(\theta', \theta^{\star})}{\md \theta'}-\int\limits_r^{r_+}\frac{Q(r', \theta^{\star})}{\Delta}{\md r'}+\rho({r_+},0,0)\nonumber\\
&=&\int\limits_{\theta^{\star}}^{\theta}{P(\theta', \theta^{\star})}{\md \theta'}+\int\limits_r^{r_+}\frac{{r'}^2+a^2-Q(r', \theta^{\star})}{\Delta}{\md r'}+\left[\int\limits_{r_+}^r\frac{{r'}^2+a^2}{\Delta}{\md r'}+\rho({r_+},0,0)\right]\nonumber\\
&=&\int\limits_{\theta^{\star}}^{\theta}{P(\theta', \theta^{\star})}{\md \theta'}+\int\limits_r^{r_+}\frac{{r'}^2+a^2-Q(r', \theta^{\star})}{\Delta}{\md r'}+\int\limits_c^{r}\frac{{r'}^2+a^2}{\Delta}{\md r'}.
\eea
The arrangement in the radial dependence was made to ensure convergence of the definite integral despite $\Delta$ approaching zero at the horizons. 
Because of \eqref{thstar} we can now define
\bea
r^{\star}(r, \theta)=\rho(\theta^{\star}(r, \theta); r,\theta),
\eea
where $r^{\star}\in(-\infty,\infty)$ as r ranges between $r_+$ and $r_-$ and for the partial derivatives we obtain
\bea
\lb{partials}
\frac{\partial r^{\star}}{\partial r}=\frac{Q}{\Delta}, \qquad \frac{\partial r^{\star}}{\partial \theta}=P.
\eea
It is easy to see from the eikonal equation, that $\partial_{r^{\star}}$ is a future directed timelike vector field in the interior and $r^{\star}=const$ is a spacelike hypersurface.
As already pointed out in \cite{pretorius}, with this construction of $r^{\star}$ the variables 
\bea
v=\frac{t+{r}^{\star}(r, \theta)}{2}, \qquad u=\frac{r^{\star}(r,\theta)-t}{2} 
\eea
are lightlike and satisfy
\bea
\lb{eikonal}
g^{\alpha \beta}(\partial_{\alpha} v)(\partial_{\beta} v)=\frac{1}{4\Sigma}\left[\Delta(\partial_r {r}^{\star})^2+(\partial_{\theta} {r}^{\star})^2-\frac{(r^2+a^2)^2}{\Delta}+a^2\sin^2\theta \right]=0,
\eea
together with the analog for the $u$ coordinate.

Now we are ready to summarize all differentials, which we are going to use in Section \ref{doublenull} to derive the Kerr metric in double null coordinates. 
From a straight forward calculation we get
\bea
\lb{differentials}
\md r&=&\frac{Q\Delta}{\Sigma R^2}(\md v+\md u)+\frac{G QP^2\Delta}{\Sigma R^2}\md \theta^{\star},\\
\lb{differentials2}
\md \theta &=& \frac{P\Delta}{\Sigma R^2}(\md v+\md u)-\frac{G PQ^2}{\Sigma R^2}\md \theta^{\star},
\eea
with
\bea
\lb{mudef}
G(r^{\star},\theta^{\star})\equiv\partial_{\theta^{\star}} F,
\eea 
where $r$ and $\theta$ are fixed for the partial derivative $\partial_{\theta^{\star}}$. 
The exact expression was given in Proposition A.2 of Appendix A of \cite{m_luk}.

\subsubsection{Eddington--Finkelstein normalized double null coordinates}
\lb{doublenull}
In the following we will carry out the coordinate transformation \mbox{$(r,t, \phi, \theta)\rightarrow (u,v, \tilde{\phi}, \theta^{\star})$}, since double null coordinates will turn out more convenient for our interior analysis. 
Using \eqref{differentials}, \eqref{differentials2} and \mbox{$\md t=\md v-\md u$} in \eqref{bl_metric} we obtain
\bea
g_{\mbox{\tiny{Kerr}}}=4\frac{\Delta}{R^2}\md u \md v+\frac{L^2}{R^2}\md {\theta^{\star}}^2+R^2\sin^2 \theta(\md \phi-\omega_B(\md v-\md u))^2,
\eea
where we have defined 
\bea
\lb{Ldef}
L(u,v,\theta^{\star})\equiv -G PQ,
\eea
\bea
\lb{omega_B}
\omega_B(u,v,\theta^{\star})=\frac{2Mar}{\Sigma R^2}.
\eea
Note that the quantities $Q$ and $-GP$, given in \eqref{PandQ} and \eqref{mudef} are positive and bounded from above and below 
in the interior region. In particular, 
\bea
\lb{P_estimate}
P &\lesssim& |a|{\sin(2\theta^{\star})},
\eea
since $ \sqrt{\sin^2\theta^{\star}-\sin^2\theta} \lesssim \sin(2\theta^{\star})$, as shown in Appendix A of \cite{m_luk}.
Also notice that
\bea
\lb{R_Sigma}
\Sigma R^2=P^2\Delta +Q^2= (r^2+a^2)^2-a^2\Delta\sin^2\theta,
\eea
so the quantity is bounded in the interior region.
Further, refer to Proposition A.3 of \cite{m_luk} where the following proposition was proven.
\begin{prop} 
	\lb{Fprop}
	(M.~ Dafermos and J.~ Luk) Suppose $F(\theta^{\star}; r, \theta)=0$. Then
\bea
-{	c_{M,a}}\leq \left.
P(\theta^{\star};\theta) \frac{\partial}{ \partial \theta^{\star}}\right|_{r,\theta \mbox{\tiny{fixed}}} F(\theta^{\star};r,\theta) 
\leq -\frac{	\sin(2\theta)}{\sin(2\theta^{\star})}< -C_{M,a},
\eea
for some constants $c_{M,a}, C_{M,a}>0$ depending on $M$ and $a$. 
\end{prop}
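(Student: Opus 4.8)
The plan is to compute $G := \left.\partial_{\theta^\star}F\right|_{r,\theta}$ of \eqref{mudef} explicitly and read the bounds off from the resulting formula. The one delicate feature is that the $\theta'$-integral in \eqref{F_imp_def} has an integrable singularity at the \emph{moving} endpoint $\theta'=\theta^\star$. I would remove it by the substitution $u=\sin\theta'/\sin\theta^\star$ used in Appendix A of \cite{m_luk}: since $P(\theta',\theta^\star)=a\sin\theta^\star\sqrt{1-u^2}$ and $\cos\theta'=\sqrt{1-u^2\sin^2\theta^\star}$,
\[
\int_{\theta^\star}^{\theta}\frac{\md\theta'}{P(\theta',\theta^\star)}=-\frac1a\int_{u_0(\theta^\star)}^{1}\frac{\md u}{\sqrt{1-u^2}\,\sqrt{1-u^2\sin^2\theta^\star}},\qquad u_0(\theta^\star)=\frac{\sin\theta}{\sin\theta^\star},
\]
so the singularity now sits at the \emph{fixed} endpoint $u=1$, and the $\theta^\star$-derivative of the integrand, $u^2\sin\theta^\star\cos\theta^\star\,[(1-u^2)^{1/2}(1-u^2\sin^2\theta^\star)^{3/2}]^{-1}$, is still integrable there; differentiation under the integral sign is then legitimate.

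Differentiating $F=0$ in $\theta^\star$ produces three contributions to $G$: (a) the boundary term from the moving endpoint $u_0$, which, using $1-u_0^2=P(\theta,\theta^\star)^2/(a^2\sin^2\theta^\star)$ and $1-u_0^2\sin^2\theta^\star=\cos^2\theta$, collapses after multiplication by $P(\theta^\star;\theta)$ to exactly $-\tan\theta/\tan\theta^\star$; (b) the interior integrand derivative, contributing $-\tfrac{P}{a}\int_{u_0}^{1}u^2\sin\theta^\star\cos\theta^\star\,[(1-u^2)^{1/2}(1-u^2\sin^2\theta^\star)^{3/2}]^{-1}\md u<0$; and (c) $\partial_{\theta^\star}\!\int_r^{r_+}\tfrac{\md r'}{Q}=\int_r^{r_+}\tfrac{a^2\Delta\sin2\theta^\star}{2Q^3}\md r'$, which is $\le0$ in the black hole interior since $\Delta<0$ on $(r_-,r_+)$ — this nonpositive term is precisely the blueshift feature. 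Hence $P(\theta^\star;\theta)\,G=-\tan\theta/\tan\theta^\star-(\text{positive})+(\text{nonpositive})$; in particular $G<0$ away from $\cH^+$.

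For the sharp two-sided bound I would argue as follows. Dropping the (helpful) $Q$-term, the upper bound $P\,G\le-\sin2\theta/\sin2\theta^\star$ is equivalent to the negative interior integral dominating the elementary gap $\sin2\theta/\sin2\theta^\star-\tan\theta/\tan\theta^\star=\sin\theta(\sin^2\theta^\star-\sin^2\theta)/(\sin\theta^\star\cos\theta\cos\theta^\star)$; rewritten in the $u$-variable this becomes a one-variable inequality which can be checked by comparing the integrand with the model weight $(1-u^2)^{-1/2}$ and then using monotonicity of $u\mapsto\arccos u-u\sqrt{1-u^2}$. The strict form $-\sin2\theta/\sin2\theta^\star<-C_{M,a}$ uses the a priori bound $\theta^\star\le\theta^\star_{\max}(M,a)<\tfrac{\pi}{2}$, itself read off from $F=0$ together with the uniform bound $\int_r^{r_+}\tfrac{\md r'}{Q}\le(r_+-r_-)/(r_-^2+a^2)$ (valid since $\Delta\le0$ in the interior). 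For the lower bound $P\,G\ge-c_{M,a}$ one estimates the three pieces from below: $-\tan\theta/\tan\theta^\star\ge-1$; the $Q$-term is controlled by $|\Delta|$ bounded on $[r_-,r_+]$, $Q\ge r_-^2+a^2$, and $P\le a$; and the interior integral is bounded \emph{above} after the substitution $t=1-u$ by splitting the $u$-range, the one subtle sub-case being $\theta,\theta^\star\to\tfrac{\pi}{2}$, where $F=0$ forces $\sin\theta^\star-\sin\theta$ and $\cos\theta^\star$ to be comparable and the apparent divergence cancels.

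I expect the genuine obstacles to be (i) the bookkeeping of the moving-singularity removal — equivalently, in the original variable, cancelling the non-integrable part $a^2\sin2\theta^\star/(2P^3)$ of $-\partial_{\theta^\star}(1/P)$ against the exact derivative $\partial_{\theta'}(1/P)=a^2\sin2\theta'/(2P^3)$, which leaves the integrable remainder $a^2(\sin2\theta^\star-\sin2\theta')/(2P^3)\sim(\theta^\star-\theta')^{-1/2}$ — and (ii) extracting the \emph{sharp} constant $-\sin2\theta/\sin2\theta^\star$, which the boundary term alone only weakens to $-\tan\theta/\tan\theta^\star$, together with the uniformity (in $\theta,\theta^\star,r$) of $c_{M,a}$ near the degenerate loci $\theta\to0$ and $\theta,\theta^\star\to\tfrac{\pi}{2}$. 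The remainder is elementary.
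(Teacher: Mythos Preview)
The paper does not actually prove this proposition: it is stated as a result of Dafermos and Luk and the text immediately preceding it reads ``refer to Proposition A.3 of \cite{m_luk} where the following proposition was proven.'' There is therefore no in-paper argument to compare your attempt against; the proposition is quoted as a black box and used only through its consequence \eqref{boundedL}.

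That said, your outline is the natural one and, as you yourself indicate, is essentially the computation carried out in Appendix~A of \cite{m_luk}: regularise the moving endpoint by the substitution $u=\sin\theta'/\sin\theta^\star$, differentiate under the integral, and isolate the boundary contribution $-\tan\theta/\tan\theta^\star$ (your calculation of this term is correct). Two places in your sketch would need to be tightened before it stands on its own. First, the passage from $-\tan\theta/\tan\theta^\star$ to the sharp $-\sin(2\theta)/\sin(2\theta^\star)$ is the real content; your reduction to a ``one-variable inequality'' checked via ``monotonicity of $u\mapsto\arccos u-u\sqrt{1-u^2}$'' is not yet a proof and needs the explicit estimate written out. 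Second, the strict uniform lower bound $\sin(2\theta)/\sin(2\theta^\star)>C_{M,a}$ cannot come from an a~priori bound $\theta^\star\le\theta^\star_{\max}<\tfrac\pi2$ alone, since the ratio also degenerates as $\theta\to0$; one must use that $F=0$ ties $\theta$ and $\theta^\star$ together (indeed $\theta^\star(r_+,\theta)=\theta$), so the ratio stays bounded below uniformly in $r\in[r_-,r_+]$. With those two points filled in your argument would be complete, but for the purposes of this paper a citation to \cite{m_luk} is all that is required.
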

The proposition together with \eqref{Ldef} and the boundedness above and below of $Q$ then lead to the boundedness 
\bea\lb{boundedL}
0< c < L(u,v,\theta^{\star})\leq C.
\eea
Let us now use the coordinate transformation\footnote{The constant multiplying $v$ in transformation \eqref{phi_trafo} is added in hindsight. It will render the parameter $b^{\tilde{\phi}}$, see \eqref{b_phi}, zero at $\cC\cH^+$.}
\bea
\lb{phi_trafo}
\phi&=&\tilde{\phi}+h(u,v,\theta^{\star})+\left.\frac{4Mar}{\Sigma R^2}\right|_{r=r_-}\cdot v,\\
\md \phi&=&\md \tilde{\phi}+\partial_{\theta^{\star}} h \md \theta^{\star}+ \partial_{u} h \md u +\left(\partial_{v} h+\left.\frac{4Mar}{\Sigma R^2}\right|_{r=r_-}\right) \md v,
\eea
with $h(u,v,\theta^{\star})$ a function for which we require that $\partial_u h=\partial_v h=-\omega_B$ and with initial data $h(r^{\star}=0; \theta^{\star})=0$. This leads to the following more convenient form of the metric in Eddington--Finkelstein normalized double null coordinates which were already related to the well known Boyer--Lindquist coordinates in \cite{pretorius} and \cite{m_rueck}: 
\bea
\lb{kerrmetric}
g=-2\Omega^2(u,v, \theta^{\star})(\md u \otimes\md v+\md v \otimes\md u)+ \gin_{\theta_C\theta_D}(\md \theta_C-b^{\theta_C}\md v)\otimes (\md \theta_D-b^{\theta_D}\md v),
\eea                                                                        
with $C,D=1,2$ and $\theta_1$, $\theta_2$ denoting $\theta^{\star}$ and ${\tilde{\phi}}$ respectively.
Further, 
\bea
&\gin_{\theta_C\theta_D}(u,v,\theta^{\star}, {\tilde{\phi}}) \quad &\mbox{is a Riemannian metric on $\bbS^2_{u,v}$},\nonumber\\
&b^{\theta_C}(u,v,\theta^{\star}, {\tilde{\phi}}) \quad &\mbox{is a vector field taking values in the tangent space of $\bbS^2_{u,v}$}.
\eea
For the inverse metric we then obtain
\bea
\lb{inverse}
g^{-1}&=&-\frac{1}{2\Omega^2(u,v,\theta^{\star})}(\partial_u \otimes \partial_v+ \partial_v \otimes \partial_u)-\frac{1}{2\Omega^2(u,v, \theta^{\star})}b^{\theta_C}(\partial_u \otimes \partial_{\theta_C}+ \partial_{\theta_C} \otimes \partial_u)\nonumber\\
&&+(\gin^{-1})^{\theta_C\theta_D}(\partial_{\theta_C}\otimes \partial_{\theta_D}).
\eea
Further, for the metric coefficients we obtain
\bea 
\lb{R_delta_sigma}
&&\Omega^2=-\frac{\Delta}{R^2},\\
\lb{b_phi}
&& b^{\theta^{\star}}=0, \quad b^{\tilde{\phi}}=2\omega_B(u,v,\theta^{\star})-\left.\frac{4Mar}{\Sigma R^2}\right|_{r=r_-}\stackrel{\eqref{omega_B}}{=}\frac{4Mar}{\Sigma R^2}-\left.\frac{4Mar}{\Sigma R^2}\right|_{r=r_-}, \\
\lb{gmpp}
&& \gin_{\tilde{\phi}\tilde{\phi}}=R^2\sin^2\theta,  \\
&& \gin_{\theta^{\star} \theta^{\star}}=\frac{L^2}{R^2}+\left(\frac{\partial h}{\partial \theta^{\star}}\right)^2R^2\sin^2\theta,  \\
\lb{gmlp}
&& \gin_{\theta^{\star} \tilde{\phi}}=\left(\frac{\partial h}{\partial \theta^{\star}}\right)R^2\sin^2\theta, \\
\lb{gpp}
&& (\gin^{\tilde{\phi}\tilde{\phi}})^{-1}=\frac1{R^2\sin^2\theta}+\left(\frac{\partial h}{\partial \theta^{\star}}\right)^2\frac{R^2}{L^2},  \\
&& (\gin^{\theta^{\star} \theta^{\star}})^{-1}=\frac{R^2}{L^2}, \\
\lb{glp}
&& (\gin^{\theta^{\star} \tilde{\phi}})^{-1}=-\left(\frac{\partial h}{\partial \theta^{\star}}\right)\frac{R^2}{L^2}.
\eea 
Recall again that the coordinate $\tilde{\phi}$ was chosen such that $b^{\tilde{\phi}}$ vanishes at $\cC\cH^+$.
Moreover, by \eqref{differentials}, from which we can read off $\partial_{\theta^{\star}} r$, together with \eqref{R_delta_sigma}, we can see that $\partial_{\theta^{\star}} b^{\tilde{\phi}}$ is proportional to $\Omega^2$ and thus also vanishes at $\cC\cH^+$, compare with Section \ref{bounded}. This is shown in more detail in Appendix A of \cite{m_luk}, where also smoothness at the poles for the angular variables $(\theta^{\star}, \tilde{\phi})$ is shown.

Further, note that we have the following dependencies
\bea
\lb{dependence}
&&r=r(u, v, \theta^{\star}), \qquad \theta=\theta(u, v, \theta^{\star}), \\
&&L=L(u, v, \theta^{\star}),  \qquad h=h(u, v, \theta^{\star}), \qquad R=R(u, v, \theta^{\star}), \mbox{ cf.~ \eqref{sigma_usw}}.
\eea
These dependencies are crucial since together with \eqref{delta_r} and \eqref{delta_theta} we will see that derivatives of these quantities with respect to $u$ and $v$ will also decay like $\Omega^2$. We will explain this more in Section \ref{bounded}.

\subsubsection{The volume elements}
\lb{volume_element}
From the expression for the metric we derive the volume elements for the entire spacetime
\bea
\dV&=&2\Omega^2 L \sin\theta\md \theta^{\star} \md \tilde{\phi} \md u \md v ,
\eea
as well as for the 3-dimensional hypersurfaces along constant $r^{\star}$ values:
\bea
\dV_{{r^{\star}=const}}&=&\sqrt{2\Omega^2} L \sin\theta\md \theta^{\star} \md \tilde{\phi} \md r^{\star},
\eea
were we have used
\bea
\sqrt{-g}=2\Omega^2\sqrt{g_{\theta^{\star} \theta^{\star}}g_{\tilde{\phi}\tilde{\phi}}-g_{\theta^{\star} \tilde{\phi}}^2}=2\Omega^2 L\sin \theta.
\eea
Note that the induced volume element of our spheres ${\mathbb S^2_{u,v}}$,
which are not round spheres, but are defined as the intersections of the level sets of $u$ and $v$, is given by
\bea
\lb{suv_vol}
\md \sigma_{\mathbb S^2_{u,v}}&=&\sqrt{-\gin}\md \theta^{\star} \md \tilde{\phi}=L \sin\theta\md \theta^{\star} \md \tilde{\phi}=L\md \sigma_{\mathbb S^2},
\eea
see \eqref{sigmadef}.
Therefore, we see that $L(u,v, \theta^{\star})$ plays the role of a radial coordinate which also varies in $\theta^{\star}$.
Moreover, the normal vectors on constant $u, v$ hypersurfaces and their related volume elements appearing in the statements of our theorems are given by:
\bea
\lb{nv}
n^{\mu}_{{v=const}}&=\partial_u, \qquad\qquad\qquad &\dV_{{v=const}}=2\Omega^2 L \sin\theta\md \theta^{\star} \md \tilde{\phi} \md u ,\\
\lb{nu}
n^{\mu}_{{u=const}}&=\partial_v +b^{\tilde{\phi}}\partial_{\tilde{\phi}}, \qquad &\dV_{{u=const}}=2\Omega^2 L  \sin\theta\md \theta^{\star} \md \tilde{\phi} \md v.
\eea
\subsubsection{Boundedness of all required coefficients}
\lb{bounded}
In order to carry out the analysis in the interior we will require
the following relations of the parameters introduced in the previous sections.
\bea
\lb{r_rstar}
\frac{\partial r}{\partial r^{\star}}&=&\frac{\Delta Q}{\Sigma R^2}, \qquad \frac{\partial r}{\partial {\theta^{\star}}}=\frac{\Delta P}{\Sigma R^2}\\
\lb{theta_rstar}
\frac{\partial \theta}{\partial r^{\star}}&=&\frac{\Delta P}{\Sigma R^2},\qquad \frac{\partial \theta}{\partial {\theta^{\star}}}=-\frac{GP Q^2}{\Sigma R^2}.
\eea
These relations are valid since $\frac{\partial \theta^{\star}}{\partial r}$ and $\frac{\partial \theta^{\star}}{\partial \theta}$ are finite, which was proven in Appendix A.3 of \cite{m_luk}. Further, recall \eqref{P_estimate} and the statement above of it together with \eqref{R_Sigma} which imply
\bea
\lb{partial_eq1}
&&\left|\frac{\partial r}{\partial r^{\star}}\right|\lesssim \left|\Delta\right|, \qquad \left|\frac{\partial r}{\partial {\theta^{\star}}}\right|\lesssim \left|\Delta\right| \sin (2 \theta^{\star}), \\
\lb{partial_eq2}
&& \left|\frac{\partial \theta}{\partial r^{\star}}\right|\lesssim \left|\Delta\right|\sin (2 \theta^{\star}), \qquad \left|\frac{\partial \theta}{\partial {\theta^{\star}}}\right|\lesssim 1,
\eea
as also stated in \cite{m_luk}.
Note that the sign of $Q$ was chosen to be positive so the sign of $\frac{\partial r}{\partial r^{\star}}$ is negative. Moreover, for the partial derivatives of $(r, \theta)$ with respect to $(u,v)$ we have
\bea
\lb{delta_r}
\partial_{\zeta} r=\frac{\partial r^{\star}}{\partial \zeta}\frac{\partial r}{\partial r^{\star}}\stackrel{\eqref{r_rstar}}{=}\frac{\Delta Q}{\Sigma R^2},\\
\lb{delta_theta}
\partial_{\zeta} \theta=\frac{\partial r^{\star}}{\partial \zeta}\frac{\partial \theta}{\partial r^{\star}}\stackrel{\eqref{theta_rstar}}{=}\frac{\Delta P}{\Sigma R^2},
\eea
with $\partial_{\zeta} r$ and $\partial_{\zeta} \theta$ negative, and $\zeta=u,v$.
Note that the entire analysis will be discussed in $r^{\star}$ coordinates not in $r$ coordinates, where in the region $r_-<r<r_+$ all $r^{\star}=const$ hypersurfaces are spacelike hypersurfaces and $r^{\star}\in(-\infty, \infty)$.

Further, to close the proof of our main theorem we will need to show boundedness of our metric coefficients and their derivatives up to second order.
For this we repeat Proposition A.6 and Remark A.11 of Appendix A of \cite{m_luk}.
\begin{prop}
	\lb{higher_o_boundedness}
(M. Dafermos and J. Luk) For every fixed $k\leq 2$, $r$ and $\theta$ are $C^k$ functions of $r^{\star}$ and $\theta^{\star}$. Moreover, the following bounds hold with implicit constants depending on $k$ (in addition to $M$ and $a$):
\bea
&&\sum\limits_{1\leq k_1\leq k-1}\left|\left(\frac{1}{\sin(2\theta^{\star})}\frac{\partial}{\partial \theta^{\star}}\right)^{k_1}\left(\frac{\partial \theta}{\partial \theta^{\star}}\right)\right|\lesssim 1,\\
&&\sum\limits_{1\leq k_1\leq k}\left|\left(\frac{\partial \theta}{\partial r^{\star}}\right)^{k_1}\theta\right|\lesssim |\Delta| \sin(2\theta^{\star}),\\
&&\sum\limits_{1\leq k_1+k_2\leq k-1; k_2 \geq 1}\left|\left(\frac{1}{\sin(2\theta^{\star})}\frac{\partial}{\partial \theta^{\star}}\right)^{k_1}\left(\frac{\partial }{\partial r^{\star}}\right)^{k_2}\left(\frac{\partial }{\partial \theta^{\star}}\right)\right|\lesssim |\Delta|,\\
&&\sum\limits_{1\leq k_1+k_2\leq k}\left|\left(\frac{\partial }{\partial r^{\star}}\right)^{k_1}\left(\frac{1}{\sin(2\theta^{\star})}\frac{\partial}{\partial \theta^{\star}}\right)^{k_2}r\right|\lesssim |\Delta|.
\eea
\end{prop}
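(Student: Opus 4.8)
This proposition is quoted from Proposition~A.6 and Remark~A.11 of Appendix~A of \cite{m_luk}, so the plan is only to indicate how one would establish it. We argue by induction on $k$, and since only $k\leq 2$ is needed it is enough to treat $k=1$ and $k=2$. The case $k=1$ is essentially recorded already: the implicit relation $F(r,\theta,\theta^{\star})=0$ has $\partial_{\theta^{\star}}F=G$, and $PG$ is bounded away from zero by Proposition~\ref{Fprop}, so the implicit function theorem exhibits $\theta^{\star}$, and hence $r$ and $\theta$, as $C^{1}$ functions of $(r^{\star},\theta^{\star})$ throughout the interior; the explicit first-order formulas \eqref{r_rstar}--\eqref{theta_rstar}, together with $P\lesssim|a|\sin(2\theta^{\star})$ from \eqref{P_estimate}, the two-sided bounds on $Q$, and $\Sigma R^{2}=P^{2}\Delta+Q^{2}$ from \eqref{R_Sigma}, then give the $k=1$ instances of the four displayed estimates (compare \eqref{partial_eq1}--\eqref{partial_eq2}).

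For the step from $k=1$ to $k=2$, the idea is to differentiate the identities \eqref{r_rstar}, \eqref{theta_rstar} a second time --- equivalently, to differentiate $F=0$ once more --- applying the operators $\partial_{r^{\star}}$ and $\tfrac{1}{\sin(2\theta^{\star})}\partial_{\theta^{\star}}$ exactly as in the claimed bounds. Every term that appears is a finite sum of products of: the quantities $Q$, $R^{2}$, $\Sigma$ and $GP$, which on the interior are smooth and have absolute value bounded above and below by positive constants (immediate for $Q$, $R^{2}$, $\Sigma$ from their definitions and the range $r_{-}<r<r_{+}$, and for $GP$ from Proposition~\ref{Fprop}); the factor $\Delta$ and its $r^{\star}$-derivatives, which satisfy $|\partial_{r^{\star}}\Delta|=|2r-2M|\,|\partial_{r^{\star}}r|\lesssim|\Delta|$ by \eqref{partial_eq1}; the quantity $P$ together with the trigonometric weights $\sin\theta$ and $\sin(2\theta^{\star})$; and the lower-order derivatives of $r$ and $\theta$, already controlled by the inductive hypothesis. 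Collecting the powers of $\Delta$ and of $\sin(2\theta^{\star})$ produced in this way and matching them against the claimed right-hand sides closes the induction, and the same computation promotes the a priori bounds to genuine $C^{k}$ regularity; smoothness in $(\theta^{\star},\tilde{\phi})$ at the poles is part of the same analysis in Appendix~A of \cite{m_luk}.

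The step we expect to be the main obstacle is the fine behaviour of $P$ and of $G=\partial_{\theta^{\star}}F$ near the degenerate loci: $P=|a|\sqrt{\sin^{2}\theta^{\star}-\sin^{2}\theta}$ vanishes precisely where $\theta=\theta^{\star}$, the integrands in $F$ and in the definition of $\rho$ are singular at the endpoints of the $\theta'$-integration, and there is additional degeneracy as $\theta^{\star}\to\{0,\tfrac{\pi}{2},\pi\}$. Each application of $\partial_{\theta^{\star}}$ to a quantity built out of $P$ can produce a factor comparable to $1/\sin(2\theta^{\star})$, and the whole reason for differentiating with the weighted operator $\tfrac{1}{\sin(2\theta^{\star})}\partial_{\theta^{\star}}$ rather than with $\partial_{\theta^{\star}}$ is to reabsorb such factors; making this rigorous needs a careful expansion of the integrands near their endpoints, tracking how the combinations $\theta-\theta^{\star}$ and $\cos(2\theta^{\star})$ enter, exactly as carried out in Appendix~A of \cite{m_luk}. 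Once this endpoint analysis is in hand, the remaining chain-rule bookkeeping is routine, which is why we are content to quote the result here.
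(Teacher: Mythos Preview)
Your proposal is appropriate: the paper does not prove this proposition at all but simply quotes it verbatim from Proposition~A.6 and Remark~A.11 of Appendix~A of \cite{m_luk}, exactly as you note in your first sentence. Your sketch of the argument (implicit function theorem via Proposition~\ref{Fprop}, then inductive differentiation of \eqref{r_rstar}--\eqref{theta_rstar} with careful tracking of the $\Delta$ and $\sin(2\theta^{\star})$ weights and attention to the endpoint degeneracies of $P$) goes beyond what the present paper provides and is in line with the structure of the proof in \cite{m_luk}.
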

Recall the functions \mbox{$b^{\tilde{\phi}}(u,v, \theta{\star})=2\omega_B-\left.\frac{4Mar}{\Sigma R^2}\right|_{r=r_-}$} of \eqref{b_phi} and \mbox{$\frac{\md h(u,v, \theta{\star})}{\md r^{\star}}=-\omega_B$} appering in \eqref{phi_trafo}, with \eqref{omega_B}. 
With the help of \eqref{partial_eq1} we can now state
\bea
\lb{b_bound}
\left|\frac{\partial b^{\tilde{\phi}}(u,v, \theta{\star})}{\partial r^{\star}} \right| \lesssim |\Delta|.
\eea
We further make use of the following.
\begin{rmk} (M. Dafermos and J. Luk) Notice that using 
\bea
\frac{\partial}{\partial \theta^{\star}}\left(\frac{4Mar}{\Sigma R^2}\right)=\frac{\partial r}{\partial \theta^{\star}}\frac{\partial}{\partial r}\left(\frac{4Mar}{\Sigma R^2}\right)+\frac{\partial \theta}{\partial \theta^{\star}}\frac{\partial}{\partial \theta}\left(\frac{4Mar}{\Sigma R^2}\right)
\eea
together with
\bea
\left|\frac{\partial}{\partial \theta}\left(\frac{4Mar}{\Sigma R^2}\right)\right|=\left|\frac{4Ma^3r\sin(2\theta)\Delta}{((r^2+a^2)^2-a^2\sin^2\theta \Delta)^2}\right|\lesssim |\Delta||\sin(2\theta^{\star})|
\eea
and estimates \eqref{partial_eq1}-\eqref{partial_eq2}, we have
\bea
\left|\frac{\partial}{\partial \theta^{\star}}\left(\frac{4Mar}{\Sigma R^2}\right)\right|\lesssim |\Delta||\sin(2\theta^{\star})|.
\eea
As a consequence, the expression $\frac{\partial h}{\partial \theta^{\star}}$ that appears in the metric component, compare \eqref{b_phi}, \eqref{gmpp} and \eqref{gmlp}, satisfies the following bound
\bea
\left|\frac{\partial h}{\partial \theta^{\star}}\right| \lesssim |\sin(2\theta^{\star})|.
\eea
\end{rmk}
We moreover make use of Proposition A.12 of Appendix A of \cite{m_luk}, which we are not repeating here.
Further, we obtain
\bea
\lb{L_bounded}
\left| 2\frac{\partial_{\delta} L\sin\theta}{L\sin\theta}\right| =\left| \frac{\partial_{\delta} (L^2\sin^2\theta)}{L^2\sin^2\theta}\right| = \left| 2\frac{\partial_{\delta} L}{L}+2\frac{\cos \theta}{\sin\theta}\partial_{\delta} \theta\right| \lesssim  |\Delta|,
\eea
with $\delta= u,v$ and where we used \eqref{partial_eq1}, \eqref{partial_eq2} and Proposition A.2 of \cite{m_luk}. This will turn out useful since we have to control this term in Section \ref{blue_future}. Moreover, we have
\bea
\lb{det_regel}
(\gin^{-1})^{\theta_C\theta_D}\partial_{\eta} \gin_{\theta_C\theta_D}=\frac{\partial_{\eta}\mbox{det}\gin}{\mbox{det}\gin}= \frac{\partial_{\eta}( L
^2\sin^2\theta)}{L^2 \sin^2 \theta},
\eea
is a term appearing in our bulk term, with $\eta=u,v, \theta^{\star}, \tilde{\phi}$, see \eqref{Kplug_edd-f} and also in the wave equation, see \eqref{wave_eq_eddf}. Note here, that $\frac{\partial_{\theta^{\star}} |L\sin\theta|}{|L\sin\theta|}$ is not bounded, but the product of this term and the volume element is.

\subsubsection{Angular momentum operators}
\lb{angular}
In this section we define the angular momentum operators by the vector fields
\bea
\lb{angular_comm}
Y_i=Y^{\theta_C}_i\partial_{\theta_C}= Y_i^{\theta^{\star}}\partial_{\theta^{\star}}+Y_i^{\tilde{\phi}}\partial_{\tilde{\phi}}, 
\eea
where $Y_i^{\theta_C}=Y_i^{\theta_C}(\theta^{\star}, \tilde{\phi})$, $C=1,2$ and $\theta_1=\theta^{\star}$, $\theta_2=\tilde{\phi}$ and $i=1,2,3$.
Further, we fix $Y_3^{\theta^{\star}}=0$ and $Y_3^{\tilde{\phi}}=1$, so that
\bea
\lb{angular_comm3}
Y_3=\partial_{\tilde{\phi}}, 
\eea
which is Killing for Kerr spacetimes. 
The $Y_i$ vector fields represent the standard generators (with respect to the $(\theta^{\star}, {\tilde{\phi}})$ coordinates) of the Lie algebra $so(3)$ acting on the spheres $\bbS^2_{u,v}$.  Since only $Y_3$ is Killing in Kerr spacetime, only the abelian subalgebra \mbox{$u(1) = \left\langle \partial_{\tilde{\phi}}\right\rangle  \subset so(3)$} acts by isometries. 
As we can see from \eqref{kerrmetric}, the two null vectors are orthogonal to $Y_i$, namely $\langle\partial_u, \partial_{\theta_C}\rangle=0$ and \mbox{$\langle\partial_v +b^{\theta_D} \partial_{\theta_D}, \partial_{\theta_C}\rangle=0$}, with $D=1,2$.
More generally we will use $Y^k$, with the upper index \mbox{$k \in \left\{0,1,2\right\}$} to refer to the summation of the required number of commutations, expressed by
\bea
\lb{Yk}
J_{\mu}^X(Y^k \psi)n^{\mu}&\doteq& \sum\limits_{i_1=1}^3 \cdot \cdot\cdot \sum\limits_{i_k=1}^3 J_{\mu}^X( Y_{i_1} \cdot \cdot\cdot (Y_{i_k} \psi))n^{\mu},
\eea
where $i_j=1,2,3$.
To make it more explicit, we can write
\bea
\lb{Y0}
J_{\mu}^X(Y^0 \psi)n^{\mu}&\doteq& J_{\mu}^X(\psi)n^{\mu},\\
\lb{Y1}
J_{\mu}^X(Y^1 \psi)n^{\mu}&\doteq&  \sum\limits_{i_1=1}^3 J_{\mu}^X( Y_{i_1} \psi )n^{\mu}=J_{\mu}^X( Y_{1} \psi )n^{\mu}+J_{\mu}^X( Y_{2} \psi )n^{\mu}+J_{\mu}^X( Y_{3} \psi )n^{\mu},\\
\lb{Y2}
J_{\mu}^X(Y^2 \psi)n^{\mu}&\doteq&\sum\limits_{i_1=1}^3 \sum\limits_{i_2=1}^3 J_{\mu}^X(Y_{i_1} (Y_{i_2}\psi))n^{\mu}\nonumber \\
&=& \quad J_{\mu}^X( Y_{1}(Y_{1} \psi) )n^{\mu}+J_{\mu}^X(  Y_{1}(Y_{2} \psi ))n^{\mu}+J_{\mu}^X(  Y_{1}(Y_{3} \psi ))n^{\mu}\nonumber \\
&& +J_{\mu}^X( Y_{2}(Y_{1} \psi ))n^{\mu}+J_{\mu}^X(  Y_{2}(Y_{2} \psi ))n^{\mu}+J_{\mu}^X(  Y_{2}(Y_{3} \psi ))n^{\mu}\nonumber \\
&& +J_{\mu}^X( Y_{3}(Y_{1} \psi ))n^{\mu}+J_{\mu}^X(  Y_{3}(Y_{2} \psi) )n^{\mu}+J_{\mu}^X(  Y_{3}( Y_{3} \psi ))n^{\mu}.
\eea 
Similar to \eqref{Yk}, one can define summed expressions for $K^X(Y^k \psi)$ and $\cE^X(Y^k \psi)$. In our proof we will also estimate absolute values of these quantities, which we express by
\bea
\lb{YkK}
\left|K^X(Y^k \psi)\right|&\doteq& \sum\limits_{i_1=1}^3 \cdot \cdot\cdot \sum\limits_{i_k=1}^3\left|K^X( Y_{i_1} \cdot \cdot\cdot (Y_{i_k} \psi))\right|,
\eea
and \bea
\lb{YkE}
\left|\cE^X(Y^k \psi)\right|&\doteq& \sum\limits_{i_1=1}^3 \cdot \cdot\cdot \sum\limits_{i_k=1}^3\left|\cE^X( Y_{i_1} \cdot \cdot\cdot (Y_{i_k} \psi))\right|.
\eea

\subsubsection{The redshift, noshift and blueshift region}
\lb{bnrsection}
As we have already mentioned in the introduction, in the interior we can distinguish 
\bea
\lb{defregions_r}
\mbox{redshift}\quad  &&\mbox{$\cR=\left\{-\infty < r^{\star} \leq r^{\star}_{red}\right\}$},\\
\lb{defregions_n}
\mbox{noshift}\quad  &&\mbox{$\cN=\left\{r^{\star}_{red}\leq r^{\star}\leq r^{\star}_{blue}\right\}$},\\
\lb{defregions_b}
\mbox{and} \quad \mbox{blueshift}\quad  &&\mbox{$\cB=\left\{r^{\star}_{blue}\leq r^{\star}< \infty\right\}$}
\eea
subregions, as shown in Figure \ref{IIbnr}.
{\begin{figure}[ht]
\centering
\includegraphics[width=0.3\textwidth]{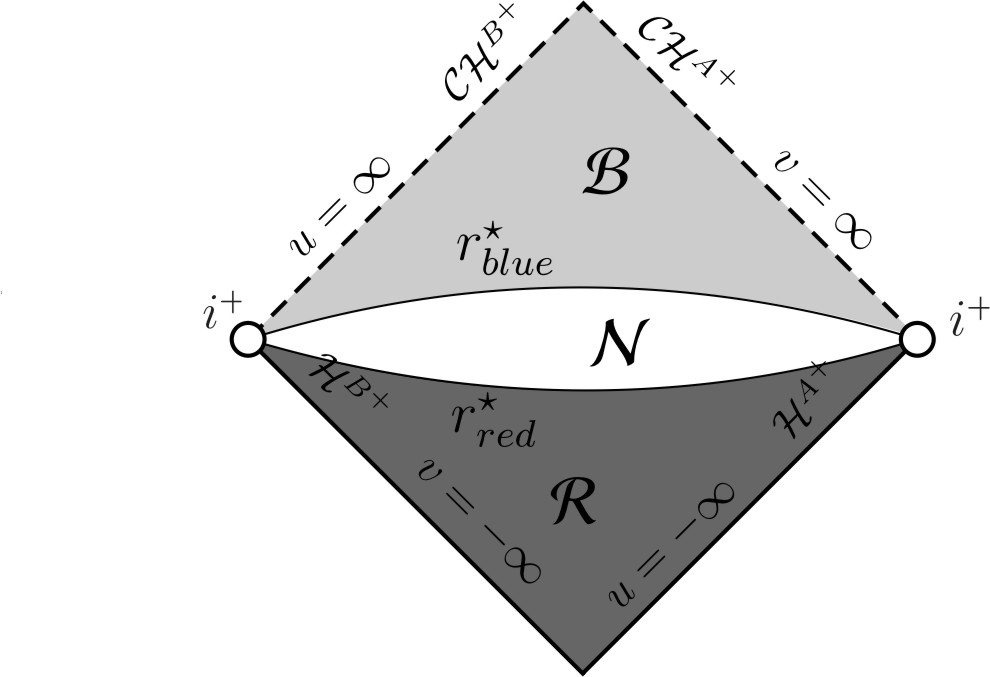}
\caption[]{Penrose diagram of the interior with distinction into redshift $\cR$, noshift $\cN$ and blueshift $\cB$ regions.}
\label{IIbnr}\end{figure}}

In the redshift region $\cR$, we make use of the fact that the surface gravity $\kappa_+$ of the event horizon is positive, see \eqref{kappapm}. 
Let $\varphi_t$ and $\varphi_{{\phi}}$ 
denote the 1-parameter groups of diffeomorphisms generated by the Killing
fields $\partial_t$ and $\partial_{\phi}$ (in Boyer-Lindquist coordinates), respectively. 
Then, the redshift region is characterized by the fact that there exists a $\varphi_t$ and $\varphi_{{\phi}}$ invariant vector field $N$, 
such that its associated current $J^N_{\mu}n^{\mu}_{v=const}$ on a $v=const$ hypersurface can be controlled by the related bulk term $K^N$, cf.~ Proposition \ref{mi}. 
This property of the bulk term $K^N$ holds sufficiently close to $\cH^+$. In particular, we shall define 
$r^{\star}_{red}$ negative and with big enough absolute value such that Proposition \ref{mi} is applicable. (Furthermore, note that the quantity $\frac{\partial_{\zeta}\Omega}{\Omega}$ is always positive in $\cR$.)

In $\cN$, defined in \eqref{defregions_n}, we exploit the fact that $J^{U}$, $K^{U}$ and $\cE^{U}$ are invariant under translations along $\partial_t$, where $t$ is the Boyer--Lindquist coordinate. For that reason we can uniformly control the bulk by the current along a constant $r^{\star}$ hypersurface. This will be explained further in Section \ref{no_region}.

The blueshift region $\cB$ is characterized by the fact that the bulk term $K^{S_0}$, associated to the vector field $S_0$, to be defined in \eqref{S0field}, is positive. We define 
$r^{\star}_{blue}$ big enough such that the quantities 
\bea
\lb{omegabruch}
\frac{\partial_{\zeta} \Omega}{\Omega}=\frac12 \left[\frac{\partial_{\zeta}\Delta}{\Delta}-\frac{\partial_{\zeta}(R^2)}{R^2}\right],
\eea
with $\zeta=u,v$, carry a negative sign\footnote{Readers familiar with \cite{anne} might notice that, for $|a|\ll M$, the quantity in \eqref{omegabruch} defining the sign approaches $M-\frac{a^2}{r}$ which is analogous to the charged case with $a$ in place of the charge $e$.}.
In particular, 
for $\tilde{\epsilon}$ sufficiently small the following lower bound holds in $\cB$
\bea
\lb{lowerboundu}
0<{\beta} \leq  -2\frac{\partial_u \Omega}{\Omega},\qquad
\lb{lowerboundv}
0<{\beta} \leq -2\frac{\partial_v \Omega}{\Omega},
\eea
with $\beta$ a positive constant.

\subsection{Notation}
\lb{nota}
It will be useful to determine in- and outgoing null hypersurfaces from intersection of out- and ingoing null hypersurfaces with
the spacelike hypersurfaces $r^{\star}=r^{\star}_{red}$, $r^{\star}=r^{\star}_{blue}$ and in addition the hypersurface $\gamma$ which will be defined in Section \ref{gamma_curve}.
In order to do that, recall the beginning of Section \ref{ambient} and end of Section \ref{doublenull}, were we have expressed the submanifold \mbox{$\cM|_{II}=\cQ|_{II}\times \bbS^2_{u,v}$} by \eqref{MII}. 
Further, we define \mbox{$\cM|_{II}=\pi^{-1}(\cQ|_{II}$}), where $\pi$ is the projection \mbox{$\pi:\cM|_{II}\rightarrow \cQ|_{II}$}.
Now we are able to define functions that are evaluated by the projection $\pi$ of two particular intersecting hypersurfaces.
For example given the hypersurface $r^{\star}=r^{\star}_{red}$ and the hypersurface $u=\tilde{u}$ we define the $v$ value at which these two hypersurfaces intersect by a function $v_{red}(\tilde{u})$ evaluated for $\tilde{u}$. Let us therefore introduce the following notation:
\bea
\lb{notation_neu}
v_{red}(\tilde{u}) \qquad &\mbox{is determined by}& \qquad r^{\star}(v_{red}(\tilde{u}), \tilde{u})=r^{\star}_{red},\nonumber \\
v_{\gamma}(\tilde{u}) \qquad &\mbox{is determined by}& \qquad (v_{\gamma}(\tilde{u}), \tilde{u}) \in \gamma, \nonumber\\
v_{blue}(\tilde{u}) \qquad &\mbox{is determined by}& \qquad  r^{\star}(v_{blue}(\tilde{u}), \tilde{u})=r_{blue},\nonumber \\
u_{red}(\tilde{v}) \qquad &\mbox{is determined by}& \qquad r^{\star}(u_{red}(\tilde{v}), \tilde{v})=r^{\star}_{red},\nonumber\\
u_{\gamma}(\tilde{v}) \qquad &\mbox{is determined by}& \qquad (u_{\gamma}(\tilde{v}), \tilde{v}) \in \gamma,\nonumber\\
u_{blue}(\tilde{v}) \qquad &\mbox{is determined by}& \qquad r^{\star}(u_{blue}(\tilde{v}), \tilde{v})=r^{\star}_{blue}.
\eea
For a better understanding the reader may refer to Figure \ref{integralbild3}.
{\begin{figure}[!ht]
\centering
\includegraphics[width=0.75\textwidth]{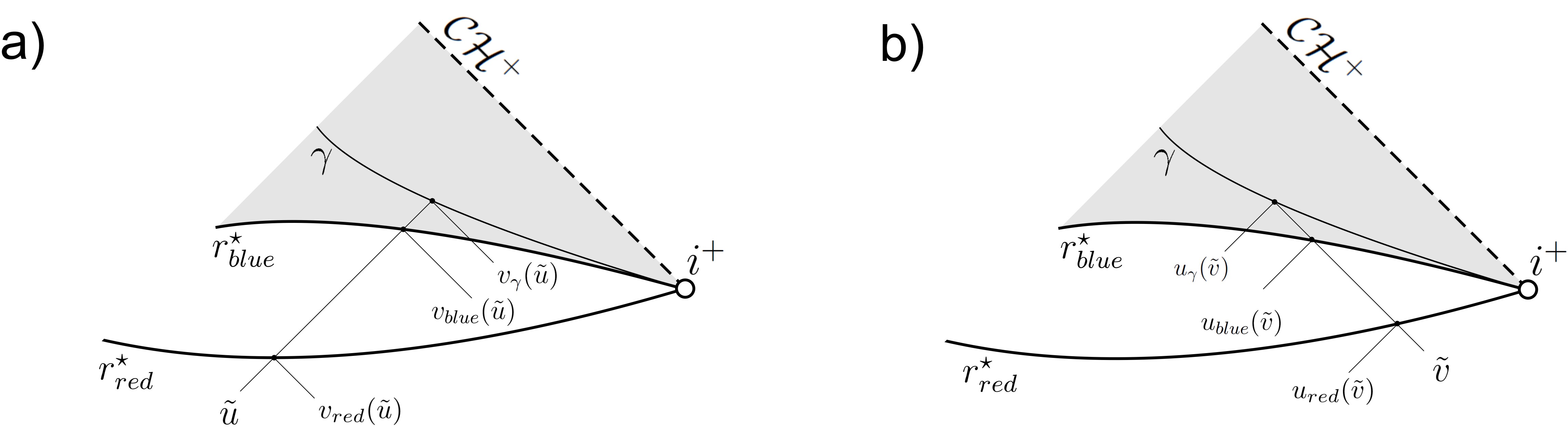}
\caption[]{Sketch of blueshift region $\cB$, with quantities depicted a) dependent on $\tilde{u}$ and b) dependent on $\tilde{v}$.}
\label{integralbild3}\end{figure}}

Note that the above functions are well defined since $r^{\star}=r^{\star}_{red}$, $r^{\star}=r^{\star}_{blue}$ and $\gamma$ are spacelike hypersurfaces along which we have \mbox{$u_{\gamma}(v)\rightarrow -\infty$} as \mbox{$v\rightarrow \infty$}. 

\section{The setup}
\lb{setup}

\subsection{Horizon estimates and Cauchy stability}
\lb{horizon_estimates}

Our starting point will be decay bounds for $\psi$, proven by Dafermos, Rodnianski and Shlapentokh-Rothman \cite{m_kerr}. In particular they show decay for $\psi$ and its derivatives in the black hole {\it exterior} up to and including the event horizon for the full range of all subextremal Kerr solutions. More specifically, the following theorem was stated in estimate $(21)$ of Theorem 3.1 and $(26)$ of Theorem 3.2 of \cite{m_kerr}, where the coordinate $\tau$ is comparable to our coordinate $v$. Also see the improved decay rate in \cite{george} derived by Moschidis.
\begin{thm}
\lb{anfang}
Let $\psi$ be a solution of the wave equation \eqref{wave_psi} on a subextremal Kerr background $(\cM,g)$, with mass $M$ and angular momentum per unit mass $a$ and $M>|a|\neq 0$, arising from smooth compactly supported initial data on an arbitrary Cauchy hypersurface $\Sigma$, cf.~ Figure \ref{kerrfigure}. Then, there exists $\delta>0$ such that
\bea
\lb{thpur1}
\int\limits^{v+1}_v\int\limits_{\bbS^2_{u,v}}\left[(\partial_v Y^k\psi+ b^{\tilde{\phi}}\partial_{\tilde{\phi}} Y^k\psi)^2(-\infty, v) +\Omega^2|\nabb Y^k\psi|^2(-\infty, v)\right] L\md \sigma_{\mathbb S^2}\md v\leq C v^{-2-2\delta},
\eea
on ${\cH_A}^+$, for all $v$, with $Y^k$ as in \eqref{angular_comm}, all $k \in \left\{0,1,2\right\}$ and some positive constants $C$ depending on the initial data. 
\end{thm}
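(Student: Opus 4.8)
We only outline the argument: its analytic content is contained in \cite{m_kerr} and \cite{george}, and the statement is essentially a transcription of the exterior decay estimates into the double null coordinates of Section~\ref{doublenull}. The plan is first to recall the relevant exterior result, and then to carry out the commutation and coordinate bookkeeping on $\cH_A^+$.

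The exterior input is the following. In \cite{m_kerr} an integrated local energy decay (Morawetz) estimate is proven on the exterior region $J^-(\cI^+)$ via a frequency-localized construction of multiplier and commutator currents, tailored to defeat simultaneously superradiance and the trapping of null geodesics (both of which, for $a\neq 0$, are genuinely present only after separating in $t$, $\phi$ and oblate spheroidal harmonics). Combined with a redshift vector field near $\cH^+$ and the $r^p$-weighted energy hierarchy applied on a foliation of the exterior terminating on $\cH^+$, this yields $\tau^{-2}$ decay of the nondegenerate energy flux through the event horizon, with $\tau$ a time function comparable to our coordinate $v$ along $\cH_A^+$; the refinement of \cite{george} upgrades the rate to $\tau^{-2-2\delta}$. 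Rerunning the same hierarchy after additionally commuting with the Killing fields $\partial_t$, $\partial_\phi$ and with the redshift vector field --- at the cost of more regularity of the data --- propagates the rate $v^{-2-2\delta}$ to all higher-order energy fluxes on $\cH_A^+$, including those involving up to any fixed number of angular and transverse derivatives of $\psi$.

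Next I would account for the commutations by the angular momentum operators $Y_i$. Since $Y_3=\partial_{\tilde\phi}=\partial_\phi$ is Killing, $Y_3\psi$ and $Y_3^2\psi$ again solve \eqref{wave_psi}, so the above decay applies to them verbatim. The operators $Y_1,Y_2$ are not Killing, but their coefficients depend only on $(\theta^{\star},\tilde\phi)$; hence each $Y_i$ commutes with $\partial_u$ and $\partial_v$ as a differential operator, so $\partial_v Y^k\psi=Y^k\partial_v\psi$, and $|\nabb Y^k\psi|^2$ is bounded by the sum of squares of at most $k+1$ angular derivatives of $\psi$ with bounded coefficients (using that $Y_1,Y_2,Y_3$ span the tangent space of $\bbS^2_{u,v}$ with a uniformly nondegenerate Gram matrix). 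The angular regularity needed on the right-hand side is furnished by the exterior higher-order estimates above: one uses the wave equation \eqref{wave_psi} together with elliptic estimates on the spheres to trade angular regularity inductively against regularity in the Killing and transverse directions. Thus the flux in \eqref{thpur1} decays at the rate $v^{-2-2\delta}$ for all $k\leq 2$.

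Finally I would do the coordinate bookkeeping on $\cH_A^+$. By \eqref{nu}, along $\cH_A^+=\{u=-\infty\}$ the vector field $\partial_v+b^{\tilde\phi}\partial_{\tilde\phi}$ is the null generator of the ingoing $u=\mathrm{const}$ hypersurfaces, hence proportional to $T_{\cH^+}$; since $b^{\tilde\phi}$ vanishes only at $\cC\cH^+$ and not on $\cH_A^+$, this is genuinely the generator direction. Consequently $(\partial_v Y^k\psi+b^{\tilde\phi}\partial_{\tilde\phi}Y^k\psi)^2+\Omega^2|\nabb Y^k\psi|^2$, weighted by $L\,\md\sigma_{\bbS^2}$, is a fixed multiple of the $T_{\cH^+}$-energy density of $Y^k\psi$ through $\cH_A^+$ in these coordinates, with $L$ bounded above and below by \eqref{boundedL} so that its insertion is harmless. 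Matching the identification $\tau\sim v$ and using finiteness of the data energy of the relevant orders (smoothness and compact support) then yields \eqref{thpur1}. The point to stress is that there is no genuinely new obstacle here: the analysis is that of \cite{m_kerr} and \cite{george}, and the only care required is the bookkeeping of the last two paragraphs --- verifying that the second-order \emph{angular} commutations are subsumed by the exterior higher-order estimates, and that the change from the exterior time function to $(u,v,\theta^{\star},\tilde\phi)$ is performed consistently on the horizon, where $\Omega^2$ degenerates while $b^{\tilde\phi}$ does not.
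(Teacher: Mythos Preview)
Your proposal is correct and matches the paper's approach: the paper does not prove this theorem at all but simply cites it from \cite{m_kerr} and \cite{george}, remarking only that the exterior time function $\tau$ is comparable to the coordinate $v$. Your outline actually goes further than the paper, spelling out the exterior machinery (Morawetz, redshift, $r^p$ hierarchy) and the coordinate and commutation bookkeeping on $\cH_A^+$ that the paper leaves entirely implicit; this added detail is sound and consistent with the cited literature.
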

The expression ${\bbS^2_{u,v}}$ denotes the spheres obtained in Eddington-Finkelstein-like coordinates which were derived in Section \ref{doublenull}.
The assumption of smoothness and compact support in Theorem \ref{anfang} can be weakened. See also related results \cite{anderson, tataru, tataru2, luk2, m_I_kerr, m_lec, m_bound} for the $|a|\ll M$ case and \cite{bernard, yakov} for mode stability.

Moreover, trivially from Cauchy stability, boundedness of the energy along a null segment transverse to $\cH^+$ can be derived. See Section \ref{firstlook} and recall that we have chosen the $v=1$ hypersurface to be the past boundary of the characteristic rectangle $\Xi$. More generally we can state the following proposition.
\begin{prop}
\lb{initialdataprop}
Let \mbox{$u_{\diamond}, v_{\diamond} \in (-\infty, \infty)$}.
Under the assumption of Theorem \ref{anfang}, the energy at advanced Eddington--Finkelstein-like coordinate \mbox{$\left\{v=v_{\diamond}\right\}\cap\left\{{-\infty}\leq u \leq {u_{\diamond}}\right\}$}
is bounded from the initial data 
\bea
\lb{proppur1}
\int\limits_{-\infty}^{u_{\diamond}}\int\limits_{\bbS^2_{u,v}}\left[ (\partial_u Y^k\psi)^2(u,v_{\diamond})+{\Omega^2}|\nabb  Y^k\psi|^2(u,v_{\diamond})\right] L\md \sigma_{\mathbb S^2}\md u&\leq& D(u_{\diamond},v_{\diamond}),
\eea
with $Y^k$ as in \eqref{angular_comm} and for all $k \in \left\{0,1,2\right\}$. Further, 
\bea
\lb{aufeins1}
\sup_{-\infty\leq u \leq {u_{\diamond}}}\int\limits_{\bbS^2_{u,v}} (Y^k\psi)^2(u,v_{\diamond})\md \sigma_{\mathbb S^2}&\leq&  D(u_{\diamond},v_{\diamond}),
\eea
with $D(u_{\diamond},v_{\diamond})$ positive constants depending on the initial data on $\Sigma$.
\end{prop}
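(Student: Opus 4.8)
\textbf{Proof proposal for Proposition \ref{initialdataprop}.}

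The plan is to invoke Cauchy stability for the wave equation together with the horizon decay of Theorem \ref{anfang}. First I would note that the null segment $\{v=v_\diamond\}\cap\{-\infty\le u\le u_\diamond\}$ is a compact (in the relevant Penrose-diagram sense) piece of a characteristic hypersurface whose past endpoint lies on the event horizon $\cH_A^+$, where by Theorem \ref{anfang} the commuted energy flux $\int_{\bbS^2_{u,v}}[(\partial_v Y^k\psi+b^{\tilde\phi}\partial_{\tilde\phi}Y^k\psi)^2+\Omega^2|\nabb Y^k\psi|^2]L\,\md\sigma_{\mathbb S^2}$ is finite (indeed decaying in $v$), and the zeroth-order quantity $\int_{\bbS^2}(Y^k\psi)^2$ is likewise controlled along $\cH_A^+$. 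Because the data on $\Sigma$ is smooth and compactly supported, $\psi$ is a smooth solution in the interior, and in particular all the quantities appearing in \eqref{proppur1}, \eqref{aufeins1} are finite on the given compact segment for \emph{each} fixed $(u_\diamond,v_\diamond)$; the content of the proposition is merely to record this finiteness with a constant $D(u_\diamond,v_\diamond)$ depending on initial data, to be used later as a seed for the energy propagation in region $\Xi$.

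The key steps, in order, are: (i) fix $v_\diamond$ and use the transversal Eddington--Finkelstein-like foliation to view $\{v=v_\diamond\}$ as evolving out of a neighborhood of $\cH_A^+$; (ii) apply the divergence theorem \eqref{divthe} with the redshift-type or timelike multiplier $N$ (the same $\varphi_t,\varphi_\phi$-invariant vector field used in the redshift region, cf.~Proposition \ref{mi}) in the characteristic triangle bounded by $\cH_A^+$, the segment $\{v=v_\diamond\}$, and an incoming null segment; since the relevant bulk term $K^N$ has a favorable sign sufficiently close to $\cH^+$, the flux through $\{v=v_\diamond\}$ is bounded by the flux through $\cH_A^+$, which is finite by Theorem \ref{anfang}; (iii) to reach $v_\diamond$ bounded away from the horizon, propagate finitely many times through the noshift region using the $\partial_t$-invariance of $J^U,K^U,\cE^U$ (Section \ref{no_region}) — each step costs only a constant depending on $a,M$ and on how far $v_\diamond$ is from the horizon; (iv) commute with the angular momentum operators $Y_i$ and repeat, picking up the error terms $\cE^X(Y^k\psi)$ which are themselves controlled by lower-order energies already bounded at the previous order, closing the induction on $k\in\{0,1,2\}$; (v) obtain \eqref{aufeins1} by integrating $\partial_u (Y^k\psi)^2=2(Y^k\psi)(\partial_u Y^k\psi)$ along $\{v=v_\diamond\}$ from $u=-\infty$, using Cauchy--Schwarz together with the flux bound \eqref{proppur1} and the zeroth-order bound on $\cH_A^+$.

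I expect the only genuine subtlety to be step (iii)--(iv): one must make sure that pushing the estimate from the horizon out to a fixed finite $v_\diamond$ (or $u_\diamond$) does not require any global-in-interior control — it does not, because $v_\diamond$ is \emph{fixed} and finite, so the region swept out is compact and the constants are allowed to blow up as $v_\diamond\to\infty$; this is exactly why the proposition is stated with a constant $D(u_\diamond,v_\diamond)$ rather than a uniform one. The treatment of the commutation error terms $\cE^X(Y^k\psi)$ is the same mechanism as in the later sections and is harmless here precisely because we are on a compact segment where $\Omega^2$ and the metric coefficients and their derivatives up to second order are uniformly bounded by Proposition \ref{higher_o_boundedness}. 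Everything else is a routine application of the divergence theorem \eqref{divthe} and Cauchy stability for the smooth solution $\psi$.
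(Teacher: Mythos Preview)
Your opening paragraph already contains the paper's entire argument: the null segment $\{v=v_\diamond\}\cap\{-\infty\le u\le u_\diamond\}$, including its endpoint on $\cH_A^+$, is a compact subset of the smooth manifold $\cM$, hence lies in the domain of dependence of a compact piece of $\Sigma$, and local energy estimates (Cauchy stability) for the smooth solution $\psi$ give both \eqref{proppur1} and \eqref{aufeins1} at once. The paper's proof is a single sentence to this effect, adding only the remark that the $\Omega^2$ weight (together with the degenerate vector field $\partial_u$) is precisely what renders the integrand in \eqref{proppur1} regular across $\cH_A^+$ despite $u\to-\infty$ there.

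Your steps (ii)--(iv) are an unnecessary detour and do not close as written. In step (ii) the divergence theorem needs a closed region: a ``triangle'' bounded by a piece of $\cH_A^+$ and the segment $\{v=v_\diamond\}$ alone is not closed, and any closure by a second constant-$v$ slice $\{v=v_1\}$ reintroduces exactly the flux you are trying to bound, while closing by a spacelike hypersurface is simply Cauchy stability from $\Sigma$ again. Moreover, the redshift/noshift propagation of Section~\ref{energy_interior} \emph{uses} Proposition~\ref{initialdataprop} as input (note the dependence on $D_0(u_\diamond,1)$ in Propositions~\ref{rtildedecay} and~\ref{r_{red}}), so deploying that machinery here is circular in the paper's logical order. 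Finally, the conclusion of Theorem~\ref{anfang} plays no role in this proposition: no quantitative decay along $\cH^+$ is needed to bound a smooth solution on a compact set. In short, stop after your first paragraph---that \emph{is} the proof.
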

\begin{proof}
This follows immediately from local energy estimates in a compact spacetime region. Note the $\Omega^{2}$ weights which arise since $u$ is not regular at $\cH^+_A$.
\end{proof}

\subsection{Statement of the theorem and outline of the proof in the neighborhood of $i^+$}
\lb{statement}
Before we can prove Theorem \ref{main}, we will first show the following. 
\begin{thm}
\lb{dashier}
On subextremal Kerr spacetime with $M>|a|\neq 0$, let $\psi$ be as in Theorem \ref{anfang}, then
\ben
|\psi|\leq C
\een
in the black hole interior up to
$\cC\cH^+$ in a ``small neighbourhood'' of timelike infinity $i^+$,
that is in \mbox{$(-\infty, u_{\schere}] \times [1, \infty)$} for some $u_{\schere}>-\infty$.
\end{thm}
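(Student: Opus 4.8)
The plan is to prove Theorem \ref{dashier} by establishing the higher order weighted energy bounds of Theorem \ref{energythm} restricted to the characteristic rectangle $\Xi$, whose past right boundary is $\cH^+$ near $i^+$ and whose future right boundary is $\cC\cH^+$, and then converting them into a pointwise bound via Sobolev embedding on the spheres $\bbS^2_{u,v}$. I would split $\Xi$ into the redshift region $\cR$, the noshift region $\cN$, and the blueshift region $\cB$, as introduced in Section \ref{bnrsection}, and propagate the (commuted) energy through each region in turn, starting from the horizon decay estimate \eqref{thpur1} of Theorem \ref{anfang} together with the transverse data bound \eqref{proppur1} of Proposition \ref{initialdataprop} on the $v=1$ hypersurface.

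\textbf{Step 1 (Redshift region $\cR$).} Near $\cH^+$ I would use the $\varphi_t$- and $\varphi_\phi$-invariant redshift vector field $N$, whose current $J^N_\mu n^\mu_{v=const}$ dominates (with a good sign) the bulk $K^N$, to absorb the bulk into the future flux; applying the divergence theorem \eqref{divthe} on a characteristic subrectangle and using \eqref{thpur1} for $Y^k\psi$ with $k\le 2$, I obtain an energy bound on the outgoing and ingoing null pieces of $\{r^\star = r^\star_{red}\}$ that is integrable in the relevant variable. At this stage one must already carry the commuted quantities $Y^k\psi$ and control the error terms $\cE^N(Y^k\psi)$ of the form \eqref{E}: since $[\Box_g, Y_i]\neq 0$ for $i=1,2$, commutation produces terms involving first derivatives of $\psi$ which are reabsorbed using the wave equation \eqref{wave_eq_eddf} and the redshift positivity, possibly with a Grönwall argument.

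\textbf{Step 2 (Noshift region $\cN$).} Here $\Delta$ is bounded away from zero, $r^\star$ ranges over a compact interval, and the vector field $U$ (proportional to $\partial_t$) has currents $J^U$, $K^U$, $\cE^U$ invariant under the $\partial_t$ flow; a straightforward divergence-theorem argument on the compact region $\{r^\star_{red}\le r^\star \le r^\star_{blue}\}$ propagates the energy from the boundary $\{r^\star = r^\star_{red}\}$ to $\{r^\star = r^\star_{blue}\}$ with only a multiplicative constant loss, again commuting with $Y^k$ and bounding the (now harmless, since $\Delta$ is bounded) error terms.

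\textbf{Step 3 (Blueshift region $\cB$, the main obstacle).} This is the crux. I would use the weighted vector field $S = |u|^p\partial_u + v^p\partial_v + v^p b^{\tilde\phi}\partial_{\tilde\phi}$ with $p>1$, whose bulk term $K^{S}$ (via $K^{S_0}$ for the normalized field $S_0$ of \eqref{S0field}) is positive once $r^\star_{blue}$ is chosen so that $\partial_\zeta\Omega/\Omega<0$, cf.~\eqref{lowerboundu}. Applying \eqref{divthe} on subrectangles of $\cB$ exhausting the corner at $i^+$, the positive weighted bulk controls the flux up to $\cC\cH^+$, yielding \eqref{energy1}–\eqref{energy2} with weights $v^p$, $|u|^p$. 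The genuinely new difficulty compared to \cite{anne} is the error terms $\cE^S(Y^k\psi)$: the commutators $[\Box_g,Y_i]$ generate terms that, unlike in the spherically symmetric case, do not vanish and must be shown to be controllable — one uses that the metric coefficients and their $\theta^\star$-derivatives are bounded (Propositions \ref{Fprop}, \ref{higher_o_boundedness}, the Dafermos–Luk remarks, and $b^{\tilde\phi}$, $\partial_{\theta^\star}b^{\tilde\phi}\propto\Omega^2$, vanishing at $\cC\cH^+$), so that the error integrands carry a favorable $|\Delta|=R^2\Omega^2$ weight and are absorbed into the good bulk by a Grönwall-in-$u$ (or $v$) estimate. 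Once the weighted energy with $k=2$ commutations is bounded throughout $\Xi$, a Sobolev embedding $H^2(\bbS^2)\hookrightarrow L^\infty(\bbS^2)$ applied fiberwise, together with the fundamental theorem of calculus in $v$ (integrating $\partial_v(Y^k\psi)^2$ and using the $v^p$, $p>1$, weight for integrability up to $v=\infty$), gives $|\psi|\le C$ uniformly on $(-\infty, u_{\schere}]\times[1,\infty)$ up to and including $\cC\cH^+$, and in fact continuity of $\psi$ there.
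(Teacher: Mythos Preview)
Your Steps 1 and 2 are broadly in line with the paper (though $U=\partial_u+\partial_v+b^{\tilde\phi}\partial_{\tilde\phi}$ is a timelike $\partial_{r^\star}$-type vector field, not proportional to the spacelike Killing field $\partial_t$; what matters is that the associated currents are $\partial_t$-invariant, which you correctly identify). The substantive gap is in Step 3.

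You treat the blueshift region as a single piece and assert that the bulk $K^S$ of the weighted multiplier $S=|u|^p\partial_u+v^p\partial_v+v^pb^{\tilde\phi}\partial_{\tilde\phi}$ is positive ``via $K^{S_0}$''. This conflates two distinct vector fields and two distinct mechanisms. In the paper, $S_0=f^q(\partial_u+\partial_v+b^{\tilde\phi}\partial_{\tilde\phi})$ carries no $|u|^p,v^p$ weights; its bulk is rendered nonnegative by choosing the exponent $q$ large enough (Lemma \ref{K_S_0}), and this is what propagates the decay from $r^\star=r^\star_{blue}$ only as far as an auxiliary spacelike hypersurface $\gamma$ lying a logarithmic-in-$v$ distance to the future of $r^\star=r^\star_{blue}$ (Section \ref{gamma_curve}, Proposition \ref{to_gamma0}). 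The weighted multiplier $S$ is used only in $J^+(\gamma)\cap\cB$, and its bulk $K^S$ is \emph{not} positive: the cross term $\tfrac{1}{\Omega^2}(\partial_u\psi)(\partial_v\psi+b^{\tilde\phi}\partial_{\tilde\phi}\psi)$ in \eqref{KS} has indefinite sign. What makes the argument close there is the pointwise decay $\Omega^2\lesssim v_\gamma(\bar u)^{-\beta\alpha}e^{-\beta(\bar v-v_\gamma(\bar u))}$ derived in Section \ref{finiteness}, available precisely because $\gamma$ sits a logarithmic distance into $\cB$; this yields smallness constants $\delta_1,\delta_2\to 0$ in Lemma \ref{K_spher} that allow the bad part $|\tilde K^S|$ of the bulk (and likewise the commutation errors $\cE^S(Y^k\psi)$, cf.~Lemma \ref{K_spher2}) to be \emph{absorbed} into the supremum of the boundary fluxes, not controlled by Gr\"onwall. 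Without the $\gamma$-splitting you have neither the positivity (the $q$-mechanism is tied to $S_0$, not $S$) nor the absorption (the volume of $J^-(\gamma)\cap\cB$ is not small and $\Omega^2$ does not decay there), so the argument as you have written it does not close in $\cB$.
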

{\em Remark.} We will see that $C$ depends only on the initial data. 

First we consider the {\it right} side of the spacetime interior. In particular we consider a characteristic rectangle $\Xi$ which extends from $\cH_A^+$, as shown in Figure \ref{Kerr_character}.
{\begin{figure}[ht]
\centering
\includegraphics[width=0.8\textwidth]{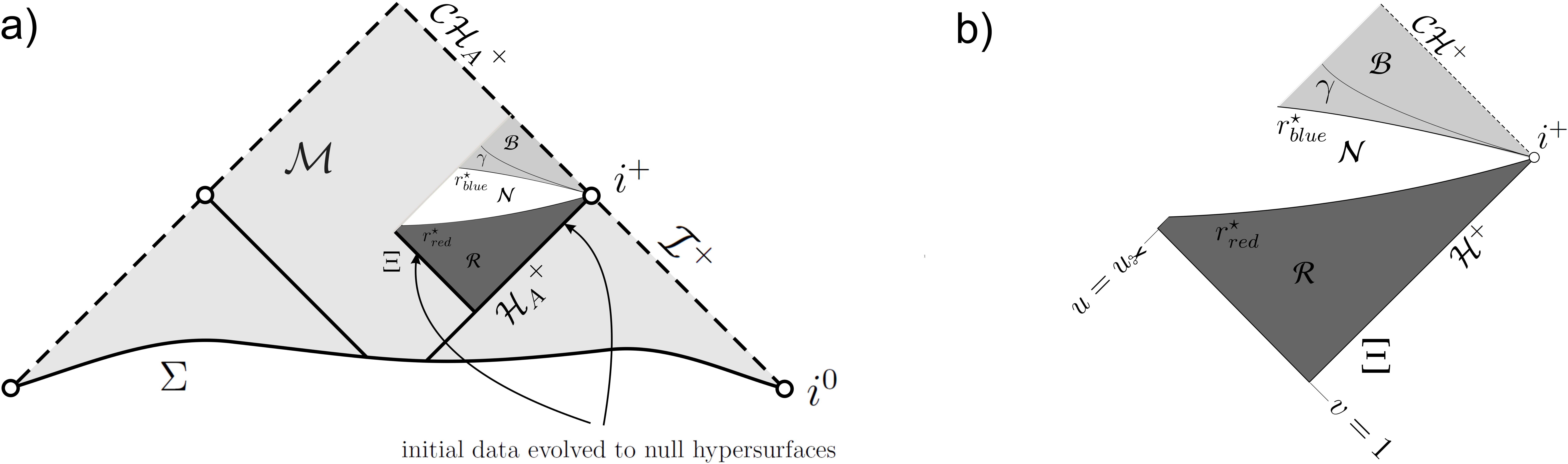}
\caption[]{a) Depiction of characteristic rectangle $\Xi$ within $\cQ|_{II}$ of Kerr spacetime, b) region $\Xi$ zoomed in.}
\label{Kerr_character}\end{figure}} 
We pick the characteristic rectangle to be defined by \mbox{$\Xi=\left\{(-\infty\leq u\leq u_{\schere}), (1\leq v <\infty)\right\}$}, where $u_{\schere}$ is sufficiently close to $-\infty$ for reasons that will become clear later on, cf.~ Lemma \ref{K_spher} and Proposition \ref{kastle}.
Bounds on the data along the event horizon $\cH_A^+$ were stated in Theorem \ref{anfang}, and bounds on the data of the transverse null segment were given in Proposition \ref{initialdataprop}. Both are consequences of propagation of smooth initial data from a Cauchy hypersurface as explained in Section \ref{setup}.
Theorem \ref{dashier} will be proven by first deriving energy estimates in the distinguished regions, denoted by redshift $\cR$, the noshift $\cN$ and the blueshift $\cB$ region, with the properties as explained in Section \ref{bnrsection}, 
cf.~ Figure \ref{Kerr_character} b). According to the properties of these regions we will have to choose different vector field mutlipliers in each of them. 

In the redshift region $\cR$ we will make use of the redshift vector field $N$, already introduced in \cite{m_lec}. We will elaborate more on this in Section \ref{red_region}.
Proposition \ref{mi} proves the positivity of the bulk $K^N$ which leads to boundedness of the energy flux generated by the current $J^N_{\mu}n^{\mu}_{v=const}$. Here we will have to construct $N$, as explained in Section \ref{red_region}, such that it can also compensate for the appearing error terms after commuting twice with the angular momentum operators.
Applying the divergence theorem, decay up to $r^{\star}=r^{\star}_{red}$ will be proven.

In the noshift region $\cN$, we can simply appeal to the fact that the future directed timelike vector field 
\ben
U=\partial_u+\partial_v +b^{\tilde{\phi}}\partial_{\tilde{\phi}},
\een
is invariant under the flow of the spacelike Killing vector field $\partial_t$.
It is for that reason that the bulk terms $K^{U}$ as well as the error terms $\cE^{U}$ can be uniformly controlled by the energy flux $J_{\mu}^{U}n^{\mu}_{{r^{\star}}=\bar{r}^{\star}}$ through the ${r^{\star}}=\bar{r}^{\star}$ hypersurface.
Decay up to $r^{\star}=r^{\star}_{blue}$ will be proven by making use of this, together with the uniform boundedness of the $v$ length of the region that we cut in $\cN$.

In order to gain control in the blueshift region $\cB$, we will partition it by the hypersurface $\gamma$ admitting logarithmic distance in $v$ from $r^{\star}=r^{\star}_{blue}$, cf.~ Section \ref{gamma_curve}. We will then separately consider the region to the past of $\gamma$, \mbox{$J^-(\gamma)\cap \cB$}, and the region to the future of $\gamma$, \mbox{$J^+(\gamma)\cap \cB$}.
The region to the future of $\gamma$ is characterized by good decay bounds on $\Omega^2$.
This property will be crucial to derive estimates for bulk and error terms in the future of $\gamma$.

In \mbox{$J^-(\gamma)\cap \cB$} we use a vector field 
\ben
{S_0}=f^q\partial_{r^{\star}}=f^q(\partial_u+\partial_v+ b^{\tilde{\phi}} \partial_{\tilde{\phi}}),
\een
where $q$ is sufficiently large, and $f$ is a function with certain properties defined in Section \ref{functionf}. Choosing the parameter $q$ big enough renders the terms associated to the spacetime integral positive, which is the ``good'' sign, when using the divergence theorem to derive upper bounds.

In order to complete the proof, we consider finally the region \mbox{$J^+(\gamma)\cap \cB$} and propagate the decay further from the hypersurface $\gamma$, up to the Cauchy horizon in a neighborhood of $i^+$. For this, we introduce a new timelike vector field ${S}$ defined by
\bea
\lb{N1}
{S}=|u|^p\partial_u+v^p\partial_v+ v^p b^{\tilde{\phi}} \partial_{\tilde{\phi}},
\eea
for an arbitrary $p$ such that
\bea
\lb{waspist}
1<p\leq 1+2\delta,
\eea
where $\delta$ is as in Theorem \ref{anfang}.
We use pointwise estimates on $\Omega^2$ in $J^+(\gamma)$ as a crucial step, see Section \ref{finiteness}. This is due to the proportionality of the occurring quantities to the function $\Omega^2$, see Section \ref{bounded}.

Putting everything together, in view of the geometry and the weights of $S$, we finally obtain for all $v_*\geq 1$
\bea
\lb{fluxscetch}
 \int\limits_1^{v_*} \int\limits_{\bbS^2_{u,v}} v^p(\partial_v Y^k\psi)^2 L \sin\theta\md \theta^{\star} \md \tilde{\phi}\md v\lesssim \mbox{Data},
\eea
for the weighted flux, with $Y^k$ as in Section \ref{angular}. 
Using the above, the uniform boundedness for $\psi$ stated in Theorem \ref{dashier} can then be derived from an argument that can be sketched as follows. 

Let us first see how we obtain an integrated bound on the spheres.
By the fundamental theorem of calculus and the Cauchy--Schwarz inequality we get
\bea
\lb{vertausch}
&&\int\limits_{\bbS^2_{u,v}} (Y^k\psi)^2(u, v_*, \theta^{\star}, \tilde{\phi}) L \sin\theta\md \theta^{\star} \md \tilde{\phi}\nonumber\\
&\lesssim& \int\limits_{\bbS^2_{u,v}}\left(\int\limits_1^{v_*} v^p(\partial_v Y^k\psi)^2\md v\right)\left(\int\limits_1^{v_*} v^{-p}\md v\right)L \sin\theta\md \theta^{\star} \md \tilde{\phi}+\mbox{data}\nonumber\\
&\lesssim& \left(\int\limits_1^{v_*}\int\limits_{\bbS^2_{u,v}}v^p(\partial_v Y^k\psi)^2L \sin\theta\md \theta^{\star} \md \tilde{\phi}\md v\right)\left(\int\limits_1^{v_*} v^{-p}\md v\right)+\mbox{data},
\eea
where the bound \eqref{boundedL} allowed us to pull the $(L\sin \theta)$-factor of the volume form inside the integral, so that the first factor of the first term is controlled by \eqref{fluxscetch}.
Therefore, we further get
\bea
\lb{fundcauchy11}
\int\limits_{\bbS^2_{u,v}} (Y^k\psi)^2 L\sin\theta\md \theta^{\star} \md \tilde{\phi}\stackrel{\eqref{fluxscetch}}{\lesssim}\mbox{Data}\int\limits_1^{v_*} v^{-p}\md v+\mbox{data}
\lesssim \mbox{Data}+\mbox{data},
\eea 
where we have used \mbox{$\int\limits^{\infty}_{1} v^{-p}\md v< \infty$} which followed from the first inequality of \eqref{waspist}.

Obtaining a pointwise statement from the above will be achieved by using \eqref{fundcauchy11} and applying Sobolev embedding on the spheres $\bbS^2_{u,v}$ (which are not round spheres), thus leads to the desired bounds, see Section \ref{uni_bounded}. This will close the proof of Theorem \ref{dashier}.
Having only one spacelike symmetry and having to control the appearing error terms is one of the main difficulties of this paper in comparison to the Reissner--Nordstr\"om analog \cite{anne}.

\section{Energy estimates in the interior}
\lb{energy_interior}

\subsection{Propagation through the redshift region $\cR$ to $r^{\star}=r^{\star}_{red}$}
\lb{red_region}
The following proposition was shown in \cite{m_lec}, see also \cite{volker} for a detailed proof.
\begin{prop}
\lb{mi}
(M. Dafermos and I. Rodnianski) For $r^{\star}_{red}$ sufficiently close to $-\infty$ there exists a $\varphi_{t}$ and $\varphi_{\phi}$-invariant smooth future directed timelike vector field $N$ on \mbox{$\left\{-\infty< r^{\star} \leq r^{\star}_{red} \right\}\cap \left\{v\geq 1\right\}$} and a positive constant $b_0$ such that 
\bea
\lb{energy_control}
b_0J_{\mu}^N(\psi) n^{\mu}_{{v}} \leq K^N(\psi) ,
\eea
for all solutions $\psi$ of \eqref{wave_psi}.
\end{prop}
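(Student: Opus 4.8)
The plan is to construct $N$ explicitly as a modification of the stationary vector field $T_{\cH^+}$ near the event horizon, exploiting the positivity of the surface gravity $\kappa_+$. First I would recall that on $\cH^+$ the deformation tensor of $T_{\cH^+}$ is governed by $\kappa_+>0$ through the redshift effect: schematically, $K^{T_{\cH^+}}(\psi)$ picks up a term proportional to $\kappa_+(\partial_u\psi)^2$ (in the relevant null frame) with a \emph{favourable} sign. Since $T_{\cH^+}$ degenerates to null on $\cH^+$ and does not by itself control the transversal derivative, I would set $N = T_{\cH^+} + \chi(r^\star)\,\partial_{r^\star}$ (or an equivalent combination such as $N = e_1\,T_{\cH^+} + e_2\,\partial_{r^\star}$ with suitable positive coefficient functions), where $\chi$ is supported near $\cH^+$. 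The extra $\partial_{r^\star}$ term makes $N$ strictly timelike up to and including $\cH^+$, so that $J^N_\mu(\psi)\,n^\mu_v$ is a positive-definite quadratic form in $(\partial_u\psi,\partial_v\psi+b^{\tilde\phi}\partial_{\tilde\phi}\psi,\nabb\psi)$ with coefficients bounded above and below on $\{r^\star\le r^\star_{red}\}\cap\{v\ge 1\}$.

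The core computation is then to evaluate $K^N(\psi)=(\pi^N)^{\mu\nu}T_{\mu\nu}(\psi)$ and show it dominates $J^N_\mu(\psi)\,n^\mu_v$ for $r^\star_{red}$ sufficiently negative. The point is that at $\cH^+$ itself the leading contribution comes entirely from $T_{\cH^+}$ and equals (up to positive factors) $\kappa_+$ times the full positive energy density; the $\partial_{r^\star}$ correction contributes lower-order terms that can be absorbed by choosing $\chi$ small and by restricting to a sufficiently thin neighbourhood. By continuity, the strict inequality $b_0 J^N_\mu(\psi)n^\mu_v \le K^N(\psi)$ that holds on $\cH^+$ persists on a full one-sided neighbourhood $\{-\infty < r^\star \le r^\star_{red}\}$, which fixes the choice of $r^\star_{red}$. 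That $N$ is $\varphi_t$- and $\varphi_\phi$-invariant is immediate from building it out of $T_{\cH^+}$, $\partial_{r^\star}$ and a function of $r^\star$ alone, all of which are invariant under the flows generated by the Killing fields $\partial_t$ and $\partial_\phi$ (equivalently, $N$ has components depending only on $r^\star$ and $\theta^\star$ in the relevant coordinates).

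The main obstacle is handling the lack of spherical symmetry: unlike in Reissner--Nordstr\"om, the null frame, the metric coefficients $\Omega^2$, $\gin_{\theta_C\theta_D}$ and $b^{\tilde\phi}$ all depend on $\theta^\star$, and the spheres $\bbS^2_{u,v}$ are not round. One must therefore verify that the redshift coefficient stays uniformly positive in $\theta^\star$ near $\cH^+$ — this is exactly where the $\theta$-independence of $\kappa_+$ in \eqref{kappapm} is used — and that the angular cross-terms arising from $(\pi^N)^{\mu\nu}$ acting on $T_{\mu\nu}(\psi)$ (including those involving $\partial_{\theta^\star}$ of the metric coefficients, which are controlled by the bounds of Section \ref{bounded}, e.g.\ \eqref{b_bound}) do not destroy positivity. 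Concretely one would complete the square in the quadratic form $K^N(\psi)$, show the discriminant stays negative uniformly as $r^\star\to-\infty$, and thereby extract $b_0$. I would quote the detailed verification from \cite{m_lec} and \cite{volker}, indicating only the adaptation needed for the Pretorius--Israel double-null gauge used here.
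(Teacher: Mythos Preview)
Your proposal contains a genuine conceptual error in the mechanism. The vector field $T_{\cH^+}=\partial_t+\frac{a}{r_+^2+a^2}\partial_\phi$ is a linear combination of the Killing fields $\partial_t$ and $\partial_\phi$, hence is itself Killing; its deformation tensor vanishes identically and $K^{T_{\cH^+}}(\psi)\equiv 0$. So your claim that ``the leading contribution comes entirely from $T_{\cH^+}$ and equals $\kappa_+$ times the full positive energy density'' cannot be right, and the companion statement that the $\partial_{r^\star}$ correction contributes only lower-order terms to be absorbed by choosing $\chi$ small is exactly backwards. All of the positive bulk $K^N$ must come from the \emph{non-Killing} modification; if you shrink $\chi$ you kill the very term you need.

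Compare with the paper's construction in Appendix~\ref{redshift_app}: one writes $N=N^u\partial_u+N^v(\partial_v+b^{\tilde\phi}\partial_{\tilde\phi})$ with $N^u,N^v>0$ (so $N$ is future timelike) and normalisation $g(N,T_{\cH^+})|_{\cH^+}=-2$, then computes $K^N$ from \eqref{Kplug_edd-f}. The dominant terms are $-\frac{1}{2\Omega^2}\partial_u N^v\,(\partial_v\psi)^2$ and $-\frac{1}{2\Omega^2}\partial_v N^u\,(\partial_u\psi)^2$, and positivity is obtained by choosing $\partial_u N^v$ and $\partial_u N^u$ \emph{negative with large absolute value} (a free choice, since this is the derivative of $N$ in the direction transversal to $\cH^+$). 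The surface gravity $\kappa_+$ enters not through $K^{T_{\cH^+}}$ but through the geometry: the relation $\nabla_{T_{\cH^+}}T_{\cH^+}=\kappa_+ T_{\cH^+}$ on $\cH^+$ (equivalently, the behaviour $\partial_\zeta\Omega/\Omega\to\kappa_+>0$ as $r^\star\to-\infty$) is what guarantees that the remaining coefficients in $K^N$ have the right sign near $\cH^+$ and that the cross-terms can be absorbed by Cauchy--Schwarz. Your discussion of the $\theta^\star$-dependence and the angular cross-terms in the last paragraph is on target, but the core redshift mechanism needs to be reworked along these lines.
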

\begin{proof}
As was already pointed out in \cite{m_kerr} and also \cite{luk_jan}, in Kerr spacetime, by $\varphi_{\tau}$ and $\varphi_{\phi}$ we denote diffeomorphisms generated by the Killing fields $\partial_t$ and $\partial_{\phi}$, respectively. These contain the diffeomorphisms generated by $T_{\cH^+}$, defined in \eqref{T_H2}. As mentioned in Section \ref{ambient}, $T_{\cH^+}$ is the vector field turning null on $\cH^+$ which is required for the construction of the proof. To see more details of the proof, and general structure of the vector field $N$, refer to Appendix \ref{redshift_app}.
\end{proof}
The decay bound along $r^{\star}=r^{\star}_{red}$ can now be stated in the following proposition.
\begin{prop}
\label{rtildedecay}
Let $\psi$ be as in Theorem \ref{anfang}, and $Y^k$ as in \eqref{angular_comm} with \eqref{Yk},  \eqref{YkK}, \eqref{YkE} and all $k \in \left\{0,1,2\right\}$.
Then, for all $\tilde{r^{\star}} \in (-\infty,r^{\star}_{red}]$, with $r^{\star}_{red}$ as in Proposition \ref{mi} and for all $v_*>1$,
\ben
\lb{decay_*}
\int\limits_ {\left\lbrace  v_* \leq v \leq v_*+1\right\rbrace } J_{\mu}^N(Y^k \psi) n^{\mu}_{r^{\star}=\tilde{r^{\star}}} \dV_{r^{\star}=\tilde{r^{\star}}} \leq C v_*^{-2-2\delta} , 
\een
with $C$ depending on $C_{0}$ of Theorem \ref{anfang} and $D_{0}(u_{\diamond}, 1)$ of Proposition \ref{initialdataprop}, where $u_{\diamond}$ is defined by $r^{\star}_{red}=r(u_{\diamond},1)$.
\end{prop}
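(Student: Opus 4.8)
The plan is to combine the degenerate energy decay along $\cH_A^+$ from Theorem \ref{anfang}, the transverse data bound from Proposition \ref{initialdataprop}, and the redshift bulk sign from Proposition \ref{mi} via the divergence theorem applied region by region. First I would commute the wave equation with the angular momentum operators $Y_{i_1}\cdots Y_{i_k}$, $k\le 2$. Since only $Y_3=\partial_{\tilde\phi}$ is Killing, commuting with $Y_1,Y_2$ produces an inhomogeneous wave equation $\Box_g(Y^k\psi)=F$ of the schematic form \eqref{inhomoKG}, where $F$ is a sum of lower-order terms in $Y^{\le k}\psi$ with coefficients built from the (bounded, by Section \ref{bounded}) metric components and their $\theta^\star$-derivatives on the spheres $\bbS^2_{u,v}$. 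Crucially, the vector field $N$ from Proposition \ref{mi} is $\varphi_t$- and $\varphi_\phi$-invariant, hence in the redshift region the corresponding deformation-tensor bulk $K^N(Y^k\psi)$ still satisfies the coercivity $b_0 J^N_\mu(Y^k\psi)n^\mu_v \le K^N(Y^k\psi)$; the novelty compared to \cite{anne} is only the error current $\cE^N(Y^k\psi)=F\cdot N(Y^k\psi)$.

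Next I would fix a dyadic slab $\{v_*\le v\le v_*+1\}$ and, for each $\tilde{r^\star}\in(-\infty,r^\star_{red}]$, apply the divergence theorem \eqref{divthe} in the spacetime region bounded by the segment of $\cH_A^+$ at $\{v_*\le v\le v_*+1\}$, the null segment along $v=v_*$ inside $\{-\infty\le u\le u_\diamond\}$ (data from Proposition \ref{initialdataprop}), the hypersurface $r^\star=\tilde{r^\star}$, and the null segment along $v=v_*+1$. Moving the past boundary terms and the spacetime term to one side, the coercivity \eqref{energy_control} lets me absorb the bulk $K^N$ into the left side, so that $\int_{\{v_*\le v\le v_*+1\}} J^N_\mu(Y^k\psi)n^\mu_{r^\star=\tilde{r^\star}}\dV_{r^\star=\tilde{r^\star}}$ is controlled by the horizon flux (bounded by $Cv_*^{-2-2\delta}$ via Theorem \ref{anfang}, summing the $O(1)$ many unit $v$-subintervals needed to cover the slab after accounting for the null slope), the data flux on $v=v_*$, and the accumulated error integral $\int |\cE^N(Y^k\psi)|\,\dV$. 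To handle the $v=v_*$ data flux one uses that it, too, is controlled by the horizon flux at earlier $v$ by propagating Proposition \ref{initialdataprop} — or more cleanly, one just includes the corner near $\cH_A^+$ in the flux and invokes that the transverse data at $v=1$ is finite, then runs a continuity/bootstrap argument in $v_*$.

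The main obstacle is closing the estimate in the presence of the error integral $\int_{\mathcal B}|\cE^N(Y^k\psi)|\dV = \int |F\cdot N(Y^k\psi)|\dV$. I would handle it by a Grönwall-type argument: $F$ is bounded pointwise by $C\sum_{j\le k}|Y^j\psi|_{\lesssim 1}$-type expressions whose $L^2$-on-spheres norms are themselves controlled by $J^N(Y^{\le k}\psi)$ fluxes (using that $r^\star=const$ slices are spacelike and $N$ timelike, so $J^N$ dominates all tangential derivatives including the $\partial_{\theta^\star},\partial_{\tilde\phi}$ appearing in $F$, together with Poincaré on $\bbS^2_{u,v}$ for the zeroth-order pieces, or simply commuting once more so all lower-order terms are differentiated). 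Then Cauchy–Schwarz on $|F\cdot N(Y^k\psi)|\le \tfrac12(|F|^2+|N(Y^k\psi)|^2)$ turns the error into $\int (\text{flux density}) $, and since the redshift region $\{-\infty<r^\star\le r^\star_{red}\}$ has bounded $r^\star$-width (uniformly, by the definition of $r^\star_{red}$) and $\Delta$, hence $\Omega^2$, is bounded there, the spacetime volume of each slab is finite with a uniform constant; Grönwall in the $r^\star$-variable (or in $v$) then yields the bound $Cv_*^{-2-2\delta}$ with $C$ depending only on $C_0$ of Theorem \ref{anfang} and $D_0(u_\diamond,1)$ of Proposition \ref{initialdataprop}, as claimed. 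A minor point is the degeneracy of $u$ at $\cH_A^+$, which is why the $\Omega^2$ weights appear in Proposition \ref{initialdataprop}; these are harmless since $\Omega^2$ is bounded above and below away from the horizon and the $N$-flux is defined with respect to the regular frame.
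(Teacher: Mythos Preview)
Your outline for $k=0$ is essentially the paper's argument: divergence theorem in the slab, horizon decay from Theorem~\ref{anfang}, transverse data from Proposition~\ref{initialdataprop}, and the coercivity $b_0 J^N_\mu n^\mu_v\le K^N$ from Proposition~\ref{mi}, closed by a bootstrap in $v_*$. That part is fine.

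The gap is in your treatment of $k\ge 1$. You write that ``the redshift region $\{-\infty<r^\star\le r^\star_{red}\}$ has bounded $r^\star$-width'' and then invoke Gr\"onwall. This is false: the redshift region reaches all the way to $r^\star=-\infty$ (equivalently $u\to-\infty$ in each $v$-slab), so Gr\"onwall in $r^\star$ over an infinite interval does not close. Moreover, your unweighted Cauchy--Schwarz $|F\cdot N(Y^k\psi)|\le\tfrac12(|F|^2+|N(Y^k\psi)|^2)$ produces a top-order piece $|N(Y^k\psi)|^2$ comparable to the flux density itself; without a smallness parameter it cannot be absorbed into $K^N(Y^k\psi)$, and the leftover $|F|^2$ lives on an unbounded region.

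The paper handles this not by Gr\"onwall but by strengthening the redshift construction: it proves a higher-order coercivity (Lemma~\ref{mi_k}) showing that one may choose $N$ (specifically with $\partial_u N^u,\partial_u N^v$ sufficiently negative) so that
\[
\sum_{j\le k}\bigl(K^N(Y^j\psi)+\cE^N(Y^j\psi)\bigr)\;\ge\;\sum_{j\le k} b_j\,J^N_\mu(Y^j\psi)\,n^\mu_v
\]
pointwise in $\cR$. The error terms are thus absorbed \emph{into} the bulk by the redshift mechanism, and the divergence theorem (summed over $j\le k$) gives the result directly, with no Gr\"onwall and no finite-width assumption. If you want to salvage a Gr\"onwall-style argument, you would need at minimum a weighted Cauchy--Schwarz $|F\cdot N(Y^k\psi)|\le \epsilon|N(Y^k\psi)|^2+C_\epsilon|F|^2$ with $\epsilon$ small enough to be eaten by $b_0$, followed by induction on $k$; but this is exactly what Lemma~\ref{mi_k} packages more cleanly by exploiting the freedom in $N$.
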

{\em Remark 1.} The decay in Proposition \ref{rtildedecay} matches the decay on $\cH^+$ of Theorem \ref{anfang}.

{\em Remark 2.} 
$n^{\mu}_{r^{\star}=r^{\star}_{red}}$ denotes the normal to the $r^{\star}=r^{\star}_{red}$ hypersurface oriented according to the Lorentzian geometry convention. $\dV$ denotes the volume element over the entire spacetime region and $\dV_{r^{\star}=r^{\star}_{red}}$ denotes the volume element on the $r^{\star}=r^{\star}_{red}$ hypersurface. Similarly for all other subscripts.\footnote{Refer to the end of Section \ref{doublenull} for further discussion of the volume elements.}

\begin{proof}
Applying the divergence theorem, the decay rate from Theorem \ref{anfang} and Proposition \ref{initialdataprop} as well as Propositon \ref{mi} and a bootstrap argument, the statement of the Propositon \ref{rtildedecay} follows for $k=0$. For more details see Proposition 4.2 of \cite{anne}. To consider higher orders we need the following lemma.
\begin{lem}
\lb{mi_k}
For $r^{\star}_{red}$ sufficiently close to $-\infty$ there exists a $\varphi_{t}$ and $\varphi_{\phi}$-invariant smooth future directed timelike vector field $N$ on \mbox{$\left\{-\infty< r^{\star} \leq r^{\star}_{red} \right\}\cap \left\{v\geq 1\right\}$} and positive constants $b_k$, with all $k \in \left\{0,1,2\right\}$ and $Y^k$ as in \eqref{angular_comm} with \eqref{Yk},  \eqref{YkK}, \eqref{YkE}, such that 
\bea
\lb{energy_control2_1}
b_1J_{\mu}^N(Y\psi) n^{\mu}_{{v}}+b_0J_{\mu}^N(\psi) n^{\mu}_{{v}} \leq K^N(Y\psi)+ \cE^N(Y\psi)+K^N(\psi),
\eea
and
\bea
\lb{energy_control2_2}
b_2J_{\mu}^N(Y^2\psi) n^{\mu}_{{v}}+b_1J_{\mu}^N(Y\psi) n^{\mu}_{{v}}+b_0J_{\mu}^N(\psi) n^{\mu}_{{v}} \nonumber\\
\leq K^N(Y^2\psi)+ \cE^N(Y^2\psi)+ K^N(Y\psi)+ \cE^N(Y\psi)+K^N(\psi),
\eea
for all solutions $\psi$ of \eqref{wave_psi}.
\end{lem}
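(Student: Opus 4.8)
\textbf{Proof proposal for Lemma \ref{mi_k}.}

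The plan is to build on the construction of the redshift vector field $N$ from Proposition \ref{mi} and upgrade its ``good'' lower bound \eqref{energy_control} to the commuted setting by exploiting the algebra of the angular momentum operators $Y_i$. First I would recall that \eqref{energy_control} already gives the zeroth-order inequality $b_0 J^N_\mu(\psi)n^\mu_v \le K^N(\psi)$, and that the only new phenomenon at higher order is the appearance of the error current $\cE^N(Y^k\psi) = (\Box_g Y^k\psi)\, N(Y^k\psi)$, which is nonzero precisely because $Y_1, Y_2$ are not Killing. The key structural point is that $[\Box_g, Y_i]$ is a second-order differential operator whose coefficients involve only the metric coefficients and their $\theta^\star$-derivatives (which are bounded by Proposition \ref{higher_o_boundedness}), so that $\Box_g(Y^k\psi)$ can be written as a sum of terms each of which is a bounded coefficient times a derivative of $Y^{k'}\psi$ with $k' < k$ (using the wave equation $\Box_g\psi = 0$ repeatedly to trade a $\Box_g$ for lower-order commutator terms). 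Hence $\cE^N(Y^k\psi)$ is pointwise controlled by $C \sum_{k' < k} J^N_\mu(Y^{k'}\psi) n^\mu_v$ plus a term controlled by the bulk $K^N(Y^k\psi)$ itself, on the region $\{-\infty < r^\star \le r^\star_{red}\}\cap\{v\ge 1\}$ where all coefficients are uniformly bounded.

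Concretely, for \eqref{energy_control2_1} I would start from $b_0 J^N_\mu(Y\psi) n^\mu_v \le K^N(Y\psi)$, which holds verbatim from Proposition \ref{mi} applied to the scalar $Y\psi$ (the proposition only uses that $\psi$ solves the homogeneous wave equation, so for $Y\psi$ we pick up exactly the defect $\cE^N(Y\psi)$). Then I would add a large multiple $\Lambda$ of the zeroth-order inequality $b_0 J^N_\mu(\psi)n^\mu_v \le K^N(\psi)$ and absorb the error term: since $|\cE^N(Y\psi)| \lesssim J^N_\mu(\psi)n^\mu_v + J^N_\mu(Y\psi)n^\mu_v$ pointwise (here one uses that $N$ is timelike so $J^N_\mu(\cdot)n^\mu_v$ is coercive in all first derivatives on the $v=const$ hypersurface, and that $\Box_g Y\psi$ involves at most first derivatives of $\psi$ after using $\Box_g\psi=0$), choosing the redshift parameter defining $N$ slightly larger — equivalently taking $r^\star_{red}$ closer to $-\infty$ so that the positive bulk $K^N$ dominates with room to spare, exactly as in the construction in Appendix \ref{redshift_app} — lets us rename constants and obtain \eqref{energy_control2_1} with new positive $b_1, b_0$. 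The case \eqref{energy_control2_2} is identical in spirit: apply the scalar inequality to $Y^2\psi$, add suitably weighted copies of the $k=1$ and $k=0$ inequalities, and absorb $|\cE^N(Y^2\psi)| \lesssim \sum_{k'\le 1} J^N_\mu(Y^{k'}\psi)n^\mu_v + (\text{small})\,K^N(Y^2\psi)$ into the good bulk terms, at the cost of again enlarging the redshift strength.

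The main obstacle is the commutator analysis: one must verify that $[\Box_g, Y_i]$, and iterated commutators $[[\Box_g,Y_i],Y_j]$, genuinely have bounded coefficients up to the order needed and produce no terms with more derivatives than can be absorbed — in particular that the potentially dangerous $\frac{\partial_{\theta^\star}|L\sin\theta|}{|L\sin\theta|}$-type factor noted after \eqref{det_regel} either does not appear in $[\Box_g, Y_i]$ or appears only multiplied by $\sin\theta$ (so that, integrated against the volume element, it is harmless), and that the $\theta^\star$-weights in Proposition \ref{higher_o_boundedness} are exactly strong enough. Once this is in hand, the rest is the standard redshift absorption argument, and the lemma feeds directly into the divergence-theorem estimate of Proposition \ref{rtildedecay} at orders $k=1,2$.
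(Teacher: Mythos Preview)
Your overall strategy matches the paper's: start from the algebraic redshift inequality $b_0 J^N_\mu(\cdot)n^\mu_v \le K^N(\cdot)$ (which indeed holds for arbitrary functions, not just solutions), compute the commutator $[\Box_g,Y]$ using \eqref{comm} and the deformation-tensor components \eqref{pi-Y-termevv}--\eqref{pi-Y-termeAB}, substitute for the $\partial_u\partial_v\psi$ term via the wave equation \eqref{uv_term}, and then absorb the resulting error by strengthening $N$. The paper is more concrete on the last point than your ``enlarge the redshift strength'': it specifies that one takes $\partial_u N^u$ and $\partial_u N^v$ negative with large absolute value, which enlarges the coercive pieces of $K^N$ in \eqref{KplugN} without changing $N$ (hence without changing $\cE^N$ or $J^N$) on the horizon itself.

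One point needs correcting. Your parenthetical justification asserts that ``$\Box_g Y\psi$ involves at most first derivatives of $\psi$ after using $\Box_g\psi=0$.'' This is false: even after trading $\partial_u\partial_v\psi$ via \eqref{uv_term}, the commutator \eqref{error_structure2} still contains genuine second-order terms such as $\partial_u\partial_{\theta^\star}\psi$, $\partial_v\partial_{\theta^\star}\psi$, and $\partial_{\theta^\star}^2\psi$. What saves you is that every surviving second-order term carries at least one angular leg, so (summing over the $Y_i$) each is a \emph{first} derivative of some $Y_j\psi$ and is therefore controlled by $J^N_\mu(Y\psi)n^\mu_v$, not by $J^N_\mu(\psi)n^\mu_v$. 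Your stated bound $|\cE^N(Y\psi)| \lesssim J^N_\mu(\psi)n^\mu_v + J^N_\mu(Y\psi)n^\mu_v$ is thus correct, but for the right reason; and likewise your structural claim should read ``a derivative of $Y^{k'}\psi$ with $k'\le k$'' rather than ``$k'<k$''. With this fix the absorption argument goes through exactly as you and the paper describe.
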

\begin{proof}
In Appendix \ref{redshift_app} we have discussed how to prove
Proposition \ref{mi}. Now recall expressions \eqref{errorexpression} and \eqref{comm} in order to investigate the control over the error terms.
From equations \eqref{pi-Y-termevv} to \eqref{pi-Y-termeAB} of Appendix \ref{error_app}.\ref{S_error}, we can see that like all first derivative terms, all terms multiplying the second derivatives will also be bounded. Note, that in order to control the $(\partial_u\partial_v \psi)$ term, we need to use \eqref{uv_term}. Therefore, with the choice $N^u, N^v$ positive and $\partial_u N^v$, $\partial_u N^u$ negative and with large enough absolute value, so that the contribution from the error terms is compensated, we can prove the above lemma for $k \in \left\{0,1,2\right\}$.  
\end{proof}
To prove Propositon \ref{rtildedecay} for $k=1,2$, we can now use the divergence theorem and the above lemma. 
\end{proof}
\begin{cor}
Let $\psi$ be as in Theorem \ref{anfang}, and $Y^k$ as in \eqref{angular_comm} with \eqref{Yk},  \eqref{YkK}, \eqref{YkE} and all $k \in \left\{0,1,2\right\}$, and for $r^{\star}_{red}$ as in Proposition \ref{mi}. Then, for all $v_*\geq 1$, $v_*+1\leq v_{red}(\tilde{u})$ and for all $\tilde{u}$ such that \mbox{$r^{\star}(\tilde{u},v_*+1) \in (-\infty, r^{\star}_{red}]$},  we have
\lb{cor5.2}
\bea
\int\limits_ {\left\lbrace  v_* \leq v \leq v_*+1\right\rbrace } J^N_\mu(Y^k\psi)n^\mu_{u=\tilde{u}}dVol_{u=\tilde{u}} \le C v_*^{-2-2\delta},
\eea
with $C$ depending on $C_{0}$ of Theorem \ref{anfang} and $D_{0}(u_{\diamond}, 1)$ of Proposition \ref{initialdataprop}, where $u_{\diamond}$ is defined by $r^{\star}_{red}=r^{\star}(u_{\diamond},1)$ and $v_{red}(\tilde{u})$ as in \eqref{notation_neu}.
\end{cor}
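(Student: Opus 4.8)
The plan is to convert the decay estimate of Proposition \ref{rtildedecay}, which is stated for the flux through the spacelike hypersurface $r^\star = \tilde{r^\star}$, into the analogous decay estimate for the flux through the ingoing null hypersurface $u = \tilde{u}$ restricted to the slab $\{v_* \le v \le v_*+1\}$. Since both $r^\star = r^\star_{red}$ and the null segment $u = \tilde u$ (between its intersections with $v = v_*$ and $v = v_*+1$) together with the portions of $\{v = v_*\}$ and $\{v = v_*+1\}$ inside the redshift region bound a compact characteristic subregion $\mathcal{D}$ of $\mathcal{R}$, I would apply the divergence theorem \eqref{divthe} to the current $J^N_\mu(Y^k\psi)$ on $\mathcal{D}$.

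First I would fix $\tilde u$ with $r^\star(\tilde u, v_*+1) \in (-\infty, r^\star_{red}]$, so that the whole relevant null segment lies inside the redshift region where Proposition \ref{mi} and Lemma \ref{mi_k} apply. The region $\mathcal{D}$ is then the characteristic rectangle with past-left corner on $r^\star = r^\star_{red}$ (or further in), future-right corner on $u = \tilde u$, and top/bottom on $v = v_*+1$, $v = v_*$. Applying \eqref{divthe}, the flux of $J^N_\mu(Y^k\psi)$ through $u = \tilde u$ over $\{v_* \le v \le v_*+1\}$ equals the flux through $r^\star = \tilde{r^\star}$ (for the appropriate $\tilde{r^\star} \le r^\star_{red}$) over the corresponding $v$-range, plus the fluxes through the two short $v = \mathrm{const}$ segments, minus the spacetime bulk integral. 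Because $N$ is timelike and future-directed, all the boundary flux terms are nonnegative (with the correct orientation the incoming ones appear with the good sign), and by Lemma \ref{mi_k} the relevant combination of bulk terms $K^N(Y^k\psi)$ dominates the error terms $\mathcal{E}^N(Y^k\psi)$ and in fact is used with a favourable sign — so the bulk contributes with the right sign to absorb, rather than obstruct, the estimate. Hence the flux through $u = \tilde u$ is bounded by the flux through $r^\star = \tilde{r^\star}$ over a $v$-interval of length one (possibly after summing over finitely many unit slabs, all of which obey the same $v_*^{-2-2\delta}$ bound by Proposition \ref{rtildedecay}) together with the $v = \mathrm{const}$ boundary contributions, which are themselves controlled by the same decay rate. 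This yields the claimed bound $C v_*^{-2-2\delta}$ with $C$ of the stated dependence.

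The main obstacle I expect is bookkeeping the geometry of $\mathcal{D}$ so that every boundary piece lies in the region where Proposition \ref{mi}/Lemma \ref{mi_k} are valid and so that the orientations of the normals are chosen consistently — in particular verifying that the $v = v_*$ and $v = v_*+1$ boundary fluxes are themselves already controlled by $C v_*^{-2-2\delta}$ (this is where one needs that these segments sit inside $\{r^\star \le r^\star_{red}\}$ and uses a prior slab-by-slab estimate, exactly as in the $k=0$ argument cited from Proposition 4.2 of \cite{anne}). The energy-current manipulations themselves are routine once the redshift sign of the bulk and the domination of error terms from Lemma \ref{mi_k} are in hand; the corollary is essentially Proposition \ref{rtildedecay} transported from a spacelike to a null boundary via a compactly supported application of Stokes' theorem.
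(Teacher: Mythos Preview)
Your proposal is correct and follows essentially the same approach as the paper's own proof, which simply states that the conclusion follows ``by applying again the divergence theorem and using the results of the proof of Proposition~\ref{rtildedecay}.'' You have merely filled in the geometric bookkeeping (the region $\mathcal{D}$, the role of Lemma~\ref{mi_k} for the sign of the bulk plus error terms, and the control of the $v=\mathrm{const}$ boundary pieces) that the paper leaves implicit.
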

\begin{proof}
The conclusion of the statement follows by applying again the divergence theorem 
and using the results of the proof of Proposition \ref{rtildedecay}. 
\end{proof}

\subsection{Propagation through the noshift region $\cN$ to $r^{\star}=r^{\star}_{blue}$}

\lb{no_region}
Without any loss in the decay rate we can propagate it further inside through the noshift region $\cN$ up to the $r^{\star}=r^{\star}_{blue}$ hypersurface. 
In order to do that, we will use the future directed timelike vector field 
\bea
\lb{partial_r}
U=\partial_u+\partial_v +b^{\tilde{\phi}}\partial_{\tilde{\phi}},
\eea
and state the following lemma.
\begin{lem}
\lb{Kr} 
Let $\psi$ be an arbitrary function. Then, for $r^{\star}_{blue}$ sufficiently large, the bulk term of the future directed timelike vector field $U$ can be estimated by
\bea
\lb{controll_bulk_partial_r}
 |K^{U}(\psi)| &\leq& B_0 J_{\mu}^{U}(\psi) n^{\mu}_{r^{\star}={\bar{r}^{\star}}},
\eea
in $\cN$, where $B_0$ is independent of $v_*$ but dependent on the choice of $r^{\star}_{blue}$.
\end{lem}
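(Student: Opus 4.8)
The plan is to write out the bulk term $K^U(\psi) = (\pi^U)^{\mu\nu} T_{\mu\nu}(\psi)$ explicitly in the double null coordinates $(u,v,\theta^\star,\tilde\phi)$, using the metric \eqref{kerrmetric}, the inverse metric \eqref{inverse} and the stress-energy tensor \eqref{energymomentum}. The first observation is that $U = \partial_u + \partial_v + b^{\tilde\phi}\partial_{\tilde\phi}$ is, up to the addition of $2\partial_{\tilde\phi}$ times a smooth function, essentially the vector field $\partial_t$ in Boyer--Lindquist coordinates (recall $\partial_t = \partial_v - \partial_u$ under $t = v-u$ and the $\phi$-shift built into \eqref{phi_trafo}); more to the point, $U$ is invariant under the flow $\varphi_t$ of the Killing field $\partial_t$, as asserted in Section \ref{no_region}. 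Consequently the deformation tensor $\pi^U = \tfrac12 \mathcal{L}_U g$ has coefficients that depend only on $r^\star$ and $\theta^\star$ (not on $t$, equivalently not on the combination $v-u$ along a fixed $r^\star$-slice), and — crucially — every component of $\pi^U$ is a smooth function that is \emph{bounded} on the slab $\{r^\star_{red} \leq r^\star \leq r^\star_{blue}\} = \cN$, which is a fixed compact range of $r^\star$. This boundedness is exactly what the coefficient estimates collected in Section \ref{bounded} (Proposition \ref{higher_o_boundedness}, \eqref{b_bound}, \eqref{L_bounded}, \eqref{partial_eq1}--\eqref{partial_eq2}) are designed to supply.

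Next I would bound $|K^U(\psi)|$ pointwise by a quadratic form in the first derivatives $(\partial_u\psi, \partial_v\psi, \nabb\psi)$ with bounded, $t$-independent coefficients:
\[
|K^U(\psi)| \;\lesssim\; (\partial_u\psi)^2 + (\partial_v\psi + b^{\tilde\phi}\partial_{\tilde\phi}\psi)^2 + \Omega^2|\nabb\psi|^2,
\]
where the implied constant depends on $r^\star_{blue}$ (through the width of $\cN$ and the sup of the $\pi^U$-coefficients there) but not on $v_*$. Here one should be slightly careful with the angular term: the factor $\Omega^2 = -\Delta/R^2$ does not vanish anywhere on $\cN$ (since $\Delta \neq 0$ for $r_- < r < r_+$ away from the horizons), so on the compact slab $\cN$ the angular derivatives $|\nabb\psi|^2$ and $\Omega^2|\nabb\psi|^2$ are comparable and either may be used. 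On the other hand, the energy flux $J^U_\mu(\psi)n^\mu_{r^\star = \bar r^\star}$ through a spacelike slice $r^\star = \bar r^\star$ is, because $U$ is timelike and $n_{r^\star}$ is timelike (both future-directed), a \emph{positive-definite} quadratic form in the same first derivatives, again with coefficients bounded above and below on $\cN$. Comparing the two quadratic forms, whose coefficients are all controlled on the fixed compact slab, yields \eqref{controll_bulk_partial_r} with $B_0 = B_0(r^\star_{blue})$.

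The one genuine point requiring care — and the main obstacle — is the \emph{positive-definiteness} of $J^U_\mu(\psi)n^\mu_{r^\star = \bar r^\star}$ uniformly on $\cN$, i.e. that this flux really does dominate $(\partial_u\psi)^2 + (\partial_v\psi + b^{\tilde\phi}\partial_{\tilde\phi}\psi)^2 + |\nabb\psi|^2$ with a constant independent of $\bar r^\star \in [r^\star_{red}, r^\star_{blue}]$. This is where one uses that $U$ is timelike throughout $\cN$; in the interior $\partial_r$ is timelike (as recorded in Section \ref{ambient}), and a computation with \eqref{kerrmetric} shows $g(U,U) < 0$ on $\cN$, with $|g(U,U)|$ bounded below on the compact slab. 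Since $n_{r^\star}$ is the unit normal to a spacelike hypersurface, the contraction of the energy-momentum tensor of $\psi$ against the pair of future-directed timelike vectors $U$, $n_{r^\star}$ is coercive (the standard dominant-energy-condition estimate for the scalar field), with coercivity constant controlled by the causal "gap" of $U$ and $n_{r^\star}$ away from null — again uniform on $\cN$. Taking the supremum over the compact $r^\star$-range of all the metric- and deformation-tensor coefficients appearing, and noting that nothing in the estimate sees the $t$-coordinate (hence nothing degenerates as $v_*\to\infty$), delivers the claimed bound. I would also remark that $b^{\tilde\phi}$ is bounded on $\cN$ by \eqref{b_phi} (it need not be small there, only bounded), so absorbing the $b^{\tilde\phi}\partial_{\tilde\phi}\psi$ cross-terms costs only a constant depending on $r^\star_{blue}$.
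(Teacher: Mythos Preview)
Your proposal is correct and follows essentially the same approach as the paper: both rely on the $\partial_t$-invariance of $K^U$ and $J^U$ to reduce to a compact $(r^\star,\theta^\star)$-region, then invoke the coercivity of $J^U_\mu n^\mu_{r^\star}$ (from $U$ and $n_{r^\star}$ both being future timelike) to dominate the bounded quadratic form $K^U$. One small slip: $U = \partial_u + \partial_v + b^{\tilde\phi}\partial_{\tilde\phi}$ is proportional to $\partial_{r^\star}$ (plus an angular piece), \emph{not} to $\partial_t = \tfrac12(\partial_v - \partial_u)$ as you write in your first observation --- but you immediately pivot to the relevant property, $\varphi_t$-invariance, and the rest of the argument is unaffected.
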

\begin{proof}
Validity of the estimate \eqref{controll_bulk_partial_r} can be seen without computation from the fact that currents with timelike vector field multiplier, such as $J_{\mu}^{U}(\psi) n^{\mu}_{r^{\star}={\bar{r}^{\star}}}$, contain all derivatives. 
The uniformity of $B_0$ is given by the fact that $K^{U}$ and $J^{U}$ are invariant under translations along spacelike $\partial_t$.
Therefore, we can just look at the maximal deformation on a compact \mbox{$\left\{t=const \right\} \cap \left\{r^{\star}_{blue} \leq r^{\star} \leq r^{\star}_{red}\right\}$} hypersurface and get an estimate for the deformation everywhere.
\end{proof}

\begin{prop}
\label{r_{red}}
Let $\psi$ be as in Theorem \ref{anfang}, $r^{\star}_{blue}$ such that the quantities \eqref{omegabruch} are negative and $r^{\star}_{red}$ as in Proposition \ref{mi}, and $Y^k$ as in \eqref{angular_comm} with \eqref{Yk},  \eqref{YkK}, \eqref{YkE} and all $k \in \left\{0,1,2\right\}$. Then, for all $v_*>1$ and $\tilde{r^{\star}} \in [r^{\star}_{blue},r^{\star}_{red})$, we have 
\bea
\lb{noprop}
\int\limits_ {\left\lbrace  v_* \leq v \leq {v_*}+1\right\rbrace } J_{\mu}^{U}(Y^k\psi) n^{\mu}_{r^{\star}=\tilde{r}^{\star}} \dV_{r^{\star}=\tilde{r}^{\star}}
\leq
C{v_*}^{-2-2\delta}, 
\eea
with $C$ depending on the initial data or more precisely depending on $C_{0}$ of Theorem \ref{anfang} and $D_{0}(u_{\diamond}, 1)$ of Proposition \ref{initialdataprop}, where $u_{\diamond}$ is defined by $r^{\star}_{red}=r^{\star}(u_{\diamond},1)$.
\end{prop}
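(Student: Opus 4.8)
The plan is to carry the dyadic decay already established on $\{r^\star=r^\star_{red}\}$ by Proposition \ref{rtildedecay} (and on transverse null cones by its Corollary) forward across the slab $\cN$, using the future directed timelike multiplier $U$ of \eqref{partial_r} together with the uniform bulk estimate of Lemma \ref{Kr} and its analogue for the error current. The point is that $\cN=\{r^\star_{red}\le r^\star\le r^\star_{blue}\}$ has a \emph{fixed, finite} width in $r^\star$, hence in $u$ and $v$; so this is essentially a Gr\"onwall argument in the coordinate $r^\star$ over an interval of bounded length, and the rate $v_*^{-2-2\delta}$ is preserved because a given dyadic slab on $\{r^\star=\tilde r^\star\}$ depends on the data only through an $O(1)$ number of dyadic slabs on $\{r^\star=r^\star_{red}\}$, on $\cH_A^+$, and on $\{v=1\}$, together with the summability $\sum_j j^{-2-2\delta}<\infty$; for $v_*$ bounded the assertion reduces to Cauchy stability as in Proposition \ref{initialdataprop}.

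I would argue by induction on $k\in\{0,1,2\}$. Fixing $v_*>1$ and $\tilde r^\star\in(r^\star_{red},r^\star_{blue}]$, apply the divergence theorem \eqref{divthe} to $J^U_\mu(Y^k\psi)$ in the characteristic parallelogram bounded by the spacelike segments $\{r^\star=r^\star_{red}\}$ and $\{r^\star=\tilde r^\star\}$ and the null segments $\{v=v_*\}$ and $\{v=v_*+1\}$. The future null flux along $\{v=v_*+1\}$ is nonnegative and is discarded; the future flux through $\{r^\star=\tilde r^\star\}$ is the quantity to be estimated; and on the past side one has the flux through $\{r^\star=r^\star_{red}\}$ — bounded by $v_*^{-2-2\delta}$ via Proposition \ref{rtildedecay} — the null flux along $\{v=v_*\}$, and the spacetime integral of $\nabla^\mu J^U_\mu=K^U(Y^k\psi)+\cE^U(Y^k\psi)$ (for $k=0$ the error current $\cE^U(\psi)=(\Box_g\psi)\,U(\psi)$ vanishes identically and only $K^U$ remains). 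For the spacetime term one invokes Lemma \ref{Kr}: $|K^U(Y^k\psi)|\le B_0 J^U_\mu(Y^k\psi)n^\mu_{r^\star=\bar r^\star}$ uniformly, so by the coarea formula in $r^\star$ this is $\lesssim\int_{r^\star_{red}}^{\tilde r^\star}\big(\int_{\{r^\star=s\}\cap\{v_*\le v\le v_*+1\}}J^U_\mu(Y^k\psi)n^\mu\big)\,\md s$, which feeds Gr\"onwall. The null flux along $\{v=v_*\}$, a segment of fixed length in $r^\star$, is absorbed by a further application of the divergence theorem propagating it back to $\{r^\star=r^\star_{red}\}$ and, through the redshift region, to $\cH_A^+$ and $\{v=1\}$ — where the decay data of Theorem \ref{anfang}, Proposition \ref{initialdataprop}, Proposition \ref{rtildedecay} and its Corollary apply, the redshift bulk term of Proposition \ref{mi} supplying the favourable sign on the $\cR$ portion exactly as in the proof of Proposition \ref{rtildedecay} — again involving only $O(1)$ dyadic slabs. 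Running Gr\"onwall in $s$ over the fixed-length interval $[r^\star_{red},\tilde r^\star]$ then yields \eqref{noprop} and closes the induction.

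The principal obstacle — and the feature absent from the Reissner--Nordstr\"om analysis of \cite{anne} — is the control of the error currents $\cE^U(Y^k\psi)=(\Box_g Y^k\psi)\,U(Y^k\psi)$ for $k=1,2$. Here $\Box_g Y^k\psi=[\Box_g,Y^k]\psi$ is of the form \eqref{E} and contains $\partial_u\partial_v$ contributions; these are eliminated using the wave equation itself — as flagged in the proof of Lemma \ref{mi_k} — and afterwards $|\cE^U(Y^k\psi)|$ is dominated, uniformly in $v_*$, by $\sum_{j\le k}J^U_\mu(Y^j\psi)n^\mu_{r^\star=\bar r^\star}$. Uniformity uses that the commutator coefficients are bounded on $\cN$ (by the estimates of Section \ref{bounded}; on $\cN$ the factor $\Delta$ is merely bounded, not small, which suffices) and invariant under the flow of the spacelike Killing field $\partial_t$, so it is enough to check the pointwise domination on a compact $\{t=\mathrm{const}\}$ slice of $\cN$, exactly as for $K^U$ in Lemma \ref{Kr}. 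The terms with $j<k$ are already $\lesssim v_*^{-2-2\delta}$ by the inductive hypothesis and the $j=k$ term joins the Gr\"onwall; granting this bookkeeping, the propagation through the bounded-width slab $\cN$ is a soft argument since $U$ is timelike.
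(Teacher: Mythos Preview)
Your approach is correct and uses the same ingredients as the paper: Lemma \ref{Kr} for the bulk, Gr\"onwall over the fixed-width $r^\star$-interval, and for $k\ge 1$ the error-term control via $\partial_t$-invariance, which the paper packages as Lemma \ref{higher_r} with essentially the proof you sketch in your final paragraph. The one tactical difference is the choice of region. You take the parallelogram $\{v_*\le v\le v_*+1\}\cap\{r^\star_{red}\le r^\star\le \tilde r^\star\}$, which carries a past null side $\{v=v_*\}$ whose flux you must then trace back separately through $\cR$ to the data; the paper instead applies the divergence theorem directly in the domain of dependence $\cD^+\big(\{v_1\le v\le v_*+1\}\cap\{r^\star=r^\star_{red}\}\big)\cap\cN$, with $v_1=v_*-O(1)$ chosen so that this region already covers the target slab on $\{r^\star=\tilde r^\star\}$. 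Since a domain of dependence has its entire past boundary on the initial spacelike hypersurface, only Proposition \ref{rtildedecay} is needed as input and your separate back-propagation of the $\{v=v_*\}$ flux (all the way to $\cH_A^+$ and $\{v=1\}$) is unnecessary --- that work is already contained in Proposition \ref{rtildedecay}. The paper's ``comparability of $v_1$ and $v_*$'' is exactly your observation that only $O(1)$ dyadic slabs on $\{r^\star=r^\star_{red}\}$ feed into the estimate.
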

\begin{proof}
We will first prove the statement for $k=0$.
Given $v_*$, we define regions $\cR_{II}$ and $\tilde{\cR}_{II}$ as in Figure \ref{r_{red}_decay}, where we use \eqref{notation_neu} and
\bea
\lb{notation_v_*}
v(\tilde{r^{\star}}, v_*) \quad &\mbox{is determined by}& \quad r^{\star}(u_{blue}(v_*), v(\tilde{r^{\star}}, v_*))=\tilde{r^{\star}}.
\eea
Thus the depicted regions are given by \mbox{$\cR_{II}\cup \tilde{\cR}_{II}= \cD^+(\left\lbrace v_1 \leq v \leq v_*+1\right\rbrace \cap  \left\lbrace r=r_{red}\right\rbrace )\cap \cN $}, where region \mbox{$\cR_{II}$} is given by \mbox{$\cR_{II}= \cD^+(\left\lbrace v_1 \leq v \leq v_*\right\rbrace \cap \left\lbrace r^{\star}=r^{\star}_{red}\right\rbrace)  $}.
{\begin{figure}[ht]
\centering
\includegraphics[width=0.4\textwidth]{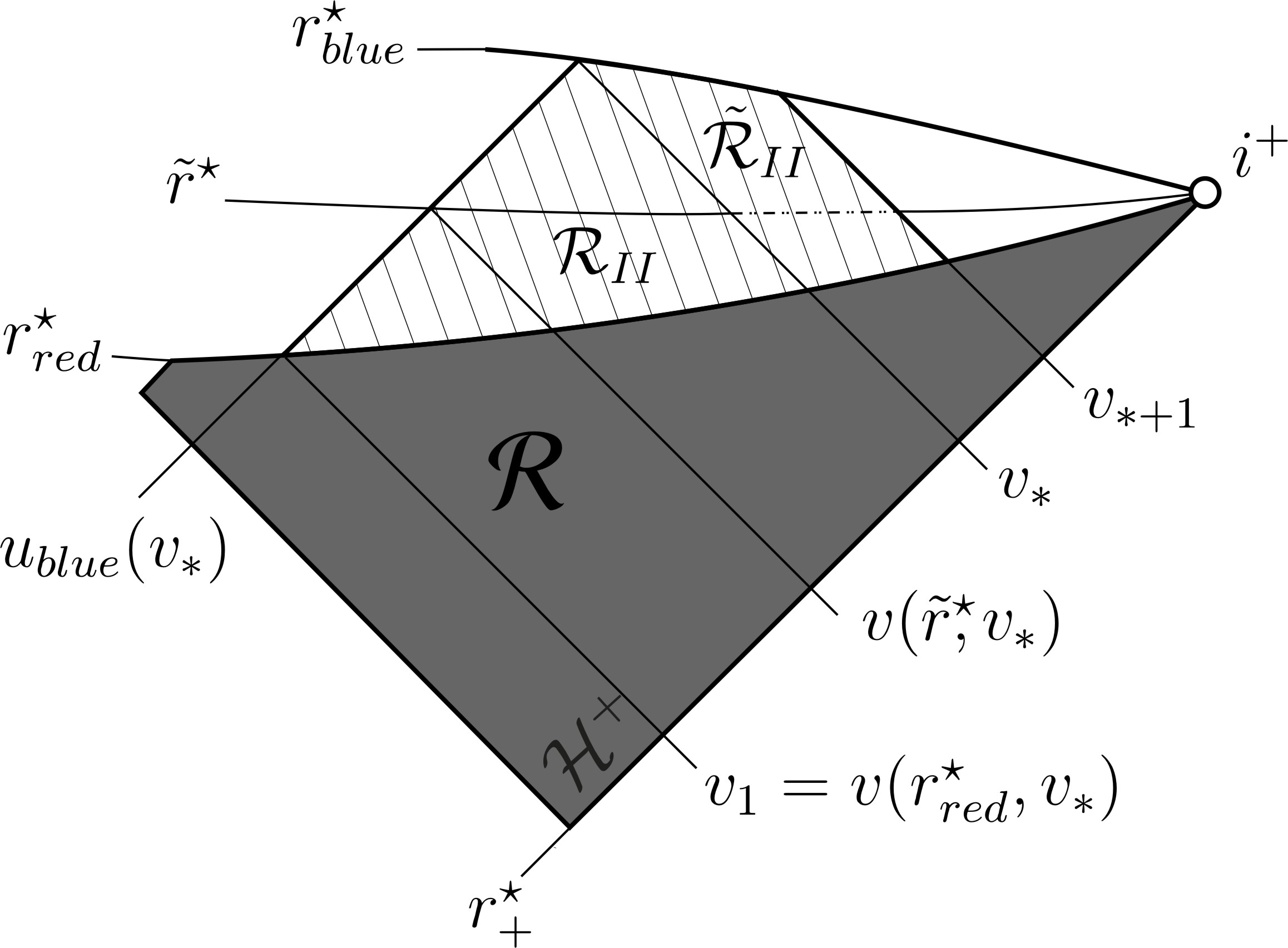}
\caption[Text der im Bilderverzeichnis auftaucht]{Region $\cR_{II}\cup \tilde{\cR}_{II}$ represented as the hatched area.}
\label{r_{red}_decay}\end{figure}}
Now we use the result of Proposition \ref{rtildedecay} and the divergence theorem in region $\cR_{II}\cup \tilde{\cR}_{II}$, to obtain decay on an arbitrary $r^{\star}=\tilde{r^{\star}}$ hypersurface, dash-dotted line, for $\tilde{r^{\star}} \in [r^{\star}_{blue},r^{\star}_{red})$. We achieve this by using Lemma \ref{Kr}, Gr\"onwall's inequality and comparability of $v_1$ and $v_*$. For more details refer to the proof of Proposition 4.5 of \cite{anne}. This proves \eqref{noprop} for $k=0$. In order to extend the proof for $k=1,2$ we will need the following lemma.
\begin{lem} 
\lb{higher_r}
Let $\psi$ be as in Theorem \ref{anfang}, and $Y^k$ as in \eqref{angular_comm} with \eqref{Yk},  \eqref{YkK}, \eqref{YkE} and all $k \in \left\{0,1,2\right\}$. Then, for $r^{\star}_{blue}$ sufficiently large, the bulk and error terms of the future directed timelike vector field $U$ can be estimated by
\bea
\lb{controll_bulk_partial_r1_1}
 |K^{U}(Y\psi)|+ |\cE^{U}(Y\psi)|+|K^{U}(Y\psi)| &\leq& B_1 J_{\mu}^{U}(Y\psi) n^{\mu}_{r^{\star}={\bar{r}^{\star}}}+B_0 J_{\mu}^{U}(\psi) n^{\mu}_{r^{\star}={\bar{r}^{\star}}},
\eea
and
\bea
\lb{controll_bulk_partial_r1_2}
 |K^{U}(Y^2\psi)|+ |\cE^{U}(Y^2\psi)|+|K^{U}(Y\psi)|+ |\cE^{U}(Y\psi)|+|K^{U}(Y\psi)|\nonumber\\
\leq B_2 J_{\mu}^{U}(Y^2\psi) n^{\mu}_{r^{\star}={\bar{r}^{\star}}}+B_1 J_{\mu}^{U}(Y\psi) n^{\mu}_{r^{\star}={\bar{r}^{\star}}}+B_0 J_{\mu}^{U}(\psi) n^{\mu}_{r^{\star}={\bar{r}^{\star}}},
\eea
in $\cN$,where $B_k$ with $k \in \left\{0,1,2\right\}$ are independent of $v_*$.
\end{lem}
\begin{proof}
The commutation with $Y^k$ does not change the $\partial_t$ invariance of the currents $K^{U}$ and $J_{\mu}^{U}$, so that we immediately get
\bea
\lb{controll_k_partial_r1}
 |K^{U}(Y^k\psi)| &\leq& \tilde{B}_k J_{\mu}^{U}(Y^k\psi) n^{\mu}_{r^{\star}={\bar{r}^{\star}}}.
\eea
Further, from equation \eqref{error_structure2} and \eqref{error_structure3} of Appendix \ref{error_app} \ref{error} we obtain
\bea
\lb{controll_e_partial_r1_1}
 |\cE^{U}(Y\psi)| &\leq& \tilde{\tilde{B}}_1 J_{\mu}^{U}(Y\psi) n^{\mu}_{r^{\star}={\bar{r}^{\star}}}+\tilde{\tilde{B}}_0 J_{\mu}^{U}(\psi) n^{\mu}_{r^{\star}={\bar{r}^{\star}}},
\eea
and
\bea
\lb{controll_e_partial_r1_2}
|\cE^{U}(Y^2\psi)|
\leq \tilde{\tilde{B}}_2 J_{\mu}^{U}(Y^2\psi) n^{\mu}_{r^{\star}={\bar{r}^{\star}}}+\tilde{\tilde{B}}_1 J_{\mu}^{U}(Y\psi) n^{\mu}_{r^{\star}={\bar{r}^{\star}}}+\tilde{\tilde{B}}_0 J_{\mu}^{U}(\psi) n^{\mu}_{r^{\star}={\bar{r}^{\star}}}.
\eea
Similar to the $K$-current the uniformity of $\tilde{\tilde{B}}_k$ follows from compactness of the region \mbox{$\cR_{II}\cup \tilde{\cR}_{II}$} and the $\partial_t$-invariance of $\cE^{U}$, see \cite{anne} for more details. 
The statement of Lemma \ref{higher_r} then follows by applying divergence theorem to all orders and summing the inequalities.
\end{proof}
The higher order decay rate of Proposition \ref{r_{red}} to the hypersurface $r^{\star}=r^{\star}_{blue}$ can now be shown by using Lemma  \ref{higher_r}, the divergence theorem and Gr\"onwall's inequality, as we have explained above and in more detail in the proof of Proposition 4.5 of \cite{anne}. 
\end{proof}
The above now also implies the following statement.
\begin{cor}
\lb{cor6.1}
Let $\psi$ be as in Theorem \ref{anfang}, $r^{\star}_{blue}$ sufficiently big as in Lemma \ref{higher_r}, $r^{\star}_{red}$ as in Proposition \ref{mi}, and $Y^k$ as in \eqref{angular_comm} with \eqref{Yk},  \eqref{YkK}, \eqref{YkE} and all $k \in \left\{0,1,2\right\}$.
Then, for all $v_*>1$ and all $\tilde{u}$ such that $r^{\star}(\tilde{u}, v_*) \in [r^{\star}_{blue}, r^{\star}_{red})$
\bea
\int\limits_ {\left\lbrace v_{red}(\tilde{u}) \leq v \leq v_{blue}(\tilde{u})\right\rbrace } J^{U}_\mu(Y^k\psi)n^\mu_{u=\tilde{u}}dVol_{u=\tilde{u}} \le C v_*^{-2-2\delta}, 
\eea
with $C$ depending on $C_{0}$ of Theorem \ref{anfang} and $D_{0}(u_{\diamond}, 1)$ of Proposition \ref{initialdataprop}, where $u_{\diamond}$ is defined by $r^{\star}_{red}=r^{\star}(u_{\diamond},1)$ and $v_{red}(\tilde{u})$, $v_{blue}(\tilde{u})$ are as in \eqref{notation_neu}.
\end{cor}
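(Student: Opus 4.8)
The plan is to mimic the passage from spacelike to null boundaries already carried out in Corollary \ref{cor5.2}, now feeding in the noshift estimate of Proposition \ref{r_{red}} instead of the redshift one. First I would fix $\tilde u$ with $r^{\star}(\tilde u,v_*)\in[r^{\star}_{blue},r^{\star}_{red})$ and set up a divergence-theorem region $\cW\subset\cN$ whose past boundary is a segment of $\{r^{\star}=r^{\star}_{red}\}$ on which decay is already known, whose future-right boundary is exactly the null segment $\{u=\tilde u\}\cap\{v_{red}(\tilde u)\le v\le v_{blue}(\tilde u)\}$ from the statement, and whose remaining future boundary, if any, is a segment of $\{r^{\star}=r^{\star}_{blue}\}$; compare Figure \ref{r_{red}_decay} and the notation \eqref{notation_neu}, \eqref{notation_v_*}. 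Since $r^{\star}=r^{\star}_{red}$ and $r^{\star}=r^{\star}_{blue}$ are spacelike and the relevant ingoing rays run off to $u=-\infty$, this region is well defined and contained in $\cD^+(\{r^{\star}=r^{\star}_{red}\}\cap\{v_1\le v\le v_*+1\})\cap\cN$ for some $v_1$ comparable to $v_*$.

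Next I would apply the divergence theorem to $J^{U}_{\mu}(Y^k\psi)$, summed over $k\in\{0,1,2\}$ together with the lower-order currents as in Lemma \ref{higher_r}. Because $U$ is future directed timelike, the flux through $\{u=\tilde u\}$ enters as a future boundary term with the coercive (``good'') sign, the flux through the $\{r^{\star}=r^{\star}_{red}\}$ segment enters as a past boundary term, and the spacetime contribution is $\int_{\cW}(|K^{U}(Y^k\psi)|+|\cE^{U}(Y^k\psi)|)\,\dV$. I would then foliate $\cW$ by the leaves $r^{\star}=\mathrm{const}$ over the uniformly bounded range $[r^{\star}_{blue},r^{\star}_{red}]$, bound the integrand on each leaf by the corresponding $r^{\star}=\mathrm{const}$ flux via Lemma \ref{higher_r}, and close the estimate with Gr\"onwall's inequality in the $r^{\star}$ variable. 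This yields the flux through $\{u=\tilde u\}$ bounded by the flux through the $\{r^{\star}=r^{\star}_{red}\}$ segment times a constant independent of $v_*$.

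Finally, I would invoke Proposition \ref{rtildedecay} (equivalently Corollary \ref{cor5.2}) to bound that flux by $C v_*^{-2-2\delta}$, using that the $v$-extent of $\cN$ is uniformly bounded, so $v_1$ and $v_*$ are comparable and only a bounded number of unit-$v$ blocks are involved, and the constant $C$ is the one of Theorem \ref{anfang} and Proposition \ref{initialdataprop}. The step I expect to be the main obstacle is, as in \cite{anne}, ensuring that the Gr\"onwall constant — equivalently the constants $B_k$ of Lemma \ref{higher_r} absorbing the bulk and error terms — is genuinely independent of $v_*$; this is exactly where the $\partial_t$-invariance of $K^{U}$, $\cE^{U}$ and $J^{U}$ (so that the deformation need only be controlled on a single compact $\{t=\mathrm{const}\}$ slice of $\cN$) together with the finite $v$-width of $\cN$ are essential.
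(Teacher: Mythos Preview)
Your proposal is correct and follows essentially the same route as the paper. The paper's own proof is terse: it applies the divergence theorem to the \emph{triangular} region $J^-(x)\cap\cN$ with $x=(\tilde u,v_{blue}(\tilde u))$, so that the only future boundaries are the two null segments $\{u=\tilde u\}$ and $\{v=v_{blue}(\tilde u)\}$, and then invokes ``the results of Proposition~\ref{r_{red}}'' directly rather than re-running the Gr\"onwall argument. Your version differs only cosmetically: you allow for a possible spacelike future piece on $\{r^{\star}=r^{\star}_{blue}\}$ (unnecessary, since the triangle already closes at $x$), and you spell out the Gr\"onwall step with Lemma~\ref{higher_r} explicitly instead of absorbing it into the citation of Proposition~\ref{r_{red}}. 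Both yield the same bound, and your emphasis on the $\partial_t$-invariance guaranteeing $v_*$-independent constants and on the comparability $v_*\sim v_{red}(\tilde u)\sim v_{blue}(\tilde u)$ is exactly the content the paper compresses into its final sentence.
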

\begin{proof}
The conclusion of the statement follows by considering the divergence theorem for a triangular region \mbox{$J^-(x)\cap \cN$} with \mbox{$x=(\tilde{u}, v_{blue}(\tilde{u}))$}, $x \in J^-(r^{\star}=r^{\star}_{blue})$ and using the results of Proposition \ref{r_{red}}. Note that \mbox{$v_*\sim v_{blue}(\tilde{u})\sim v_{red}(\tilde{u})$}.
\end{proof}
{\em Remark.} Recall here, that $r^{\star}_{blue}$ was chosen such that the quantity $\frac{\partial_{\zeta}\Omega}{\Omega}$, see \eqref{omegabruch}, is negative in the future of the hypersurface $r^{\star}=r^{\star}_{blue}$.

By the previous proposition we have successfully propagated the energy estimate further inside the black hole, up to $r^{\star}=r^{\star}_{blue}$.

\subsection{Propagation through $\cB$ from $r^{\star}=r^{\star}_{blue}$ to the hypersurface $\gamma$}
\lb{blue_past}
In Section \ref{statement} we have already introduced the hypersurface $\gamma$, to which we would like to propagate the energy estimate in the following. In order to do that we will introduce its properties first, to then return to the estimate.

\subsubsection{The hypersurface $\gamma$}
\lb{gamma_curve}
The idea of the hypersurface $\gamma$, see Figures \ref{Kerr_character} and \ref{region3_neu}, was already entertained in \cite{m_interior} by Dafermos, and basically locates $\gamma$ a logarithmic distance in $v$ from a constant $r$ hypersurface living in the blueshift region. 

Let $\alpha$ be a fixed constant satisfying 
\bea
\lb{alpha}
\alpha>\frac{p+1}{\beta}, \qquad \alpha>\frac{2}{\beta},
\eea 
with $\beta$ as in \eqref{lowerboundu}.
(The significance of the bound \eqref{alpha} will become clear later in Section \ref{bounding_bulk_S}. In hindsight of \eqref{waspist} the first condition given in \eqref{alpha} implies the second.) Let us for convenience also assume that
\bea
\alpha&>&1,
\eea
and
\bea
\alpha(2-\log{2\alpha})>v_{blue}(u)+u+1,
\eea
see notation \eqref{notation_neu}.
Then, we can define a function $H(u,v)$ such that 
\bea
\lb{ableit}
\frac{\partial H}{\partial u}=1, \qquad \frac{\partial H}{\partial v}=1-\frac{\alpha}{v},
\eea
and 
so that the hypersurface $\gamma$ is the levelset 
\bea
\lb{gammadefine}
\gamma=\left\{H(u,v)=0\right\}\cap \{v> 2\alpha\},
\eea 
satisfying the relation
\bea
\lb{gamma}
v_{\gamma}(u)-v_{blue}(u)=\alpha \log v_{\gamma}(u).
\eea
The hypersurface $\gamma$ is spacelike and terminates at $i^+$. (In the notation \eqref{notation_neu}, \mbox{$u_{\gamma}(v)\rightarrow -\infty$} as \mbox{$v\rightarrow \infty$}.)
Note that by our choices $u<-1$ and $v>|u|$ in $\cD^+(\gamma)$.

As we shall see in Section \ref{finiteness} the above properties of $\gamma$ will allow us to derive pointwise estimates of $\Omega^2$ in $J^+(\gamma)\cap \cB$. We first turn however to the region $J^-(\gamma)\cap \cB$. 

\subsubsection{The vector field multiplier $S_0$ used in region $J^-(\gamma)\cap \cB$}
\lb{functionf}
Now we are ready to propagate the energy estimates further into the blueshift region $\cB$ up to the hypersurface $\gamma$.
We will in this part of the proof use the vector field 
\bea
\lb{S0field}
S_0=f^q(\partial_u+\partial_v+b^{\tilde{\phi}}\partial_{{\tilde{\phi}}}),
\eea
where the constant $q$ will finally be fixed in Lemma \ref{K_S_0_E}. Further, the function $f$ depends merely on $u$ and $v$ and has to satisfy
\bea
\lb{fcon1}
f(u,v)&\geq& 0, \\
\lb{ableit_neg}
\partial_{\zeta} f(u,v)&<&0,\\
\lb{fcon3}
c\leq \,\,\left|\frac{\partial_{\zeta} f(u,v)}{\Omega^2}\right|&\leq&C.
\eea
This construction is very similar to the construction of the vector field $S_0$ of \cite{anne}, where we used $r$ instead of $f$. We have introduced $f$ here to emphasize that we do not want the $\theta^{\star}$-dependence of $r$, which for Kerr spacetime leads to extra terms of indefinite sign in the bulk. Once we associate constant values to the angular variables we can derive the above properties from the function $r^{\star}(u,v,\theta^{\star})$. In order to do that we consider $\md r^{\star}$, using \eqref{partials} for ${\theta^{\star}} \to 0$ and obtain
\bea
\lim_{\theta^{\star} \to 0}\md {r^{\star}}=\frac{r^2+a^2}{\Delta}(u,v, \theta^{\star}) \md r.
\eea
Further, by \eqref{R_delta_sigma} we get
\bea
\lim_{\theta^{\star} \to 0}{\Omega}^2(u,v, \theta^{\star})=-\frac{\Delta}{r^2+a^2}(u,v, \theta^{\star}).
\eea
Therefore, the choice $\lim_{\theta^{\star} \to 0} f(u,v)=r(u,v, \theta^{\star})$ implies \mbox{$\partial_{\zeta} f= \frac{\partial f}{\partial r} \frac{ \partial r}{\partial r^{\star}} \frac{\partial  r^{\star}}{\partial \zeta}$} and with \eqref{delta_r} we see that \eqref{fcon1}-\eqref{fcon3} are satisfied.

\subsubsection{Positivity of the bulk term $K^{S_0}$} 
Let us now consider the bulk term and derive conditions for positivity so that the future energy flux can be estimated by the initial flux when carrying out the divergence theorem.
\begin{lem}
\label{K_S_0}
Let $\psi$ be an arbitrary function. Then, for the vector field $S_0$ as in \eqref{S0field} and a suitable choice of $q$, 
\ben
\lb{lemma_B_gamma_past}
K^{S_0}(\psi)&\geq& 0,
\een
in $\cB$.
\end{lem}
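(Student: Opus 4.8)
The plan is to compute the bulk term $K^{S_0}(\psi) = (\pi^{S_0})^{\mu\nu} T_{\mu\nu}(\psi)$ explicitly in the Eddington--Finkelstein-like double null coordinates $(u,v,\theta^{\star},\tilde\phi)$ and to identify the sign of each contribution, showing that the dominant terms come with the favorable sign while all the rest can be absorbed by choosing $q$ large. First I would write $S_0 = f^q X$ with $X = \partial_u + \partial_v + b^{\tilde\phi}\partial_{\tilde\phi}$, and split the deformation tensor as $\pi^{S_0} = f^q \pi^X + \tfrac12(\,\md f^q \otimes X + X \otimes \md f^q\,)$, so that $K^{S_0} = f^q K^X + q f^{q-1}\, T(\nabla f, X)$. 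The crucial term is the second one: using \eqref{fcon1}--\eqref{fcon3}, $\nabla f$ is (up to the bounded factor $\partial_\zeta f/\Omega^2$ and sign) proportional to the null generators, and contracting against $X$ with the stress-energy tensor \eqref{energymomentum} produces, schematically, a sum of squared null derivatives $(\partial_u\psi)^2$ and $(\partial_v\psi + b^{\tilde\phi}\partial_{\tilde\phi}\psi)^2$ multiplied by $|\partial_\zeta f|$; by \eqref{ableit_neg} this has a definite sign. The point is that since $\partial_u f<0$ and $\partial_v f<0$ in $\cB$, these leading terms are $\geq 0$ (this is exactly the mechanism that makes the analogous $S_0$ work in \cite{anne}; the weight $f^q$ plays the role that $r$ played there, but without the $\theta^{\star}$-dependence that would spoil the sign).

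Next I would bound the remaining term $f^q K^X$. Since $X$ is a bounded perturbation of $\partial_u + \partial_v$ and the metric coefficients $\Omega^2$, $\gin_{\theta_C\theta_D}$, $b^{\tilde\phi}$ and their first derivatives in $u,v,\theta^{\star}$ are all controlled — by \eqref{delta_r}, \eqref{delta_theta}, \eqref{b_bound}, \eqref{L_bounded}, \eqref{det_regel} and Propositions \ref{Fprop} and \ref{higher_o_boundedness} — the components of $\pi^X$ are bounded, and $K^X(\psi)$ is a quadratic form in the first derivatives of $\psi$ with bounded coefficients; in particular $|K^X(\psi)| \lesssim (\partial_u\psi)^2 + (\partial_v\psi + b^{\tilde\phi}\partial_{\tilde\phi}\psi)^2 + \Omega^2|\nabb\psi|^2$ pointwise in $\cB$. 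The angular term $\Omega^2|\nabb\psi|^2$ is small where we need it (deep in $\cB$, $\Omega^2 = -\Delta/R^2$ is small), and what remains is a quadratic form in the two null derivatives whose coefficients are $O(1)$, while the good term from $qf^{q-1}T(\nabla f,X)$ has coefficient $\gtrsim q f^{q-1}|\partial_\zeta f| \gtrsim q f^{q-1}\Omega^2$ on those same squares. Comparing $f^q\cdot O(1)$ against $q f^{q-1}\Omega^2 \cdot (\text{positive constant})$ — using that $f$ is bounded above and below by positive constants on $\cB$ and $\Omega^2$ is bounded below there, since we are still to the past of $\gamma$ and away from $\cC\cH^+$ where $\Omega^2\to 0$ — one sees that choosing $q$ sufficiently large makes the whole expression $K^{S_0}(\psi)\geq 0$ pointwise in $\cB$. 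Here is where I would pin down the precise value of $q$ (the one referenced as being fixed later in Lemma \ref{K_S_0_E}).

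The main obstacle is the bookkeeping of the cross terms in $K^X$ — in particular the mixed $\partial_u\partial$ and $\partial_v\partial$ terms generated by the nonzero $b^{\tilde\phi}$ and by the $\theta^{\star}$-dependence of $\gin$, which have no definite sign and are exactly the ``extra terms of indefinite sign'' alluded to in Section \ref{functionf}. The resolution is structural rather than computational: these terms are all proportional to $\Omega^2$ (or to $\partial_\zeta$ of a metric coefficient, hence to $|\Delta|\sim\Omega^2$) by the decay estimates cited above, so they are uniformly dominated by the good term of size $\gtrsim q\Omega^2$ once $q$ is large. I would therefore present the computation just far enough to exhibit that every indefinite contribution carries a factor $\Omega^2$ (or $|\Delta|$), invoke the boundedness statements of Section \ref{bounded} to control the coefficients, and conclude by the largeness of $q$. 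The detailed algebra — which reproduces, with the substitution $r\leadsto f$ and the addition of the $b^{\tilde\phi}$-corrections, the corresponding computation of \cite{anne} — I would defer to the error-term appendix rather than grind through here.
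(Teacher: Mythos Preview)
Your overall strategy---write $K^{S_0}=f^qK^X+qf^{q-1}T(\nabla f,X)$ and use largeness of $q$ to absorb the indefinite remainder---is the right one, and it is essentially what the paper does. But your $\Omega^2$-bookkeeping is off in a way that breaks the argument.

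You assert that the good term on the squared null derivatives has size $\gtrsim qf^{q-1}|\partial_\zeta f|\gtrsim qf^{q-1}\Omega^2$, and then try to beat $f^q\cdot O(1)$ by invoking a lower bound on $\Omega^2$ ``since we are still to the past of $\gamma$''. Both steps are wrong. First, the lemma is stated in all of $\cB$, and even restricted to $J^-(\gamma)\cap\cB$ one has $r^{\star}$ unbounded (since $v_\gamma(u)-v_{blue}(u)=\alpha\log v_\gamma(u)\to\infty$), so $\Omega^2$ is \emph{not} bounded below there. Second, and more importantly, you dropped the $1/\Omega^2$ that arises when you contract the stress-energy tensor against null directions. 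Computing $T(\nabla f,X)$ explicitly (or simply reading off the $(\partial_u\psi)^2$ and $(\partial_v\psi)^2$ rows of the general formula \eqref{Kplug_edd-f}), the coefficient of $(\partial_u\psi)^2$ in $K^{S_0}$ is
\[
-\,q\,\frac{f^{q-1}}{2\Omega^2}\,\partial_u f \;\sim\; q\,f^{q-1}\cdot\frac{|\partial_u f|}{\Omega^2}\;\sim\; q,
\]
uniformly in $\cB$, by \eqref{fcon3}; likewise for $(\partial_v\psi)^2$ and for $|\nabb\psi|^2$. So the dominant positive term is of order $q$, not $q\Omega^2$. The indefinite cross terms (the $B_i$, $\tilde B_i$ in the paper's display \eqref{KS0_zuerst}--\eqref{KS0_even_simpler}) are $O(1)$, and a single Cauchy--Schwarz absorbs them once $q$ is large---with no appeal to any lower bound on $\Omega^2$. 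Your bound $|K^X|\lesssim(\partial_u\psi)^2+(\partial_v\psi+b^{\tilde\phi}\partial_{\tilde\phi}\psi)^2+\Omega^2|\nabb\psi|^2$ is also too optimistic on the angular piece: the $|\nabb\psi|^2$ coefficient in $K^X$ contains $\partial_\zeta\Omega/\Omega$, which is $O(1)$, not $O(\Omega^2)$; but this is harmless once you realize the $q$-term on $|\nabb\psi|^2$ is itself of size $q$.

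In short: redo the weight count so that the good contribution is seen to be uniformly of order $q$, and drop the false claim about $\Omega^2$ being bounded below. With that correction your sketch becomes the paper's proof.
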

\begin{proof} 
Plugging \eqref{S0field} in \eqref{Kplug_edd-f} of Appendix \ref{Kcurrents} 
in Eddington--Finkelstein-like coordinates leads to the following expression for $K^{S_0}(\psi)$.
\bea
\lb{KS0_zuerst}
K^{S_0}(\psi)&=& -q \frac{f^{q-1}}{2\Omega^2}\partial_u f(\partial_u \psi)^2\nonumber\\
&&-q \frac{f^{q-1}}{2\Omega^2}\partial_v f(\partial_v \psi)^2\nonumber\\
&&+\left(-q \frac{f^{q-1}}{2\Omega^2}(\partial_v f+\partial_u f)+B_1\right)|\nabb \psi|^2\nonumber\\
&&+B_2(\partial_u \psi\partial_v \psi)\nonumber\\
&&+\left[-q \frac{b^{\tilde{\phi}}}{2\Omega^2}f^{q-1}(\partial_v f+\partial_u f)+ B_3\right](\partial_u \psi\partial_{\tilde{\phi}} \psi)\nonumber\\
&&+\left[-q \frac{b^{\tilde{\phi}}}{\Omega^2}f^{q-1}(\partial_u f)+ B_4\right](\partial_v \psi\partial_{\tilde{\phi}} \psi)\nonumber\\
&&+\left[-q \frac{|b^{\tilde{\phi}}|^2}{\Omega^2}f^{q-1}(\partial_u f)+B_5^{{\tilde{\phi}}{\tilde{\phi}}}\right](\partial_{\tilde{\phi}} \psi\partial_{\tilde{\phi}} \psi)\nonumber\\
&&+B_6^{\tilde{\phi}\theta^{\star}}(\partial_{\tilde{\phi}} \psi\partial_{\theta^{\star}} \psi)\nonumber\\
&&+B_7^{\theta^{\star} \theta^{\star}}(\partial_{\theta^{\star}} \psi\partial_{\theta^{\star}} \psi),
\eea
where all $B$ coefficients are bounded. 
Note that our free parameter $q$ exclusively multiplies terms which are positive, see \eqref{ableit_neg}. Moreover, the terms multiplied by $b^{\tilde{\phi}}$ are small close to $\cC\cH^+$ since we have chosen $r^{\star}_{blue}$ close enough to $\cC\cH^+$ and such that the quantity given in \eqref{omegabruch} is negative.
We can now apply the Cauchy--Schwarz inequality to \eqref{KS0_zuerst} to obtain the following simplified expression
\bea
\lb{KS0_even_simpler}
K^{S_0}(\psi)&\geq&\quad q P_1(\partial_u \psi)^2\nonumber\\
&&+q P_2(\partial_v \psi)^2\nonumber\\
&&+q {P}_3\left[(\partial_{\tilde{\phi}}\psi)^2+ (\partial_{\theta^{\star}}\psi)^2\right]\nonumber\\
&&+\tilde{B}_2(\partial_u \psi\partial_v \psi)\nonumber\\
&&+\tilde{B}_3(\partial_u \psi\partial_{\tilde{\phi}} \psi)\nonumber\\
&&+\tilde{B}_4(\partial_v \psi\partial_{\tilde{\phi}} \psi),
\eea
where the $P$ coefficients denote positive terms and the $\tilde{B}$ coefficients are again bounded terms. 
Using the Cauchy--Schwarz inequality again for the first three rows,
we obtain the conditions
\bea
\lb{q_ineq}
\tilde{B}_2\leq q\sqrt{P_1P_2}, \qquad \tilde{B}_3\leq q\sqrt{P_1\tilde{P}_3}, \qquad \tilde{B}_4\leq q\sqrt{P_2\tilde{P}_3} .
\eea
The lemma then follows for $q$ sufficiently large to satisfy all derived conditions for the positivity of the
current $K^{S_0}(\psi)$ in $\cB$.\\
\end{proof}

\subsubsection{Propagation up to the hypersurface $\gamma$}
\begin{prop}
\lb{to_gamma0}
Let $\psi$ be as in Theorem \ref{anfang}, and $Y^k$ as in \eqref{angular_comm} with \eqref{Yk},  \eqref{YkK}, \eqref{YkE} and all $k \in \left\{0,1,2\right\}$. Then, for all $v_*> 2\alpha$ 
\bea
\lb{sieben0}
\int\limits_ {\left\lbrace  v_* \leq v \leq 2v_* \right\rbrace } J_{\mu}^{{S_0}}(Y^k\psi) n^{\mu}_{\gamma} \dV_{\gamma}
\leq
C{v_*}^{-1-2\delta}, 
\eea
on the hypersurface $\gamma$, with $C$ depending on $C_{0}$ of Theorem \ref{anfang} and $D_{0}(u_{\diamond}, 1)$ of Proposition \ref{initialdataprop}, where $u_{\diamond}$ is defined by $r^{\star}_{red}=r^{\star}(u_{\diamond},1)$.
\end{prop}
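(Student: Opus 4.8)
\emph{Proof proposal.} The plan is to apply the divergence theorem \eqref{divthe} to the timelike, future directed multiplier $S_0=f^q(\partial_u+\partial_v+b^{\tilde\phi}\partial_{\tilde\phi})$ of \eqref{S0field} in a region $\cW\subset\cB$ bounded to the past by a piece of the spacelike hypersurface $r^\star=r^\star_{blue}$, to the future by a piece of the spacelike hypersurface $\gamma$, and on the sides by ingoing null segments $\{u=\mathrm{const}\}$, and to convert the resulting boundary identity into \eqref{sieben0} using Lemma \ref{K_S_0} together with the decay along $r^\star=r^\star_{blue}$ supplied by Proposition \ref{r_{red}} (equivalently Corollary \ref{cor6.1}).

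Fix $v_*>2\alpha$. For $k=0$ I would take $\cW=J^-\big(\gamma\cap\{v\ge v_*\}\big)\cap\{r^\star\ge r^\star_{blue}\}$, truncated at some large $v=V$; the truncation cap is a future boundary carrying a nonnegative $S_0$-flux, so it lies on the favourable side of \eqref{divthe} and tends to zero as $V\to\infty$, the total flux through $r^\star=r^\star_{blue}$ being finite. Because $S_0$ is timelike future directed, $J^{S_0}_\mu(\psi)n^\mu\ge0$ on every achronal boundary component with future directed normal; moreover, using that $\partial_u$ and $\partial_v+b^{\tilde\phi}\partial_{\tilde\phi}$ are future directed in the interior, one checks that the ingoing null segment $\{u=u_\gamma(v_*)\}$ of $\partial\cW$ is a \emph{future} boundary, so its flux appears on the same side of \eqref{divthe} as $\int_\gamma J^{S_0}_\mu n^\mu_\gamma$ and may be dropped; and by Lemma \ref{K_S_0} the bulk $K^{S_0}(\psi)\ge0$ in $\cB$ (the lemma is stated for arbitrary functions), so, since $\cE^{S_0}(\psi)=0$, the spacetime integral is dropped too. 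This leaves $\int_{\gamma\cap\{v_*\le v\le2v_*\}}J^{S_0}_\mu(\psi)n^\mu_\gamma\,\dV_\gamma\le\int_{\gamma\cap\{v\ge v_*\}}J^{S_0}_\mu(\psi)n^\mu_\gamma\,\dV_\gamma\le\int_{\{r^\star=r^\star_{blue}\}\cap\{v\ge v_{blue}(u_\gamma(v_*))\}}J^{S_0}_\mu(\psi)n^\mu\,\dV$. On $r^\star=r^\star_{blue}=\{u+v=r^\star_{blue}\}$ the function $f$ is constant (indeed $\partial_u f=\partial_v f$ since $\partial_u r^\star=\partial_v r^\star=1$), so there $J^{S_0}_\mu n^\mu\lesssim J^U_\mu n^\mu$; by Proposition \ref{r_{red}} the flux of $J^U(\psi)$ over a unit $v$-interval of $r^\star=r^\star_{blue}$ at height $v_0$ is $\lesssim v_0^{-2-2\delta}$, and by \eqref{gamma} one has $v_{blue}(u_\gamma(v_*))=v_*-\alpha\log v_*\sim v_*$, so summing over the unit intervals from $v_0\sim v_*$ to $+\infty$ gives $\sum_{n\ge0}(v_*+n)^{-2-2\delta}\lesssim v_*^{-1-2\delta}$. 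This is precisely the one-power loss in \eqref{sieben0}, coming from an interval of $v$-length $\sim v_*$ replacing the unit intervals of Proposition \ref{r_{red}}. This establishes \eqref{sieben0} for $k=0$.

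For $k=1,2$ the error $\cE^{S_0}(Y^k\psi)=(\Box_g Y^k\psi)\,S_0(Y^k\psi)$ no longer vanishes, and controlling it is what distinguishes the argument from the Reissner--Nordstr\"om treatment of \cite{anne}. It is then cleaner to use the \emph{compact} region cut out by $\gamma\cap\{v_*\le v\le2v_*\}$, by the two ingoing null segments issued from its endpoints, and by $r^\star=r^\star_{blue}$; the one of these two segments that turns out to be a past boundary is itself estimated by a further application of \eqref{divthe} in its causal past, reducing again to the $J^U(Y^k\psi)$-flux over a piece of $r^\star=r^\star_{blue}$ of $v$-extent $\sim\alpha\log v_*$, which is $\lesssim(\log v_*)v_*^{-2-2\delta}\lesssim v_*^{-1-2\delta}$. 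Since $\Box_g Y^k\psi=[\Box_g,Y^k]\psi$ is a first order operator in $\psi,Y\psi,\dots,Y^k\psi$ whose coefficients are built from the $u$- and $v$-derivatives of the metric coefficients $\Omega^2,\ \gin,\ b^{\tilde\phi}$, all of which are $\lesssim|\Delta|\lesssim\Omega^2$ by the estimates collected in Section \ref{bounded} (in particular \eqref{delta_r}, \eqref{delta_theta}, \eqref{b_bound}, \eqref{L_bounded} and Proposition \ref{higher_o_boundedness}), a Cauchy--Schwarz estimate yields $|\cE^{S_0}(Y^k\psi)|\lesssim|\Delta|\sum_{j\le k}|\partial Y^j\psi|^2$; with $r^\star_{blue}$ chosen close enough to $\cC\cH^+$ that $\sup_\cB|\Delta|$ is small and $q$ chosen large, the top-order contribution of this error is absorbed into $K^{S_0}(Y^k\psi)$, while the lower-order contribution is handled by the inductively established estimates for $\psi,\dots,Y^{k-1}\psi$ together with a Gr\"onwall argument that exploits the finiteness (and smallness) of $\int_{r^\star_{blue}}^{\infty}|\Delta|\,\md r^\star\sim\int e^{2\kappa_- r^\star}\,\md r^\star$ in $\cB$ (recall $\kappa_-<0$). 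Proceeding by induction on $k\in\{0,1,2\}$ then gives \eqref{sieben0}.

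The main obstacle is exactly this error analysis for $k\ge1$: one must check, from the detailed bounds on derivatives of the metric coefficients recalled in Section \ref{bounded} (and the explicit error computations of the appendix), that after two commutations with the $Y_i$ the term $\cE^{S_0}(Y^k\psi)$ genuinely carries a factor $|\Delta|$ in front of a quantity already under control at lower order, and then that the bulk positivity of Lemma \ref{K_S_0}, sharpened by enlarging $q$, together with $\int|\Delta|\,\md r^\star<\infty$ in $\cB$, is enough to close the Gr\"onwall inequality without degrading the decay beyond the unavoidable one-power loss already present at $k=0$. Everything else is the bookkeeping of which boundary terms of the timelike multiplier $S_0$ are discarded on the favourable side and which are estimated via Proposition \ref{r_{red}}.
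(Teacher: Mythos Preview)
Your $k=0$ argument is essentially the paper's: divergence theorem in the region between $r^\star=r^\star_{blue}$ and $\gamma$, drop the nonnegative future null flux, use Lemma \ref{K_S_0} to drop the bulk, and then sum the unit-interval decay of Proposition \ref{r_{red}} along $r^\star=r^\star_{blue}$ to get the one-power loss. That part is fine.

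The gap is in your treatment of $k\ge1$. You assert that the coefficients of $[\Box_g,Y^k]\psi$ are built from $u$- and $v$-derivatives of the metric and hence carry a factor $|\Delta|$. They do not. Since $Y=Y^{\theta_C}\partial_{\theta_C}$ is purely angular, the deformation tensor $(\pi^Y)^{\mu\nu}$ involves $Y^{\theta^\star}\partial_{\theta^\star}g_{\mu\nu}$ (see \eqref{pi-Y-termevv}--\eqref{pi-Y-termeAB}), i.e.\ \emph{angular} derivatives of $\Omega^2$, $\gin$, $b^{\tilde\phi}$; quantities like $\partial_{\theta^\star}\Omega/\Omega$ and $\partial_{\theta^\star}\gin_{\theta_A\theta_B}$ are merely bounded, not $O(|\Delta|)$ (only $\partial_{\theta^\star}r$ and $\partial_{\theta^\star}b^{\tilde\phi}$ are). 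Consequently the bound $|\cE^{S_0}(Y^k\psi)|\lesssim|\Delta|\sum_{j\le k}|\partial Y^j\psi|^2$ is false in general, and the Gr\"onwall step relying on $\int|\Delta|\,\md r^\star<\infty$ does not close as written.

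What the paper actually does (Lemma \ref{K_S_0_E}) is different and does not need any smallness of the error coefficients: it exploits the structure \eqref{error_structure_phi}, \eqref{error_structure_second}, where $\cE^{S_0}(Y^k\psi)$ factorises as $[\text{bounded}\times\partial^{\le k+1}\psi]\cdot f^q(\partial_u+\partial_v+b^{\tilde\phi}\partial_{\tilde\phi})(Y^k\psi)$. The positive part of $K^{S_0}(Y^j\psi)$ carries the prefactor $q\cdot(-f^{q-1}\partial_\zeta f/\Omega^2)\gtrsim q\,f^{q-1}$ by \eqref{fcon3}, while the error carries only $f^q$. Since $f$ is bounded, Cauchy--Schwarz lets you absorb every error piece into the $q$-weighted positive bulk terms of $K^{S_0}(Y^k\psi)$, $K^{S_0}(Y^{k-1}\psi)$, \ldots, $K^{S_0}(\psi)$ by taking $q$ large, yielding the \emph{pointwise} inequalities \eqref{lemma_B_gamma_past}--\eqref{lemma_B_gamma_past2}. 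No Gr\"onwall is needed; one simply adds the divergence identities for $j=0,\dots,k$ and drops the nonnegative spacetime term. Replace your error analysis by this absorption-by-$q$ argument and the proof goes through.
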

{\em Remark.} $n^{\mu}_{\gamma}$ denotes the normal vector on the hypersurface $\gamma$ which is a levelset \mbox{$\gamma=\left\{H(u,v)=0\right\}$} of the function $H(u,v)$ as explained in Section \ref{gamma_curve}. Note that for big values of $v$ the current $J_{\mu}^{{S_0}}(Y^k\psi) n^{\mu}_{\gamma}$ approaches $J_{\mu}^{{S_0}}(Y^k\psi) n^{\mu}_{r}$ since $\partial_v H$ approaches $1$, see equation \eqref{ableit}.

\begin{proof}
Let $v_* > 2\alpha$, such that $\gamma$ is spacelike for $v > v_*$, cf.~ Section \ref{gamma_curve}.
Then, the proof of \eqref{sieben0} with $k=0$ follows from using the result of Proposition \ref{r_{red}} and Lemma \ref{K_S_0} in the divergence theorem in region $\cR_{III}$, see Figure \ref{region3_neu}.
{\begin{figure}[ht]
\centering
\includegraphics[width=0.65\textwidth]{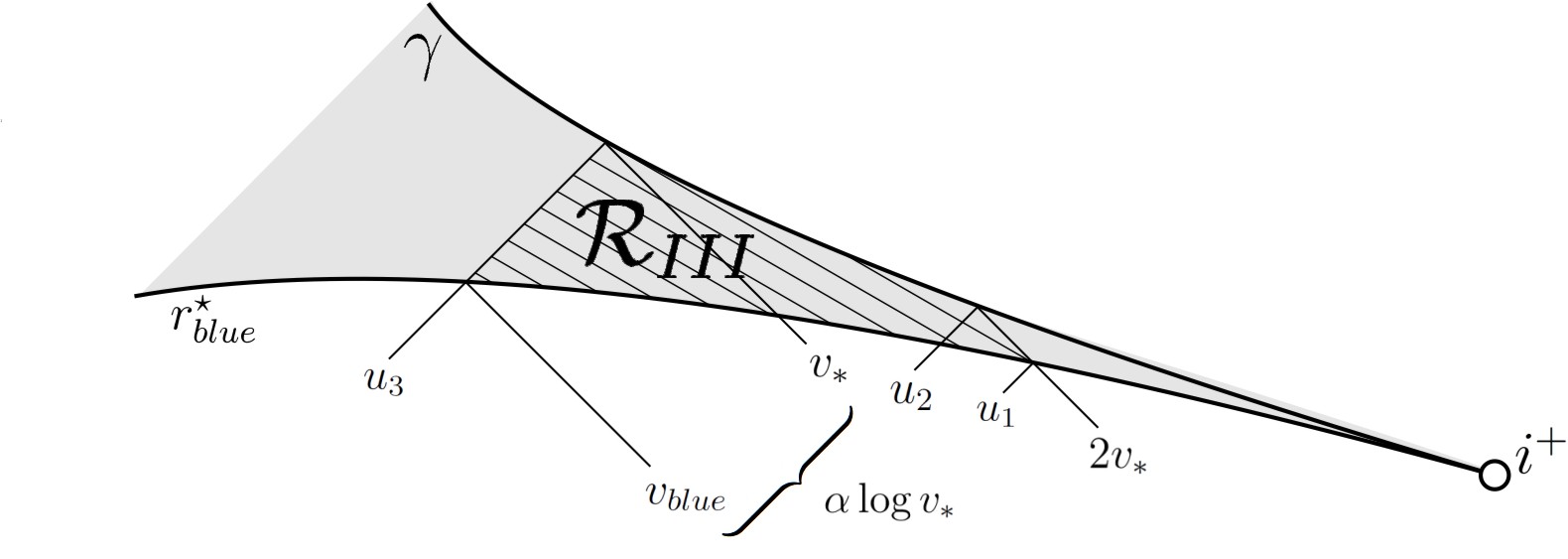}
\caption[Text der im Bilderverzeichnis auftaucht]{Logarithmic distance of hypersurface $r^{\star}=r^{\star}_{blue}$ and hypersurface $\gamma$ depicted in a Penrose diagram.}
\label{region3_neu}\end{figure}}

Now we need to show that a higher order analog of Lemma \ref{K_S_0} holds. The following lemma will allow us to neglect both error and bulk terms when carrying out the divergence theorem again.
\begin{lem}
\label{K_S_0_E}
Let $\psi$ be a solution of the wave equation \eqref{wave_psi} on a subextremal Kerr background $(\cM,g)$, with mass $M$ and angular momentum per unit mass $a$, and $M>|a|\neq 0$, and further $Y^k$ is as in \eqref{angular_comm} with \eqref{Yk},  \eqref{YkK}, \eqref{YkE} and all $k \in \left\{0,1,2\right\}$. Then, for the vector fields $S_0$ as in \eqref{S0field} and a suitable choice of $q$, 
\bea
\lb{lemma_B_gamma_past}
K^{S_0}(Y \psi) +\cE^{S_0}(Y \psi) +K^{S_0}(\psi) &\geq& 0, 
\eea
and
\bea
\lb{lemma_B_gamma_past2}
K^{S_0}(Y^2 \psi) +\cE^{S_0}(Y^2 \psi) +K^{S_0}(Y \psi) +\cE^{S_0}(Y \psi) +K^{S_0}(\psi)&\geq& 0, 
\eea
in $\cB$.
\end{lem}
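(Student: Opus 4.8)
The plan is to establish both inequalities by the same mechanism used for Lemma \ref{K_S_0}: after commuting, the summed quantity is a quadratic form in the first derivatives of $Y^j\psi$, $0\le j\le k$, in which the free parameter $q$ multiplies only manifestly nonnegative diagonal terms, whereas every remaining contribution — the bounded cross terms already present in $K^{S_0}$, \emph{and} the full error currents $\cE^{S_0}$ — carries a coefficient bounded independently of $q$. Choosing $q$ large then absorbs all of the latter by Cauchy--Schwarz.

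First I would write $K^{S_0}(Y^k\psi)$ in the form \eqref{KS0_zuerst}--\eqref{KS0_even_simpler} with $\psi$ replaced by $Y^k\psi$. Since the multiplier $S_0$ in \eqref{S0field} does not involve $\psi$, commuting with the angular operators $Y$ introduces no new $q$-dependence: the coefficients of $(\partial_u Y^k\psi)^2$, $(\partial_v Y^k\psi)^2$ and $|\nabb Y^k\psi|^2$ remain of the form $q\,P_i$ with $P_i\ge c\,f^{q-1}>0$ (using \eqref{ableit_neg}, \eqref{fcon3}), while all other coefficients are $\lesssim f^{q-1}$ with a constant independent of $q$, and the $b^{\tilde\phi}$-weighted cross terms stay small because $r^\star_{blue}$ is fixed close to $\cC\cH^+$.

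The crux, and the step I expect to be hardest, is to bring $\cE^{S_0}(Y^k\psi)$ into this same mould. By \eqref{E} and \eqref{inhomoKG}, $\cE^{S_0}(Y^k\psi)=(\Box_g Y^k\psi)\,S_0(Y^k\psi)$, and $S_0(Y^k\psi)=f^q(\partial_u+\partial_v+b^{\tilde\phi}\partial_{\tilde\phi})Y^k\psi$ is $f^q$ times a sum of first derivatives of $Y^k\psi$. The factor $\Box_g Y^k\psi$ is the commutator term from \eqref{errorexpression}--\eqref{comm}; substituting the wave equation \eqref{wave_eq_eddf} for $\psi$ to eliminate the $\partial_u\partial_v$ second derivative (this is the use of \eqref{uv_term}) and invoking the boundedness of the metric coefficients from Section \ref{bounded} expresses it, as in \eqref{error_structure2}--\eqref{error_structure3}, as a bounded combination of first derivatives of $Y^j\psi$ for $0\le j\le k$. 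Since $f$ is bounded above and below by positive constants, $f^q\lesssim f^{q-1}$ with constant independent of $q$, so $\cE^{S_0}(Y^k\psi)$ is a quadratic expression in the first derivatives of the $Y^j\psi$ with coefficients $\lesssim f^{q-1}$, uniformly in $q$.

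Finally I would sum over $0\le k\le m$ ($m=1$ for \eqref{lemma_B_gamma_past}, $m=2$ for \eqref{lemma_B_gamma_past2}). The result is a quadratic form in the finitely many first derivatives $\{\partial_\mu Y^k\psi:0\le k\le m\}$ whose diagonal $\partial_u$-, $\partial_v$- and angular entries carry weight $\ge c\,q\,f^{q-1}$ and all of whose off-diagonal entries — the bounded cross terms of each $K^{S_0}(Y^k\psi)$ together with every term of every $\cE^{S_0}(Y^j\psi)$, including those coupling different orders $j\neq k$ — are bounded by $C f^{q-1}$ with $C$ depending only on $M$, $a$ and $m$. Applying Young's inequality to each off-diagonal term and distributing it among the relevant diagonal entries (the analogue of \eqref{q_ineq}, now also across orders), the form is nonnegative as soon as $c\,q$ dominates the finite sum of the constants so produced. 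Fixing $q$ to be the largest value required over $k=0,1,2$ — which simultaneously serves for Lemma \ref{K_S_0} — then yields \eqref{lemma_B_gamma_past} and \eqref{lemma_B_gamma_past2} throughout $\cB$.
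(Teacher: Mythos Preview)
Your proposal is correct and follows essentially the same route as the paper: expand $K^{S_0}(Y^k\psi)$ in the form \eqref{KS0_even_simpler}, express $\cE^{S_0}(Y^k\psi)$ via the commutator structure \eqref{error_structure2}--\eqref{error_structure3} (eliminating $\partial_u\partial_v\psi$ through \eqref{uv_term}), observe that all non-diagonal contributions carry $q$-independent coefficients of size $\lesssim f^{q-1}$, and absorb them into the $qP_i$ diagonal terms by Cauchy--Schwarz for $q$ large. The only point you leave implicit is the identification of the mixed derivatives $\partial_u\partial_{\theta^\star}\psi$, $\partial_{\theta^\star}^2\psi$, etc.\ appearing in \eqref{error_structure2}--\eqref{error_structure3} with first derivatives of $Y^j\psi$ via the fact that the $Y_i$ span the angular directions; the paper handles this by absorbing, say, the $G_1,G_2,G_3$ terms directly into the first three rows of $K^{S_0}(Y\psi)$ summed over $i$, which amounts to the same thing.
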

\begin{proof}  
To prove statement \eqref{lemma_B_gamma_past}
see Appendix \ref{error_app} \ref{error} and notice that we can express the error term by
\bea
\lb{error_structure_phi}
\cE^{S_0}(Y^{\theta_A}\partial_{\theta_A} \psi)\stackrel{\eqref{angular_comm}}{=}\cE^{S_0}(Y\psi)&\stackrel{\eqref{error_structure2}}{=}&\left[ {E}_1 (\partial_u  \psi) +{E}_2 (\partial_v \psi) +{E}_3 (\partial_{\tilde{\phi}}  \psi) +{E}_4 (\partial_{\theta^{\star}}  \psi)\right.\nonumber\\
&& + {F}_1 (\partial_u \partial_{\tilde{\phi}}\psi)+ {F}_2 (\partial_v \partial_{\tilde{\phi}}\psi) + {F}_3 (\partial_{\theta^{\star}} \partial_{\tilde{\phi}} \psi)+{F}_4 (\partial_{\tilde{\phi}}^2 \psi)\nonumber\\
&& \left.+  {G}_1 (\partial_u \partial_{\theta^{\star}} \psi)+  {G}_2 (\partial_v \partial_{\theta^{\star}} \psi) + {G}_3 (\partial_{\theta^{\star}}^2\psi)
\right]\nonumber\\
&& \times f^q\left(\partial_u(Y \psi)+\partial_v(Y\psi)+b^{\theta_B}\partial_{{\theta_B}}(Y\psi)\right), 
\eea
where each coefficient is bounded. In the proof of Lemma \ref{K_S_0} we have seen, that the first three rows of \eqref{KS0_even_simpler} can absorb all remaining terms by choosing $q$ big enough. Similarly, the first and the last term of statement \eqref{lemma_B_gamma_past} can now absorb the error term by using the Cauchy--Schwarz inequality and the right choice of $q$.
This can be seen from writing out the bulk term $K^{S_0}(Y \psi)$ as
\bea
\lb{KS0_simpler_higher}
K^{S_0}(Y\psi)&=&\quad q P_1(\partial_u (Y\psi))^2\nonumber\\
&&+q P_2(\partial_v (Y \psi))^2\nonumber\\
&&+q {P}_3\left[(\partial_{\tilde{\phi}}(Y\psi))^2+ (\partial_{\theta^{\star}} (Y\psi))^2\right]\nonumber\\
&&+\tilde{B}_2(\partial_u (Y\psi)\partial_v (Y\psi))\nonumber\\
&&+\tilde{B}_3(\partial_u (Y \psi)\partial_{\theta_B} (Y\psi))\nonumber\\
&&+\tilde{B}_4(\partial_v (Y\psi)\partial_{\theta_B} (Y\psi)).
\eea
As before, the control follows from the first three terms which are positive and multiplied by $q$ which we can choose as large as needed. Recall, that $Y^{\theta_A}$, defined in \eqref{angular_comm}, does not depend on $u,v$. The ${G}_1$ term of \eqref{error_structure_phi} can now be absorbed in the first term of $K^{S_0}(Y\psi)$, the $G_2$ term in the second and the $G_3$ term can be absorbed in the third term, if we choose the value of our parameter $q$ big enough. And similarly the last three terms, that multiply all and come from the vector field multiplier, can also be absorbed in the first three terms of the bulk, respectively, given that we choose $q$ big enough. The same applies to the $F$ terms, which can alternatively also be controlled by the lower order terms combined with commutation of the Killing vector $\partial_{\phi}$. For an example to see how to absorb the terms see the proof of Lemma \ref{K_S_0}. The above described use of the Cauchy--Schwarz inequality proves \eqref{lemma_B_gamma_past}. 
Further, for the second commutation, to prove statement \eqref{lemma_B_gamma_past2}, we can write the error term as follows:
\bea
\lb{error_structure_second}
\cE^{S_0}(Y^{\theta_A}\partial_{\theta_A} (Y^{\theta_B}\partial_{\theta_B} \psi))&\stackrel{\eqref{Y2}}{=}&\cE^{S_0}(Y^2 \psi)\nonumber\\
&\stackrel{\eqref{error_structure3}}{=}&\left[ {E}_1 (\partial_u \psi) +{E}_2 (\partial_v \psi) +{E}_3 (\partial_{\tilde{\phi}} \psi) +{E}_4 (\partial_{\theta^{\star}} \psi)\right.\nonumber\\
&& + {F}_1 (\partial_u \partial_{\tilde{\phi}} \psi) + {F}_2 (\partial_v \partial_{\tilde{\phi}} \psi) +{F}_3 (\partial_{\theta^{\star}} \partial_{\tilde{\phi}} \psi)+{F}_4 (\partial_{\tilde{\phi}}^2 \psi)\nonumber\\
&& +  {G}_1 (\partial_u \partial_{\theta^{\star}} \psi) + {G}_2 (\partial_v \partial_{\theta^{\star}} \psi) +{G}_3 (\partial_{\theta^{\star}}^2\psi) \nonumber\\
&& +  {H}_1 (\partial_u \partial_{\theta^{\star}} \partial_{\tilde{\phi}} \psi) + {H}_2 (\partial_v \partial_{\theta^{\star}} \partial_{\tilde{\phi}} \psi) +{H}_3 (\partial_{\theta^{\star}}^2 \partial_{\tilde{\phi}}\psi) \nonumber\\
&&  + {I}_1 (\partial_u \partial_{\tilde{\phi}}^2 \psi) + {I}_2 (\partial_v \partial_{\tilde{\phi}}^2 \psi)+{I}_3 (\partial_{\theta^{\star}} \partial_{\tilde{\phi}}^2 \psi)+{I}_4 (\partial_{\tilde{\phi}}^3 \psi)\nonumber\\
&& \left.+  {J}_1 (\partial_u \partial_{\theta^{\star}}^2 \psi) +  {J}_2 (\partial_v \partial_{\theta^{\star}}^2 \psi)+ {J}_3 (\partial_{\theta^{\star}}^3\psi)
\right]\nonumber\\
&& \times f^q\left(\partial_u(Y^2\psi)+\partial_v(Y^2\psi)+b^{\theta_C}\partial_{{\theta_C}}(Y^2\psi)\right). 
\eea
Again we will control all remaining terms by the good terms of $K^{S_0}(Y^2\psi)$, which are the first three rows of
\bea
\lb{KS0_simpler_higher_2}
K^{S_0}(Y^2\psi)&=&\quad q P_1(\partial_u (Y^2\psi))^2\nonumber\\
&&+q P_2(\partial_v (Y^2 \psi))^2\nonumber\\
&&+q {P}_3\left[(\partial_{\tilde{\phi}}(Y^2\psi))^2+ (\partial_{\theta^{\star}} (Y^2\psi))^2\right]\nonumber\\
&&+\tilde{B}_2(\partial_u (Y^2\psi)\partial_v (Y^2 \psi))\nonumber\\
&&+\tilde{B}_3(\partial_u (Y^2 \psi)\partial_{{\theta_B}} (Y^2 \psi))\nonumber\\
&&+\tilde{B}_4(\partial_v (Y^2 \psi)\partial_{{\theta_B}} (Y^2 \psi)).
\eea
We can see now that the $J_1$ term of \eqref{error_structure_second} gets controlled by the first row and $J_2$ by the second row. $J_3$ gets controlled by the third row of the above expression. The three remaining terms multiplying everything get controlled by the first three rows together. The $H$ and $I$ also get controlled in the same manner or alternatively by lower orders combined with commutation by the Killing vector $\partial_{\phi}$. All other terms get controlled by the lower order $K^{S_0}(Y\psi)$ and $K^{S_0}(\psi)$ terms.
Analogously to before, this proves statement \eqref{lemma_B_gamma_past2} of Lemma \eqref{K_S_0_E}.
\end{proof}
To prove Proposition \ref{to_gamma0} for $k=1$ we now use the divergence theorem for the currents depending on $\psi$ added to the divergence theorem for the currents depending on $Y\psi$. This together with statement \eqref{lemma_B_gamma_past} then leads to the proof of Proposition \ref{to_gamma0} for $k=1$. Similarly, 
applying the divergence theorem again for currents depending on $Y^2\psi$ and adding this to the previous inequalities, then proves Proposition \ref{to_gamma0} for $k=2$.
\end{proof}
We could go on proving Lemma \ref{K_S_0_E} and Proposition \ref{to_gamma0} for even higher orders, but in order to prove pointwise boundedness for $\psi$ the second order of derivatives is sufficient.


\subsubsection{Weighted dyadic propagation to the hypersurface $\gamma$} 
Recall, that we eventually require a weighted energy estimate. We therefore state the following.
\begin{cor}
\lb{cor2}
Let $\psi$ be as in Theorem \ref{anfang}, and $Y^k$ as in \eqref{angular_comm} with \eqref{Yk},  \eqref{YkK}, \eqref{YkE} and all $k \in \left\{0,1,2\right\}$. Then, for all $v_*>2\alpha$, see Section \ref{gamma_curve},
\bea
\lb{Nstargammaesti}
\int\limits_ {\left\lbrace  v_* \leq v <\infty \right\rbrace }v^p J_{\mu}^{{S_0}}(Y^k\psi) n^{\mu}_{\gamma} \dV_{\gamma}
&\leq& Cv_*^{-1-2\delta+p}, 
\eea
on the hypersurface $\gamma$, with $C$ depending on $C_{0}$ of Theorem \ref{anfang} and $D_{0}(u_{\diamond}, 1)$ of Proposition \ref{initialdataprop}, where $u_{\diamond}$ is defined by $r^{\star}_{red}=r^{\star}(u_{\diamond},1)$, $p$ as in \eqref{waspist}.
\end{cor}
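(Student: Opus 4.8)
The plan is to deduce the weighted estimate \eqref{Nstargammaesti} from Proposition \ref{to_gamma0} by a dyadic decomposition of the $v$-range together with the summation of a geometric series; no new geometry is needed. First I would record that along $\gamma$ both the multiplier $S_0$ of \eqref{S0field} (which is future-directed timelike, being $f^q$ times the causal field $\partial_u+\partial_v+b^{\tilde\phi}\partial_{\tilde\phi}$) and the normal $n^\mu_\gamma$ are causal and future-directed, so that the integrand $J^{S_0}_\mu(Y^k\psi)n^\mu_\gamma$ is pointwise nonnegative by the dominant energy condition for the scalar field. Consequently the weighted integrand $v^pJ^{S_0}_\mu(Y^k\psi)n^\mu_\gamma$ is also nonnegative, and the $v$-range $\{v_*\le v<\infty\}$ may be split at will.

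Next I would write $\{v_*\le v<\infty\}=\bigcup_{j\ge 0}\{2^jv_*\le v\le 2^{j+1}v_*\}$. On the $j$-th dyadic annulus one has $v^p\le (2^{j+1}v_*)^p$, which I would pull out of the integral; since $2^jv_*\ge v_*>2\alpha$, Proposition \ref{to_gamma0} applies on this annulus with $v_*$ replaced by $2^jv_*$ and gives
\[
\int\limits_{\{2^jv_*\le v\le 2^{j+1}v_*\}} v^p J^{S_0}_\mu(Y^k\psi)n^\mu_\gamma\,\dV_\gamma \;\le\; (2^{j+1}v_*)^p\,C\,(2^jv_*)^{-1-2\delta} \;=\; 2^p C\, v_*^{\,p-1-2\delta}\,2^{\,j(p-1-2\delta)}.
\]
Summing over $j\ge 0$ then yields \eqref{Nstargammaesti}, with the sum $\sum_{j\ge 0}2^{\,j(p-1-2\delta)}$ converging because $p\le 1+2\delta$ by \eqref{waspist}. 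The higher-order cases $k=1,2$ require nothing beyond the corresponding orders already provided by Proposition \ref{to_gamma0}.

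The only genuine subtlety is the borderline case $p=1+2\delta$, where the geometric ratio $2^{\,p-1-2\delta}$ equals $1$; this is handled by observing that the decay rate of Theorem \ref{anfang} is monotone in $\delta$, so one may shrink $\delta$ slightly and thereby assume $p<1+2\delta$ strictly, which costs nothing in the conclusion. Apart from this bookkeeping the argument is entirely routine dyadic pigeonholing, so I do not expect any real obstacle here — the substantive work was already carried out in Proposition \ref{to_gamma0}.
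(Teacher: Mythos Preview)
Your approach is correct and is exactly what the paper does: weight the dyadic estimate of Proposition \ref{to_gamma0} by $v^p$ and sum the resulting geometric series, using nonnegativity of the flux to justify the decomposition. One small quibble: your handling of the borderline $p=1+2\delta$ is phrased backwards --- shrinking $\delta$ tightens the constraint $p\le 1+2\delta$ rather than loosening it. The clean fix is simply to note that $p$ in \eqref{waspist} is a free parameter and may be taken strictly less than $1+2\delta$ without loss for the downstream conclusions; the paper's own proof is equally terse on this point.
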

\begin{proof}
Let $v_*>2\alpha$, as before in Proposition \ref{to_gamma0}.
Then, the Corollary follows by weighting \eqref{sieben0} with $v_*^p$ and summing dyadically. For more details see \cite{anne_thesis}.\\
\end{proof}
Further, we can state the following.
\begin{cor}
\lb{cor7.1}
Let $\psi$ be as in Theorem \ref{anfang}, and $Y^k$ as in \eqref{angular_comm} with \eqref{Yk},  \eqref{YkK}, \eqref{YkE} and all $k \in \left\{0,1,2\right\}$. Further, $\gamma$ as in \eqref{gammadefine} and $r^{\star}_{blue}$ sufficiently large. Then, for all $v_*>2\alpha$  and for all $\tilde{u} \in [u_{blue}(v_*), u_{\gamma}(v_*))$
\bea
\int\limits_ {\left\lbrace  v_{blue}(\tilde{u}) \leq v \leq v_{\gamma}(\tilde{u}) \right\rbrace } v^pJ^{S_0}_\mu(Y^k \psi)n^\mu_{u=\tilde{u}}dVol_{u=\tilde{u}} \le C v_*^{-1-2\delta+p},
\eea
with $C$ depending on $C_{0}$ of Theorem \ref{anfang} and $D_{0}(u_{\diamond}, 1)$ of Proposition \ref{initialdataprop}, where $u_{\diamond}$ is defined by $r^{\star}_{red}=r^{\star}(u_{\diamond},1)$ and $v_{\gamma}(\tilde{u})$, $v_{blue}(\tilde{u})$ as in \eqref{notation_neu}.
\end{cor}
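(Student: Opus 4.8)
{\em Proof idea.} The plan is to imitate the proof of Corollary \ref{cor6.1}, now in the blueshift region: propagate the energy from $\{r^{\star}=r^{\star}_{blue}\}$ — where Proposition \ref{r_{red}} provides decay for the currents $J_{\mu}^{U}(Y^k\psi)$, $k\in\{0,1,2\}$ — up to the ingoing null segment appearing in the statement, via the divergence theorem in a characteristic triangle, using the positivity of the spacetime integral furnished by Lemma \ref{K_S_0_E}, and finally inserting the $v^p$-weight by hand.

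Concretely, fix $v_*>2\alpha$ and $\tilde{u}\in[u_{blue}(v_*),u_{\gamma}(v_*))$, put $x=(\tilde{u},v_{\gamma}(\tilde{u}))\in\gamma$, and let $\mathcal{T}=J^-(x)\cap\cB$. Since $\gamma$ is spacelike and $x\in\gamma$, the triangle $\mathcal{T}$ is contained in $J^-(\gamma)\cap\cB$, the region where $S_0$ and Lemma \ref{K_S_0_E} are at our disposal; its boundary consists of a past segment on $\{r^{\star}=r^{\star}_{blue}\}$, swept out in $v$ over $[v_{blue}(\tilde{u}),v_{\gamma}(\tilde{u})]$, and two future null segments, namely $\{u=\tilde{u},\ v_{blue}(\tilde{u})\le v\le v_{\gamma}(\tilde{u})\}$ — the one in the statement — and a piece of $\{v=v_{\gamma}(\tilde{u})\}$. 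Applying the divergence theorem to the current sum $J_{\mu}^{S_0}(Y^2\psi)+J_{\mu}^{S_0}(Y\psi)+J_{\mu}^{S_0}(\psi)$ over $\mathcal{T}$, the spacetime integral is non-negative by Lemma \ref{K_S_0_E} (recall $\cE^{S_0}(\psi)=0$ since $\psi$ solves \eqref{wave_psi}), and the flux through the null segment $\{v=v_{\gamma}(\tilde{u})\}$ is non-negative since $S_0=f^q U$ is timelike; discarding both leaves, for each $k\in\{0,1,2\}$,
\[
\int_{\{v_{blue}(\tilde{u})\le v\le v_{\gamma}(\tilde{u})\}} J_{\mu}^{S_0}(Y^k\psi)\,n^{\mu}_{u=\tilde{u}}\,\dV_{u=\tilde{u}}\ \le\ \int_{\{r^{\star}=r^{\star}_{blue}\}\cap\mathcal{T}} J_{\mu}^{S_0}(Y^k\psi)\,n^{\mu}_{r^{\star}=r^{\star}_{blue}}\,\dV .
\]
On $\{r^{\star}=r^{\star}_{blue}\}$ one has $S_0=f^q U$ with $f$ (comparable to $r$, hence bounded above and below), so $J_{\mu}^{S_0}n^{\mu}\lesssim J_{\mu}^{U}n^{\mu}$; partitioning the $v$-range $[v_{blue}(\tilde{u}),v_{\gamma}(\tilde{u})]$, which has length $\alpha\log v_{\gamma}(\tilde{u})$ by \eqref{gamma}, into unit intervals and applying Proposition \ref{r_{red}} on each bounds the right-hand side by $\lesssim(\log v_*)\,v_*^{-2-2\delta}$, where we used that $v_{blue}(\tilde{u})\sim v_{\gamma}(\tilde{u})\sim v_*$, again a consequence of \eqref{gamma} together with $\tilde{u}\in[u_{blue}(v_*),u_{\gamma}(v_*))$.

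It remains to reinstate the weight. On the segment $\{u=\tilde{u},\ v_{blue}(\tilde{u})\le v\le v_{\gamma}(\tilde{u})\}$ we have $v^p\le v_{\gamma}(\tilde{u})^p\lesssim v_*^p$, so the weighted flux is $\lesssim v_*^p(\log v_*)\,v_*^{-2-2\delta}\lesssim v_*^{-1-2\delta+p}$ (the stray factor $\log v_*$ being absorbed into a power of $v_*$), which is the asserted bound, with the same dependence of $C$ as in Proposition \ref{r_{red}}. I expect the only genuinely delicate point to be this geometric and weight bookkeeping across the logarithmically wide $v$-strip: one must check that $\mathcal{T}$ stays inside $J^-(\gamma)\cap\cB$, so that Lemma \ref{K_S_0_E} applies, and that $v_{blue}(\tilde{u})$, $v_{\gamma}(\tilde{u})$ and $v_*$ are mutually comparable, so that inserting $v^p$ costs only $v_*^p$; both follow from the construction of $\gamma$ in Section \ref{gamma_curve}, in particular \eqref{gamma} and the bounds \eqref{alpha} on $\alpha$. (A variant would run the divergence theorem for the weighted current $v^p J_{\mu}^{S_0}$ in the non-compact region bounded by $\{u=\tilde{u}\}$, $\gamma$ and $\{r^{\star}=r^{\star}_{blue}\}$, using the weighted flux on $\gamma$ from Corollary \ref{cor2} and controlling the extra $p\,v^{p-1}J_{v}^{S_0}$ bulk term as in \cite{anne_thesis}.)
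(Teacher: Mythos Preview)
Your argument is correct and follows essentially the same route as the paper: apply the divergence theorem for $J^{S_0}_\mu$ in the triangle $J^-(x)\cap\cB$ with $x=(\tilde{u},v_{\gamma}(\tilde{u}))$, use Lemma \ref{K_S_0_E} to discard the bulk, bound the past flux on $\{r^{\star}=r^{\star}_{blue}\}$ via Proposition \ref{r_{red}} summed over the $\alpha\log v_*$ unit intervals, and finally insert the weight $v^p\lesssim v_*^p$; the paper's proof sketch (``similar to Corollary \ref{cor6.1} \ldots\ using the results of the proof of Proposition \ref{to_gamma0}'') encodes exactly these steps. Your bookkeeping of the comparabilities $v_{blue}(\tilde{u})\sim v_{\gamma}(\tilde{u})\sim v_*$ and the absorption of the logarithmic factor into the spare power of $v_*$ is correct.
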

\begin{proof}
The proof is similar to the proof of Corollary \ref{cor6.1} by considering the divergence theorem for a triangular region \mbox{$J^-(x)\cap \cB$} with \mbox{$x=(\tilde{u}, v_{\gamma}(\tilde{u}))$}, $x \in J^-(\gamma)$ and using the results of the proof of Proposition \ref{to_gamma0}.
\end{proof}

\subsection{Propagation through $\cB$ from the hypersurface $\gamma$ to $\cC\cH^+$ in the neighbourhood of $i^+$}
\lb{blue_future}
In the following section we want to close the proof of boundedness for weighted energies up to $\cC\cH^+$ inside the characteristic rectangle $\Xi$ by considering the region $J^+(\gamma)\cap \cB$.
We will use the weighted vector field multiplier
\bea
\lb{Sfield}
S=|u|^p\partial_u+v^p\partial_v+v^pb^{\tilde{\phi}}\partial_{{\tilde{\phi}}},
\eea
for an arbitrary $p$ satisfying \eqref{waspist}.
From this, using \eqref{Sfield} in \eqref{Kplug_edd-f} we obtain 
\bea
\lb{KS}
K^S&=& 
-\left[\frac{p}2\left[|u|^{p-1}+v^{p-1}\right] +|u|^p\frac{\partial_{u} \Omega}{\Omega} +v^p\frac{\partial_{v} \Omega}{\Omega}+|u|^p\frac14 \frac{\partial_{u}\left(  L^2\sin^2\theta\right)}{ L^2\sin^2\theta}\right.\nonumber\\
&& \qquad\left.+v^p\frac14 \frac{\partial_{v}\left(  L^2\sin^2\theta\right)}{ L^2\sin^2\theta}\right]|\nabb \psi|^2\nonumber\\
&&+\frac14\left[ |u|^p\frac{\partial_{u}\left(  L^2\sin^2\theta\right)}{ L^2\sin^2\theta}+v^p\frac{\partial_{v}\left(  L^2\sin^2\theta\right)}{ L^2\sin^2\theta}\right]\frac{1}{\Omega^2}(\partial_u \psi)(\partial_v \psi+b^{\tilde{\phi}}\partial_{\tilde{\phi}}\psi)\nonumber\\
&&+\left[-\frac{b^{\tilde{\phi}}}{2\Omega^2}pv^{p-1} -\frac{b^{\tilde{\phi}}}{2\Omega^2}p|u|^{p-1}-\frac{b^{\tilde{\phi}}}{\Omega^2}\left[\frac{\partial_u \Omega}{\Omega}|u|^p+\frac{\partial_v \Omega}{\Omega}v^p\right] 
+\frac{\partial_u b^{\tilde{\phi}} |u|^p}{2\Omega^2}\right](\partial_u \psi\partial_{\tilde{\phi}} \psi)\nonumber\\
&&+\left[-\frac{v^p\partial_ub^{\tilde{\phi}}}{2\Omega^2}\right](\partial_v \psi\partial_{\tilde{\phi}} \psi)\nonumber\\
&&+\left[-\frac{v^p}{4\Omega^2}\left[b^{\theta_C}\partial_ub^{\theta_D}+b^{\theta_D}\partial_ub^{\theta_C}\right]\right.\nonumber\\
&& \qquad\left.+\frac12(\gin^{-1})^{{\theta_C}{\theta_A}}(\gin^{-1})^{{\theta_D}{\theta_B}}\partial{\eta}\gin_{{\theta_A}{\theta_B}}S^{\eta} \right](\partial_{{\theta_C}} \psi\partial_{{\theta_D}} \psi),
\eea
where we have used \eqref{det_regel}.
Note that $\partial_u b^{\tilde{\phi}}$ decays as shown in Section \ref{bounded} and consequently we will be able to absorb these terms in the following analysis.

For the $J$-currents we obtain
\bea
\lb{JplugnvS}
J^S_{\mu}n^{\mu}_{v=const}&=&\frac1{2\Omega^2}\left[\quad|u|^p(\partial_u \psi)^2+ \Omega^2 v^p |\nabb \psi|^2\right]\\
\lb{JplugnuS}
J^S_{\mu}n^{\mu}_{u=const}
&=&\frac1{2\Omega^2}\left[v^p(\partial_v\psi+b^{\tilde{\phi}}\partial_{{\tilde{\phi}}}\psi)^2
+|u|^p{\Omega^2}|\nabb \psi|^2\right].
\eea

\subsubsection{Pointwise estimates on $\Omega^2$ in $J^+(\gamma)$}
\lb{finiteness}
Before we can bound the bulk term $K^S$ we will first need to derive pointwise estimates on $\Omega^2$ in $J^+(\gamma)$. We note that together with the results of Section \ref{bounded}, this will imply that the
spacetime volume to the future of the hypersurface $\gamma$ is finite, $\operatorname{Vol}(J^+(\gamma))<C$.
Let us state
\bea
\lb{omega1}
2\frac{\partial_v \Omega}{\Omega}= \frac{\partial_v \Omega^2}{\Omega^2}\stackrel{\eqref{R_delta_sigma}}{=}\left[\frac{\partial_v\Delta}{\Delta}-\frac{\partial_v (R^2)}{R^2}\right]\leq -\beta,
\eea
where $\beta$ is a positive constant, introduced in \eqref{lowerboundv}, for $r^{\star}\leq r^{\star}_{blue}$.
We can now derive a future decay bound along a constant $u$ hypersurface for the function $\Omega^2(u,v, \theta^{\star})$ for $(u, v) \in \cB$.
Let \mbox{$x=(u_{fix}, v_{fix})$, $x \in \cB$}, 
then, from \eqref{omega1} it follows from integration in $v$, that
\bea
\lb{lambda_comp}
\left.\log\left({\Omega^2(u_{fix},{v}, \theta^{\star})}\right)\right|^{\bar{v}}_{v_{fix}}
&\leq&-\beta\left[\bar{v}-{v_{fix}}\right],
\eea
which leads to
\bea
\lb{omegaufix}
\Omega^2(\bar{u},\bar{v}, \theta^{\star}){\leq} \Omega^2(\bar{u}, v_{fix}, \theta^{\star}) e^{-\beta[\bar{v}-v_{fix}]}, \quad \mbox{for all $(\bar{u}, v_{fix}) \in \cB$ and $\bar{v}>v_{fix}$}.
\eea
Analogously, we obtain
\bea
\lb{omegavfix}
\Omega^2(\bar{u},v_{fix}, \theta^{\star}){\leq} \Omega^2(u_{fix}, v_{fix}, \theta^{\star}) e^{-\beta[\bar{u}-u_{fix}]}, \quad \mbox{for all $(\bar{u},v_{fix}) \in J^+(x)$},
\eea
and plugging \eqref{omegaufix} into \eqref{omegavfix} it yields
\bea
\lb{omegafix}
\Omega^2(\bar{u},\bar{v}, \theta^{\star}){\leq} \Omega^2(u_{fix}, v_{fix}, \theta^{\star}) e^{-\beta[\bar{u}-u_{fix}+\bar{v}-v_{fix}]}, \quad \mbox{for all $(\bar{u},\bar{v})  \in J^+(x)$}.
\eea
Equation \eqref{omegaufix} together with \eqref{gamma} lead to the relation 
\bea
\lb{lambdalog}
&&\Omega^2(\bar{u}, \bar{v}, \theta^{\star}) \leq \Omega^2(\bar{u}, v_{blue}(\bar{u}), \theta^{\star}) e^{-\beta\alpha \log v_{\gamma}(\bar{u})}=\Omega^2(\bar{u}, v_{blue}(\bar{u}), \theta^{\star}) {v_{\gamma}(\bar{u})}^{-\beta \alpha},\nonumber\\
&&\mbox{for $(\bar{u}, \bar{v}) \in \gamma$},
\eea
constituting poinwise decay for $\Omega^2(u,v, \theta^{\star})$ on the hypersurface $\gamma$.
For $J^+(\gamma)$, using \eqref{omegaufix} and \eqref{lambdalog} we further get
\bea
\lb{spacevolumedecay}
\Omega^2(\bar{u}, \bar{v}, \theta^{\star}) \leq C{v_{\gamma}(\bar{u})}^{-\beta \alpha} e^{-\beta\left[\bar{v}-v_{\gamma}(\bar{u})\right]}, \quad \mbox{for $(\bar{u}, \bar{v}) \in J^+(\gamma)$},
\eea
where we have used $\Omega^2(\bar{u}, v_{blue}(\bar{u}), \theta^{\star})\leq C$.
Moreover, we may think of a parameter $\bar{v}$ which determines the associated $u$ value via intersection with $\gamma$. We denote this value by $u_{\gamma}(\bar{v})$ which was introduced in \eqref{notation_neu}, cf.~ Figure \ref{integralbild3} b). 

Further, by \eqref{R_delta_sigma} we can also state 
\bea
\lb{spacevolumedecay_u}
\Omega^2(\bar{u}, \bar{v}, \theta^{\star}) \leq C {|u_{\gamma}(\bar{v})|}^{-\beta \alpha} e^{\beta\left[u_{\gamma}(\bar{v})-\bar{u}\right]} \quad \mbox{for $(\bar{u}, \bar{v}) \in J^+(\gamma)$}.
\eea
Recall the choice \eqref{alpha} of $\alpha$. 

From the fact that 
$|u_{\gamma}(\bar{v})| \sim \bar{v}$ together with the extra exponential factor of \eqref{spacevolumedecay_u},  we see that pointwise decay of $\Omega^2$ to the future of $\gamma$ follows. This is an important result, since in Section \ref{bounded} we show that most of the quantities showing up in our bulk and error terms are proportional to $\Omega^2$ and can therefore be controlled or absorbed, respectively.
Note further, that the above implies that the
spacetime volume to the future of the hypersurface $\gamma$ is finite, 
\bea
\lb{finitevol}
\operatorname{Vol}(J^+(\gamma))<C.
\eea

\subsubsection{Bounding the bulk terms $K^S$ and the error terms $\cE^S$}
\lb{bounding_bulk_S}
First of all, recall the bulk term $K^S$ stated in \eqref{KS} and let us rewrite it in the following simplified way. 
\bea
\lb{KS2}
K^S&=& \left[|u|^p\left|\frac{\partial_{u} \Omega}{\Omega}\right| +v^p\left|\frac{\partial_{v} \Omega}{\Omega}\right|+N\right]|\nabb \psi|^2\nonumber\\
&&+\frac14\left[ |u|^p\frac{\partial_{u}\left( L^2\sin^2\theta\right)}{L^2\sin^2\theta}+v^p\frac{\partial_{v}\left( L^2\sin^2\theta\right)}{L^2\sin^2\theta}\right]\frac{1}{\Omega^2}(\partial_u \psi)(\partial_v \psi+b^{\tilde{\phi}}\partial_{\tilde{\phi}}\psi)\nonumber\\
&&+\left[S_1
+\frac{\partial_u b^{\tilde{\phi}} |u|^p}{2\Omega^2}\right](\partial_u \psi\partial_{\tilde{\phi}} \psi)
+\left[-\frac{v^p\partial_ub^{\tilde{\phi}}}{2\Omega^2}\right](\partial_v \psi\partial_{\tilde{\phi}} \psi).\nonumber\\
&&+S_2(\partial_{\theta_C} \psi\partial_{{\theta_D}} \psi).
\eea
The term $N$ is negative, but due to the weights $|u|^p$ and $v^p$ the first two terms are dominating and the factor multiplying $|\nabb \psi|^2$ is positive. Further, the terms $S_1$ and $S_2$ are small close to $\cC\cH^+$.
This is obtained by using expressions \eqref{b_phi} to \eqref{glp} in \eqref{KS}. Since these terms can be absorbed by using the Cauchy--Schwarz inequality we conclude, that only $\frac{|u|^p\partial_ub^{\tilde{\phi}}}{2\Omega^2}$ remains as a relevant term multiplying  $(\partial_u \psi \partial_{\tilde{\phi}} \psi)$ together with an analog statement for $(\partial_v \psi \partial_{\tilde{\phi}} \psi)$.
Further, by the Cauchy--Schwarz inequality we obtain
\bea
\lb{vp_cauchy}
-\frac{v^p\partial_ub^{\tilde{\phi}}}{2\Omega^2}(\partial_v \psi \partial_{\tilde{\phi}} \psi)
&\leq& \frac{v^p\sqrt{\partial_u b^{\tilde{\phi}}}}{4\Omega^2}\left[(\partial_v \psi)^2 +{\partial_u b^{\tilde{\phi}}}(\partial_{\tilde{\phi}} \psi)^2\right]\nonumber\\
&\leq& \frac{v^p}{4\Omega^2}\sqrt{\partial_u b^{\tilde{\phi}}}(\partial_v \psi)^2 +C \frac{v^p}{4}\sqrt{\partial_u b^{\tilde{\phi}}}(\partial_{\tilde{\phi}} \psi)^2, 
\eea
and the analog expression for the $(\partial_u \psi \partial_{\tilde{\phi}} \psi)$ term
in expression \eqref{KS}. Note, that we have used \eqref{b_bound} and \eqref{R_delta_sigma} in the second step of \eqref{vp_cauchy}. By using \eqref{b_phi} again, we define
\bea
\lb{KStilde}
|\tilde{K}^S(\psi)|&=&C\left|\frac{\partial_{u}( L^2\sin^2\theta)}{L^2\sin^2\theta}+\frac{\partial_{v}( L^2\sin^2\theta)}{L^2\sin^2\theta}\frac{v^p}{|u|^p}+\sqrt{\partial_u b^{\tilde{\phi}}}\right|\frac{1}{2\Omega^2}|u|^p(\partial_u \psi)^2\nonumber\\
&&+C\left|\frac{\partial_{v}( L^2\sin^2\theta)}{L^2\sin^2\theta}+\frac{\partial_{u}( L^2\sin^2\theta)}{L^2\sin^2\theta}\frac{|u|^p}{v^p}+\sqrt{\partial_u b^{\tilde{\phi}}}\right|\nonumber\\
&& \qquad \times\frac{1}{2\Omega^2}v^p(\partial_v \psi+b^{\tilde{\phi}}\partial_{\tilde{\phi}}\psi)^2,
\eea
where $C$ is such that
\bea
\lb{ksrelation}
-{K}^S(\psi)\leq |\tilde{K}^S(\psi)|
\eea
is satisfied.
Now we can prove the following lemma.
\begin{lem}
\label{K_spher}
Let $\psi$ be an arbitrary function. Then, for all $v_*>2\alpha$ 
and all $\hat{v}>v_*$,
the integral over region \mbox{$\cR_{IV}=J^+(\gamma)\cap J^-(x)$} with \mbox{$x=(u_{\gamma}(v_*), \hat{v})$}, $x \in \cB$, cf.~ Figure \ref{RN_mit_u_v}, of the current $\tilde{K}^{S}$, defined by \eqref{KStilde}, can be estimated by
\ben
\lb{Prop6.1}
\int\limits_{\cR_{IV}} |\tilde{K}^{S}(\psi)| \dV &\leq& \delta_1 \sup_{u_{\gamma}(\hat{v})\leq \bar{u}\leq u_{\gamma}(v_*)}\int\limits_{\left\lbrace  v_{\gamma}(\bar{u}) \leq v \leq \hat{v}\right\rbrace }  J_{\mu}^{S}(\psi) n^{\mu}_{u=\bar{u}}\dV_{u=\bar{u}}\nonumber\\
&&+ \delta_2 \sup_{v_*\leq \bar{v} \leq \hat{v}}\int\limits_{\left\lbrace  u_{\gamma}(\hat{v}) \leq u \leq u_{\gamma}(\bar{v}) \right\rbrace }
J_{\mu}^{S}(\psi) n^{\mu}_{v=\bar{v}} \dV_{v=\bar{v}},
\een
where $\delta_1$ and $\delta_2$ are positive constants, with $\delta_1\rightarrow 0$ and $\delta_2\rightarrow 0$ as $v_*\rightarrow \infty$.
\end{lem}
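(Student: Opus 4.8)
The plan is to reduce the spacetime integral to a one–variable integral by Fubini, after bounding the integrand pointwise by the flux densities that already appear on the right–hand side of the claimed inequality. Writing $\tilde K^S(\psi)$ from \eqref{KStilde} in the schematic form $|\tilde K^S(\psi)|=A(u,v,\theta^\star)\,\tfrac{|u|^p(\partial_u\psi)^2}{2\Omega^2}+B(u,v,\theta^\star)\,\tfrac{v^p(\partial_v\psi+b^{\tilde\phi}\partial_{\tilde\phi}\psi)^2}{2\Omega^2}$ with $A,B\ge0$ the bracketed coefficient functions, I first note from \eqref{JplugnvS}--\eqref{JplugnuS} that $\tfrac{|u|^p(\partial_u\psi)^2}{2\Omega^2}\le J^S_\mu n^\mu_{v=const}$ and $\tfrac{v^p(\partial_v\psi+b^{\tilde\phi}\partial_{\tilde\phi}\psi)^2}{2\Omega^2}\le J^S_\mu n^\mu_{u=const}$ (both fluxes being nonnegative, as sums of squares), so that pointwise $|\tilde K^S(\psi)|\le A\,J^S_\mu n^\mu_{v=const}+B\,J^S_\mu n^\mu_{u=const}$.

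Next I would estimate the coefficients. Both $A$ and $B$ are built from $\tfrac{\partial_\zeta(L^2\sin^2\theta)}{L^2\sin^2\theta}$ and $\sqrt{\partial_u b^{\tilde\phi}}$, which by \eqref{L_bounded}, \eqref{b_bound} together with \eqref{R_delta_sigma} (so that $|\Delta|\sim\Omega^2$, $R^2$ being bounded above and below) are $\lesssim|\Delta|\sim\Omega^2$ and $\lesssim\sqrt{|\Delta|}\sim\Omega$, respectively. The ratio factor $\tfrac{|u|^p}{v^p}$ occurring in $B$ is $\le 1$ since $v>|u|$ in $\cD^+(\gamma)$, while the factor $\tfrac{v^p}{|u|^p}$ occurring in $A$ is tamed by combining $v>|u|$ with the exponential decay \eqref{omegaufix}--\eqref{omegavfix} of $\Omega^2$, which dominates the polynomial weight. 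Feeding in the pointwise bound $\Omega^2|_\gamma\lesssim v_\gamma^{-\beta\alpha}\sim|\cdot|^{-\beta\alpha}$ from \eqref{lambdalog} (this is where the logarithmic placement of $\gamma$ pays off) and the transverse decay \eqref{omegaufix}, one obtains slicewise bounds $\sup_{v\ge v_\gamma(\bar u),\,\theta^\star}B(\bar u,v,\theta^\star)\lesssim|\bar u|^{-\beta\alpha/2}$ on any constant-$u$ slice of $\cR_{IV}$, and $\sup_{u,\,\theta^\star}A(u,\bar v,\theta^\star)\lesssim\bar v^{-\beta\alpha/2}$ on any constant-$v$ slice (both in fact come with an additional exponential gain away from $\gamma$, not needed here).

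Now I would carry out the Fubini splitting. Using $\dV=\dV_{u=const}\,du=\dV_{v=const}\,dv$ (cf.~\eqref{nv}--\eqref{nu}), foliate $\cR_{IV}$ by constant-$u$ hypersurfaces for the $B$-piece: bound $B$ by its slicewise supremum, pull it out of the sphere-and-$v$ integral, and bound the remaining integral by the supremum of the $u$-flux over $\bar u\in[u_\gamma(\hat v),u_\gamma(v_*)]$; this produces the $\delta_1$-term with $\delta_1=C\int_{u_\gamma(\hat v)}^{u_\gamma(v_*)}\sup B(\bar u,\cdot,\cdot)\,d\bar u\le C\int_{-\infty}^{u_\gamma(v_*)}|\bar u|^{-\beta\alpha/2}\,d\bar u$. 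Symmetrically, foliating by constant-$v$ hypersurfaces for the $A$-piece yields the $\delta_2$-term with $\delta_2=C\int_{v_*}^{\hat v}\sup A(\cdot,\bar v,\cdot)\,d\bar v\le C\int_{v_*}^{\infty}\bar v^{-\beta\alpha/2}\,d\bar v$, the $u$-range in the displayed $v$-flux being the segment cut out of the slice. Both bounds are independent of $\hat v$; since $u_\gamma(v_*)\to-\infty$ as $v_*\to\infty$ and $\beta\alpha/2>1$ by the choice \eqref{alpha}, they are finite and tend to $0$ as $v_*\to\infty$, which is the assertion.

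The main obstacle is the single genuinely Kerr-specific point: controlling the polynomially growing ratio $\tfrac{v^p}{|u|^p}$ in the coefficient $A$ against the decay of $\Omega^2\sim|\Delta|$ in $J^+(\gamma)$. This forces one to use, in the right order, (i) the bound $\Omega^2|_\gamma\lesssim v_\gamma^{-\beta\alpha}$ on $\gamma$ from \eqref{lambdalog}, which needs $\alpha$ large and is exactly what the logarithmically placed hypersurface $\gamma$ buys, (ii) the exponential decay \eqref{omegaufix} of $\Omega^2$ transverse to $\gamma$, which absorbs the weight $v^p$, and (iii) the inequality $v>|u|$ valid in $\cD^+(\gamma)$; only the interplay of all three gives a slice supremum decaying fast enough in the slice parameter for the $v_*$-tail of its integral to vanish, with the quantitative threshold furnished by \eqref{alpha}. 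Everything else — the pointwise comparison with the fluxes, the Fubini splitting, and the pull-out of the coefficients — is routine, the nontrivial Cauchy--Schwarz absorptions having already been performed in deriving \eqref{KStilde}.
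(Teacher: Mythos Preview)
Your proposal is correct and follows essentially the same route as the paper: bound $|\tilde K^S|$ pointwise by the two flux densities times coefficient functions built from $\partial_\zeta(L^2\sin^2\theta)/(L^2\sin^2\theta)$ and $\sqrt{\partial_u b^{\tilde\phi}}$, foliate $\cR_{IV}$ by constant-$u$ (respectively constant-$v$) slices, pull out the slicewise suprema of the coefficients, and then integrate those suprema using the pointwise $\Omega^2$-decay of Section~\ref{finiteness} together with~\eqref{alpha} to obtain $\delta_1,\delta_2\to 0$. The only minor imprecision is your summary bound $\sup A\lesssim \bar v^{-\beta\alpha/2}$: the contribution from the ratio $v^p/|u|^p$ actually produces a term of size $\bar v^{-\beta\alpha+p}$ (as in the paper's computation), which need not be dominated by $\bar v^{-\beta\alpha/2}$; however both exponents are $<-1$ by~\eqref{alpha}, so the conclusion is unaffected.
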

{\begin{figure}[ht]
\centering
\includegraphics[width=0.45\textwidth]{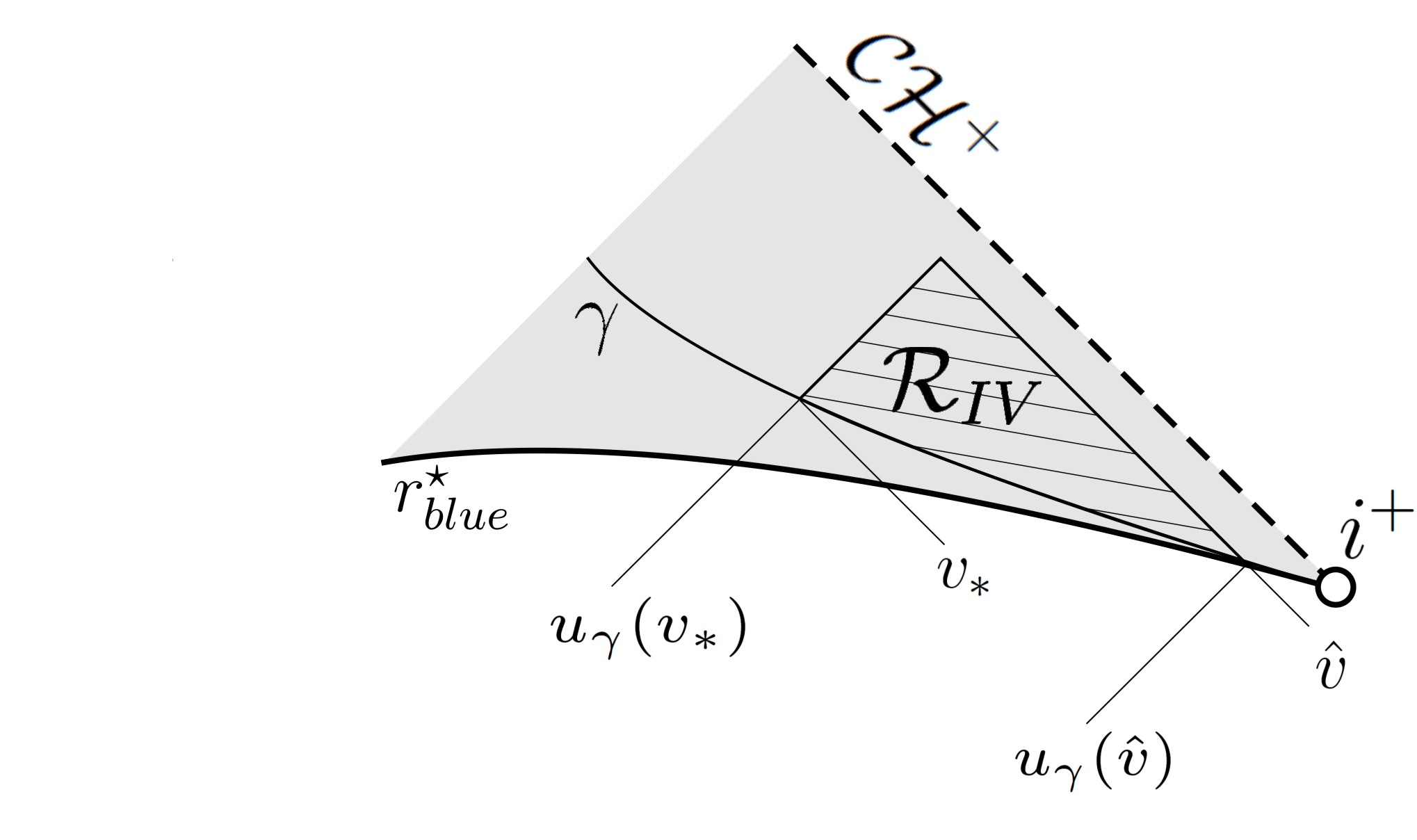}
\caption[]{Blueshift region of the $(u,v)$-plane of Kerr spacetime from the hypersurface $\gamma$ onwards.}
\label{RN_mit_u_v}\end{figure}}
\begin{proof} 
Taking the integral over the spacetime region, using expression \eqref{KStilde} yields
\bea
\lb{knboundhier}
&\int\limits_ {\cR_{IV}}&  |\tilde{K}^{S}(\psi)| \dV \nonumber\\
&\leq& \int\limits^{u_{\gamma}(v_*)}_{u_{\gamma}(\hat{v})}\int\limits_{\left\lbrace  v_{\gamma}(\bar{u}) \leq v \leq \hat{v}\right\rbrace }C\left|\frac{\partial_{{\bar{v}}}( L^2\sin^2\theta)}{L^2 \sin^2 \theta}+\frac{\partial_{{\bar{u}}}( L^2\sin^2\theta)}{L^2 \sin^2 \theta}\frac{|{\bar{u}}|^p}{{\bar{v}}^p}+\sqrt{\partial_{\bar{u}} b^{\tilde{\phi}}}\right|
J_{\mu}^{S}(\psi) n^{\mu}_{u=\bar{u}}\dV_{u=\bar{u}} \md \bar{u}\nonumber \\
&&+ \int\limits_{{v_*}}^{\hat{v}}\int\limits_{\left\lbrace  u_{\gamma}(\bar{v}) \leq u \leq u_{\gamma}(v_*)\right\rbrace } C\left|\frac{\partial_{{\bar{u}}}( L^2\sin^2\theta)}{L^2 \sin^2 \theta}+\frac{\partial_{{\bar{v}}}( L^2\sin^2\theta)}{L^2 \sin^2 \theta}\frac{{\bar{v}}^p}{|{\bar{u}}|^p}+\sqrt{\partial_{\bar{u}} b^{\tilde{\phi}}}\right|
J_{\mu}^{S}(\psi) n^{\mu}_{v=\bar{v}}\dV_{v=\bar{v}}\md \bar{v},\nonumber\\
\eea
with $u_{\gamma}(v)$ and $v_{\gamma}(\bar{u})$ in the integration limits as defined in \eqref{notation_neu}.

Further it follows that
\bea
\lb{131}
&&\int\limits_ {\cR_{IV}}  |\tilde{K}^{S}(\psi)| \dV \nonumber\\
&&\leq C\int\limits^{u_{\gamma}(v_*)}_{u_{\gamma}(\hat{v})} \sup_{v_{\gamma}(\bar{u})\leq \bar{v} \leq \hat{v}}
\left|\frac{\partial_{{\bar{v}}}( L^2\sin^2\theta)}{L^2 \sin^2 \theta}+\frac{\partial_{\bar{u}}( L^2\sin^2\theta)}{L^2 \sin^2 \theta}\frac{|{\bar{u}}|^p}{{\bar{v}}^p}+\sqrt{\partial_{\bar{u}} b^{\tilde{\phi}}}\right|
\md \bar{u} \nonumber\\
&&\qquad  \qquad \times \sup_{u_{\gamma}(\hat{v})\leq \bar{u}\leq u_{\gamma}(v_*)}\int\limits_{\left\lbrace  v_* \leq v \leq \hat{v}\right\rbrace }J_{\mu}^{S}(\psi) n^{\mu}_{u=\bar{u}}\dV_{u=\bar{u}}
\nonumber\\
&&+  C\int\limits^{\hat{v}}_{v_*}\sup_{u_{\gamma}(\bar{v}) \leq \bar{u}\leq u_{\gamma}(v_*)}\left|\frac{\partial_{{\bar{u}}}( L^2\sin^2\theta)}{L^2 \sin^2 \theta}+\frac{\partial_{{\bar{v}}}( L^2\sin^2\theta)}{L^2 \sin^2 \theta}\frac{{\bar{v}}^p}{|{\bar{u}}|^p}+\sqrt{\partial_{\bar{u}} b^{\tilde{\phi}}}\right|
\md \bar{v}\nonumber\\
&&\qquad  \qquad \times \sup_{v_*\leq \bar{v} \leq \hat{v}}\int\limits_{\left\lbrace  u_{\gamma}(\bar{v}) \leq u \leq u_{\gamma}(v_*)\right\rbrace } J_{\mu}^{S}(\psi) n^{\mu}_{v=\bar{v}}\dV_{v=\bar{v}}.
\eea

It remains to show finiteness and smallness of the expressions \mbox{$\int\limits^{u_{\gamma}(v_*)}_{u_{\gamma}(\hat{v})} \sup_{v_{\gamma}(\bar{u})\leq \bar{v} \leq \hat{v}}
\left|\frac{\partial_{{\bar{v}}}( L^2\sin^2\theta)}{L^2 \sin^2 \theta}+\frac{\partial_{{\bar{u}}}( L^2\sin^2\theta)}{L^2 \sin^2 \theta}\frac{|{\bar{u}}|^p}{{\bar{v}}^p}+\sqrt{\partial_{\bar{u}} b^{\tilde{\phi}}}\right|
\md \bar{u}$} and 
\mbox{$\int\limits^{\hat{v}}_{v_*}\sup_{u_{\gamma}(\bar{v}) \leq \bar{u}\leq u_{\gamma}(v_*)}\left|\frac{\partial_{{\bar{u}}}( L^2\sin^2\theta)}{L^2 \sin^2 \theta}+\frac{\partial_{{\bar{v}}}( L^2\sin^2\theta)}{L^2 \sin^2 \theta}\frac{{\bar{v}}^p}{|{\bar{u}}|^p}+\sqrt{\partial_{\bar{u}} b^{\tilde{\phi}}}\right|
\md \bar{v}$}.
Note relation \eqref{spacevolumedecay_u} of Section \ref{finiteness} for region $\cR_{IV}$. Further, with relations \eqref{b_bound} and \eqref{L_bounded} we can write
\bea
 &&\int\limits^{u_{\gamma}(v_*)}_{u_{\gamma}(\hat{v})} \sup_{v_{\gamma}(\bar{u})\leq \bar{v} \leq \hat{v}}
\left|\frac{\partial_{{\bar{v}}}( L^2\sin^2\theta)}{L^2 \sin^2 \theta}+\frac{\partial_{{\bar{u}}}( L^2\sin^2\theta)}{L^2 \sin^2 \theta}\frac{|{\bar{u}}|^p}{{\bar{v}}^p}+\sqrt{\partial_{\bar{u}} b^{\tilde{\phi}}}\right|
\md \bar{u}\nonumber\\
&\leq& C \int\limits_{u_{\gamma}(\hat{v})}^{u_{\gamma}(v_*)}\sup_{v_{\gamma}(\bar{u})\leq \bar{v} \leq \hat{v}}\left[{|u_{\gamma}(\bar{v})|}^{-{\beta \alpha}} e^{{\beta}\left[u_{\gamma}(\bar{v})-\bar{u}\right]}\left( 1 + \frac{|{\bar{u}}|^p}{{\bar{v}}^p}\right)+{|u_{\gamma}(\bar{v})|}^{-\frac{\beta \alpha}{2}} e^{\frac{\beta}{2}\left[u_{\gamma}(\bar{v})-\bar{u}\right]}\right] \md \bar{u}\nonumber\\
&\leq& \tilde{C} \int\limits_{u_{\gamma}(\hat{v})}^{u_{\gamma}(v_*)}{|\bar{u}|}^{-\frac{\beta \alpha}{2}}  \md \bar{u}
\leq \frac{\tilde{{C}}}{|-{\beta \alpha}+p+1|}\left[{|\bar{u}|}^{-\frac{\beta \alpha}{2}+1}\right]_{u_{\gamma}(\hat{v})}^{u_{\gamma}(v_*)}
\leq \delta_1, 
\eea
where $\delta_1\rightarrow 0$ for $|u_{\gamma}(v_*)|\rightarrow -\infty$ and thus for $v_*\rightarrow \infty$. Note that we have here used \eqref{alpha} in the last step.

For finiteness of the second term in \eqref{131} we follow the same strategy and use \eqref{spacevolumedecay} to obtain
\bea
 &&\int\limits^{\hat{v}}_{v_*}\sup_{u_{\gamma}(\bar{v}) \leq \bar{u}\leq u_{\gamma}(v_*)}\left|\frac{\partial_{{\bar{u}}}( L^2\sin^2\theta)}{L^2 \sin^2 \theta}+\frac{\partial_{{\bar{v}}}( L^2\sin^2\theta)}{L^2 \sin^2 \theta}\frac{{\bar{v}}^p}{|{\bar{u}}|^p}+\sqrt{\partial_{\bar{u}} b^{\tilde{\phi}}}\right|
\md \bar{v}\nonumber\\
&\leq& C \int\limits^{\hat{v}}_{v_*}\sup_{u_{\gamma}(\bar{v}) \leq \bar{u}\leq u_{\gamma}(v_*)}
\left[\bar{v}^{-{\beta \alpha}} + \frac{{\bar{v}}^{-{\beta \alpha}+p}}{|{\bar{u}}|^p}+\bar{v}^{-\frac{\beta \alpha}{2}}\right]
\md \bar{v}\nonumber\\
&\leq& \tilde{C}\int\limits^{\hat{v}}_{v_*}\left[\bar{v}^{-{\beta \alpha}} + \frac{{\bar{v}}^{-{\beta \alpha}+p}}{|u_{\gamma}(v_*)|^p}+\bar{v}^{-\frac{\beta \alpha}{2}}\right]
\md \bar{v}\nonumber\\
&\leq& \tilde{\tilde{C}}\left[\frac{{|\bar{v}|}^{-{\beta \alpha}+p+1}}{|-{\beta \alpha}+p+1|}+ \frac{{|\bar{v}|}^{-\frac{\beta \alpha}{2}+1}}{|-\frac{\beta \alpha}{2}+1|}\right]^{\hat{v}}_{v_*}
\leq \delta_2,
\eea
where $\delta_2 \rightarrow 0$ for $v_* \rightarrow \infty$. We demanded \eqref{alpha} of Section \ref{gamma_curve} in order to obtain the last step. Further, we use the argument explained after \eqref{vertausch}, where the bound \eqref{boundedL} was used, allowing us to pull the $(L\sin \theta)$-factor of the volume form inside the integral. From that the statement of Lemma \ref{K_spher} follows. 
\end{proof}
We can now propagate the weighted energy further.
\begin{prop}
\label{kastle}
Let $\psi$ be as in Theorem \ref{anfang} and $p$ as in \eqref{waspist}, and $Y^k$ as in \eqref{angular_comm} with \eqref{Yk},  \eqref{YkK}, \eqref{YkE} and all $k \in \left\{0,1,2\right\}$. Then, for $u_{\schere}$ sufficiently big, for all $v_*\geq v_{\gamma}(u_{\schere})$ and $\hat{v}>v_*$ 
\bea
\lb{propgamma}
\int\limits_ {\left\lbrace u_{\gamma}(\hat{v})\leq u \leq u_{\gamma}(v_*)\right\rbrace }  J_{\mu}^{S}(Y^k\psi) n^{\mu}_{v=\hat{v}} \dV_{v=\hat{v}}
&+&\int\limits_ {\left\lbrace  v_* \leq v \leq \hat{v}\right\rbrace }  J_{\mu}^{S}(Y^k\psi) n^{\mu}_{u=u_{\gamma}(v_*)} \dV_{u=u_{\gamma}(v_*)}\nonumber\\
&& \qquad  \qquad\qquad \leq C {v_*}^{-1-2\delta+p},  
\eea
where $C$ is a positive constant depending on $C_{0}$ of Theorem \ref{anfang} and $D_{0}(u_{\diamond}, 1)$ of Proposition \ref{initialdataprop}, where $u_{\diamond}$ is defined by $r_{red}=r(u_{\diamond},1)$.
\end{prop}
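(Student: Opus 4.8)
The plan is to run the energy method in the region $\cR_{IV}=J^+(\gamma)\cap J^-(x)$ with $x=(u_{\gamma}(v_*),\hat v)$, using the vector field multiplier $S$ of \eqref{Sfield} and the divergence theorem, for each order $k\in\{0,1,2\}$ successively. First I would apply the divergence theorem to the current $J^S_\mu(Y^k\psi)$ over $\cR_{IV}$. The past boundary of $\cR_{IV}$ is the portion of $\gamma$ between $v_*$ and $\hat v$, where the flux is controlled by Corollary \ref{cor2} (the weighted dyadic estimate $\int v^p J^{S_0}_\mu(Y^k\psi)n^\mu_\gamma\,\dV_\gamma\le Cv_*^{-1-2\delta+p}$), after noting that $n^\mu_\gamma$ approaches $n^\mu_r$ and that $J^S$ and $v^pJ^{S_0}$ are comparable on $\gamma$ up to bounded factors (since $|u|\sim v$ there by the relations in Section \ref{gamma_curve}, and $f^q$ is comparable to a positive constant in a neighbourhood of $\cC\cH^+$ — or more precisely the weights can be absorbed, cf.\ the relation $v_\gamma(u)-v_{blue}(u)=\alpha\log v_\gamma(u)$). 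The two future boundary pieces are the $v=\hat v$ segment and the $u=u_\gamma(v_*)$ segment, which are exactly the two terms on the left of \eqref{propgamma}; both fluxes are nonnegative because $S$ is timelike and future-directed (the normals $n^\mu$ are future-directed, cf.\ \eqref{JplugnvS}--\eqref{JplugnuS}, which exhibit $J^S_\mu n^\mu$ as a sum of squares).

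The spacetime integral is $\int_{\cR_{IV}}(K^S(Y^k\psi)+\cE^S(Y^k\psi))\,\dV$. For $k=0$ the error term vanishes and $K^S$ is split as in \eqref{KS2}: the $|\nabb\psi|^2$ coefficient is \emph{positive} (the weights $|u|^p,v^p$ times $|\partial_\zeta\Omega/\Omega|$ dominate the bounded term $N$ in the blueshift region, using \eqref{lowerboundu}), so that part has the good sign and can be dropped when deriving an upper bound; what remains, $-K^S(\psi)$, is bounded by $|\tilde K^S(\psi)|$ via \eqref{ksrelation}, and Lemma \ref{K_spher} then gives $\int_{\cR_{IV}}|\tilde K^S(\psi)|\,\dV\le\delta_1\sup\int J^S_\mu n^\mu_{u=\bar u}\,\dV_{u=\bar u}+\delta_2\sup\int J^S_\mu n^\mu_{v=\bar v}\,\dV_{v=\bar v}$ with $\delta_1,\delta_2\to0$ as $v_*\to\infty$. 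Since the supremum over $\bar u$ (resp.\ $\bar v$) of these fluxes is itself controlled by the same divergence-theorem inequality applied on subrectangles of $\cR_{IV}$, the terms $\delta_1,\delta_2$ times those suprema can be absorbed into the left-hand side provided $v_*$ (equivalently $u_{\schere}$) is taken large enough. This is the standard absorption/continuity argument; taking the supremum over $\hat v$ and over the relevant $u$-slices closes the estimate and yields \eqref{propgamma} for $k=0$.

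For $k=1$ and $k=2$ I would proceed inductively: apply the divergence theorem to $J^S(Y^k\psi)$ and add the already-established lower-order inequalities. Now the spacetime integral contains the extra error term $\cE^S(Y^k\psi)$, whose structure is displayed in \eqref{error_structure_phi} (for $k=1$) and \eqref{error_structure_second} (for $k=2$): every coefficient is bounded and, crucially, the derivative-of-metric-coefficient factors ($\partial b^{\tilde\phi}$, $\partial_\zeta L$, etc.) are proportional to $\Delta$, hence to $\Omega^2$, which by the pointwise decay \eqref{spacevolumedecay}--\eqref{spacevolumedecay_u} of Section \ref{finiteness} decays rapidly in $J^+(\gamma)$. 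Therefore the same Lemma \ref{K_spher}–type bound applies to $\int_{\cR_{IV}}|\cE^S(Y^k\psi)|\,\dV$ — the top-order derivatives $\partial_u(Y^k\psi),\partial_v(Y^k\psi),\partial_{\theta_C}(Y^k\psi)$ are controlled by $J^S(Y^k\psi)$-fluxes with a small constant, while the lower-order transverse derivatives are controlled by the $J^S(Y^{k-1}\psi)$ and $J^S(Y^{k-2}\psi)$ fluxes already bounded by the inductive hypothesis (alternatively one commutes with the Killing field $\partial_{\tilde\phi}$ to trade a $\partial_{\tilde\phi}$ derivative for an extra $Y_3$, as remarked after \eqref{error_structure_second}). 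The $K^S(Y^k\psi)$ term is handled exactly as for $k=0$. Summing over $k$ and absorbing the small terms then gives \eqref{propgamma} in full.

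\textbf{The main obstacle} I anticipate is the bookkeeping in the absorption step: one must verify that the constants $\delta_1,\delta_2$ from Lemma \ref{K_spher} multiply precisely the same boundary fluxes that appear on the left of \eqref{propgamma} (over all subrectangles $u_\gamma(\hat v)\le\bar u\le u_\gamma(v_*)$, $v_*\le\bar v\le\hat v$), so that a clean continuity-in-$\hat v$ / supremum argument closes without circularity — this is where the choice of $u_{\schere}$ sufficiently large (making $|u_\gamma(v_*)|$ large, hence $\delta_1,\delta_2$ small) enters. The secondary subtlety is ensuring, for $k\ge1$, that every error-term coefficient in \eqref{error_structure_phi}--\eqref{error_structure_second} genuinely carries an $\Omega^2$ (or $\Delta$) factor so that the volume integrals converge and are small; this rests on the boundedness estimates of Section \ref{bounded} (in particular \eqref{b_bound}, \eqref{L_bounded}, \eqref{det_regel}) and on the decay \eqref{spacevolumedecay_u}, all of which are available.
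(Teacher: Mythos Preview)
Your proposal is correct and matches the paper's own proof: divergence theorem in $\cR_{IV}$, past boundary on $\gamma$ controlled by the weighted estimate of Corollary \ref{cor2}, bulk plus error terms absorbed via Lemma \ref{K_spher} (and its higher-order version, stated in the paper as Lemma \ref{K_spher2} packaging \eqref{tildeineq1}--\eqref{tildeineq2}) once $u_{\schere}$ is chosen so that $\delta_1,\delta_2\le\tfrac12$. Your two anticipated obstacles are exactly the points the paper handles --- the absorption is closed by taking the supremum over subrectangles and fixing $u_{\schere}$ large, and the error-term smallness comes not from every coefficient in \eqref{error_structure2}--\eqref{error_structure3} literally carrying a $\Delta$ factor (they are merely bounded) but from the Cauchy--Schwarz redistribution in the style of \eqref{vp_cauchy}, which places the decaying prefactors $\partial_\zeta(L^2\sin^2\theta)/(L^2\sin^2\theta)$ and $\sqrt{\partial_u b^{\tilde\phi}}$ into $|\tilde K^S(Y^k\psi)|$.
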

{\em Remark.} Refer to \eqref{notation_neu} for the definition of $u_{\gamma}(v)$ and see Figure \ref{integralbild3} b) and Figure \ref{RN_mit_u_v} for further clarification.
\begin{proof}
The statement for $k=0$ follows, from using the divergence theorem, the result of Proposition \ref{to_gamma0} and Lemma \ref{K_spher}. 
We have also used that $\delta_1\rightarrow 0$, $\delta_2\rightarrow 0$ as $v_*\rightarrow \infty$. Moreover, we need to choose $u_{\schere}$ sufficiently close to $-\infty$, such that for $v_*>v_{\gamma}(u_{\schere})$, say 
\bea
 \delta_1, \delta_2 \leq \frac12
\eea
holds.
For more details see Porposition 4.11 of \cite{anne}.
Further, we need to show boundedness after commutation with $Y$. 
Let us now show, that we can keep definition \eqref{KStilde} for $\tilde{K}^{S}(Y^k\psi)$, which is to say
\bea
\lb{KStildek}
&&|\tilde{K}^S(Y^k\psi)|\nonumber\\
&=&C\left|\frac{\partial_{u}( L^2\sin^2\theta)}{L^2 \sin^2 \theta}+\frac{\partial_{v}( L^2\sin^2\theta)}{L^2 \sin^2 \theta}\frac{v^p}{|u|^p}+\sqrt{\partial_u b^{\tilde{\phi}}}\right|\frac{1}{2\Omega^2}|u|^p(\partial_u Y^k\psi)^2\nonumber\\
&+&C\left|\frac{\partial_{v}( L^2\sin^2\theta)}{L^2 \sin^2 \theta}+\frac{\partial_{u}( L^2\sin^2\theta)}{L^2 \sin^2 \theta}\frac{|u|^p}{v^p}+\sqrt{\partial_u b^{\tilde{\phi}}}\right|\frac{1}{2\Omega^2}v^p(\partial_v Y^k\psi+b^{\tilde{\phi}}\partial_{\tilde{\phi}}Y^k\psi)^2,
\eea
but with C such that 
\bea
\lb{tildeineq1}
-[{K}^{S}(Y\psi) +\cE^{S}(Y\psi)+{K}^{S}(\psi)]  &\leq&
|\tilde{K}^{S}(Y\psi)|+|\tilde{K}^{S}(\psi)|, 
\eea 
and
\bea
\lb{tildeineq2}
-[{K}^{S}(Y^2\psi) +\cE^{S}(Y^2\psi)+{K}^{S}(Y\psi) +\cE^{S}(Y\psi)+{K}^{S}(\psi)]  \nonumber\\
\qquad \leq |\tilde{K}^{S}(Y^2\psi)|+|\tilde{K}^{S}(Y\psi)|+|\tilde{K}^{S}(\psi)|. 
\eea
Now we can state the following lemma.
\begin{lem}
\label{K_spher2}
Let $\psi$ be an arbitrary function. Then, for all $v_*>2\alpha$ 
and all $\hat{v}>v_*$, and $Y^k$ as in \eqref{angular_comm} with \eqref{Yk},  \eqref{YkK}, \eqref{YkE} and all $k \in \left\{0,1,2\right\}$, the integral over region \mbox{$\cR_{IV}=J^+(\gamma)\cap J^-(x)$} with \mbox{$x=(u_{\gamma}(v_*), \hat{v})$}, $x \in \cB$, cf.~ Figure \ref{RN_mit_u_v}, of the current $\tilde{K}^{S}$, defined by \eqref{KStildek}, can be estimated by
\bea
\lb{S_lemma1}
\int\limits_{\cR_{IV}} |\tilde{K}^{S}(Y^k\psi)| 
&\leq& \delta_1 \sup_{u_{\gamma}(\hat{v})\leq \bar{u}\leq u_{\gamma}(v_*)}\int\limits_{\left\lbrace  v_{\gamma}(\bar{u}) \leq v \leq \hat{v}\right\rbrace }  J_{\mu}^{S}(Y^k\psi) n^{\mu}_{u=\bar{u}}\dV_{u=\bar{u}}\nonumber\\
&&+ \delta_2 \sup_{v_*\leq \bar{v} \leq \hat{v}}\int\limits_{\left\lbrace  u_{\gamma}(\hat{v}) \leq u \leq u_{\gamma}(\bar{v}) \right\rbrace }
J_{\mu}^{S}(Y^k\psi) n^{\mu}_{v=\bar{v}} \dV_{v=\bar{v}},\nonumber\\
\eea
where $\delta_1$ and $\delta_2$ are positive constants, with $\delta_1\rightarrow 0$ and $\delta_2\rightarrow 0$ as $v_*\rightarrow \infty$.
\end{lem}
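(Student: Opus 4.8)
The plan is to repeat the proof of Lemma~\ref{K_spher} essentially verbatim, with $\psi$ replaced by $Y^k\psi$ throughout. The crucial observation is that the definition \eqref{KStildek} of $\tilde{K}^{S}(Y^k\psi)$ has exactly the same structure as \eqref{KStilde}: the coefficients multiplying the quadratic field expressions are the same $(u,v,\theta^{\star})$-dependent geometric functions --- combinations of $\frac{\partial_{u}(L^2\sin^2\theta)}{L^2\sin^2\theta}$, $\frac{\partial_{v}(L^2\sin^2\theta)}{L^2\sin^2\theta}$ and $\sqrt{\partial_u b^{\tilde{\phi}}}$, divided by $2\Omega^2$ and weighted by $|u|^p$ or $v^p$ --- and only the field parts change, from $(\partial_u\psi)^2$, $(\partial_v\psi+b^{\tilde{\phi}}\partial_{\tilde{\phi}}\psi)^2$ to $(\partial_u Y^k\psi)^2$, $(\partial_v Y^k\psi+b^{\tilde{\phi}}\partial_{\tilde{\phi}}Y^k\psi)^2$. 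But the latter are precisely the quadratic quantities appearing in the fluxes $J_{\mu}^{S}(Y^k\psi)n^{\mu}_{v=\mathrm{const}}$ and $J_{\mu}^{S}(Y^k\psi)n^{\mu}_{u=\mathrm{const}}$, see \eqref{JplugnvS} and \eqref{JplugnuS}, so the entire mechanism of Lemma~\ref{K_spher} transfers.

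Concretely, I would first apply Fubini's theorem to split $\int_{\cR_{IV}}|\tilde{K}^{S}(Y^k\psi)|\,\dV$ into two iterated integrals, one integrating first in $v$ along the constant-$u$ null segments reaching from $\gamma$ up to $\{v=\hat{v}\}$, the other integrating first in $u$ along the constant-$v$ null segments, exactly as in \eqref{knboundhier}. In each iterated integral I would pull the flux factor $J_{\mu}^{S}(Y^k\psi)n^{\mu}$ out of the inner integral by replacing it with its supremum over the relevant range of $\bar{u}$ (resp.\ $\bar{v}$), as in \eqref{131}. This reduces the claim to proving that the two purely geometric integrals --- namely $\int_{u_{\gamma}(\hat{v})}^{u_{\gamma}(v_*)}\sup_{v_{\gamma}(\bar{u})\le\bar{v}\le\hat{v}}\bigl|\frac{\partial_{\bar{v}}(L^2\sin^2\theta)}{L^2\sin^2\theta}+\frac{\partial_{\bar{u}}(L^2\sin^2\theta)}{L^2\sin^2\theta}\frac{|\bar{u}|^p}{\bar{v}^p}+\sqrt{\partial_{\bar{u}}b^{\tilde{\phi}}}\bigr|\,\md\bar{u}$ and its $\bar{v}$-analogue --- are finite and tend to $0$ as $v_*\to\infty$; since these integrals do not involve $Y^k$ at all, this is literally the computation already carried out at the end of the proof of Lemma~\ref{K_spher}.

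That computation in turn uses the pointwise decay of $\Omega^2$ in $J^+(\gamma)$, i.e.\ \eqref{spacevolumedecay} and \eqref{spacevolumedecay_u}, together with \eqref{b_bound} and \eqref{L_bounded}, which bound $\partial_\zeta(L^2\sin^2\theta)/(L^2\sin^2\theta)$ and $\partial_u b^{\tilde{\phi}}$ by $|\Delta|\sim\Omega^2$ up to bounded factors. Substituting yields integrands controlled by ${|u_{\gamma}(\bar{v})|}^{-\beta\alpha/2}e^{(\beta/2)[u_{\gamma}(\bar{v})-\bar{u}]}(1+|\bar{u}|^p/\bar{v}^p)$ and similar expressions, and carrying out the $\bar{u}$ (resp.\ $\bar{v}$) integration --- using $|u_{\gamma}(\bar{v})|\sim\bar{v}$, the surplus exponential factor, and the choice \eqref{alpha} of $\alpha$ guaranteeing $\beta\alpha>p+1$ so that powers like $|\bar{u}|^{-\beta\alpha/2+1}$ and $\bar{v}^{-\beta\alpha+p+1}$ are integrable and small --- gives $\delta_1,\delta_2\to0$ as $v_*\to\infty$. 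As in the step after \eqref{vertausch}, the boundedness \eqref{boundedL} of $L$ lets me pull the $L\sin\theta$ factor of the volume form inside the spherical integrals, which closes the argument.

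The main point to watch --- and the reason for the separate discussion around \eqref{tildeineq1}--\eqref{tildeineq2} preceding the lemma --- is that keeping the same form \eqref{KStildek} for $\tilde{K}^{S}(Y^k\psi)$ is legitimate only because $-[K^S(Y^k\psi)+\cE^S(Y^k\psi)+(\text{lower order }K,\cE\text{ terms})]$ is dominated by $\sum_{j\le k}|\tilde{K}^{S}(Y^j\psi)|$. This is where the commutation error genuinely enters: by the error structures \eqref{error_structure_phi} and \eqref{error_structure_second}, every term of $\cE^S(Y^k\psi)$ is a lower-order derivative of $\psi$ (up to third order for $k=2$) times a bounded coefficient times the weights applied to $Y^k\psi$; by Cauchy--Schwarz, and using that the genuinely positive terms of $K^S(Y^k\psi)$ carry the dominating weights $|u|^p$, $v^p$, all such error terms and the indefinite bulk terms absorb into the good terms of $K^S(Y^j\psi)$, $j\le k$ --- exactly the mechanism used in Lemmas~\ref{K_S_0} and~\ref{K_S_0_E}. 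Granting this absorption, Lemma~\ref{K_spher2} is a routine repetition of Lemma~\ref{K_spher}; the delicate part is therefore not this lemma itself but the bookkeeping of the error terms justifying its use.
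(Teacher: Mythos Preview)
Your proposal is correct and takes essentially the same approach as the paper: once $\tilde{K}^{S}(Y^k\psi)$ is defined by \eqref{KStildek} with the same geometric coefficients as \eqref{KStilde}, the inequality \eqref{S_lemma1} follows by repeating the proof of Lemma~\ref{K_spher} verbatim with $\psi$ replaced by $Y^k\psi$, since the $\delta_1,\delta_2$ integrals are purely geometric and independent of $k$. The paper's own proof interleaves this with justifying \eqref{tildeineq1}--\eqref{tildeineq2} (via the deformation-tensor terms \eqref{pi-Y-termevv}--\eqref{pi-Y-termeAB}, the wave-equation substitution \eqref{uv_term} for $\partial_u\partial_v\psi$, and Cauchy--Schwarz distributing an $\Omega$-weight as in \eqref{vp_cauchy}), which is exactly the error bookkeeping you flag in your final paragraph as the substantive part.
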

\begin{proof}
We have already shown \eqref{S_lemma1} for $k=0$ in Lemma \ref{K_spher}. 

In order to see, that Lemma \ref{K_spher2} is also true for $k=1$, we first need to consider \eqref{pi-Y-termevv} to \eqref{pi-Y-termeAB} which define the second derivative terms as can be seen from \eqref{comm}. Further, recall that we need to apply \eqref{uv_term} for the $(\partial_u \partial_v \psi)$ term. With this and by recalling that $\partial_{\theta^{\star}} b^{\tilde{\phi}}$ was also bounded by $\Omega^2$, see Section \ref{doublenull}, we can see that all terms involved are bounded. By using the Cauchy--Schwarz inequality it is possible to distribute a weight of $\Omega$ as we have already done in \eqref{vp_cauchy}. Once we have done this, we see that we obtain \eqref{tildeineq1} for big enough $C$. The remaining part of the proof for $k=1$ is then analogous to the proof of Lemma \ref{K_spher}.
We now have to use this result to prove a $k=1$ version of Proposition \ref{kastle} analogously as shown above for $k=0$. This is done by using the divergence theorem for first and second order terms and leads to control over all second order terms (as well as third order terms containing $\tilde{\phi}$). Looking at equation \eqref{comm} again, we now have to bring special attention to the first and second term of the right hand side. Again we see, that we only obtain terms which can be dominated by $\tilde{K}^S(Y^2 \psi)$. This proves statement \eqref{S_lemma1} for $k=2$.
\end{proof}
Hereafter, we again use the divergence theorem up to third order and the above lemma to prove the $k=2$ version of Proposition \ref{kastle}.
\end{proof}

\subsubsection{Higher order weighted energy boundedness up to $\cC\cH^+$ close to $i^+$}
In the previous Sections \ref{red_region} to \ref{bounding_bulk_S} we have proven energy estimates for each region with specific properties separately. Putting all results together we can state the following proposition. 
\begin{prop}
\label{gesamtesti}
Let $\psi$ be as in Theorem \ref{anfang} and $p$ as in \eqref{waspist}, and $Y^k$ as in \eqref{angular_comm} with \eqref{Yk},  \eqref{YkK}, \eqref{YkE} and all $k \in \left\{0,1,2\right\}$.
Then, for $u_{\schere}$ sufficiently close to $-\infty$, for all $v_*>1$, $\hat{v}>v_*$ and $\tilde{u} \in (-\infty, u_{\schere})$.
\bea
\lb{propgamma2}
\int\limits_ {\left\lbrace  v_* \leq v \leq \hat{v}\right\rbrace }  J_{\mu}^{S}(Y^k\psi) n^{\mu}_{u=\tilde{u}} \dV_{u=\tilde{u}}\leq {C} {v_*}^{-1-2\delta+p},
\eea
where ${C}$ is a positive constant depending on $C_0$ of Theorem \ref{anfang} and $D_{0}(u_{\diamond},1)$ of Proposition \ref{initialdataprop}, where $u_{\diamond}$ is defined by $r_{red}=r(u_{\diamond},1)$.
\end{prop}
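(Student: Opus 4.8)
The plan is to obtain \eqref{propgamma2} by concatenating the four regional estimates of Sections~\ref{red_region}--\ref{bounding_bulk_S} along the constant-$u$ null hypersurface $\{u=\tilde u\}$; the computation closely follows the corresponding argument in \cite{anne}, so I only sketch its structure. Fix $\tilde u\in(-\infty,u_{\schere})$, $v_*>1$ and $\hat v>v_*$. In the notation of \eqref{notation_neu} the segment $\{u=\tilde u,\ v_*\le v\le\hat v\}$ crosses the hypersurfaces $r^\star=r^\star_{red}$, $r^\star=r^\star_{blue}$ and $\gamma$ at the pairwise comparable parameter values $v_{red}(\tilde u)\sim v_{blue}(\tilde u)\sim v_\gamma(\tilde u)\sim|\tilde u|$. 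I would split the flux $\int_{\{v_*\le v\le\hat v\}}J_\mu^S(Y^k\psi)n^\mu_{u=\tilde u}\,\dV_{u=\tilde u}$ into the portions lying in $\cR$, in $\cN$, in $J^-(\gamma)\cap\cB$ and in $J^+(\gamma)\cap\cB$ (omitting whichever of the first three is empty for the given $v_*$), and bound each by $C\,v_*^{-1-2\delta+p}$ with $C$ depending only on $C_0$ of Theorem~\ref{anfang} and $D_0$ of Proposition~\ref{initialdataprop}.

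For the portions in $\cR$ and in $\cN$ I would use that $S$ is future directed timelike, so that $J_\mu^S(Y^k\psi)n^\mu_{u=\tilde u}$ is dominated pointwise by the redshift, respectively noshift, current of $Y^k\psi$ times the weights built from $v^p$ and $|u|^p$; summing Corollary~\ref{cor5.2} dyadically in $v$ over $[v_*,v_{red}(\tilde u)]$ and adjoining the single noshift block of Corollary~\ref{cor6.1}, the weighted sum converges to $\lesssim v_*^{-1-2\delta+p}$, convergence of the resulting geometric series being exactly where the first inequality in \eqref{waspist}, $p\le 1+2\delta$, is invoked (this weighted bookkeeping is carried out in \cite{anne}). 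The portion in $J^-(\gamma)\cap\cB$ is already in weighted form and is bounded directly by Corollary~\ref{cor7.1}, once more by $\lesssim v_*^{-1-2\delta+p}$.

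The remaining portion, in $J^+(\gamma)\cap\cB$, is the crux. Set $\nu=v_\gamma(\tilde u)$, so that $u_\gamma(\nu)=\tilde u$ and, since $\tilde u<u_{\schere}$, also $\nu\ge v_\gamma(u_{\schere})$; Proposition~\ref{kastle} applied with parameter $\nu$ then gives $\int_{\{\nu\le v\le\hat v\}}J_\mu^S(Y^k\psi)n^\mu_{u=\tilde u}\,\dV_{u=\tilde u}\le C\,\nu^{-1-2\delta+p}$. If $v_*\le\nu$ --- the case relevant for Theorems~\ref{main} and \ref{energythm}, where one ultimately takes $v_*$ of order $1$ --- then $\nu^{-1-2\delta+p}\le v_*^{-1-2\delta+p}$ since the exponent is nonpositive, the portion over $[v_*,\nu]$ has been estimated in the previous paragraph, and we are done. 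If instead $v_*>\nu$ then $\tilde u>u_\gamma(v_*)$, and one must in addition propagate the estimate of Proposition~\ref{kastle} rightward from $\{u=u_\gamma(v_*)\}$ to $\{u=\tilde u\}$ using the divergence theorem in the rectangle $[u_\gamma(v_*),\tilde u]\times[v_*,\hat v]\subset J^+(\gamma)$, absorbing the spacetime integral of $K^S+\cE^S$ and of the lower order currents into the fluxes on $\{u=\tilde u\}$ and $\{v=\hat v\}$ by means of Lemma~\ref{K_spher2} together with $\delta_1,\delta_2\to 0$ as $v_*\to\infty$ (which is where $u_{\schere}$ must be taken sufficiently close to $-\infty$), while the auxiliary incoming flux on the edge $\{v=v_*\}$ is controlled by a further divergence-theorem comparison with $\gamma$ and Corollary~\ref{cor2}. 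I expect the main obstacle to be exactly this blueshift step: the weight power must be taken with $p>1$ so that $K^S$ in \eqref{KS2} carries the good sign despite the blueshift instability, yet with $p\le 1+2\delta$ so that the horizon decay of Theorem~\ref{anfang} remains summable against the weights, and the passage from a flux on $\{u=u_\gamma(v_*)\}$ to a flux on a general $\{u=\tilde u\}$ rests on the smallness of the deformation-tensor contributions quantified in Lemma~\ref{K_spher2}.
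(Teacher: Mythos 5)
Your proposal follows the paper's own proof of Proposition \ref{gesamtesti}: the same partition of the constant-$u$ flux into its portions in $\cR$, $\cN$, $J^-(\gamma)\cap\cB$ and $J^+(\gamma)\cap\cB$, estimated respectively by the weighted (dyadic) summation of Corollary \ref{cor5.2}, by Corollary \ref{cor6.1}, by Corollary \ref{cor7.1}, and by Proposition \ref{kastle}. Your extra case distinction for $v_*>v_{\gamma}(\tilde{u})$ is additional detail that the paper does not spell out (it simply invokes Proposition \ref{kastle} for the $J^+(\gamma)\cap\cB$ portion), but it does not constitute a different route.
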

\begin{proof}
First of all, we partition the integral of the statement into a sum of integrals of the different regions. That is to say
\bea
\int\limits_ {\left\lbrace  v_* \leq v \leq \hat{v}\right\rbrace }  J_{\mu}^{S}(Y^k\psi) n^{\mu}_{u=\tilde{u}} \dV_{u=\tilde{u}}&=&\int\limits_ {{\left\lbrace  v_* \leq v \leq \hat{v}\right\rbrace } \cap \cR} J_{\mu}^{S}(Y^k\psi) n^{\mu}_{u=\tilde{u}} \dV_{u=\tilde{u}}\nonumber\\
&&+\int\limits_ {{\left\lbrace  v_* \leq v \leq \hat{v}\right\rbrace } \cap \cN} J_{\mu}^{S}(Y^k\psi) n^{\mu}_{u=\tilde{u}} \dV_{u=\tilde{u}}\nonumber\\
&&+\int\limits_ {{\left\lbrace  v_* \leq v \leq \hat{v}\right\rbrace } \cap J^-(\gamma)\cap\cB} J_{\mu}^{S}(Y^k\psi) n^{\mu}_{u=\tilde{u}} \dV_{u=\tilde{u}}\nonumber\\
&&+\int\limits_ {{\left\lbrace  v_* \leq v \leq \hat{v}\right\rbrace } \cap J^+(\gamma)\cap\cB} J_{\mu}^{S}(Y^k\psi) n^{\mu}_{u=\tilde{u}} \dV_{u=\tilde{u}}.
\eea
For the integral in $\cR$ and the integral in $\cN$ we use Corollaries \ref{cor5.2} and \ref{cor6.1}. (Note that the former has to be summed and weighted resulting in the loss of $1+p$ polynomial powers.)
Further, for the integral in region $J^-(\gamma)\cap\cB$ we apply Corollary \ref{cor7.1} and for the integral in region $J^+(\gamma)\cap\cB$ we use Proposition \ref{kastle}. 
Putting all this together, we arrive at the conclusion of Proposition \ref{gesamtesti}.
\end{proof}
We can now state a version of Theorem \ref{energythm} inside the characteristic rectangle $\Xi$.
\begin{thm}
\lb{Xithm}
Let $\phi$ be as in Theorem \ref{anfang} and $p$ as in \eqref{waspist}, and $Y^k$ as in \eqref{angular_comm} with \eqref{Yk},  \eqref{YkK}, \eqref{YkE} and all $k \in \left\{0,1,2\right\}$. Then, for $u_{\schere}$ sufficiently close to $-\infty$, for all \mbox{$v_{fix}\geq1$}, and $\tilde{u} \in (-\infty, u_{\schere})$,
\bea
\lb{array1_cor}
&&\int\limits^{\infty}_{v_{fix}} \int\limits_{\bbS^2_{u,v}}\left[v^p (\partial_v (Y^k\psi)+b^{\tilde{\phi}}\partial_{\tilde{\phi}} (Y^k\psi))^2(\tilde{u}, v, \theta^{\star}, \tilde{\phi}) +\Omega^2|\nabb (Y^k\psi)|^2(\tilde{u}, v, \theta^{\star}, \tilde{\phi}) \right]L\md \sigma_{\mathbb S^2}\md v\nonumber\\
&&\qquad \qquad \leq {E_{k}},
\eea
where $C$ is a positive constant dependent on $C_{0}$ of Theorem \ref{anfang} and $D_{0}(u_{\diamond},1)$ of Proposition \ref{initialdataprop}, where $u_{\diamond}$ is defined by $r_{red}=r(u_{\diamond},1)$.
\end{thm}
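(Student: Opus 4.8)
The plan is to deduce the statement directly from Proposition \ref{gesamtesti}, which already contains essentially all the analytic content. Recall that $n^\mu_{u=\tilde u}=\partial_v+b^{\tilde\phi}\partial_{\tilde\phi}$ and $\dV_{u=\tilde u}=2\Omega^2 L\sin\theta\,\md\theta^\star\md\tilde\phi\,\md v$, as recorded in \eqref{nu}. Plugging the vector field $S$ of \eqref{Sfield} into the $J$-current and using \eqref{JplugnuS}, one has
\bea
J^S_\mu(Y^k\psi)\,n^\mu_{u=\tilde u}\,\dV_{u=\tilde u}
= \left[v^p(\partial_v Y^k\psi+b^{\tilde\phi}\partial_{\tilde\phi}Y^k\psi)^2 + |u|^p\,\Omega^2|\nabb Y^k\psi|^2\right]L\sin\theta\,\md\theta^\star\md\tilde\phi\,\md v .\nonumber
\eea
Since $|u|^p\geq 1$ on $\Xi$ (because $u\leq u_{\schere}<-1$ there, cf.~Section \ref{gamma_curve}), the integrand on the left dominates the integrand appearing in \eqref{array1_cor} pointwise. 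Hence, taking $\hat v\to\infty$ in Proposition \ref{gesamtesti} with $v_*=v_{fix}$, the monotone convergence theorem gives
\bea
\int\limits^{\infty}_{v_{fix}}\int\limits_{\bbS^2_{u,v}}\left[v^p(\partial_v Y^k\psi+b^{\tilde\phi}\partial_{\tilde\phi}Y^k\psi)^2+\Omega^2|\nabb Y^k\psi|^2\right]L\,\md\sigma_{\mathbb S^2}\md v
\;\leq\; C\,v_{fix}^{-1-2\delta+p}\;\leq\; E_k,\nonumber
\eea
where for the last inequality one uses $v_{fix}\geq 1$ together with the exponent being negative, which holds precisely because $p\leq 1+2\delta$ by \eqref{waspist}. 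The constant $E_k$ depends only on $C_0$ of Theorem \ref{anfang} and $D_0(u_\diamond,1)$ of Proposition \ref{initialdataprop}, inheriting this dependence from Proposition \ref{gesamtesti}.

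The only genuinely new point relative to Proposition \ref{gesamtesti} is the passage $\hat v\to\infty$, i.e.~the fact that the energy flux through the entire null segment $\{u=\tilde u\}\cap\{v\geq v_{fix}\}$ — all the way up to $\cC\cH^+$ — is finite. This is legitimate because the bound $C v_*^{-1-2\delta+p}$ in Proposition \ref{gesamtesti} is uniform in $\hat v$; the flux integrals are monotone increasing in $\hat v$ and uniformly bounded, hence converge. One should note that $u=\tilde u$ with $\tilde u\in(-\infty,u_{\schere})$ is a constant-$u$ null hypersurface that reaches $\cC\cH^+_B$ at $v=\infty$, so the statement genuinely includes a flux bound on a hypersurface terminating on the Cauchy horizon. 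I would also remark that the factor $L\sin\theta$ is harmless: by \eqref{boundedL} it is bounded above and below, so it may be freely absorbed into constants, and I keep it in the statement only to match the volume element $\md\sigma_{\mathbb S^2_{u,v}}=L\,\md\sigma_{\mathbb S^2}$ of \eqref{suv_vol}.

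The main obstacle is not in this final assembly step — which is essentially bookkeeping — but lies upstream, in Proposition \ref{gesamtesti} and in particular in Lemma \ref{K_spher2} and Proposition \ref{kastle}, where the error terms $\cE^S(Y^k\psi)$ produced by commuting with the non-Killing angular operators $Y_1,Y_2$ must be absorbed. The key mechanism there is the pointwise decay of $\Omega^2$ in $J^+(\gamma)$ established in \eqref{spacevolumedecay}–\eqref{spacevolumedecay_u}, which forces the error coefficients (all proportional to $\Omega^2$ or to $\partial_\zeta b^{\tilde\phi}\lesssim|\Delta|$, cf.~\eqref{b_bound}) to be integrable in the $u$ and $v$ directions with room to spare, the margin being exactly the choice \eqref{alpha} of $\alpha$. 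Granting those results, Theorem \ref{Xithm} follows as above; the slight care needed is only to track that $|u|^p\geq1$ on $\Xi$ so that the $|\nabb\psi|^2$ term in \eqref{array1_cor} is controlled, and that the exponent $-1-2\delta+p<0$ so that evaluating at $v_{fix}\geq1$ yields a constant.
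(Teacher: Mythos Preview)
Your proposal is correct and follows exactly the paper's approach: the paper's own proof is the single line ``The proof follows from comparing the weights in Proposition \ref{gesamtesti},'' and you have spelled out precisely what that comparison is. One minor point of phrasing: by \eqref{waspist} the exponent $-1-2\delta+p$ is $\leq 0$ rather than strictly negative (equality when $p=1+2\delta$), but since $v_{fix}^{0}=1$ your conclusion is unaffected.
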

\begin{proof}
The proof follows from comparing the weights in Proposition \ref{gesamtesti}.\\
\end{proof}


\subsection{Pointwise estimates from higher order energies}
\lb{nineten}
\subsubsection{The Sobolev inequality on spheres}
\lb{sobolevsec}
Recall that we had introduced the vector fields $Y_i$, $i=1,2,3$, in Section \ref{angular}. Further, note that from \eqref{Yk} it follows that
\bea
\lb{leonotation}
 \left(Y^k \psi\right)^2&=&\sum\limits_{i_1=1}^3 \cdot \cdot\cdot \sum\limits_{i_k=1}^3\left(Y_{i_1}\cdot \cdot \cdot (Y_{i_k} \psi)\right)^2, \\
\left(\partial_v (Y^k \psi)\right)^2&=&\sum\limits_{i_1=1}^3 \cdot \cdot\cdot \sum\limits_{i_k=1}^3 \left(\partial_v (Y_{i_1}\cdot \cdot \cdot (Y_{i_k} \psi))\right)^2, \\
\left|\nabb (Y^k \psi)\right|^2&=&\sum\limits_{i_1=1}^3 \cdot \cdot\cdot \sum\limits_{i_k=1}^3 \left|\nabb (Y_{i_1}\cdot \cdot \cdot (Y_{i_k} \psi))\right|^2, \quad
\eea
with $i_j=1,2$ or $3$.
By Sobolev embedding on the standard spheres we have in this notation 
\bea
\lb{sobo_embed}
\sup_{\left\{\theta^{\star},\tilde{\phi}\right\} \in \bbS^2_{u,v}}|\psi(u,v,\theta^{\star},\tilde{\phi})|^2\leq \tilde{C} \sum_{k=0}^{2} \int\limits_{\bbS^2_{u,v}} \left(Y^k \psi\right)^2(u,v,\theta^{\star},\tilde{\phi})L \sin\theta\md \theta^{\star} \md \tilde{\phi}, 
\eea
which means that we can derive a pointwise estimate from an estimate of the integrals on the spheres. In order to obtain these integrals we will need the previously derived higher order weighted energy estimates.

\subsubsection{Pointwise boundedness in the neighbourhood of $i^+$}
\lb{uni_bounded}
We will now discuss the derivation of pointwise boundedness from the weighted energy estimates in the characteristic rectangle $\Xi$, stated in Theorem \ref{dashier}. \\
By the fundamental theorem of calculus it follows for all $v_*>1$, $\hat{v} >v_*$ and $u \in (-\infty, u_{\schere})$ that
\bea
\psi(u,\hat{v}, \theta^{\star}, \tilde{\phi})&=& \int\limits_{v_*}^{\hat{v}} \left(\partial_v \psi\right)(u, v, \theta^{\star}, \tilde{\phi}) \md v +\psi(u, v_*, \theta, \tilde{\phi})\nonumber\\
&\leq& \int\limits_{v_*}^{\hat{v}} (\partial_v \psi)(u, v, \theta^{\star}, \tilde{\phi})v^{\frac{p}{2}}v^{-\frac{p}{2}}\md v+\psi(u, v_*, \theta^{\star}, \tilde{\phi})\nonumber\\
&\leq& \left(\int\limits_{v_*}^{\hat{v}} v^p(\partial_v \psi)^2(u, v, \theta^{\star}, \tilde{\phi})\md v\right)^{\frac12}\left(\int\limits_{v_*}^{\hat{v}} v^{-{p}}\md v\right)^{\frac12}+\psi(u, v_*, \theta^{\star}, \tilde{\phi}),
\eea
where we have used the Cauchy--Schwarz inequality in the last step.

Squaring the entire expression, using Cauchy--Schwarz again and integrating over ${\mathbb S}^2_{u,v}$ we obtain 
\bea
\int\limits_{\bbS^2_{u,v}} \psi^2(u,\hat{v})L\md \sigma_{\mathbb S^2}
&\leq& \tilde{C}\left[\int\limits_{v_*}^{\hat{v}} \int\limits_{\bbS^2_{u,v}} v^p(\partial_v \psi)^2(u,v)L(u,v)\md \sigma_{\mathbb S^2}\md v\int\limits_{v_*}^{\hat{v}} v^{-{p}}\md v
\right.\nonumber\\
&& \left. \qquad \qquad +\int\limits_{\bbS^2_{u,v}} \psi^2(u,v_*)L\md \sigma_{\mathbb S^2}\right],
\eea
with $p$ as in \eqref{waspist} and $L\sin \theta$ pulled into the integral by boundedness \eqref{boundedL}, such that the first term on the right hand side is controlled by the flux that we have derived in Theorem \ref{Xithm} for $k=0$.
Further, in the notation of Section \ref{sobolevsec}, we can state
\bea
\lb{fundcauchy}
\int\limits_{\bbS^2_{u,v}} (Y^k \psi) ^2(u,\hat{v})\md \sigma_{\mathbb S^2_{u,v}}&\leq&\tilde{C}\left[ E_{k}\int\limits_{v_*}^{\hat{v}} \int\limits_{\bbS^2_{u,v}} v^{-p}\md \sigma_{\mathbb S^2_{u,v}}\md v +\int\limits_{\bbS^2_{u,v}} (Y^k \psi) ^2(u,v_*)\md \sigma_{\mathbb S^2_{u,v}}\right]\nonumber\\
&\leq&\tilde{C}\left[\tilde{\tilde{C}} E_{k}+\int\limits_{\bbS^2_{u,v}} (Y^k \psi) ^2(u,v_*)\md \sigma_{\mathbb S^2_{u,v}}\right],
\eea 
for all $k \in \left\{0,1,2\right\}$. As before, the right hand side of \eqref{fundcauchy} with $v_*=1$ is estimated by Theorem \ref{Xithm}.
Adding all equations up, we derive pointwise boundedness according to \eqref{sobo_embed}
\bea
\lb{supr}
\sup_{\left\{\theta^{\star},\tilde{\phi}\right\}\in\bbS^2_{u,v}}|\psi(u,\hat{v},\theta^{\star},\tilde{\phi})|^2&\leq& \tilde{C} \left[\int\limits_{\bbS^2_{u,v}} ( \psi)^2(u,\hat{v})\md \sigma_{\mathbb S^2_{u,v}} +\int\limits_{\bbS^2_{u,v}} (Y \psi) ^2(u,\hat{v})\md \sigma_{\mathbb S^2_{u,v}}\right.\nonumber\\
&& \left. \qquad \qquad+\int\limits_{\bbS^2_{u,v}} (Y^2 \psi) ^2(u,\hat{v})\md \sigma_{\mathbb S^2_{u,v}} \right],\nonumber\\
\lb{supr2}
&\leq&\tilde{C}\left[\tilde{\tilde{C}}\left( E_{0}+  E_{1}+ E_{2}\right)+D_{0}(u_{\diamond}, 1)+D_{1}(u_{\diamond}, 1)+D_{2}(u_{\diamond}, 1)\right] \nonumber\\
&\leq& C,
\eea
with $C$ depending on the initial data on $\Sigma$. 
We therefore arrive at the statement given in Theorem \ref{dashier}. 
\begin{trivlist}
\item[\hskip \labelsep ]\qed\end{trivlist}

\subsection{Energy along the future boundaries of $\cR_{V}$} 
\lb{region5_proof}
For the {\it right} side of the two-ended spacetime, it remains to show boundedness in regions $\cR_{V}$ and $\cR_{VI}$, see Figure \ref{alle_bnr} a).
Let $u_{\diamond}>u_{\schere}$ and $v_*\geq v_{\gamma}(u_{\schere})$. Define \mbox{$\cR_{V}=\left\{u_{\schere}\leq u \leq u_{\diamond}\right\}\cap\left\{ v_*\leq v\leq \hat{v}\right\}$}, cf.~ Figure \ref{region5}, and note that \mbox{$\cR_{V} \subset \cB$}. 
{\begin{figure}[ht]
\centering
\includegraphics[width=0.25\textwidth]{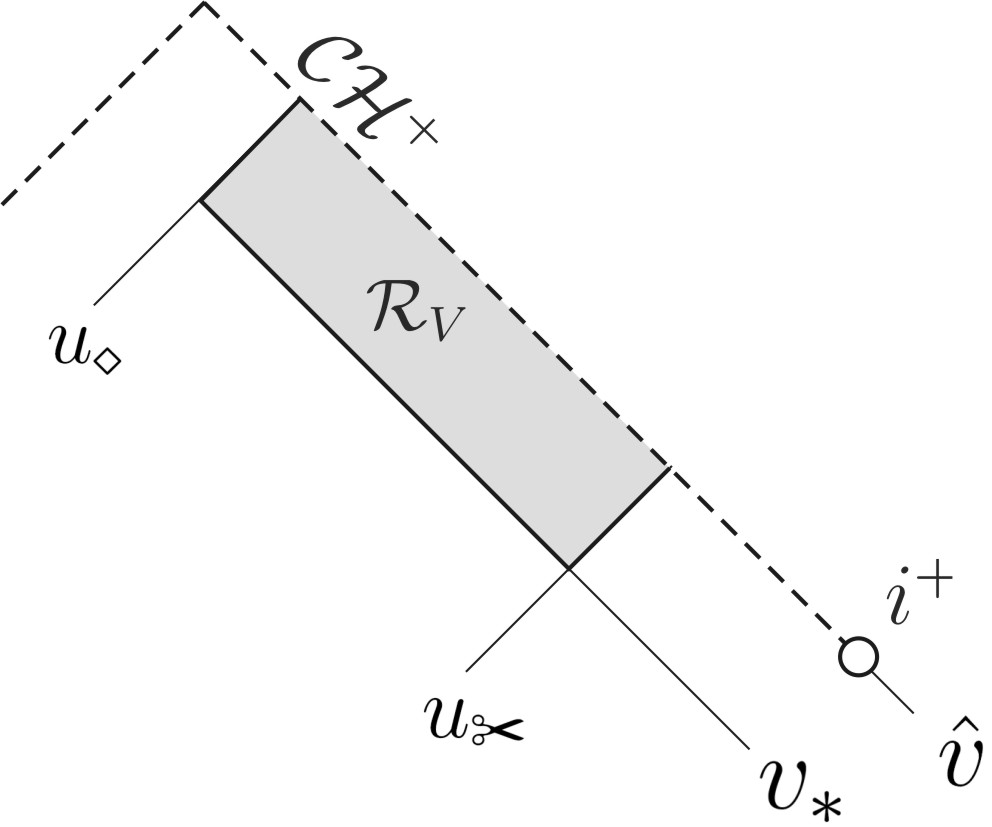}
\caption[]{$(u,v)$-diagram depicting region $\cR_{V}$.}
\label{region5}\end{figure}}
We will apply the vector field
\bea
\lb{V_feld}
W=v^p (\partial_v+b^{\tilde{\phi}}\partial_{\tilde{\phi}})+\partial_u
\eea
as a multiplier.
The bulk can be calculated as
\bea
\lb{KW}
K^W(\psi)&=& 
-\left[\frac{p}2v^{p-1} +\frac{\partial_{u} \Omega}{\Omega} +v^p\frac{\partial_{v} \Omega}{\Omega} +\frac14\frac{\partial_{u}\left( L^2\sin^2\theta\right)}{L^2\sin^2\theta}+v^p\frac14\frac{\partial_{v}\left( L^2\sin^2\theta\right)}{L^2\sin^2\theta}\right]|\nabb \psi|^2\nonumber\\
&&+\frac14\left[ \frac{\partial_{u}\left( L^2\sin^2\theta\right)}{L^2\sin^2\theta}+v^p\frac{\partial_{v}\left( L^2\sin^2\theta\right)}{L^2\sin^2\theta}\right]\frac{1}{\Omega^2}(\partial_u \psi)(\partial_v \psi+b^{\tilde{\phi}}\partial_{\tilde{\phi}}\psi)\nonumber\\
&&+\left[-\frac{pv^{p-1}b^{\tilde{\phi}}}{2\Omega^2}-\frac{b^{\tilde{\phi}}}{\Omega^2}\left[\frac{\partial_u \Omega}{\Omega}+\frac{\partial_v \Omega}{\Omega}v^p\right]+\frac{\partial_u b^{\tilde{\phi}}}{2\Omega^2}\right](\partial_u \psi\partial_{\tilde{\phi}} \psi)\nonumber\\
&&+\left[-\frac{v^p\partial_ub^{\tilde{\phi}}}{2\Omega^2}\right](\partial_v \psi\partial_{\tilde{\phi}} \psi)\nonumber\\
&&+\left[-\frac{v^p}{4\Omega^2}\left[b^{\theta_A}\partial_ub^{\theta_B}+b^{\theta_B}\partial_ub^{\theta_A}\right]+\frac12(\gin^{-1})^{\theta_A\theta_C}(\gin^{-1})^{\theta_B\theta_D}\partial{\eta}\gin_{CD}X^{\eta} \right]\nonumber\\
&& \quad \times(\partial_{\theta_A} \psi\partial_{\theta_B} \psi),
\eea
and for the $J$-currents we obtain
\bea
\lb{JplugnvS}
J^W_{\mu}n^{\mu}_{v=const}&=&\frac1{2\Omega^2}\left[(\partial_u \psi)^2+ \Omega^2 v^p |\nabb \psi|^2\right]\\
\lb{JplugnuS}
J^W_{\mu}n^{\mu}_{u=const}
&=&\frac1{2\Omega^2}\left[v^p(\partial_v\psi+b^{\tilde{\phi}}\partial_{{\tilde{\phi}}}\psi)^2
+{\Omega^2}|\nabb \psi|^2\right].
\eea
Again, note that the terms in \eqref{KW} multiplying $|\nabb \psi|^2$, are all positive. Moreover, the dominating term \mbox{$v^p\frac{\partial_{v} \Omega}{\Omega}$} is big enough, so that factors of the last three rows of \eqref{KW} can be absorbed by using the Cauchy--Schwarz inequality. 
We define
\bea
\lb{KWtilde}
|\tilde{K}^W(\psi)|&=&\quad C\left|\frac{\partial_{u}( L^2\sin^2\theta)}{L^2 \sin^2 \theta}+\frac{\partial_{v}( L^2\sin^2\theta)}{L^2 \sin^2 \theta}{v^p}+\sqrt{\partial_u b^{\tilde{\phi}}}\right|\frac{1}{2\Omega^2}(\partial_u \psi)^2\nonumber\\
&&+C\left|\frac{\partial_{v}( L^2\sin^2\theta)}{L^2 \sin^2 \theta}+\frac{\partial_{u}( L^2\sin^2\theta)}{L^2 \sin^2 \theta}v^{-p}+\sqrt{\partial_u b^{\tilde{\phi}}}\right|\nonumber\\
&&\qquad \times\frac{1}{2\Omega^2}v^p(\partial_v \psi+b^{\tilde{\phi}}\partial_{\tilde{\phi}}\psi)^2.
\eea
Recall
relation \eqref{vp_cauchy} to see that we obtain
\bea
\lb{167}
-{K}^W(\psi)\leq |\tilde{K}^W(\psi)|,
\eea
for a suitable $C$.
We can now prove the following lemma.
\begin{lem}
\label{K_spherV}
Let $\psi$ be an arbitrary function. Then, 
for all \mbox{$v_*\geq v_{\gamma}(u_{\schere})$}, 
$\hat{v} >v_*$, 
for \mbox{$u_{\diamond}\geq u_2>u_1\geq u_{\schere}$ and $\epsilon\geq u_2-u_1>0$}
\bea
\lb{Propdelta}
\int\limits_{\cR_{V_1}} |\tilde{K}^{W}(\psi)| \dV &\leq& \delta_1 \sup_{u_{1}\leq \bar{u}\leq u_{2}}\int\limits_ {\left\lbrace  v_*\leq v \leq \hat{v}\right\rbrace } J_{\mu}^{W}(\psi) n^{\mu}_{u=\bar{u}}\dV_{u=\bar{u}}\nonumber \\
&&+ \delta_2 \sup_{v_*\leq \bar{v} \leq \hat{v}}\int\limits_ {\left\lbrace  u_1\leq u \leq u_2\right\rbrace } J_{\mu}^{W}(\psi) n^{\mu}_{v=\bar{v}} \dV_{v=\bar{v}},
\eea
where $\cR_{V_1}=\left\{u_1\leq {u} \leq u_2\right\}\cap \cR_{V}$ and $\delta_1$, $\delta_2$ 
are positive constants, depending only on $v_*$ and $\epsilon$ such that $\delta_1\rightarrow 0$ for $\epsilon\rightarrow 0$ and $\delta_2\rightarrow 0$ as ${v}_*\rightarrow \infty$.
\end{lem}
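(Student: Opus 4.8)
The plan is to follow the proof of Lemma~\ref{K_spher} essentially verbatim, the one structural change being that on $\cR_{V}$ the null coordinate $u$ ranges over the \emph{fixed compact} interval $[u_{\schere},u_{\diamond}]$, so that smallness in the $u$-direction will come from the thinness $u_2-u_1\le\epsilon$ of the strip $\cR_{V_1}$ rather than from any decay in $u$.

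First I would observe that $\cR_{V}\subset J^+(\gamma)\cap\cB$: for $(u,v)\in\cR_{V}$ one has $v\ge v_*\ge v_\gamma(u_{\schere})\ge v_\gamma(u)$, since $u\ge u_{\schere}$ and $v_\gamma$ is decreasing (recall $u_\gamma(v)\to-\infty$ as $v\to\infty$, Section~\ref{gamma_curve}), so $(u,v)$ lies to the future of $\gamma$. Hence the pointwise bound \eqref{spacevolumedecay} of Section~\ref{finiteness} applies on all of $\cR_{V}$; and since $u$ is confined to $[u_{\schere},u_{\diamond}]$, the factor $v_\gamma(\bar u)^{-\beta\alpha}e^{\beta v_\gamma(\bar u)}$ there is bounded by a constant $C(u_{\schere},u_{\diamond},\alpha,\beta)$, so that $\Omega^2(\bar u,v,\theta^{\star})\le C\,e^{-\beta v}$ throughout $\cR_{V}$. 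I would also record, via \eqref{L_bounded}, \eqref{b_bound} and \eqref{R_delta_sigma}, that the coefficients entering $\tilde K^{W}$ satisfy $\big|\frac{\partial_{u}(L^2\sin^2\theta)}{L^2\sin^2\theta}\big|+\big|\frac{\partial_{v}(L^2\sin^2\theta)}{L^2\sin^2\theta}\big|\lesssim|\Delta|\sim\Omega^2$ and $\sqrt{\partial_u b^{\tilde\phi}}\lesssim\sqrt{|\Delta|}\sim\Omega$ on $\cR_{V}$.

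Next I would integrate $|\tilde K^{W}(\psi)|$, as given in \eqref{KWtilde}, over $\cR_{V_1}$ against $\dV=2\Omega^2 L\sin\theta\,\md\theta^{\star}\md\tilde{\phi}\,\md u\,\md v$, the $\Omega^{-2}$ in \eqref{KWtilde} cancelling the $\Omega^2$ in $\dV$. For the first summand of \eqref{KWtilde} (the one carrying $(\partial_u\psi)^2$) I would perform the $u$-integral first, bound its coefficient by $\sup_{u_1\le\bar u\le u_2,\,\theta^{\star}}$, and recognise that the remaining $u$-integral of $(\partial_u\psi)^2 L\sin\theta$, after adding the nonnegative $\Omega^2 v^p|\nabb\psi|^2$-piece, is $\le\int_{\{u_1\le u\le u_2\}} J^{W}_\mu n^\mu_{v=\bar v}\,\dV_{v=\bar v}$ over the corresponding $\{v=\bar v\}\cap\cR_{V_1}$ slice (cf.\ the $W$-fluxes computed just after \eqref{KW}), which is in turn $\le\sup_{v_*\le\bar v\le\hat v}\int_{\{u_1\le u\le u_2\}}J^{W}_\mu n^\mu_{v=\bar v}\,\dV_{v=\bar v}$. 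This yields the first summand $\le\delta_2\,\sup_{v_*\le\bar v\le\hat v}\int_{\{u_1\le u\le u_2\}}J^{W}_\mu n^\mu_{v=\bar v}\,\dV_{v=\bar v}$ with $\delta_2=C\int_{v_*}^{\hat v}\sup_{u_1\le\bar u\le u_2,\,\theta^{\star}}\big|\frac{\partial_{u}(L^2\sin^2\theta)}{L^2\sin^2\theta}+\frac{\partial_{v}(L^2\sin^2\theta)}{L^2\sin^2\theta}\,v^p+\sqrt{\partial_u b^{\tilde\phi}}\big|\,\md v$; by the bounds above this integrand is $\lesssim\Omega^2(1+v^p)+\Omega\lesssim e^{-\beta v/2}$ for $v$ large, so $\delta_2\le C\int_{v_*}^{\infty}e^{-\beta v/2}\,\md v\lesssim e^{-\beta v_*/2}\to0$ as $v_*\to\infty$. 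Symmetrically, for the second summand (carrying $v^p(\partial_v\psi+b^{\tilde\phi}\partial_{\tilde\phi}\psi)^2$) I would integrate in $v$ first, identify the remaining $v$-integral with a flux over a $\{u=\bar u\}\cap\cR_{V_1}$ slice, and obtain the second summand $\le\delta_1\,\sup_{u_1\le\bar u\le u_2}\int_{\{v_*\le v\le\hat v\}}J^{W}_\mu n^\mu_{u=\bar u}\,\dV_{u=\bar u}$ with $\delta_1=C\int_{u_1}^{u_2}\sup_{v_*\le\bar v\le\hat v,\,\theta^{\star}}\big|\frac{\partial_{v}(L^2\sin^2\theta)}{L^2\sin^2\theta}+\frac{\partial_{u}(L^2\sin^2\theta)}{L^2\sin^2\theta}\,v^{-p}+\sqrt{\partial_u b^{\tilde\phi}}\big|\,\md u$; here the integrand is uniformly bounded (indeed $\lesssim\Omega\lesssim e^{-\beta v_*/2}$), hence $\delta_1\le C(u_2-u_1)\le C\epsilon\to0$ as $\epsilon\to0$. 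As in Lemma~\ref{K_spher}, the bound \eqref{boundedL} on $L$ is what allows the $L\sin\theta$-factor of the volume element to be moved freely in and out of the $\theta^{\star}$-integrals throughout. Summing the two summands gives precisely \eqref{Propdelta}.

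The only real difficulty is the control of $\delta_2$: the $v^p$-weight that $W$ inherits from its $v^p\partial_v$-component would on its own render the $v$-integral divergent, and it is exactly the exponential pointwise decay of $\Omega^2$ in $J^+(\gamma)$ established in Section~\ref{finiteness} --- a consequence of placing $\gamma$ a logarithmic $v$-distance to the future of $r^{\star}=r^{\star}_{blue}$ --- that beats any polynomial weight and makes $\delta_2$ not merely finite but small as $v_*\to\infty$. Everything else (the reorganisation of the iterated integrals and their identification with the $W$-fluxes) is routine bookkeeping, identical to the proof of Lemma~\ref{K_spher} but simplified by the boundedness of $u$ on $\cR_{V}$, which makes the strip-width smallness $u_2-u_1\le\epsilon$ directly furnish $\delta_1$ with no decay in $u$ needed.
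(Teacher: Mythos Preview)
Your proof is correct and follows essentially the same approach as the paper: split $|\tilde K^{W}|$ into its two summands, identify each (after integrating in one null direction) with the corresponding $J^{W}$-flux, and then show the remaining coefficient integrals are small. The only cosmetic difference is that you exploit the exponential form $\Omega^2\lesssim e^{-\beta v}$ of the pointwise estimate from Section~\ref{finiteness} (valid since $v_\gamma(\bar u)$ ranges over a compact set for $\bar u\in[u_{\schere},u_{\diamond}]$), whereas the paper uses the weaker polynomial consequence $\Omega^2\lesssim \bar v^{-\beta\alpha}$ together with \eqref{alpha}; both give $\delta_2\to0$ as $v_*\to\infty$, and your observation that the strip width $u_2-u_1\le\epsilon$ directly furnishes $\delta_1\to0$ matches the paper exactly.
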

\begin{proof}
Taking the integral over the spacetime region, using expression \eqref{KWtilde}, yields
\bea
\lb{knboundhierW}
\int\limits_ {\cR_{V}}  |\tilde{K}^{W}(\psi)| \dV 
&\leq& \quad\int\limits^{u_{\gamma}(v_*)}_{u_{\gamma}(\hat{v})}\int\limits_{\left\lbrace  v_{\gamma}(\bar{u}) \leq v \leq \hat{v}\right\rbrace }C\left|\frac{\partial_{{\bar{v}}}( L^2\sin^2\theta)}{L^2 \sin^2 \theta}+\frac{\partial_{{\bar{u}}}( L^2\sin^2\theta)}{L^2 \sin^2 \theta}{\bar{v}}^{-p}+\sqrt{\partial_{\bar{u}} b^{\tilde{\phi}}}\right|\nonumber \\
&& \qquad \qquad \qquad \qquad \qquad\times J_{\mu}^{W}(\psi) n^{\mu}_{u=\bar{u}}\dV_{u=\bar{u}} \md \bar{u}\nonumber \\
&&+ \int\limits_{{v_*}}^{\hat{v}}\int\limits_{\left\lbrace  u_{\gamma}(\bar{v}) \leq u \leq u_{\gamma}(v_*)\right\rbrace } C\left|\frac{\partial_{{\bar{u}}}( L^2\sin^2\theta)}{L^2 \sin^2 \theta}+\frac{\partial_{{\bar{v}}}( L^2\sin^2\theta)}{L^2 \sin^2 \theta}{{\bar{v}}^p}+\sqrt{\partial_{\bar{u}} b^{\tilde{\phi}}}\right|\nonumber \\
&& \qquad \qquad \qquad \qquad \qquad\times
J_{\mu}^{W}(\psi) n^{\mu}_{v=\bar{v}}\dV_{v=\bar{v}}\md \bar{v},
\eea
with $u_{\gamma}(v)$ and $v_{\gamma}(\bar{u})$ in the integration limits as defined in \eqref{notation_neu}.

Further, it follows that
\bea
\lb{131W}
\int\limits_ {\cR_{V}}  |\tilde{K}^{W}(\psi)| \dV&\leq& C\int\limits^{u_{\gamma}(v_*)}_{u_{\gamma}(\hat{v})} \sup_{v_{\gamma}(\bar{u})\leq \bar{v} \leq \hat{v}}
\left|\frac{\partial_{{\bar{v}}}( L^2\sin^2\theta)}{L^2 \sin^2 \theta}+\frac{\partial_{{\bar{u}}}( L^2\sin^2\theta)}{L^2 \sin^2 \theta}{\bar{v}}^{-p}+\sqrt{\partial_{\bar{u}} b^{\tilde{\phi}}}\right|
\md \bar{u} \nonumber\\
&&\qquad  \qquad \times \sup_{u_{\gamma}(\hat{v})\leq \bar{u}\leq u_{\gamma}(v_*)}\int\limits_{\left\lbrace  v_* \leq v \leq \hat{v}\right\rbrace }J_{\mu}^{S}(\psi) n^{\mu}_{u=\bar{u}}\dV_{u=\bar{u}}
\nonumber\\
&&+  C\int\limits^{\hat{v}}_{v_*}\sup_{u_{\gamma}(\bar{v}) \leq \bar{u}\leq u_{\gamma}(v_*)}\left|\frac{\partial_{{\bar{u}}}( L^2\sin^2\theta)}{L^2 \sin^2 \theta}+\frac{\partial_{{\bar{v}}}( L^2\sin^2\theta)}{L^2 \sin^2 \theta}{{\bar{v}}^p}+\sqrt{\partial_{\bar{u}} b^{\tilde{\phi}}}\right|
\md \bar{v}\nonumber\\
&&\qquad  \qquad \times \sup_{v_*\leq \bar{v} \leq \hat{v}}\int\limits_{\left\lbrace  u_{\gamma}(\bar{v}) \leq u \leq u_{\gamma}(v_*)\right\rbrace } J_{\mu}^{S}(\psi) n^{\mu}_{v=\bar{v}}\dV_{v=\bar{v}}.\nonumber\\
&&
\eea
It remains to show finiteness and smallness of the expressions \mbox{$\int\limits^{u_{\gamma}(v_*)}_{u_{\gamma}(\hat{v})} \sup_{v_{\gamma}(\bar{u})\leq \bar{v} \leq \hat{v}}
\left|\frac{\partial_{{\bar{v}}}( L^2\sin^2\theta)}{L^2 \sin^2 \theta}+\frac{\partial_{\bar{u}}( L^2\sin^2\theta)}{L^2 \sin^2 \theta}{\bar{v}}^{-p}+\sqrt{\partial_{\bar{u}} b^{\tilde{\phi}}}\right|
\md \bar{u}$} and 
\mbox{$\int\limits^{\hat{v}}_{v_*}\sup_{u_{\gamma}(\bar{v}) \leq \bar{u}\leq u_{\gamma}(v_*)}\left|\frac{\partial_{{\bar{u}}}( L^2\sin^2\theta)}{L^2 \sin^2 \theta}+\frac{\partial_{\bar{v}}( L^2\sin^2\theta)}{L^2 \sin^2 \theta}{{\bar{v}}^p}+\sqrt{\partial_{\bar{u}} b^{\tilde{\phi}}}\right|
\md \bar{v}$}.
We can now use the relations \eqref{b_bound} and \eqref{L_bounded} in region $\cR_{V}$. 
As we have already explained in Section \ref{bounding_bulk_S}, the term $\sqrt{\partial_u b^{\tilde{\phi}}}$ is the slower decaying term, which dictated the condition \eqref{alpha}.
Recall the properties of the hypersurface $\gamma$ shown in Section \ref{gamma_curve}. Since $v_*> v_{\gamma}(u_{\schere})$, \eqref{spacevolumedecay_u} implies that
\bea
\lb{zukunftu}
\Omega^2(\bar{u}, \bar{v}, \theta^{\star})\leq C \Omega^2(u_{\schere}, v_*, \theta^{\star}), \quad \mbox{for any $(\bar{u}, \bar{v}) \in J^+(x)$, with \mbox{$x=(u_{\schere}, v_*)$}, $x \in \cB$,}
\eea
so that we obtain
\bea
&&\int\limits^{u_{\gamma}(v_*)}_{u_{\gamma}(\hat{v})} \sup_{v_{\gamma}(\bar{u})\leq \bar{v} \leq \hat{v}}
\left|\frac{\partial_{{\bar{v}}}( L^2\sin^2\theta)}{L^2 \sin^2 \theta}+\frac{\partial_{\bar{u}}( L^2\sin^2\theta)}{L^2 \sin^2 \theta}{\bar{v}}^{-p}+\sqrt{\partial_{\bar{u}} b^{\tilde{\phi}}}\right|
\md \bar{u}\nonumber\\
&\stackrel{\eqref{zukunftu}}{\leq}& 
C \int\limits^{u_{2}}_{u_1} \sup_{v_*\leq \bar{v} \leq \hat{v}}
\left[\Omega(u_{\schere}, v_*, \theta^{\star})^2(1+ {{\bar{v}}^{-p}})+|\Omega(u_{\schere}, v_*, \theta^{\star})|\right]\md \bar{u}\nonumber\\
&\leq& \tilde{C} \int\limits^{u_{2}}_{u_1}
\left[|u_{\schere}|^{-{\beta\alpha}}+ |u_{\schere}|^{-{\beta\alpha}}{{v_*}^{-p}}+|u_{\schere}|^{-\frac{\beta\alpha}{2}}\right]\md \bar{u}\nonumber\\
&\leq& \tilde{\tilde{C}}\left|{u_{2}}-{u_1}\right| 
\leq \delta_1,
\eea
and moreover $\delta_1 \rightarrow 0$ for $\epsilon \rightarrow 0$.

Further, in Section \ref{finiteness} we derived that similarly
\bea
\lb{zukunft}
\Omega^2(\bar{u}, \bar{v}, \theta^{\star})\leq C \bar{v}^{-\beta\alpha}, \quad \mbox{for any $(\bar{u}, \bar{v}) \in J^+(x)$, with \mbox{$x=(u_{\schere}, v_*)$}, $x \in \cB$,}
\eea
where $v_*> v_{\gamma}(u_{\schere})$.
\bea
&&\int\limits^{\hat{v}}_{v_*}\sup_{u_{\gamma}(\bar{v}) \leq \bar{u}\leq u_{\gamma}(v_*)}\left|\frac{\partial_{{\bar{u}}}( L^2\sin^2\theta)}{L^2 \sin^2 \theta}+\frac{\partial_{\bar{v}}( L^2\sin^2\theta)}{L^2 \sin^2 \theta}{{\bar{v}}^p}+\sqrt{\partial_{\bar{u}} b^{\tilde{\phi}}}\right|
\md \bar{v}\nonumber\\
&\stackrel{\eqref{zukunft}}{\leq}& 
\tilde{C}\int\limits^{\hat{v}}_{v_*}\left[\bar{v}^{-{\beta \alpha}} + {{\bar{v}}^{-{\beta \alpha}+p}}+\bar{v}^{-\frac{\beta \alpha}{2}}\right]
\md \bar{v}\nonumber\\
&\leq& \tilde{\tilde{C}}\left[\frac{{|\bar{v}|}^{-{\beta \alpha}+p+1}}{|-{\beta \alpha}+p+1|}+ \frac{{|\bar{v}|}^{-\frac{\beta \alpha}{2}+1}}{|-\frac{\beta \alpha}{2}+1|}\right]^{\hat{v}}_{v_*}
\leq \delta_2,
\eea
where $\delta_2 \rightarrow 0$ for $v_*\rightarrow \infty$. Thus the conclusion of Lemma \ref{K_spherV} is obtained.
\end{proof}

With the above lemma we can prove the following proposition.
\begin{prop}
\lb{sequenceregion}
Let $\psi$ be as in Theorem \ref{anfang} and $p$ as in \eqref{waspist}, and $Y^k$ as in \eqref{angular_comm} with \eqref{Yk},  \eqref{YkK}, \eqref{YkE} and all $k \in \left\{0,1,2\right\}$. For all \mbox{$v_*>v_{\gamma}(u_{\schere})$} sufficiently large, 
\mbox{$\hat{v} \in (v_*, \infty)$}, 
for \mbox{$u_{\diamond}\geq u_2>u_1\geq u_{\schere}$} and $\epsilon\geq u_2-u_1>0$. Then for $\epsilon$ sufficiently small, the
following is true. If 
\bea
\lb{istart2}
\int\limits_ {\left\lbrace  v_* \leq v \leq \hat{v}\right\rbrace } J_{\mu}^{W}(Y^k\psi) n^{\mu}_{u=u_{1}} \dV_{u=u_{1}}&\leq& {{\tilde{C}}_1},
\eea
then
\bea
\lb{idiamond}
&\int\limits_ {\left\lbrace  v_* \leq v \leq \hat{v}\right\rbrace }& J_{\mu}^{W}(Y^k\psi) n^{\mu}_{u=u_{2}} \dV_{u_{\diamond}}
+\int\limits_ {\left\lbrace  u_{1} \leq u \leq u_{2}\right\rbrace } J_{\mu}^{W}(Y^k\psi) n^{\mu}_{v=\hat{v}} \dV_{v=\hat{v}} \nonumber\\
&\leq& {{\tilde{C}}_{2}}(\tilde{C}_1, u_\diamond, v_*), 
\eea
where ${{\tilde{C}}_{2}}$ depends on $\tilde{C}_1$, ${C_{0}}$ of Theorem \ref{anfang} and ${D_{0}(u_{\diamond}, v_*)}$ of Proposition \ref{initialdataprop}. 
\end{prop}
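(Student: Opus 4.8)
The plan is to apply the divergence theorem \eqref{divthe} to the current $J^W$ over the rectangle $\cR_{V_1}=\{u_1\le u\le u_2\}\cap\{v_*\le v\le\hat v\}$ and over its sub-rectangles, and then to absorb the spacetime contribution with Lemma \ref{K_spherV}, in the spirit of the corresponding $\cR_V$-argument of \cite{anne}. I would first dispatch the case $k=0$. For $\bar u\in[u_1,u_2]$ set $F(\bar u)=\int_{\{v_*\le v\le\hat v\}}J^W_\mu(\psi)n^\mu_{u=\bar u}\dV_{u=\bar u}$ and for $\bar v\in[v_*,\hat v]$ set $G(\bar v)=\int_{\{u_1\le u\le u_2\}}J^W_\mu(\psi)n^\mu_{v=\bar v}\dV_{v=\bar v}$; all these fluxes are nonnegative since $W$ is timelike, and they are a priori finite on $\cR_{V_1}$ because $\hat v<\infty$ and $u_\diamond<\infty$, so $\cR_{V_1}$ has compact closure away from $\cC\cH^+$ and local energy estimates apply. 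Applying \eqref{divthe} to $J^W(\psi)$ over $[u_1,\bar u]\times[v_*,\hat v]$, using $\nabla^\mu J^W_\mu(\psi)=K^W(\psi)$ together with $-K^W(\psi)\le|\tilde K^W(\psi)|$ from \eqref{167} and discarding the nonnegative future flux through $\{v=\hat v\}$, one obtains $F(\bar u)\le\tilde C_1+G(v_*)+\int_{\cR_{V_1}}|\tilde K^W(\psi)|\dV$, where the flux through $\{u=u_1\}$ is bounded by $\tilde C_1$ by the hypothesis \eqref{istart2}, and the past flux $G(v_*)$ through the compact segment $\{v=v_*\}\cap\{u_{\schere}\le u\le u_\diamond\}$ is bounded by the data, $G(v_*)\le C\,D_0(u_\diamond,v_*)$, by Cauchy stability (Proposition \ref{initialdataprop}). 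The symmetric computation over $[u_1,u_2]\times[v_*,\bar v]$ gives the same bound for $G(\bar v)$.

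Next I would invoke Lemma \ref{K_spherV}: writing $A=\sup_{u_1\le\bar u\le u_2}F(\bar u)$ and $B=\sup_{v_*\le\bar v\le\hat v}G(\bar v)$, it yields $\int_{\cR_{V_1}}|\tilde K^W(\psi)|\dV\le\delta_1 A+\delta_2 B$, so that $A\le\tilde C_1+C D_0+\delta_1 A+\delta_2 B$ and likewise for $B$, hence $A+B\le 2(\tilde C_1+C D_0)+2(\delta_1+\delta_2)(A+B)$. Since $\delta_1\to0$ as $\epsilon\to0$ and $\delta_2\to0$ as $v_*\to\infty$, choosing $\epsilon$ sufficiently small and $v_*$ sufficiently large that $\delta_1+\delta_2<\tfrac14$ lets one absorb the last term and conclude $A+B\le 4(\tilde C_1+C D_0)=:\tilde C_2(\tilde C_1,u_\diamond,v_*)$. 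Applying \eqref{divthe} once more over the full rectangle $\cR_{V_1}$ and moving the nonnegative bulk term to the past side, the two future fluxes of \eqref{idiamond} are each bounded by $\tilde C_1+G(v_*)+\delta_1 A+\delta_2 B\le\tilde C_2$, which proves the proposition for $k=0$.

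For $k=1,2$ the same scheme applies once the higher-order analogue of Lemma \ref{K_spherV} is established. One defines $|\tilde K^W(Y^k\psi)|$ by \eqref{KWtilde} with $\psi$ replaced by $Y^k\psi$, and shows, exactly as in the passage from Lemma \ref{K_spher} to Lemma \ref{K_spher2}, that $-\sum_{0\le j\le k}\big(K^W(Y^j\psi)+\cE^W(Y^j\psi)\big)\le\sum_{0\le j\le k}|\tilde K^W(Y^j\psi)|$ and that the corresponding spacetime integral obeys a bound of the form \eqref{Propdelta} for the summed fluxes. The error terms $\cE^W(Y^j\psi)$ are read off from \eqref{error_structure2}--\eqref{error_structure3}; their coefficients are bounded by Section \ref{bounded}, and — this is the essential point — since $\cR_{V_1}\subset J^+(\gamma)\cap\cB$, the pointwise decay of $\Omega^2$ and the finiteness $\operatorname{Vol}(J^+(\gamma))<C$ from Section \ref{finiteness} make the extra second- and third-derivative contributions absorbable into the good, $v^p$-weighted terms of $K^W(Y^k\psi)$ after redistributing an $\Omega$ weight via Cauchy--Schwarz as in \eqref{vp_cauchy}; the remaining terms are of lower order and are handled inductively using \eqref{istart2} at orders $j<k$. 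Then one repeats the absorption argument of the previous paragraph for $\sum_{0\le j\le k}F(\bar u)$ and $\sum_{0\le j\le k}G(\bar v)$.

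I expect the control of the error terms $\cE^W(Y^k\psi)$ to be the main obstacle: in contrast to the multiplier $S_0$ of Section \ref{functionf}, the vector field $W$ carries no free large parameter $q$ with which to dominate the commutator contributions, so one must rely instead on the smallness furnished by the $\Omega^2$-decay throughout $J^+(\gamma)$ and, for the terms involving $\partial_{\tilde\phi}$, on trading a $\tilde\phi$-derivative for a commutation with the exact Killing field $\partial_{\tilde\phi}=Y_3$. Verifying that every second- and third-order term produced by the double commutation is of one of these absorbable types is the delicate bookkeeping underlying Lemma \ref{K_spherV} and its higher-order counterpart, and it is precisely what makes region $\cR_V$ more involved here than in the Reissner--Nordstr\"om analysis of \cite{anne}.
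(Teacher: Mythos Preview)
Your proposal is correct and follows essentially the same route as the paper: apply the divergence theorem to $J^W$ over (sub)rectangles of $\cR_{V_1}$, use \eqref{167} together with Lemma~\ref{K_spherV} to absorb the bulk via the smallness of $\delta_1,\delta_2$, and bound the past flux through $\{v=v_*\}$ by Proposition~\ref{initialdataprop}; for $k=1,2$ you correctly anticipate the need for the higher-order analogue of Lemma~\ref{K_spherV} (this is exactly Lemma~\ref{K_spherV2} in the paper, with the inequalities \eqref{tildeineqW1}--\eqref{tildeineqW2}). Your explicit $A$--$B$ absorption argument is precisely the content of the reference to \cite{anne}, Proposition~4.16, that the paper invokes.
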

{\em Remark.} Note already, that the hypothesis \eqref{istart2} is implied by the conclusion of Proposition \ref{gesamtesti} for $u_1=u_{\schere}$. 
\begin{proof}
The statement for $k=0$ follows by the divergence theorem, Theorem \ref{Xithm}, equation \eqref{167} and Lemma \ref{K_spherV}. For more details see the proof of Proposition 4.16 in \cite{anne}. Further, for higher order derivatives we require the following definition:
\bea
\lb{KWtildek}
|\tilde{K}^W(Y^k\psi)|&=&C\left|\frac{\partial_{u}( L^2\sin^2\theta)}{L^2 \sin^2 \theta}+\frac{\partial_{v}( L^2\sin^2\theta)}{L^2 \sin^2 \theta}{v^p}+\sqrt{\partial_u b^{\tilde{\phi}}}\right|\frac{1}{2\Omega^2}(\partial_u Y^k\psi)^2\nonumber\\
&&+C\left|\frac{\partial_{v}( L^2\sin^2\theta)}{L^2 \sin^2 \theta}+\frac{\partial_{u}( L^2\sin^2\theta)}{L^2 \sin^2 \theta}v^{-p}+\sqrt{\partial_u b^{\tilde{\phi}}}\right|\nonumber\\
&& \quad \times\frac{1}{2\Omega^2}v^p(\partial_v Y^k\psi+b^{\tilde{\phi}}\partial_{\tilde{\phi}}Y^k\psi)^2,
\eea
but with C such that 
\bea
\lb{tildeineqW1}
-[{K}^{W}(Y\psi) +\cE^{W}(Y\psi)+{K}^{W}(\psi)]  &\leq&
|\tilde{K}^{W}(Y\psi)|+|\tilde{K}^{W}(\psi)|, 
\eea 
and
\bea
\lb{tildeineqW2}
-[{K}^{W}(Y^2\psi) +\cE^{W}(Y^2\psi)+{K}^{W}(Y\psi) +\cE^{W}(Y\psi)+{K}^{W}(\psi)]  \nonumber\\
\qquad \leq |\tilde{K}^{W}(Y^2\psi)|+|\tilde{K}^{W}(Y\psi)|+|\tilde{K}^{W}(\psi)|. 
\eea
Now we can state the following 
lemma.
\begin{lem}
\label{K_spherV2}
Let $\psi$ be an arbitrary function, and $Y^k$ as in \eqref{angular_comm} with \eqref{Yk},  \eqref{YkK}, \eqref{YkE} and all $k \in \left\{0,1,2\right\}$. Then, 
for all \mbox{$v_*\geq v_{\gamma}(u_{\schere})$}, 
$\hat{v} >v_*$, 
for $u_{\diamond}\geq u_2>u_1\geq u_{\schere}$ and $\epsilon\geq u_2-u_1>0$
\bea
\lb{Propdelta2}
\int\limits_{\cR_{V_1}} |\tilde{K}^{W}(Y^k\psi)| \dV 
&\leq& \delta_1 \sup_{u_{1}\leq \bar{u}\leq u_{2}}\int\limits_ {\left\lbrace  v_*\leq v \leq \hat{v}\right\rbrace } J_{\mu}^{W}(Y^k\psi) n^{\mu}_{u=\bar{u}}\dV_{u=\bar{u}}\nonumber \\
&&+ \delta_2 \sup_{v_*\leq \bar{v} \leq \hat{v}}\int\limits_ {\left\lbrace  u_1\leq u \leq u_2\right\rbrace } J_{\mu}^{W}(Y^k\psi) n^{\mu}_{v=\bar{v}} \dV_{v=\bar{v}},
\eea
where $\cR_{V_1}=\left\{u_1\leq {u} \leq u_2\right\}\cap \cR_{V}$ and $\delta_1$, $\delta_2$ 
are positive constants, depending only on $v_*$ and $\epsilon$ such that $\delta_1\rightarrow 0$ for $\epsilon\rightarrow 0$ and $\delta_2\rightarrow 0$ as ${v}_*\rightarrow \infty$.
\end{lem}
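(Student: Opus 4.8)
The plan is to deduce Lemma \ref{K_spherV2} from its $k=0$ instance, Lemma \ref{K_spherV}, in two steps: first verify that the modified current $\tilde K^W(Y^k\psi)$ of \eqref{KWtildek} still dominates the full bulk-plus-error expression, i.e.\ that \eqref{tildeineqW1} and \eqref{tildeineqW2} hold, and then observe that $\tilde K^W(Y^k\psi)$ has exactly the same structure as $\tilde K^W(\psi)$ with $\psi$ replaced by $Y^k\psi$, so that the computation of Lemma \ref{K_spherV} applies verbatim.

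For the first step I would start from the bulk term. Since the operators $Y_i$ of \eqref{angular_comm} do not depend on $u$ or $v$, the bulk $K^W(Y^k\psi)$ is literally \eqref{KW} with every derivative of $\psi$ replaced by the corresponding derivative of $Y^k\psi$; hence the passage from \eqref{KW} to \eqref{KWtilde} and \eqref{167} goes through unchanged and already yields $-K^W(Y^k\psi)\le|\tilde K^W(Y^k\psi)|$. What remains is to absorb the error terms $\cE^W(Y^k\psi)$. For $k=1$ I would use the error expansion from Appendix \ref{error_app}.\ref{error} (of the type \eqref{error_structure2}), which writes $\cE^W(Y\psi)$ as a sum of bounded coefficients times first or second derivatives of $\psi$, all multiplied by the factor $\partial_u(Y\psi)+\partial_v(Y\psi)+b^{\theta_B}\partial_{\theta_B}(Y\psi)$ carried by the $W$ multiplier. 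The second-derivative terms are treated exactly as in Lemma \ref{K_spher2}: the coefficients of $\partial_u\partial_{\theta^{\star}}\psi$, $\partial_v\partial_{\theta^{\star}}\psi$, $\partial_{\theta^{\star}}^2\psi$ and of the $\tilde{\phi}$-derivative analogues are bounded because $\partial_{\theta^{\star}}b^{\tilde{\phi}}$ is proportional to $\Omega^2$ (Section \ref{doublenull}), and the mixed term $\partial_u\partial_v\psi$ is removed with the wave equation \eqref{uv_term}. Then, as in \eqref{vp_cauchy}, Cauchy--Schwarz distributes one power of $\Omega$ onto each such product; combining this with $|\partial_{r^{\star}}b^{\tilde{\phi}}|\lesssim|\Delta|$ from \eqref{b_bound} shows that $|\cE^W(Y\psi)|$ is controlled by $|\tilde K^W(Y\psi)|+|\tilde K^W(\psi)|$ provided the constant $C$ in \eqref{KWtildek} is large enough, which is \eqref{tildeineqW1}. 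The case $k=2$ is the same with the longer expansion (of type \eqref{error_structure3}): the top-order derivatives are absorbed into $|\tilde K^W(Y^2\psi)|$ and the remaining ones into $|\tilde K^W(Y\psi)|+|\tilde K^W(\psi)|$, giving \eqref{tildeineqW2}.

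For the second step, with \eqref{tildeineqW1}--\eqref{tildeineqW2} in place I would simply repeat the computation of Lemma \ref{K_spherV} with $\psi$ replaced by $Y^k\psi$: integrating $|\tilde K^W(Y^k\psi)|$ over $\cR_{V_1}$, pulling out the supremal fluxes $\sup_{\bar u}\int J^W_\mu(Y^k\psi)n^\mu_{u=\bar u}$ and $\sup_{\bar v}\int J^W_\mu(Y^k\psi)n^\mu_{v=\bar v}$, and bounding the leftover one-dimensional integrals of the coefficients $\bigl|\frac{\partial_{\bar v}(L^2\sin^2\theta)}{L^2\sin^2\theta}+\frac{\partial_{\bar u}(L^2\sin^2\theta)}{L^2\sin^2\theta}\bar v^{-p}+\sqrt{\partial_{\bar u}b^{\tilde{\phi}}}\bigr|$ and its $\bar u\leftrightarrow\bar v$ counterpart. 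These are estimated using \eqref{L_bounded}, \eqref{b_bound} and the pointwise $\Omega^2$-decay \eqref{zukunftu}, \eqref{zukunft} valid in $J^+(\gamma)$, together with the choice \eqref{alpha} of $\alpha$, exactly as in Lemma \ref{K_spherV}, producing constants $\delta_1,\delta_2$ with $\delta_1\to0$ as $\epsilon\to0$ and $\delta_2\to0$ as $v_*\to\infty$. The only genuinely delicate point is the bookkeeping inside step one --- making sure that every second- and third-order derivative term coming out of the commutator carries either a power of $\Omega$ or a coefficient that is small near $\cC\cH^+$, so that redistributing the $\Omega$-weight via Cauchy--Schwarz closes the estimate. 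This is structurally identical to the obstacle already handled for the $S_0$ and $S$ multipliers in Lemmas \ref{K_S_0_E} and \ref{K_spher2}, so I expect no new difficulty.
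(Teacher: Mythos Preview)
Your proposal is correct and follows essentially the same approach as the paper: verify that the definition \eqref{KWtildek} satisfies \eqref{tildeineqW1}--\eqref{tildeineqW2} by the same mechanism used for the $S$ multiplier (error expansion from Appendix \ref{error_app}, replacement of $\partial_u\partial_v\psi$ via \eqref{uv_term}, Cauchy--Schwarz redistribution of an $\Omega$-weight as in \eqref{vp_cauchy}), and then observe that the integral estimate of Lemma \ref{K_spherV} applies verbatim to $\tilde K^W(Y^k\psi)$ since its structure is identical to \eqref{KWtilde} with $\psi$ replaced by $Y^k\psi$. The only nuance in the paper's presentation that you do not make explicit is that the paper interleaves the $k=1$ case of this lemma with the $k=1$ case of Proposition \ref{sequenceregion} before turning to $k=2$; this gives a priori control of all second-order terms, which is then invoked when checking that the third-order commutator contributions in \eqref{comm} are dominated by $\tilde K^W(Y^2\psi)$. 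Your more direct route to \eqref{tildeineqW2} is fine, but you should note that the wave equation is needed (so the hypothesis ``arbitrary function'' in the lemma statement is, as in the paper, to be read modulo the use of \eqref{uv_term}).
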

\begin{proof}
We have proven statement \eqref{Propdelta2} for $k=0$ in Lemma \ref{K_spherV}. Like in the proof of Lemma \ref{K_spher2}, we can see, that by using the definition of $\tilde{K}^{W}(Y^k\psi)$, given in equation \eqref{KWtildek}, we can also compensate for all appearing error terms if $C$ is chosen big enough.
This is done by considering \eqref{pi-Y-termevv} to \eqref{pi-Y-termeAB} for the second derivative terms and applying \eqref{uv_term} for the $(\partial_u \partial_v \psi)$-term. By using the Cauchy--Schwarz inequality for all bounded terms, it is possible to distribute a weight of $\Omega$ as we have already done in \eqref{vp_cauchy}. By using this strategy and by adding up the first and second order terms, we obtain
exactly \eqref{tildeineqW1},
and the remaining part of the proof for $k=1$ is analogous to the proof of Lemma \ref{K_spherV}. 
We have therefore shown \eqref{Propdelta2} for $k=1$. We now use this result in the divergence theorem to prove a $k=1$ version of Proposition \ref{sequenceregion} analogously as shown above for $k=0$. This gives us control over all second order terms, as well as third order terms containing $\tilde{\phi}$. Looking at equation \eqref{comm} again, we now have to bring special attention to the first and second term of the right hand side. Again we see, that we only obtain terms which can be dominated by $\tilde{K}^W(Y^2 \psi)$. This proves statement \eqref{Propdelta2} for $k=2$.
\end{proof}
Using the divergence theorem again as well as Lemma \ref{K_spherV2}, finally proves Proposition \ref{sequenceregion} for $k \in \left\{0,1,2\right\}$.
\end{proof}
We are now ready to make a statement for the entire region $\cR_V$. 
\begin{prop}
\lb{rechtes}
Let $\psi$ be as in Theorem \ref{anfang} and $p$ as in \eqref{waspist}, and $Y^k$ as in \eqref{angular_comm} with \eqref{Yk},  \eqref{YkK}, \eqref{YkE} and all $k \in \left\{0,1,2\right\}$.
Then,
for all \mbox{$v_*> v_{\gamma}(u_{\schere})$} sufficiently large, 
$\hat{v}>v_*$, 
and \mbox{$u_{\diamond}>\hat{u}>u_{\schere}$},
\bea
\lb{diamond}
\int\limits_ {\left\lbrace  u_{\schere} \leq u \leq u_{\diamond}\right\rbrace } J_{\mu}^{W}(Y^k\psi) n^{\mu}_{v=\hat{v}} \dV_{v=\hat{v}}
&+&\int\limits_ {\left\lbrace  v_* \leq v \leq \hat{v}\right\rbrace } J_{\mu}^{W}(Y^k\psi) n^{\mu}_{u=\hat{u}} \dV_{u=\hat{u}}\nonumber\\
&& \qquad \qquad \qquad \qquad\leq {C(u_{\diamond}, v_*) }, 
\eea
where $C$ depends on ${C_{0}}$ of Theorem \ref{anfang} and ${D_{0}(u_{\diamond}, v_*)}$ of Proposition \ref{initialdataprop}. 
\end{prop}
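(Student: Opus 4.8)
The plan is to derive Proposition~\ref{rechtes} by iterating the ``one-step'' estimate of Proposition~\ref{sequenceregion} finitely many times along the compact $u$-interval $[u_{\schere},u_{\diamond}]$, noting that $\cR_{V}\subset\cB$ so that all the blueshift-region bounds are available. First I would fix $\epsilon>0$ small enough (and take $v_*>v_{\gamma}(u_{\schere})$ large enough) for Proposition~\ref{sequenceregion} to apply, and choose a partition $u_{\schere}=w_0<w_1<\dots<w_M=u_{\diamond}$ with $w_j-w_{j-1}\leq\epsilon$, inserting $\hat{u}$ among the $w_j$. Since $u_{\diamond}-u_{\schere}$ is finite, only finitely many steps $M$ are required.

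Next I would argue by induction on $j$ that the flux of $J^{W}_\mu(Y^k\psi)$ through $\{u=w_j\}\cap\{v_*\leq v\leq\hat{v}\}$ is bounded by a constant $A_j$, for all $k\in\{0,1,2\}$. The base case $j=0$ is precisely the hypothesis~\eqref{istart2} with $u_1=u_{\schere}$, which, as observed in the remark following Proposition~\ref{sequenceregion}, is supplied by the conclusion of Proposition~\ref{gesamtesti} (equivalently Theorem~\ref{Xithm}). For the inductive step I would apply Proposition~\ref{sequenceregion} with $u_1=w_{j-1}$ and $u_2=w_j$: its conclusion~\eqref{idiamond} both propagates the bound to $\{u=w_j\}$, with new constant $A_j=\tilde{C}_2(A_{j-1},u_{\diamond},v_*)$, and controls the flux through the segment $\{v=\hat{v}\}\cap\{w_{j-1}\leq u\leq w_j\}$. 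Taking $j=j_0$ gives the bound on $\{u=\hat{u}\}$; summing the $\{v=\hat{v}\}$ contributions over $j=1,\dots,M$ and using additivity of the (nonnegative) boundary flux over the partition gives the bound through $\{v=\hat{v}\}\cap\{u_{\schere}\leq u\leq u_{\diamond}\}$; adding the two produces~\eqref{diamond}.

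The only delicate point---really a matter of bookkeeping---is that Proposition~\ref{sequenceregion} produces a constant that grows from step to step, so one must check that the iteration terminates with a finite bound. This is automatic because $M$ is fixed once $u_{\diamond}$, $u_{\schere}$ and $\epsilon$ are chosen, so composing the map $\tilde{C}_2(\,\cdot\,,u_{\diamond},v_*)$ with itself $M$ times starting from $A_0$ yields a finite constant depending on $u_{\diamond}$, $v_*$, on $C_0$ of Theorem~\ref{anfang} and on $D_0(u_{\diamond},v_*)$ of Proposition~\ref{initialdataprop}---precisely as asserted. The higher-order cases $k=1,2$ need no additional input: the error terms are already absorbed inside Proposition~\ref{sequenceregion} through the definition~\eqref{KWtildek} and the inequalities~\eqref{tildeineqW1}--\eqref{tildeineqW2}.
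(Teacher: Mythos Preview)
Your proposal is correct and follows essentially the same approach as the paper: fix $\epsilon$ as in Proposition~\ref{sequenceregion}, partition $[u_{\schere},u_{\diamond}]$ into finitely many subintervals of width at most $\epsilon$, and iterate Proposition~\ref{sequenceregion} across the partition, noting that the number of steps is bounded by $(u_{\diamond}-u_{\schere})/\epsilon$. Your write-up is in fact slightly more detailed than the paper's (you explicitly invoke Proposition~\ref{gesamtesti} for the base case, insert $\hat{u}$ as a partition point, and sum the $\{v=\hat{v}\}$ fluxes), but the underlying argument is identical.
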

\begin{proof}
Let $\epsilon$ be as in Proposition \ref{sequenceregion}.
We choose a sequence \mbox{$u_{i+1}-u_i\leq\epsilon$} and $i=\left\{1,2,..,n\right\}$ such that $u_1=u_{\schere}$ and $u_n=\hat{u}$.
Denote \mbox{$\cR_{V_i}=\left\{u_{i}\leq u \leq u_{i+1}\right\}\cap\left\{ v_*\leq v\leq \hat{v}\right\}$}, cf.~ Figure \ref{region5_sequence}.
Iterating the conclusion of Proposition \ref{sequenceregion} from $u_1$ up to $u_n$ then completes the proof.
Note that $n$ depends only on the smallness condition on $\epsilon$ from Proposition \ref{sequenceregion},
since $n\lesssim  \frac{u_{\diamond}-u_{\schere}}{\epsilon}$.
{\begin{figure}[ht]
\centering
\includegraphics[width=0.2\textwidth]{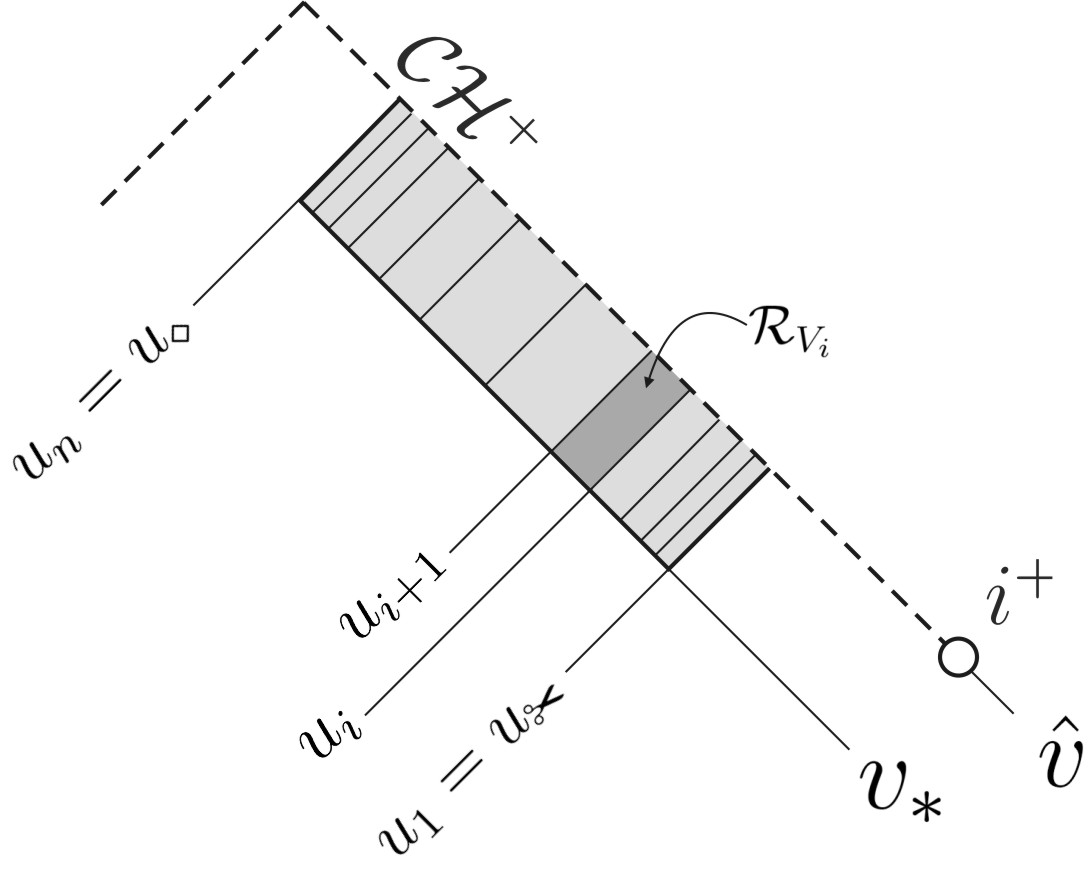}
\caption[]{Penrose diagram depicting regions $\cR_{V_i}$.}
\label{region5_sequence}\end{figure}}
\end{proof}

\subsection{Propagating the energy estimate up to the bifurcation sphere}
\lb{bifurcate}
All results stated so far, correspond to the neighbourhood of $i^+$ at $u=-\infty$, which we call the {\it right} side of the spacetime. Analogously, we can derive all statements for the {\it left} side, in the neighbourhood of $i^+$ at $v=-\infty$. In order to do this, we substitute\footnote{Doing this substitution we keep the expression of the metric \eqref{kerrmetric} the same.} the parameters
\bea
\lb{uv}
u &\leftrightarrow& v,
\eea
and the vector fields
\bea
\lb{paruv}
\partial_u &\leftrightarrow& \partial_v+b^{\tilde {\phi}}\partial_{\tilde {\phi}},
\eea
and repeat all derivations, starting from {\it left} side analog statements of Theorem \ref{anfang} and Proposition \ref{initialdataprop}.
In this section we will use both results from the {\it right} and {\it left} side on $\cC\cH^+$.
Fix $u_{\diamond}=v_{\diamond}$, such that moreover Proposition \ref{rechtes} holds with $v_{\diamond}=v_*$,
and such that its {\it left} side analog holds with $u_{\diamond}=u_*$.
We will consider a region $\cR_{VI}=\left\{u_{\diamond}\leq u \leq \hat{u}, v_{\diamond}\leq v \leq \hat{v}\right\}$, with $\hat{u} \in (u_{\diamond}, \infty)$ and $\hat{v} \in (v_{\diamond}, \infty)$, cf.~ Figure \ref{diamant}.
{\begin{figure}[ht]
\centering
\includegraphics[width=0.4\textwidth]{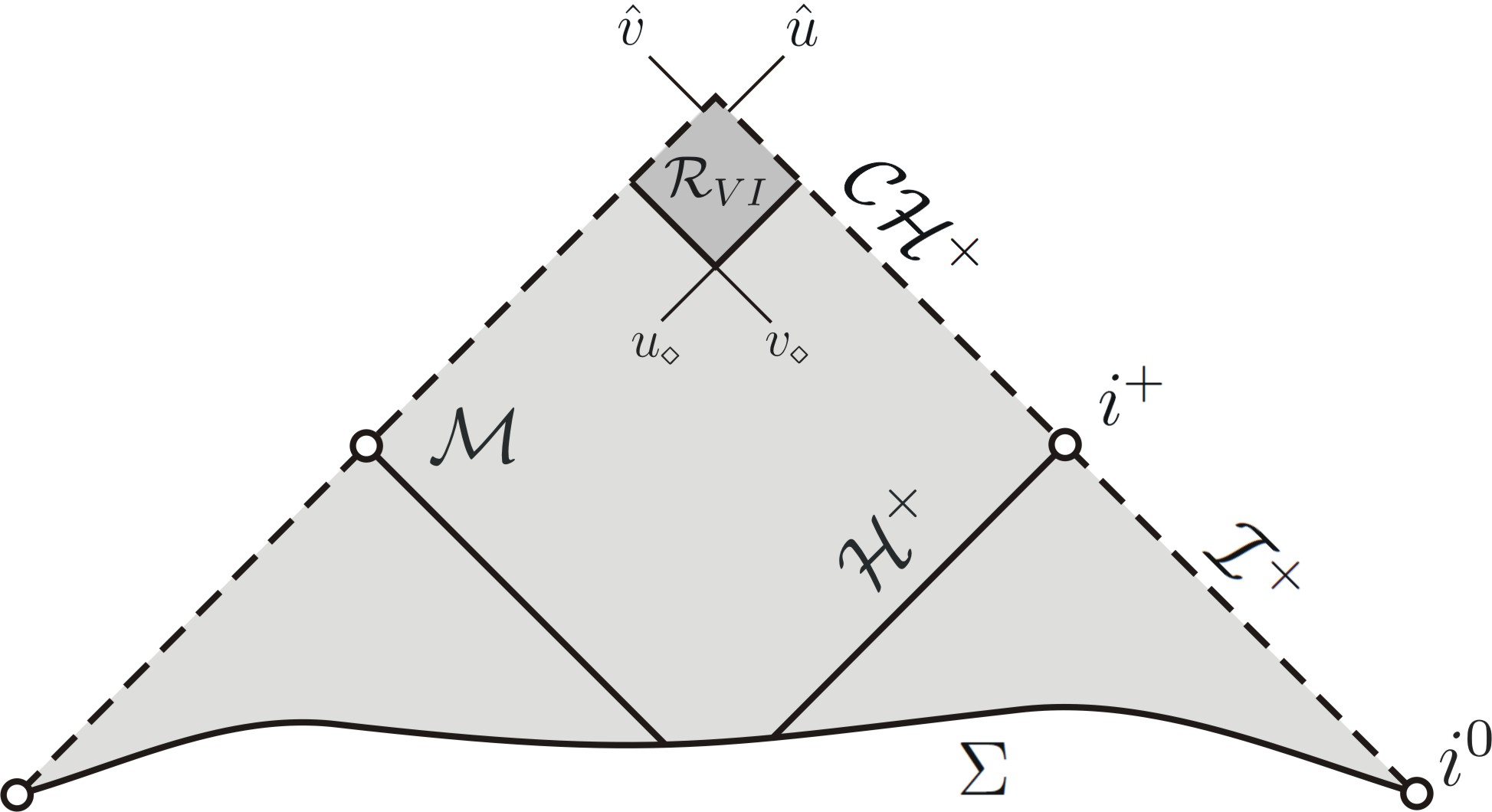}
\caption[]{Representation of region $\cR_{VI}$ in the $(u,v)$-plane.}
\label{diamant}\end{figure}}
Recall, that in Section \ref{blue_future} we have defined the weighted vector field\footnote{Since $u$ is always positive in the remaining region under consideration $\cR_{VI}$, we have omitted the absolute value in the $u$-weight.}
\bea
\lb{{S2}}
{S}=v^p(\partial_v+b^{\tilde{\phi}}\partial_{\tilde{\phi}})+u^p\partial_u,
\eea
which we are going to use again, to obtain an energy estimate up to the bifurcate two-sphere.
Recall $K^{S}$ given in \eqref{KS},
from which we see, that we have positive dominating terms multiplying $|\nabb \psi|^2$, since $\cR_{VI}$ is located in the blueshift region.
We further defined $\tilde{K}^{S}$ in \eqref{KStilde}
and \eqref{ksrelation} which will be useful to state the following proposition.
\begin{lem}
\label{K_spher_ende}
Let $\psi$ be an arbitrary function. Then, for all $(u_{\diamond},v_{\diamond})\in J^+(\gamma)\cap\cB$ and all $\hat{u}>u_{\diamond}$, all $\hat{v}>v_{\diamond}$,
the integral over \mbox{$\cR_{VI}$}, cf.~ Figure \ref{diamant} of the current $\tilde{K}^{S}$, defined by \eqref{KStilde}, can be estimated by
\bea
\lb{deltaende}
\int\limits_{\cR_{VI}} |\tilde{K}^{S}(\psi)| \dV &\leq& \delta_1 \sup_{u_{\diamond}\leq \bar{u}\leq \hat{u}}\int\limits_ {\left\lbrace  v_{\diamond} \leq v \leq \hat{v}\right\rbrace }  J_{\mu}^{S}(\psi) n^{\mu}_{u=\bar{u}}\dV_{u=\bar{u}}\nonumber\\
&&  + \delta_2 \sup_{v_{\diamond}\leq \bar{v} \leq \hat{v}}\int\limits_ {\left\lbrace  u_{\diamond} \leq u \leq \hat{u}\right\rbrace }J_{\mu}^{S}(\psi) n^{\mu}_{v=\bar{v}} \dV_{v=\bar{v}},
\eea
where $\delta_1$ and $\delta_2$ are positive constants, with $\delta_1\rightarrow 0$ as $u_{\diamond}\rightarrow \infty$ and $\delta_2\rightarrow 0$ as $v_{\diamond}\rightarrow \infty$.
\end{lem}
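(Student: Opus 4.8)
The plan is to transcribe the argument of Lemma \ref{K_spher} (and of its sequence-region variant Lemma \ref{K_spherV}) almost verbatim; the only structural change is that $\cR_{VI}$ lies to the future of $\gamma$ with \emph{both} null coordinates large, so the pointwise decay \eqref{omegafix} for $\Omega^2$ is available with base point $(u_\diamond,v_\diamond)$ rather than along $\gamma$. First I would write $\dV=2\Omega^2L\sin\theta\,\md\theta^\star\,\md\tilde\phi\,\md u\,\md v$, insert the explicit form \eqref{KStilde} of $\tilde K^S(\psi)$, and recognise, exactly as in \eqref{knboundhier}, that each of the two lines of \eqref{KStilde} is one of the two slice-flux densities of $J^S$ multiplied by a bounded scalar coefficient built from $\partial_\zeta(L^2\sin^2\theta)/(L^2\sin^2\theta)$, $\sqrt{\partial_u b^{\tilde\phi}}$ and the weight ratios $\bar v^p/|\bar u|^p$, $|\bar u|^p/\bar v^p$. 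Applying Fubini over the slices of $\cR_{VI}$ and then pulling the flux out of the remaining one-dimensional integral in supremum form, exactly as in the passage from \eqref{knboundhier} to \eqref{131}, bounds $\int_{\cR_{VI}}|\tilde K^S(\psi)|\dV$ by $\delta_1\sup_{\bar u}\int J^S_\mu n^\mu_{u=\bar u}\dV_{u=\bar u}+\delta_2\sup_{\bar v}\int J^S_\mu n^\mu_{v=\bar v}\dV_{v=\bar v}$, with $\delta_1$ and $\delta_2$ the integrals of the coefficient over $[u_\diamond,\hat u]$ and $[v_\diamond,\hat v]$, respectively. Boundedness \eqref{boundedL} of $L$ is what lets me pull the $(L\sin\theta)$ factor of $\dV$ inside the flux integrals, as in the remark following \eqref{vertausch}, so that the $\dV_{u=\bar u}$ and $\dV_{v=\bar v}$ of \eqref{deltaende} appear.

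It then remains to bound $\delta_1,\delta_2$ and show they vanish in the stated limits. By \eqref{L_bounded}, \eqref{b_bound} and \eqref{R_delta_sigma} one has $|\partial_\zeta(L^2\sin^2\theta)/(L^2\sin^2\theta)|\lesssim|\Delta|\sim\Omega^2$ and $|\partial_u b^{\tilde\phi}|\lesssim|\Delta|\sim\Omega^2$, so the coefficient is controlled by $\Omega^2(1+\bar v^p/|\bar u|^p+|\bar u|^p/\bar v^p)+\Omega$, the square-root term $\sqrt{\partial_u b^{\tilde\phi}}\lesssim\Omega$ being the slowest, which is precisely why \eqref{alpha} was imposed. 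Inserting \eqref{omegafix} in the form $\Omega^2(\bar u,\bar v,\theta^\star)\le\Omega^2(u_\diamond,v_\diamond,\theta^\star)\,e^{-\beta[\bar u-u_\diamond+\bar v-v_\diamond]}$, the exponential beats the polynomial weights (argued exactly as in the final estimates of the proof of Lemma \ref{K_spher}, now with the direct bounds $\bar u\ge u_\diamond$ and $\bar v\ge v_\diamond$ in place of $|u_\gamma(\bar v)|\sim\bar v$), so both outer integrals converge and are $\lesssim\Omega(u_\diamond,v_\diamond,\theta^\star)$; since $\Omega^2=-\Delta/R^2\to0$ as $r\to r_-$, i.e.\ as $u_\diamond\to\infty$ or as $v_\diamond\to\infty$, this yields $\delta_1\to0$ as $u_\diamond\to\infty$ and $\delta_2\to0$ as $v_\diamond\to\infty$. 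Combining the two pieces gives \eqref{deltaende}.

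The only genuine obstacle is the bookkeeping of the polynomial weights $\bar v^p/|\bar u|^p$ and $|\bar u|^p/\bar v^p$ against the exponential decay of $\Omega^2$, carried out uniformly in the suprema while $\hat u$ and $\hat v$ are allowed to run off to infinity independently; this is handled just as in the proofs of Lemma \ref{K_spher} and Lemma \ref{K_spherV}, and once it is in place the remaining manipulations are entirely routine. The higher-order analogue, with $Y^k\psi$ in place of $\psi$ and the obvious $\tilde K^S(Y^k\psi)$ version of \eqref{KStilde}, would then follow by the same reasoning together with the error-term control of Lemma \ref{K_spher2}.
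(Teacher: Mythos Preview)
Your proposal is correct and follows essentially the same route as the paper's proof: reduce to the two scalar coefficient integrals via the Fubini/supremum decomposition of Lemma~\ref{K_spher}, then control those integrals using the exponential decay \eqref{omegafix} of $\Omega^2$ based at $(u_\diamond,v_\diamond)$ together with \eqref{L_bounded}, \eqref{b_bound}, and the $L$-bound \eqref{boundedL}, concluding $\delta_1,\delta_2\to0$ from $\Omega^2(u_\diamond,v_\diamond,\theta^\star)\to0$. The paper in fact specialises to $u_\diamond=v_\diamond$ when carrying out the final estimate, but your separate treatment of the two limits matches the lemma as stated.
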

\begin{proof}
The proof is similar to the proof of Lemma \ref{K_spher} of Section \ref{bounding_bulk_S} and Lemma \ref{K_spherV} of Section \ref{region5_proof}.
We still need to show finiteness and smallness of \mbox{$\int\limits^{u_{\gamma}(v_*)}_{u_{\gamma}(\hat{v})} \sup_{v_{\gamma}(\bar{u})\leq \bar{v} \leq \hat{v}}
\left|\frac{\partial_{{\bar{v}}}( L^2\sin^2\theta)}{L^2 \sin^2 \theta}+\frac{\partial_{{\bar{u}}}( L^2\sin^2\theta)}{L^2 \sin^2 \theta}\frac{|{\bar{u}}|^p}{{\bar{v}}^p}+\sqrt{\partial_{\bar{u}} b^{\tilde{\phi}}}\right|
\md \bar{u}$} and 
\mbox{$\int\limits^{\hat{v}}_{v_*}\sup_{u_{\gamma}(\bar{v}) \leq \bar{u}\leq u_{\gamma}(v_*)}\left|\frac{\partial_{{\bar{u}}}( L^2\sin^2\theta)}{L^2 \sin^2 \theta}+\frac{\partial_{{\bar{v}}}( L^2\sin^2\theta)}{L^2 \sin^2 \theta}\frac{{\bar{v}}^p}{|{\bar{u}}|^p}+\sqrt{\partial_{\bar{u}} b^{\tilde{\phi}}}\right|
\md \bar{v}$}.
In Section \ref{finiteness} we derived \eqref{omegafix} which we will now use
for all $\bar{u},\bar{v} \in J^+(u_{\diamond}, v_{\diamond})$
Therefore, we can write
\bea
&&\int\limits^{u_{\gamma}(v_*)}_{u_{\gamma}(\hat{v})} \sup_{v_{\gamma}(\bar{u})\leq \bar{v} \leq \hat{v}}
\left|\frac{\partial_{{\bar{v}}}( L^2\sin^2\theta)}{L^2 \sin^2 \theta}+\frac{\partial_{{\bar{u}}}( L^2\sin^2\theta)}{L^2 \sin^2 \theta}\frac{|{\bar{u}}|^p}{{\bar{v}}^p}+\sqrt{\partial_{\bar{u}} b^{\tilde{\phi}}}\right|
\md \bar{u}\nonumber\\
&\leq& C\int\limits_{u_{\diamond}}^{\hat{u}} \sup_{v_{\diamond}\leq \bar{v} \leq \hat{v}}\left[\Omega^2(u_{\diamond}, v_{\diamond}, \theta^{\star})e^{-\beta\left[\bar{v}-v_{\diamond}+\bar{u}-u_{\diamond}\right]}\left( 1 + \frac{{\bar{u}}^p}{{\bar{v}}^p}\right)+
|\Omega(u_{\diamond}, v_{\diamond}, \theta^{\star})|e^{-\frac{\beta}{2}\left[\bar{v}-v_{\diamond}+\bar{u}-u_{\diamond}\right]}
\right] \md \bar{u},\nonumber\\
&\leq& \tilde{C}\int\limits_{u_{\diamond}}^{\hat{u}}\left[ \Omega^2(u_{\diamond}, v_{\diamond}, \theta^{\star})e^{-\beta\left[\bar{u}-u_{\diamond}\right]}
\left( 1 + \frac{{\bar{u}}^p}{v_{\diamond}^p}\right)
+|\Omega(u_{\diamond}, v_{\diamond}, \theta^{\star})|e^{-\frac{\beta}{2}\left[\bar{u}-u_{\diamond}\right]}
\right] \md \bar{u}
\leq \delta_1, 
\eea
where $\delta_1\rightarrow 0$ as $u_{\diamond}= v_{\diamond}\rightarrow \infty$ (since $\Omega^2(u_{\diamond}, v_{\diamond}, \theta^{\star})\rightarrow 0$, cf.~ \eqref{R_delta_sigma}). 
Similarly, for finiteness of the second term we obtain
\bea
 &&\int\limits^{\hat{v}}_{v_*}\sup_{u_{\gamma}(\bar{v}) \leq \bar{u}\leq u_{\gamma}(v_*)}\left|\frac{\partial_{{\bar{u}}}( L^2\sin^2\theta)}{L^2 \sin^2 \theta}+\frac{\partial_{{\bar{v}}}( L^2\sin^2\theta)}{L^2 \sin^2 \theta}\frac{{\bar{v}}^p}{|{\bar{u}}|^p}+\sqrt{\partial_{\bar{u}} b^{\tilde{\phi}}}\right|
\md \bar{v}\nonumber\\
&\leq& C\int\limits_{v_{\diamond}}^{\hat{v}} \sup_{u_{\diamond}\leq \bar{u} \leq \hat{u}}\left[\Omega^2(u_{\diamond}, v_{\diamond}, \theta^{\star})e^{-\beta\left[\bar{v}-v_{\diamond}+\bar{u}-u_{\diamond}\right]}\left( 1 + \frac{{\bar{v}}^p}{{\bar{u}}^p}\right)+
|\Omega(u_{\diamond}, v_{\diamond}, \theta^{\star})|e^{-\frac{\beta}{2}\left[\bar{v}-v_{\diamond}+\bar{u}-u_{\diamond}\right]}
\right] \md \bar{v},\nonumber\\
&\leq& \left[ {\tilde{C}}\int\limits_{v_{\diamond}}^{\hat{v}} \Omega^2(u_{\diamond}, v_{\diamond}, \theta^{\star})e^{-\beta\left[\bar{v}-v_{\diamond}\right]}
\left( 1 + \frac{{\bar{v}}^p}{u_{\diamond}^p}\right)
+|\Omega(u_{\diamond}, v_{\diamond}, \theta^{\star})|e^{-\frac{\beta}{2}\left[\bar{v}-v_{\diamond}\right]}
\right] \md \bar{v}
\leq \delta_2, 
\eea
where $\delta_2\rightarrow 0$ as $u_{\diamond}= v_{\diamond}\rightarrow \infty$. 
Thus we obtain the statement of Lemma \ref{K_spher} by fixing $u_{\diamond}=v_{\diamond}$ sufficiently large and by again using the bound \eqref{boundedL} as we had done it in \eqref{vertausch}.
\end{proof}

\begin{prop}
\lb{bifu}
Let $\psi$ be as in Theorem \ref{anfang}, and $Y^k$ as in \eqref{angular_comm} with \eqref{Yk},  \eqref{YkK}, \eqref{YkE} and all $k \in \left\{0,1,2\right\}$.
Then, for $u_{\diamond}=v_{\diamond}$ sufficiently close to $\infty$ and $\hat{u} >u_{\diamond}$, $\hat{v} >v_{\diamond}$
\bea
\int\limits_ {\left\lbrace  v_{\diamond} \leq v \leq \hat{v}\right\rbrace } J_{\mu}^{S}(Y^k\psi) n^{\mu}_{u=\tilde{u}} \dV_{u=\tilde{u}}+
\int\limits_ {\left\lbrace  u_{\diamond} \leq u \leq \hat{u}\right\rbrace } J_{\mu}^{S}(Y^k\psi) n^{\mu}_{v=\tilde{v}} \dV_{v=\tilde{v}}&\leq&  C(u_{\diamond}, v_{\diamond}),
\eea
where $C$ depends on ${C_{0}}$ of Theorems \ref{anfang} and ${D_{0}(u_{\diamond}, v_{\diamond})}$ of Propositions \ref{initialdataprop} (and their {\it left} side analogs, which we obtain from interchanging relations \eqref{uv} and \eqref{paruv}). 
\end{prop}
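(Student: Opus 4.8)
The plan is to run the same absorption scheme for the weighted vector field $S$ of \eqref{Sfield} (equivalently $S=v^p(\partial_v+b^{\tilde\phi}\partial_{\tilde\phi})+u^p\partial_u$ on $\cR_{VI}$, where $u>0$) that was used in Sections~\ref{bounding_bulk_S} and \ref{region5_proof}, now in the diamond $\cR_{VI}=\{u_\diamond\leq u\leq\hat u,\ v_\diamond\leq v\leq\hat v\}$, whose past boundaries are the null segments $\{u=u_\diamond\}\cap\{v_\diamond\leq v\leq\hat v\}$ and $\{v=v_\diamond\}\cap\{u_\diamond\leq u\leq\hat u\}$. By the choice $u_\diamond=v_\diamond$ with Proposition~\ref{rechtes} holding at $v_*=v_\diamond$ and its left side analog (obtained via the substitutions \eqref{uv}, \eqref{paruv}) holding at $u_*=u_\diamond$, the $S$-fluxes on these two past boundaries are finite, with bounds ultimately tracing back to $C_0$ of Theorem~\ref{anfang} and $D_0(u_\diamond,v_\diamond)$ of Proposition~\ref{initialdataprop} and its left analog. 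I would first treat $k=0$: applying the divergence theorem to $J^S(\psi)$ in $\cR_{VI}$, the spacetime term is $\int_{\cR_{VI}}K^S(\psi)\dV$ since $\psi$ solves \eqref{wave_psi}; in the blueshift region the coefficients of $|\nabb\psi|^2$ in $K^S$ are positive and dominant (because $\partial_\zeta\Omega/\Omega<0$), so only the indefinite part survives, and it is bounded by $|\tilde K^S(\psi)|$ via \eqref{ksrelation}. Lemma~\ref{K_spher_ende} then controls $\int_{\cR_{VI}}|\tilde K^S(\psi)|\dV$ by $\delta_1\sup_{\bar u}\int J^S_\mu n^\mu_{u=\bar u}\dV_{u=\bar u}+\delta_2\sup_{\bar v}\int J^S_\mu n^\mu_{v=\bar v}\dV_{v=\bar v}$ with $\delta_1,\delta_2\to0$ as $u_\diamond=v_\diamond\to\infty$. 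Fixing $u_\diamond=v_\diamond$ large enough that $\delta_1,\delta_2\leq\tfrac12$ and running the standard continuity/bootstrap argument on the sub-diamonds $\{u_\diamond\leq u\leq\bar u,\ v_\diamond\leq v\leq\bar v\}$ lets these terms be absorbed into the future fluxes on $\{u=\bar u\}$ and $\{v=\bar v\}$, which yields the claimed bound for $k=0$, uniformly in $\hat u,\hat v$.

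For $k=1,2$ I would commute \eqref{wave_psi} with the angular momentum operators $Y_i$ and repeat the argument, now carrying along the error currents $\cE^S(Y^k\psi)$ of \eqref{E}. The key is that one may keep the definition \eqref{KStildek} of $\tilde K^S(Y^k\psi)$ with a sufficiently large constant $C$ so that \eqref{tildeineq1} and \eqref{tildeineq2} hold: every coefficient in the error expansions \eqref{error_structure_phi} and \eqref{error_structure_second} is bounded by the results collected in Section~\ref{bounded} and the Dafermos--Luk propositions, the mixed term $\partial_u\partial_v\psi$ is eliminated through the wave equation \eqref{uv_term}, and $\partial_{\theta^\star}b^{\tilde\phi}$ is proportional to $\Omega^2$; a Cauchy--Schwarz step distributes a weight of $\Omega$ exactly as in \eqref{vp_cauchy}, after which each error term is dominated by the positive $|\tilde K^S(Y^k\psi)|$ contributions together with lower-order ones, the $\tilde\phi$-derivative terms being, if one prefers, absorbed instead by commuting with the Killing field $\partial_{\tilde\phi}$. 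With \eqref{tildeineq1}--\eqref{tildeineq2} in hand, the higher-order analog of Lemma~\ref{K_spher_ende} follows exactly as for $k=0$ (using \eqref{omegafix} in place of \eqref{spacevolumedecay}--\eqref{spacevolumedecay_u}), and applying the divergence theorem first to the $\psi$-currents, then adding the $Y\psi$-currents, then the $Y^2\psi$-currents, and summing the inequalities closes the estimate for $k=1$ and then $k=2$.

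I expect the error-term bookkeeping in this last step to be the only genuinely delicate point: for $k=2$ the current $\cE^S(Y^2\psi)$ contains third-order derivatives of $\psi$, and one has to check, term by term against the list in \eqref{error_structure_second}, that each is either multiplied by a coefficient proportional to $\Omega^2$ (hence small in $\cR_{VI}\subset J^+(\gamma)\cap\cB$ by \eqref{omegafix}) or can be matched with a term of $|\tilde K^S(Y^2\psi)|$ or of the lower-order $|\tilde K^S(Y^k\psi)|$, $k\leq1$. Everything else — the diamond geometry, the finiteness of $\mathrm{Vol}(J^+(\gamma))$, and the smallness of $\delta_1,\delta_2$ for $u_\diamond=v_\diamond$ large — is inherited verbatim from Sections~\ref{finiteness}--\ref{region5_proof}. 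Since the resulting bound is independent of $\hat u,\hat v$, letting $\hat u,\hat v\to\infty$ controls the $S$-flux all the way to the bifurcation sphere of $\cC\cH^+$, which is exactly what is needed to complete the global energy estimate.
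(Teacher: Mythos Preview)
Your proposal is correct and follows essentially the same route as the paper: divergence theorem for $J^S$ in the diamond $\cR_{VI}$, past-boundary control via Proposition~\ref{rechtes} and its left-side analog, bulk absorption through Lemma~\ref{K_spher_ende} with $u_\diamond=v_\diamond$ taken large, and the higher-order cases handled by invoking \eqref{tildeineq1}--\eqref{tildeineq2} together with the commuted analog of Lemma~\ref{K_spher_ende} (the paper records this as Lemma~\ref{K_spher2_end}). The one point the paper makes explicit that you leave implicit is the comparability of the $J^S$ and $J^W$ weights on the past null segments (since Proposition~\ref{rechtes} bounds $J^W$, not $J^S$, one needs $|u|^p\sim 1$ on $\{u=u_\diamond\}$ and the analogous fact on $\{v=v_\diamond\}$), but this is immediate once $u_\diamond,v_\diamond$ are fixed.
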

\begin{proof}
The proof for $k=0$ follows from applying the divergence theorem for the current $J_{\mu}^{S}(\psi)$ in the region $\cR_{VI}$. The past boundary terms are estimated by Proposition \ref{rechtes} and its {\it left} side analog. Note that the weights of $J_{\mu}^{S}(\psi)$ are comparable to the weights of $J_{\mu}^{W}(\psi)$ for fixed $u_{\diamond}$, and similarly the weights of $J_{\mu}^{S}(\psi)$ are comparable to the weights of the {\it left} analog of $J_{\mu}^{W}(\psi)$ for fixed $v_{\diamond}$. The bulk term is absorbed by Lemma \ref{K_spher_ende}.
 
Recall \eqref{KStildek} to \eqref{tildeineq2} and note that analogously to Lemma \ref{K_spher2}, we can now state the following lemma.
\begin{lem}
\label{K_spher2_end}
Let $\psi$ be an arbitrary function, and $Y^k$ as in \eqref{angular_comm} with \eqref{Yk},  \eqref{YkK}, \eqref{YkE} and all $k \in \left\{0,1,2\right\}$. Then, for all $v_*>2\alpha$ 
and all $\hat{v}>v_*$,
the integral over region \mbox{$\cR_{IV}=J^+(\gamma)\cap J^-(x)$} with \mbox{$x=(u_{\gamma}(v_*), \hat{v})$}, $x \in \cB$, cf.~ Figure \ref{RN_mit_u_v}, of the current $\tilde{K}^{S}$, defined by \eqref{KStildek}, can be estimated by
\bea
\lb{S_lemma_ende} 
\int\limits_{\cR_{VI}} |\tilde{K}^{S}(Y^k \psi)| \dV 
&\leq&\delta_1 \sup_{u_{\gamma}(\hat{v})\leq \bar{u}\leq u_{\gamma}(v_*)}\int\limits_{\left\lbrace  v_{\gamma}(\bar{u}) \leq v \leq \hat{v}\right\rbrace }  J_{\mu}^{S}(Y^k\psi) n^{\mu}_{u=\bar{u}}\dV_{u=\bar{u}}\nonumber\\
&&+ \delta_2 \sup_{v_*\leq \bar{v} \leq \hat{v}}\int\limits_{\left\lbrace  u_{\gamma}(\hat{v}) \leq u \leq u_{\gamma}(\bar{v}) \right\rbrace }
J_{\mu}^{S}(Y^k\psi) n^{\mu}_{v=\bar{v}} \dV_{v=\bar{v}},\nonumber\\
\eea
where $\delta_1$ and $\delta_2$ are positive constants, with $\delta_1\rightarrow 0$ and $\delta_2\rightarrow 0$ as $v_*\rightarrow \infty$.
\end{lem}
\begin{proof}
The proof follows from Lemma \ref{K_spher_ende} and the arguments given in the proof of Lemma \ref{K_spher2}.\\
\end{proof}
By using this result and the divergence theorem up to third order, we then have proven Proposition \ref{bifu} for \mbox{$k \in \left\{0,1,2\right\}$}.
\end{proof}

Now that we have shown boundedness for the different subregions of the interior we can state the following theorem for the entire interior region. 
\begin{thm}
\lb{letzteprop}
Let $\psi$ be as in Theorem \ref{anfang}, and $Y^k$ as in \eqref{angular_comm} with \eqref{Yk},  \eqref{YkK}, \eqref{YkE} and all $k \in \left\{0,1,2\right\}$.
Then 
\bea
\int\limits_{\bbS^2_{u,v}}\int\limits^{\infty}_{v_{fix}}&&\left[ (|v|+1)^p (\partial_v (Y^k\psi)+ b^{\tilde{\phi}} \partial_{\tilde{\phi}} (Y^k\psi))^2(u_{fix}, v, \theta^{\star}, \tilde{\phi}) \right.\nonumber\\
&&\left.\qquad+\Omega^2|\nabb (Y^k\psi)|^2(u_{fix}, v, \theta^{\star}, \tilde{\phi}) \right]L\md v\md \sigma_{\mathbb S^2}\leq E_k, \\
&& \mbox{for  $v_{fix} \geq v_{\schere}$, $u_{fix} > -\infty$ }, \nonumber\\
\int\limits_{\bbS^2_{u,v}}\int\limits^{\infty}_{u_{fix}}&& \left[(|u|+1)^p (\partial_u (Y^k\psi))^2 (u, v_{fix}, \theta^{\star}, \tilde{\phi})\right.\nonumber\\
&&\left.\qquad+\Omega^2|\nabb (Y^k\psi)|^2(u, v_{fix}, \theta^{\star}, \tilde{\phi}) \right]L\md u\md \sigma_{\mathbb S^2} \leq E_k,\\ 
&&\mbox{for  $u_{fix} \geq u_{\schere}$, $v_{fix} > -\infty$}, \nonumber
\eea
where $p$ is as in \eqref{waspist} and 
$E_k$ depends on ${C_{0}}$ of Theorem \ref{anfang} and ${D_{0}(u_{\diamond}, v_{\diamond})}$ of Proposition \ref{initialdataprop} (and their {\it left} side analogs) where $u_{\diamond}=v_{\diamond}$ is as in Proposition \ref{bifu}. 
\end{thm}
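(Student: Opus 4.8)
The plan is to patch together the region-wise estimates already obtained, using throughout that the volume element $\dV_{u=const}=2\Omega^{2}L\sin\theta\,\md\theta^{\star}\md\tilde{\phi}\,\md v$ cancels the $\frac{1}{2\Omega^{2}}$ prefactor in the fluxes $J^{W}_{\mu}n^{\mu}_{u=const}$ and $J^{S}_{\mu}n^{\mu}_{u=const}$; consequently, for a multiplier $V\in\{S,W\}$, the quantity $\int J^{V}_{\mu}(Y^{k}\psi)\,n^{\mu}_{u=\tilde{u}}\,\dV_{u=\tilde{u}}$ equals exactly the integrand appearing in the theorem, with weight $v^{p}$ on the $(\partial_{v}+b^{\tilde{\phi}}\partial_{\tilde{\phi}})$-term and, on the angular term $\Omega^{2}|\nabb Y^{k}\psi|^{2}$, the weight of the $\partial_{u}$-component of $V$, which is $1$ for $W$ and $|u|^{p}$ for $S$.

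First I would establish the first estimate. Fix $k\in\{0,1,2\}$, choose $v_{*}$ as in Proposition \ref{rechtes} and $u_{\diamond}=v_{\diamond}$ as in Proposition \ref{bifu}, and set $v_{\schere}:=\max\{1,v_{*},v_{\diamond}\}$; then take $v_{fix}\ge v_{\schere}$ and an arbitrary finite $u_{fix}$. On the segment $\{u=u_{fix}\}\cap\{v\ge v_{fix}\}$ one has $v\ge v_{fix}>0$, so $(|v|+1)^{p}\le 2^{p}v^{p}$, and one splits into three cases according to the location of $u_{fix}$. If $u_{fix}\le u_{\schere}$ the segment lies in $\Xi$ and Theorem \ref{Xithm} bounds the flux by $E_{k}$ (after absorbing the factor $2^{p}$). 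If $u_{\schere}<u_{fix}<u_{\diamond}$, then letting $\hat{v}\to\infty$ in Proposition \ref{rechtes} bounds $\int_{\{v_{*}\le v<\infty\}}J^{W}_{\mu}(Y^{k}\psi)n^{\mu}_{u=u_{fix}}\dV_{u=u_{fix}}$ by $C(u_{\diamond},v_{*})$ uniformly in $u_{fix}$; restricting to $\{v\ge v_{fix}\}\subset\{v\ge v_{*}\}$ and noting that the $\partial_{u}$-weight of $W$ is $1$ reproduces precisely the target integrand. If $u_{fix}\ge u_{\diamond}$, letting $\hat{v}\to\infty$ in Proposition \ref{bifu} bounds $\int_{\{v_{\diamond}\le v<\infty\}}J^{S}_{\mu}(Y^{k}\psi)n^{\mu}_{u=u_{fix}}\dV_{u=u_{fix}}$ by $C(u_{\diamond},v_{\diamond})$, and since $u_{fix}\ge u_{\diamond}\ge 1$ the $\partial_{u}$-weight $|u|^{p}\ge 1$, so the unweighted angular term of the theorem is dominated. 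Taking $E_{k}$ to be the largest of the three constants — the two overlap values $u_{fix}\in\{u_{\schere},u_{\diamond}\}$ being covered by monotone continuity of the flux in $u_{fix}$, or by an infinitesimal shift of the thresholds — completes the first estimate.

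The second estimate follows by running the same argument on the \emph{left}-side construction of Section \ref{bifurcate}, under the substitution $u\leftrightarrow v$, $\partial_{u}\leftrightarrow\partial_{v}+b^{\tilde{\phi}}\partial_{\tilde{\phi}}$ of \eqref{uv}--\eqref{paruv}. Concretely, the left analog of Theorem \ref{Xithm} (on $\tilde{\Xi}$), the left analog of Proposition \ref{rechtes} (on $\tilde{\cR}_{V}$), and Proposition \ref{bifu} itself — which is already two-sided, being built from the right and left analogs of Theorem \ref{anfang} and Proposition \ref{initialdataprop} — cover all finite $v_{fix}$ and all $u_{fix}\ge u_{\schere}$, with the same weight comparison $(|u|+1)^{p}\le 2^{p}u^{p}$.

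I do not expect a genuine analytic obstacle here: every ingredient is in hand and the step is essentially bookkeeping. The two points deserving attention are, first, the matching of weights — $S$ and $W$ carry different powers of $|u|$ on the $\partial_{u}$-direction, so in each region one must confirm that the weight-one angular term in the target integrand is actually controlled, which is automatic wherever $|u|\ge 1$ and is an exact identity in $\cR_{V}$ where $W$ is used — and, second, ensuring $v_{\schere}$ is large enough that, for $u_{fix}$ in the compact range $[u_{\schere},u_{\diamond}]$, the segment $\{u=u_{fix},\,v\ge v_{fix}\}$ lies in the region to which Proposition \ref{rechtes} applies (inside $\cB$ and to the future of $\gamma$); this holds because $v_{*}>v_{\gamma}(u_{\schere})\ge\sup_{[u_{\schere},u_{\diamond}]}v_{\gamma}$ and likewise $v_{*}$ dominates $v_{blue}$ over that range.
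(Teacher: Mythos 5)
Your proposal is correct and follows essentially the same route as the paper, whose proof consists precisely of combining Theorem \ref{Xithm}, Proposition \ref{rechtes} and Proposition \ref{bifu} (and their \emph{left} side analogs) and comparing the weights in the corresponding fluxes. Your case split in $u_{fix}$, the cancellation of $\frac{1}{2\Omega^2}$ against the volume form, and the comparison $(|v|+1)^p\lesssim v^p$ with the $\partial_u$-weights $1$ (for $W$) and $|u|^p\geq 1$ (for $S$) are exactly the bookkeeping the paper leaves implicit.
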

\begin{proof}
This follows by examining the weights in Proposition \ref{rechtes} (and its {\it left} side analog) and \ref{bifu} together with Theorem \ref{Xithm} (and its {\it left} side analog). 
\end{proof}

\section{Pointwise boundedness and continuity}
\lb{globalenergy}
In order to derive pointwise estimates up to and including $\cC\cH^+$ we need weighted higher order energy estimates up to $\cC\cH^+$, as we have derived in Theorem \ref{letzteprop}. Recall that the crux of our proof was to obtain estimates which actually reach up to $v=\infty$, which we have shown in our analysis of the characteristic rectangle $\Xi$. 
The proof of the pointwise boundedness is now completely analogous to the proof in Section \ref{uni_bounded}, just that we also have to integrate in $u$ direction as follows.
We estimate
\bea
\lb{uglobalfundcauchy}
\int\limits_{\bbS^2_{u,v}} (Y^k \psi) ^2(\hat{u}, v)L \md \sigma_{\mathbb S^2}
&\leq& \tilde{C}\left[\int\limits_{\bbS^2_{u,v}}\left(\int\limits_{u_*}^{\hat{u}} (|u|+1)^p(\partial_u Y^k\psi)^2(u,v)\md u\int\limits_{u_*}^{\hat{u}} (|u|+1)^{-{p}}\md u\right)L\md \sigma_{\mathbb S^2}\right.\nonumber\\
&&\left.\qquad +\int\limits_{\bbS^2_{u,v}}  (Y^k \psi) ^2(u_*,v)L\md \sigma_{\mathbb S^2}\right],\nonumber \\
&\leq&\tilde{C}\left[\tilde{\tilde{C}} E_{k}+\int\limits_{\bbS^2_{u,v}} \psi^2(u_*,v)L\md \sigma_{\mathbb S^2}\right],
\eea 
where $u_*\geq u_{\schere}$, $\hat{u} \in (u_*, \infty)$ and $v \in (1, \infty)$ and $k \in \bbN^0$. 

By using the result \eqref{supr2} in \eqref{uglobalfundcauchy} we derive pointwise boundedness according to \eqref{sobo_embed} 
\bea
\lb{usupr}
&\sup_{\left\{\theta^{\star},\tilde{\phi}\right\}\in\bbS^2_{u,v}}&|\psi(\hat{u},{v},\theta^{\star},\tilde{\phi})|^2 \nonumber\\
&\leq& \tilde{C} \left[\int\limits_{\bbS^2_{u,v}} ( \psi) ^2(\hat{u},{v})L\md \sigma_{\mathbb S^2} +\int\limits_{\bbS^2_{u,v}} (Y \psi) ^2(\hat{u},{v})L\md \sigma_{\mathbb S^2} +\int\limits_{\bbS^2_{u,v}} (Y^2 \psi) ^2(\hat{u},{v})L\md \sigma_{\mathbb S^2} \right],\nonumber\\
&\leq&\tilde{C}\left[\tilde{\tilde{C}}\left( E_{0}+  E_{1}+ E_{2}\right)+C )\right]
\leq C,
\eea
with $C$ depending on the initial data. 

Inequalities \eqref{usupr} and \eqref{supr2} give the desired \eqref{maineq} for all $v\geq1$.
Using the interchanges \eqref{uv} and \eqref{paruv},
equation \eqref{maineq} follows likewise for all $u\geq 1$. 
The remaining subset of the interior has compact closure in spacetime for which \eqref{maineq} thus follows by Cauchy stability. We have thus shown \eqref{maineq} globally in the interior.

The continuity statement of Theorem \ref{main}
follows easily by estimating \mbox{$|\psi(u,v,\theta^{\star},\tilde{\phi})- \psi(\tilde{u},{v},\theta^{\star},\tilde{\phi})|$} via the fundamental theorem of calculus and Sobolev embedding\footnote{We estimate as in \eqref{uglobalfundcauchy}
and \eqref{usupr} but exploiting that $\int_{\tilde{u}}^{u} (|u|+1)^{-p}\to 0$ as $\tilde{u}, u\to \infty$.}, and similarly for $v, \theta^{\star}$ and $\tilde{\phi}$ in the role of $u$. For a more detailed discussion of the continuity argument, see \cite{anne_thesis}.
\begin{trivlist}
\item[\hskip \labelsep ]\qed\end{trivlist}

\section{Outlook}
\lb{outlook}
In the following we would like to give an overview of recent related results, as well as open problems. For further results see the outlook of \cite{anne} as well as the more detailed review given in part I of \cite{anne_thesis}.

\subsection{Scalar wave equation without symmetry in black hole \textit{interiors}}

Apart from the ``poor man's'' linear stability results obtained in this and the previous paper \cite{anne}, note the related decay results by Hintz \cite{peter1} at the Cauchy horizon on Reissner--Nordstr\"om and slowly rotating Kerr backgrounds. Similar results along the Cauchy horizon on asymptotically de Sitter backgrounds by Hintz and Vasy are given in \cite{peter2}. 

This suggested singular behavior was analyzed further in \cite{luk_oh} by Luk and Oh.
They were able to prove certain ``poor man's'' linear instability properties along $\cC\cH^+$ for scalar waves on fixed subextremal Reissner--Nordstr\"om backgrounds. In particular, they showed that due to the blueshift effect generic smooth and compactly supported initial data on a Cauchy hypersurface indeed give rise to solutions with infinite non-degenerate\footnote{By non-degeneracy we mean that the multiplier is constructed such that it does not become null on the hypersurfaces of interest. } energy
near the Cauchy horizon in the interior of the black hole.  They proved that
the solution generically does not belong to $W^{1,2}_{\mbox{\tiny{loc}}}$.
A recent result of Gleeson showed that the singularity is in fact even stronger, so that we can state the following theorem.
\begin{thm}
\lb{RN_instab}
(``Poor man's'' linear Reissner--Nordstr\"om instability without symmetries, Luk and Oh, \cite{luk_oh} and Gleeson, \cite{eavan}.)
Generic smooth and compactly supported initial
data to the wave equation on fixed subextremal Reissner--Nordstr\"om background on a two-ended asymptotically flat Cauchy surface $\Sigma$
give rise to solutions that are not in
$W^{1,p}_{\mbox{\tiny{loc}}}$, for all $1<p<2$, for the subextremal range $0 < \log{\frac{ r_+}{r_-}}< 1$;
\footnote{Although Reissner--Nordstr\"om spacetimes with subextremal range approaching the extremal range are expected to show a more stable behavior on $\cC\cH^+$ in comparison to black holes with a small charge and mass ratio, it remains open to show that solutions are not in $W^{1,p}_{\mbox{\tiny{loc}}}$, for all $1<p<2$, for the subextremal range $1\leq \log{\frac{ r_+}{r_-}}$.}
and not in
$W^{1,2}_{\mbox{\tiny{loc}}}$, for $p\geq 2$ for the entire subextremal range,
in a neighborhood of any point on the future Cauchy horizon
$\cC\cH^+$.\footnote{ A function $\psi$ belonging to the Sobolev space $W^{1,2}_{\mbox{\tiny{loc}}}$ would have the properties that locally $\psi$ and all of its first weak derivatives exist and are square integrable. Taking \eqref{wave_psi} seriously as a model for the full Einstein field equations and therefore considering $\psi$ as an agent for the metric two tensor $g$, the result of Luk and Oh suggests that in the full theory the Christoffel symbols would fail to be square integrable. For an introduction to Sobolev spaces refer for example to \cite{evans}, \cite{taylor}.}
\end{thm}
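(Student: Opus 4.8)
The plan is to prove this by combining precise late-time \emph{lower} bounds for $\psi$ on the event horizon with a quantitative form of the blueshift amplification, making rigorous the mass-inflation heuristics of Poisson--Israel and Dafermos. The three steps below run, in reverse, along the same interior decomposition into $\cR$, $\cN$, $\cB$ used throughout this paper: first a generic lower bound on $\cH^+$, then its propagation and amplification up to $\cC\cH^+$, then a contradiction with the hypothetical finiteness of a $W^{1,p}_{\mathrm{loc}}$ norm.

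The first step is to establish a \emph{generic horizon lower bound}: for Cauchy data outside a non-generic set, $\psi$ and its transversal derivative do not decay faster than their Price rate along $\cH^+$, e.g. $\int_v^{2v}\int_{\bbS^2}\psi^2|_{\cH^+}\gtrsim v^{-6}$, and the analogue for $\partial_u\psi|_{\cH^+}$. This is the one place where the upper bounds of Theorem~\ref{anfang} are not enough: one needs genuine late-time \emph{asymptotics}, obtained by scattering/conformal methods in the exterior, which identify the leading tail coefficient as an essentially linear, not identically vanishing, functional of the data — morally the Newman--Penrose constant at $\cI^+$ together with the compactly supported initial profile. Its vanishing locus is then shown to have dense complement, which gives the genericity; working with $v$-dyadically-averaged rather than pointwise lower bounds sidesteps the most delicate part of the asymptotics.

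The second step is to \emph{transport this bound into the interior and amplify it}. Through $\cR$ and $\cN$ (Sections~\ref{red_region} and \ref{no_region}) the non-degenerate redshift and no-shift estimates only distort $\psi$ by bounded factors, so a lower bound survives to $r^{\star}=r^{\star}_{blue}$. In $\cB$ one uses the blueshift in the favourable direction: a backward energy estimate, integrating towards $\cH^+$ with a multiplier transversal to $\cC\cH^+$, shows that the flux of $\psi$ through $u=\mathrm{const}$ cones dominates a blueshifted multiple of the horizon flux. Passing to a coordinate $V$ regular at $\cC\cH^+$, in which $\Omega^2\sim\mathrm{const}\cdot|V|$ and $\md V\sim\Omega^2\,\md v$, this produces a pointwise blow-up $|\partial_V\psi|\gtrsim|V|^{-\sigma}$ along those cones, with an exponent $\sigma=\sigma(\log(r_+/r_-))$ governed by the ratio of surface gravities; the bound $\sigma<1$, i.e. $\int|\partial_V\psi|\,\md|V|<\infty$, is exactly what keeps $\psi$ itself bounded and continuous up to $\cC\cH^+$, consistently with Theorem~\ref{main}.

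Finally, to \emph{conclude}, suppose for contradiction that $\psi\in W^{1,p}_{\mathrm{loc}}$ on a causal diamond abutting a point of $\cC\cH^+$; then $\int|\partial_V\psi|^p\,\md V<\infty$ there. For $p\ge2$ this already contradicts the blueshifted lower bound of step two (whose leading coefficient is non-zero for generic data by step one), giving the $W^{1,2}_{\mathrm{loc}}$ failure throughout the subextremal range; the refinement to $1<p<2$ needs the \emph{sharp} pointwise rate $\sigma p\ge1$, which requires pinning down the exact value of $\sigma(\log(r_+/r_-))$, and this has been carried out (Gleeson) only for $0<\log(r_+/r_-)<1$, leaving $\log(r_+/r_-)\ge1$ open for technical rather than conceptual reasons. \textbf{Main obstacle.} The decisive difficulty is step one: upgrading exterior decay to generic \emph{lower} bounds on the horizon tail — upper bounds alone are useless, and one must compute the leading late-time coefficient and rule out its "accidental" vanishing for an open dense class of data. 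A secondary, purely quantitative obstacle is making the blueshift amplification sharp enough to control $\sigma(\log(r_+/r_-))$ precisely, which is what confines the $1<p<2$ statement to the near-extremal regime.
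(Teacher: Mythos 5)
The statement you are attempting is not proven in this paper at all: Theorem \ref{RN_instab} appears in the Outlook (Section \ref{outlook}) purely as a quotation of results of Luk--Oh \cite{luk_oh} and Gleeson \cite{eavan}, included for context alongside the boundedness Theorem \ref{main}, so there is no proof of it here against which to compare your argument. Measured instead against the cited works, your sketch does capture their overall strategy: generic late-time lower bounds along $\cH^+$, propagation through $\cR$ and $\cN$ and amplification in $\cB$, and a contradiction with finiteness of $\int|\partial_V\psi|^p\,\md V$ in a coordinate $V$ regular at $\cC\cH^+$; and you correctly identify the decisive difficulty as the generic lower bound on the horizon tail.

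As it stands, however, the proposal is a programme rather than a proof. Step one — ruling out accidental vanishing of the leading late-time coefficient for generic smooth compactly supported two-ended data — is precisely the main content of \cite{luk_oh} and is only asserted; the genericity argument there is a nontrivial dichotomy-type statement, not a soft ``dense complement of a vanishing locus'' observation. In step two the quantitative claim is also not in the form actually available: the instability is established as an $L^2$-averaged (integrated along null cones) lower bound on the transversal derivative near $\cC\cH^+$, not a pointwise bound $|\partial_V\psi|\gtrsim|V|^{-\sigma}$, and the exponent controlling the refinement to $W^{1,p}_{\mbox{\tiny{loc}}}$, $1<p<2$, is determined by the interplay of the Price-law decay rate with the surface gravity $\kappa_-$ (this is the origin of the restriction $0<\log\frac{r_+}{r_-}<1$ in \cite{eavan}), which your sketch leaves undetermined. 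So if your aim was to supply a proof for this paper, none is required, since the theorem is cited; if it was to reconstruct the proofs of the cited works, the missing ingredients are the horizon genericity argument and the sharp averaged blueshift estimates.
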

A version of the Strong Cosmic Censorship Conjecture demanding that solutions should be inextendible in $W^{1,p}_{\mbox{\tiny{loc}}}$ beyond $\cC\cH^+$ could be interpreted as generically the Lorentzian manifold cannot be extended even in a weak sense such that the Einstein equations are still satisfied. This was already pointed out in \cite{christo_form} by Christodoulou. 
Note however, that adding a cosmological constant can change the behavior, as stated below in a rough version of the original theorem.
\begin{thm}
\lb{mainThmdS} (``Poor man's'' linear Reissner--Nordstr\"om-de Sitter stability without symmetries, Costa and A.F., \cite{joao_ich}.)
For solutions of the wave equation on fixed subextremal Reissner--Nordstr\"om--de Sitter spacetime, satisfying
\bea
\lb{Price}
\int_{v_1}^{v_2}\int_{\bbS^2}\left[ \left(\partial_v\psi\right)^2
+\left|{\nabb}\psi\right|^2\right] \md v \md \sigma_{\bbS^2}
&\lesssim& e^{-2p v_1}\;,
\eea
for all $0\leq v_1\leq v_2$. The energy flux can be shown to be bounded, even in the transverse directions if
\begin{equation}
\label{mainCond}
2\min\{p,\kappa_+\}>\kappa_-\;.
\end{equation}
\end{thm}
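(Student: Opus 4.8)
\medskip
\noindent{\em Proof strategy.}
The plan is to run the redshift--noshift--blueshift vector-field scheme of Sections~\ref{red_region}--\ref{blue_future}, in its spherically symmetric incarnation from \cite{anne}, with the polynomial weights $v^{p}$, $|u|^{p}$ replaced throughout by exponential weights $e^{\mu v}$, $e^{\mu|u|}$ for a rate $\mu>0$ to be fixed. The decisive simplification relative to the Kerr analysis carried out in this paper is that Reissner--Nordstr\"om--de Sitter is spherically symmetric: $\partial_{t}$ and the $so(3)$ generators are all Killing, so $\cE^{V}(Y^{k}\psi)\equiv 0$ and no error terms ever arise, and the double null metric has $b^{\tilde{\phi}}\equiv 0$. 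The hypothesis \eqref{Price} replaces the polynomial-decay input of Theorem~\ref{anfang}: summed dyadically in $v$ it furnishes exponential decay $\lesssim e^{-2pv}$ of the tangential energy flux on $\cH^{+}$, while Cauchy stability supplies the data on a transverse null segment exactly as in Proposition~\ref{initialdataprop}.

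One then propagates inward region by region. Through the redshift region $\cR$ one combines $b_{0}J^{N}_{\mu}n^{\mu}\le K^{N}$ (Proposition~\ref{mi}) with the divergence theorem while retaining the weight $e^{\mu v}$; the weight-derivative term $\mu\,e^{\mu v}(\cdots)$, of the wrong sign, is absorbable by the coercive redshift bulk only while $\mu$ stays below a constant comparable to $\kappa_{+}$ --- this is precisely why the decay rate reaching the noshift region is $\min\{p,\kappa_{+}\}$ rather than $p$: when $p>\kappa_{+}$, the redshift at $\cH^{+}$ is the bottleneck. Through the noshift region $\cN$ nothing is lost, since $K^{U}$ and $J^{U}$ are $\partial_{t}$-invariant and the slab has bounded $v$-length. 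In the blueshift region $\cB$ one uses the exponentially weighted timelike multiplier $S=e^{\mu|u|}\partial_{u}+e^{\mu v}\partial_{v}$ and computes $K^{S}$ by the analogue of \eqref{KS} with $b^{\tilde{\phi}}\equiv 0$ (i.e.\ the current of \cite{anne}); near $\cC\cH^{+}$ the coefficient of $|\nabb\psi|^{2}$ is $e^{\mu v}\big(|\partial_{v}\Omega/\Omega|-\tfrac{\mu}{2}\big)+e^{\mu|u|}\big(|\partial_{u}\Omega/\Omega|-\tfrac{\mu}{2}\big)$ up to lower-order terms, which is positive there because $|\partial_{v}\Omega/\Omega|$ and $|\partial_{u}\Omega/\Omega|$ are bounded below by a positive constant comparable to $\kappa_{-}$, and the indefinite cross terms are absorbed by Cauchy--Schwarz as in Lemma~\ref{K_spher}.

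The sharp threshold \eqref{mainCond} is used in a single place: the control of the \emph{transverse} flux up to $\cC\cH^{+}$. Measured relative to a frame regular at $\cC\cH^{+}$, this flux carries an extra weight $\sim\Omega^{-2}$ compared with the degenerate flux through an outgoing null hypersurface, and by \eqref{R_delta_sigma} and \eqref{kappapm} this weight blows up at $\cC\cH^{+}$ at a rate governed by $\kappa_{-}$. Propagating the decay of $(\partial_{v}\psi)^{2}$ at the rate $\min\{p,\kappa_{+}\}$ into $\cB$ as above, the $\Omega^{-2}$-weighted flux stays bounded up to $\cC\cH^{+}$ exactly when the decay rate dominates the blueshift rate in the balance \eqref{mainCond}; finiteness of the spacetime volume to the future and absorption of the bulk then go through much as in Section~\ref{finiteness}. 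The divergence theorem thus yields boundedness of the full (degenerate and transverse) weighted energy flux up to and including $\cC\cH^{+}$, and pointwise boundedness and $C^{0}$-extendibility of $\psi$ to $\cC\cH^{+}$ follow by the fundamental theorem of calculus --- using $\int^{\infty}e^{-\mu v}\,\md v<\infty$ --- and Sobolev embedding on $\bbS^{2}$, as in Sections~\ref{uni_bounded} and \ref{globalenergy}. The main obstacle is the blueshift step: one must exhibit a single rate $\mu>0$ for which the exponentially weighted bulk $K^{S}$ is coercive, the initial weighted energy is finite, \emph{and} the $\Omega^{-2}$-weighted transverse flux remains bounded up to $\cC\cH^{+}$; by the above, the existence of such a $\mu$ is equivalent to the condition \eqref{mainCond}.
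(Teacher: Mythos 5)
This theorem is not proved in the paper at all: it appears in the outlook, Section \ref{outlook}, as a ``rough version'' of the main result of \cite{joao_ich} (Costa and Franzen), quoted purely for context alongside Theorems \ref{RN_instab}, \ref{jo_jan} and \ref{kerr_cauchy}. There is therefore no proof in this paper against which your proposal can be checked; the comparison you were asked to imagine simply does not exist here, and any assessment has to be made against the cited external work rather than against this text.

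Measured against that cited work, your sketch is in the right spirit: exponential (rather than polynomial) decay along $\cH^+$ as input, the redshift multiplier near $\cH^+$ capping the rate that survives into the interior at roughly $\min\{p,\kappa_+\}$, the absence of error terms because in spherical symmetry the angular momentum operators are Killing and $b^{\tilde\phi}\equiv 0$, and a blueshift-region estimate in which the achieved decay competes with the Cauchy-horizon surface gravity $\kappa_-$, producing \eqref{mainCond} as the threshold for boundedness of the non-degenerate (transverse) flux. Two cautions, though. First, your closing claim that the existence of a suitable exponential rate $\mu$ is ``equivalent'' to \eqref{mainCond} overreaches: the argument you outline only yields sufficiency of \eqref{mainCond}, and nothing in the sketch establishes sharpness. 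Second, the decisive quantitative steps are asserted rather than carried out: the absorption of the weight-derivative term $\mu e^{\mu v}(\cdots)$ by the redshift bulk, the precise bookkeeping that turns the quadratic flux decay $e^{-2\min\{p,\kappa_+\}v}$ against the $\Omega^{-2}$ blow-up governed by $\kappa_-$, and the verification that a single $\mu$ serves simultaneously in the coercivity of $K^S$, the finiteness of the initial weighted energy, and the transverse-flux bound. These are exactly the computations that constitute the proof in \cite{joao_ich}, so as it stands your text is a plausible strategy outline, not a proof, and it should cite that paper rather than be presented as a self-contained argument.
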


Analogous to the Reissner--Nordstr\"om instability results, Luk and Sbierski could show the following on fixed Kerr backgrounds.
\begin{thm}
\lb{jo_jan}
(``Poor man's'' linear Kerr instability without symmetries, Luk and Sbierski, \cite{luk_jan}).
Let $\psi$ be a solution of \eqref{wave_psi}. If the energy of $\psi$ along $\cH^+$ obeys some polynomial upper and lower bounds, then the non-degenerate energy on any spacelike hypersurface intersecting the Cauchy horizon transversally is infinite.
\end{thm}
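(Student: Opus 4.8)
The plan is to establish, conditionally on the assumed upper and lower polynomial bounds along $\cH^+$, a \emph{lower} bound for the non-degenerate energy on $\Sigma$ that diverges, so that this energy is necessarily infinite. The mechanism is the blueshift at the Cauchy horizon, whose strength is dictated by the (negative) surface gravity $\kappa_-$ of \eqref{kappapm}: measured with respect to a vector field that remains regular and timelike up to $\cC\cH^+$, even a slowly decaying amount of ``degenerate'' energy (measured with respect to a null generator of $\cC\cH^+$) carries an exponentially large weight. It is convenient to argue by contradiction, assuming the non-degenerate energy on $\Sigma$ is finite and deriving a conflict with the polynomial lower bound on $\cH^+$.

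First I would commute with $\partial_{\tilde{\phi}}$ and the Carter operator, exactly as is done in the present paper (and as is needed for Sobolev control on the spheres $\bbS^2_{u,v}$), so as to reduce everything to weighted first-order energies; for the lower bound it also helps to decompose $\psi$ into azimuthal modes $e^{im\tilde{\phi}}\psi^{(m)}$, so that the transport into the interior is, mode by mode, effectively one-dimensional on $\cQ|_{II}$ and the only cancellation one must control is oscillation in the remaining null direction. Throughout one must keep track of the fact that the event horizon and the Cauchy horizon are generated by \emph{different} Killing fields, $T_{\cH^+}$ of \eqref{T_H} and $T_{\cC\cH^+}=\partial_t+\frac{a}{r_-^2+a^2}\partial_\phi$; this mismatch --- absent in Reissner--Nordstr\"om --- is precisely why the statement has to be conditioned on both a lower and an upper bound.

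The core has three steps. (i) \textbf{Conservation of a degenerate energy}: since $T_{\cH^+}$ is Killing, $K^{T_{\cH^+}}=0$, so the divergence theorem \eqref{divthe} applied in the region of $\cM|_{II}$ enclosed by $\cH^+$, $\cC\cH^+$ and $\Sigma$ yields an identity with \emph{no} bulk term, equating the $J^{T_{\cH^+}}$-flux through $\cH^+$ to the sum of the fluxes through $\cC\cH^+$ and $\Sigma$. (ii) \textbf{Propagation of the lower bound}: starting from the polynomial lower bound along $\cH^+$ --- which makes the $\cH^+$ flux sign-definite and quantitatively nonzero --- I would march through the redshift, noshift and blueshift regions of Section \ref{bnrsection}, controlling the commutator error terms and the indefinite cross terms by the polynomial \emph{upper} bound, and obtain a quantitative lower bound for the (degenerate) energy flux through a sequence of hypersurfaces $r^{\star}=\mathrm{const}$ approaching $\cC\cH^+$. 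In the noshift region this is safeguarded by the $\partial_t$-invariance of $J^{U},K^{U}$ exploited in Lemma \ref{Kr}; in the blueshift region one uses the monotonicity $\partial_v\Omega^2/\Omega^2\leq-\beta$ of \eqref{omega1} only to locate the regime, not to close the estimate. (iii) \textbf{Blueshift amplification}: near $\cC\cH^+$ one has $\Omega^2=-\Delta/R^2\sim e^{2\kappa_- r^{\star}}$ with $\kappa_-<0$, so a coordinate regular across $\cC\cH^+$ is exponentially related to $v$ (resp.\ $u$); re-expressing the degenerate flux of step (ii) in terms of the non-degenerate energy of $\Sigma$ therefore introduces a weight $\sim e^{2|\kappa_-|r^{\star}}$, and since the degenerate flux does not decay fast enough to overcome this exponential, the resulting integral along the part of $\Sigma$ approaching $\cC\cH^+$ diverges --- contradicting the finiteness assumption.

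The main obstacle is step (ii): every estimate in the body of this paper is a one-sided \emph{upper} bound, carrying error terms with no favourable sign, whereas here one needs a genuine \emph{lower} bound propagated into the interior, which requires ruling out destructive interference of $\psi$ with itself. The azimuthal-mode reduction is what makes this feasible --- for fixed $m$ the transport is effectively $1+1$-dimensional, and the residual oscillation can be handled by pairing the lower bound with the upper bound together with a dyadic/pigeonhole argument in $v$ (resp.\ $u$). A secondary nuisance is the bookkeeping of $T_{\cH^+}$ versus $T_{\cC\cH^+}$ on $\cC\cH^+$, where $T_{\cH^+}$ acquires a transversal component, so that its flux there already involves a non-degenerate derivative weighted by the blueshift factor, which must be attributed to $\Sigma$ correctly.
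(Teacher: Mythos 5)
First, a point of comparison: this statement is not proven in the present paper at all --- it is Luk--Sbierski's theorem, quoted verbatim from \cite{luk_jan} in the Outlook (Section \ref{outlook}) --- so there is no proof here to match your proposal against; what follows assesses your sketch on its own terms. The central gap is at your steps (i)--(ii). The identity in (i) is correct as an identity ($T_{\cH^+}$ in \eqref{T_H} is a constant-coefficient combination of Killing fields, so $K^{T_{\cH^+}}=0$), but it is not usable in the way you claim: $T_{\cH^+}$ is spacelike throughout the interior and, because $\omega_+\neq\omega_-$, it remains \emph{spacelike} on $\cC\cH^+$ away from the axis, since it differs from the null generator $T_{\cC\cH^+}=\partial_t+\frac{a}{r_-^2+a^2}\partial_{\phi}$ by a nonzero multiple of $\partial_{\phi}$, which is tangent to $\cC\cH^+$ and orthogonal to its null normal there. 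Consequently the fluxes through $\Sigma$, through $\cC\cH^+$, and through the null pieces bounding your region carry no sign, and the identity cannot convert the (indeed nonnegative) $\cH^+$ flux into a lower bound for anything on $\Sigma$. You relegate this to ``bookkeeping of $T_{\cH^+}$ versus $T_{\cC\cH^+}$'', but it is precisely the analytic obstruction that distinguishes Kerr from Reissner--Nordstr\"om, and your sketch supplies no mechanism to overcome it.

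Step (ii) is the other genuine gap: forward propagation of a \emph{lower} bound. Restricting to an azimuthal mode $e^{im\tilde{\phi}}$ is legitimate ($\partial_{\tilde{\phi}}$ is Killing), but it does not make the evolution ``effectively one-dimensional'': the $\theta^{\star}$-dependence survives, no separation in $\theta^{\star}$ is available in the physical-space framework used here (the paper explicitly does \emph{not} use the Carter tensor, and its commutators are the $Y_i$ of Section \ref{angular}, not the Carter operator), so within a fixed mode the solution can still disperse and cancel in $(u,v,\theta^{\star})$; a dyadic pigeonhole in $v$ combined with the upper bound does not rule this out, and no monotone or conserved positive quantity is identified that would. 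Nor does the sketch pin down the quantitative relation between the assumed polynomial exponents along $\cH^+$ and the surface gravities \eqref{kappapm}, which is exactly what the hypotheses of the Luk--Sbierski theorem encode. Note finally that the cited proof avoids your main obstacle altogether by running the argument in the opposite direction: one assumes the non-degenerate energy on $\Sigma$ is finite and propagates that (upper-bound) information through the interior to deduce decay along $\cH^+$ incompatible with the assumed lower bound, so that only upper bounds are ever propagated --- which is what energy estimates can deliver. As written, your plan hinges on propagating a lower bound forward through $\cR$, $\cN$ and $\cB$, and that is the step that fails.
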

Further, Dafermos and Shlapentokh-Rothmann, \cite{m_yakov} obtained scattering maps for axisymmetric scalar waves on fixed Kerr backgrounds in the {\it exterior} as well as the {\it interior} with conclusion concordant with the above statements. For precise statements and further scattering results see \cite{m_yakov, m_rueck, jp} and references therein. For results concerning extremal black holes see \cite{stef1,stef2, stef3} by Aretakis and \cite{dejan1, dejan2} by Gajic.

\subsection{Linear perturbations without symmetry on \textit{exterior} backgrounds}

Going from ``poor man's'' linear stability and instability results to full linear considerations the following theorem was proven for Schwarzschild backgrounds.
\begin{thm}
\lb{schw_stab}
(Full linear stability of Schwarzschild, Dafermos et al., \cite{m_linstab}). Solutions for the linearisation
of the Einstein equations around Schwarzschild arising from regular admissible\footnote{For Schwarzschild spacetime this means that the Cauchy hypersurface does not intersect the white hole of the solution.}
data remain bounded in the exterior and decay (with respect to a hyperboloidal
foliation) to a linearised Kerr solution.
\end{thm}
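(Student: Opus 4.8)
The plan is to reduce the linearised Einstein equations around Schwarzschild to a scalar hierarchy and then to run, on the exterior, the same type of vector field estimates that underlie the horizon bounds assumed in this paper. First I would fix a double null gauge adapted to the standard foliation of the Schwarzschild exterior, so that the linearised solution is described by the linearised connection coefficients, the linearised metric components, and the linearised Weyl curvature. The crucial point is that the extreme curvature components --- the perturbations $\alpha$, $\underline\alpha$ of the outgoing and ingoing null curvature --- are gauge invariant and, on Schwarzschild, satisfy the decoupled spin $\pm 2$ Teukolsky equations, so they may be analysed independently of the rest of the system.

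Second, I would apply the Chandrasekhar transformation: set $P$ equal to a suitable $r$-weighted second order angular derivative combination of $\alpha$ (and $\underline P$ the analogous combination of $\underline\alpha$); then $P$ solves a Regge--Wheeler equation $\Box_g P - V P = 0$ with a positive, well-behaved potential. For this equation I would establish boundedness and polynomial decay by the vector field method: the conserved energy associated with the Killing field $\partial_t$, an integrated local energy decay (Morawetz) estimate with a radial multiplier handling the photon sphere $r=3M$, the red-shift estimate near $\cH^+$, and the $r^p$-weighted hierarchy near $\cI^+$; commuting with $\partial_t$ and the angular momentum operators $Y_i$ and using Sobolev embedding on the spheres then yields pointwise decay of $P$, and inverting the transformation (which costs derivatives but no $r$-weights) transfers this decay to $\alpha$ and $\underline\alpha$.

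Third, with curvature decay in hand, I would propagate control to the whole system. Integrating the linearised null structure equations and Bianchi identities along the null directions, the only obstruction to decay is the two non-decaying families of solutions: the linearised Kerr solutions (parametrised by a mass perturbation and an angular momentum perturbation) and the pure gauge solutions arising from linearised changes of double null coordinates. I would read off the asymptotic linearised Kerr parameters from conserved fluxes computed on the initial hypersurface, subtract the corresponding reference linearised Kerr solution, and then add a pure gauge solution chosen so that the residual is suitably normalised on $\cH^+$ and along a hyperboloidal slice; the normalised residual then inherits the curvature decay by a finite cascade of transport estimates, proving boundedness in the exterior and decay to a linearised Kerr solution with respect to a hyperboloidal foliation.

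The main difficulty is this third step: in contrast with the scalar problem treated in the body of this paper, here one must control a large coupled system, and the hard part is the interplay of the Chandrasekhar transformation with the gauge --- one has to verify that the transformation is invertible with the correct weights so that no information is lost, and that the pure gauge solution used for the normalisation is itself controlled by the data and decays, so that the conclusion genuinely concerns decay to an honest linearised Kerr solution rather than an artefact of the gauge. The low angular frequencies $\ell=0,1$, which is precisely where the linearised Kerr and residual gauge solutions live, must be treated by a separate and somewhat delicate argument.
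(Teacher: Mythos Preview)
This theorem is not proved in the present paper at all: it appears in the Outlook (Section~\ref{outlook}) purely as a citation of the result of Dafermos, Holzegel and Rodnianski \cite{m_linstab}, surveying related work in the literature. There is therefore no ``paper's own proof'' to compare your attempt against.

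That said, your sketch does track the actual strategy of \cite{m_linstab} quite faithfully: the double null gauge, the gauge-invariance of the extreme curvature components $\alpha,\underline\alpha$ satisfying the spin~$\pm 2$ Teukolsky equations, the Chandrasekhar transformation to Regge--Wheeler, the standard package of exterior estimates (Morawetz at the photon sphere, redshift at $\cH^+$, $r^p$ hierarchy at $\cI^+$), and then the cascade through the linearised Bianchi and null structure equations modulo linearised Kerr and pure gauge solutions. Your identification of the main difficulty --- the gauge normalisation and the $\ell=0,1$ modes --- is also accurate. But none of this belongs to the paper under review, which concerns the scalar wave equation on the Kerr \emph{interior}; the theorem you are attempting is an exterior linearised-gravity result quoted only for context.
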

A full linear stability result for slowly rotating Kerr-de Sitter spacetime was obtained by Hintz and Vasy, see Theorem 1.2 in \cite{peter3}.

\subsection{Non-linear perturbations without symmetry on \textit{exterior} backgrounds}

Considering the Einstein-Maxwell-scalar field system, Luk and Oh were able to show that $L^2$-averaged (inverse) polynomial lower bounds for the derivative of the scalar field hold along each of the horizons, arising from a generic set of Cauchy initial data, see \cite{luk_oh2}. Using this result as data for the analysis in the interior, they were able to prove Theorem \ref{luk_oh_sccc}, stated in the next section, which is related to the Strong Cosmic Censorship Conjecture.

On  the level of full non-linear Einstein field equations, Hintz and Vasy were recently able to derive stability results in Kerr-de Sitter spacetime. The following informal version of Theorem 1.4 of \cite{peter3} was stated in their work.

\begin{thm}
(Stability of the Kerr-de Sitter family for small $a$, Hintz and Vasy \cite{peter3}).
Suppose $(h,k)$ are smooth initial data on $\Sigma_0$, satisfying the constraint equations,
which are close to the data $(h_{b_0}, k_{b_0})$
of a Schwarzschild-de Sitter spacetime in a
high regularity norm. Then there exist a solution
$g$ of $R_{\mu \nu}+\lambda g_{\mu \nu}=0$
attaining these initial
data at $\Sigma_0$, and black hole parameters
$b$ which are close to
$b_0$, so that 
\ben
g-g_b=\cO(e^{-\alpha t_*})
\een
for a constant $\alpha>0$ independent of the initial data; that is, $g$ decays exponentially fast to the Kerr-de Sitter metric $g_b$. Moreover, $g$ and $b$ are quantitatively controlled by $(h,k)$.
\end{thm}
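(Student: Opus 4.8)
The plan is to reproduce, in outline, the microlocal Nash--Moser strategy of Hintz and Vasy. The first step is to break the diffeomorphism freedom of $R_{\mu\nu}+\lambda g_{\mu\nu}=0$ by imposing a generalized wave-map (DeTurck) gauge relative to the reference metric $g_{b_0}$, augmented by \emph{constraint damping}: one replaces the Einstein operator by a quasilinear hyperbolic operator of the schematic form $P(g)=\mathrm{Ric}(g)+\lambda g-\delta_g^{*}\Upsilon(g;g_{b_0})+(\text{damping})$, where $\Upsilon$ is the gauge one-form, chosen so that (i) $P(g)=0$ has a well-posed Cauchy problem, (ii) any solution of $P(g)=0$ with $\Upsilon=0$ on $\Sigma_0$ is a genuine Einstein metric, since $\Upsilon$ then solves a wave equation whose damping forces $\Upsilon\equiv 0$ with exponential decay, and (iii) the linearization of $P$ has strictly better spectral properties than the bare linearized Einstein operator. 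All of this is set up on the compactification of the domain of outer communication to a manifold with boundary (conformal infinity to the future) and corner (a cap-off of $\Sigma_0$), i.e.\ in the b-geometry of asymptotically de Sitter spacetimes.

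The analytic heart is a resonance (quasinormal-mode) analysis for the linearized operator $L_b=D_{g_b}P$. Using Vasy's framework — the b-pseudodifferential calculus, radial-point estimates at the red-shifted horizons and at conformal infinity, and semiclassical propagation together with normally hyperbolic trapping estimates at the photon sphere — one meromorphically continues the inverse of the Mellin-transformed normal operator and shows that, for $b$ near $b_0$ and $|a|$ small, the only resonances in the closed upper half-plane $\{\operatorname{Im}\sigma\ge 0\}$ sit at $\sigma=0$; this is the mode-stability input. Constraint damping is precisely what pushes the resonances coming from the constraint part of the system into $\{\operatorname{Im}\sigma<0\}$. What survives at $\sigma=0$ is a finite-dimensional space spanned by linearized pure-gauge solutions $\mathcal{L}_V g_b$ (with $V$ generating asymptotic symmetries) and by the linearized Kerr--de Sitter family $\partial_b g_b$. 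Hence $L_b$ is not invertible but is Fredholm, and $L_b h=f$ is solvable with $O(e^{-\alpha t_*})$ decay provided $f$ is first corrected by subtracting $\partial_b g_b\cdot\dot b+\mathcal{L}_V g_b$ for a suitable parameter increment $\dot b$ and vector field $V$ — that is, provided one simultaneously updates the black-hole parameters.

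This forces the nonlinear scheme to carry $b$ as an extra unknown: one seeks $g=g_b+\tilde g$ and $b$ together, solving $P(g)=0$ by a modified Newton iteration in which, at each step, $\dot b$ (and a gauge increment) are chosen to annihilate the obstruction to solvability. Because the resulting modified inverse $f\mapsto(h,\dot b)$ loses a fixed finite number of derivatives — due to the quasinormal-mode subtraction and the need to invert on a complement — the iteration must be run as a Nash--Moser scheme. What one must establish for this is a \textbf{tame} linear theory: the solvability estimate for $D_gP$ must hold not only at $g_b$ but uniformly for $g$ in a neighborhood, with a fixed loss of derivatives and at most polynomial dependence on high weighted b-Sobolev norms of $g$, together with the usual smoothing operators. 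Granting these, standard Nash--Moser convergence yields $\tilde g$ with exponential decay and a limiting $b$ close to $b_0$, quantitatively controlled by $\|(h,k)-(h_{b_0},k_{b_0})\|$. One then checks that the constructed $g$ attains the prescribed geometric data $(h,k)$ on $\Sigma_0$ — here the constraint equations and a careful choice of $\Upsilon$ and its normal derivative on $\Sigma_0$ are used — and that $\Upsilon(g;g_{b_0})\equiv 0$, so that $g$ solves the true Einstein equations; the decay rate $\alpha$ is read off from the spectral gap below $\sigma=0$.

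I expect the main obstacle to be the combination of the \emph{zero resonance with a nontrivial, geometrically meaningful cokernel} (Kerr--de Sitter parameters plus non-decaying gauge modes) with the \emph{quasilinear} nature of the equations: one cannot simply invert the linearization but must invert it modulo a finite-dimensional space while simultaneously moving the target of the iteration (the parameter $b$), and prove that this modified inverse obeys tame estimates \emph{uniformly} along the Nash--Moser iteration. Securing those uniform, loss-controlled estimates — which rests on the entire microlocal package (radial points at the horizons, normally hyperbolic trapping, semiclassical high-energy bounds) being stable under perturbation of the metric — is the technical core, while the precise resonance count showing that constraint damping removes every obstruction except the expected geometric ones is the conceptual core.
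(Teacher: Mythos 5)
This statement is not proven in the paper at all: it appears only in the Outlook (Section \ref{outlook}) as an informal quotation of Theorem 1.4 of Hintz--Vasy \cite{peter3}, cited as context for related nonlinear stability results. There is therefore no proof in the paper against which your proposal can be compared; the authoritative argument lives entirely in \cite{peter3}.

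Judged against that source, your outline does capture the genuine architecture of the Hintz--Vasy proof: the wave-map/DeTurck gauge with constraint damping, Vasy's b-calculus Fredholm framework with radial-point estimates at the horizons, normally hyperbolic trapping at the photon sphere, the meromorphic continuation and the resonance count showing that the only non-decaying modes at $\sigma=0$ are linearized Kerr--de Sitter parameter changes and pure-gauge solutions, and the resulting Nash--Moser iteration in which the black-hole parameters $b$ are carried as additional unknowns to absorb the finite-dimensional obstruction. So you have identified essentially the same route as the original. However, as a proof it remains a roadmap: every load-bearing step --- the mode-stability input for small $|a|$, the uniform tame estimates for the linearization at nearby metrics, the verification that constraint damping actually moves the constraint resonances into the lower half-plane, the compatibility of the gauge choice with attaining the prescribed data $(h,k)$ on $\Sigma_0$, and the extraction of the rate $\alpha$ from the spectral gap --- is asserted rather than established, and each of these occupies a substantial portion of \cite{peter3}. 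If the task were to supply a self-contained proof, this would be the gap; if the task is to reflect what the present paper does, the correct statement is simply that the theorem is quoted without proof.
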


\subsection{Non-linear perturbations without symmetry on \textit{interior} backgrounds}

As mentioned in the last section, in combination with the exterior results \cite{luk_oh2}, Luk and Oh proved the following Theorem.
\begin{thm}
\lb{luk_oh_sccc} ($C^2$ stability result for Einstein-Maxwell-(real)-scalar field system, Luk and Oh, \cite{luk_oh1}). The $C^2$-formulation of the Strong Cosmic Censorship Conjecture
for the Einstein-Maxwell-(real)-scalar field system in spherical symmetry with 2-ended asymptotically
at initial data on $R\times \bbS_2$ is true.
\end{thm}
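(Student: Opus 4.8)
\emph{Proof proposal.} The plan is to reduce the $C^2$-formulation of Strong Cosmic Censorship for this system to a quantitative blow-up statement for a geometric quantity --- the renormalised Hawking mass $m$ --- along a portion of the Cauchy horizon $\cC\cH^+$, valid for solutions arising from a \emph{generic} set of two-ended asymptotically flat Cauchy data on $\bbR\times\bbS^2$ (``generic'' meaning the complement of an exceptional set shown to be non-generic, of ``infinite codimension'' in a suitable topology). Using spherical symmetry one works on the quotient $\cQ=\cM/SO(3)$ in Eddington--Finkelstein-like double null coordinates $(u,v)$, with area-radius $r$, Hawking mass $m$, and scalar field $\phi$, for which the Einstein--Maxwell--real-scalar-field equations reduce to a closed system of wave and transport equations. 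The first step is the classical observation that if $m(u,v)\to+\infty$ as $(u,v)$ approaches a point $p\in\cC\cH^+$, then the Lorentzian manifold admits no $C^2$ extension through $p$, since a curvature invariant bounded below by a positive multiple of $m^2 r^{-6}$ (with $r$ bounded away from $0$ near $i^+$) would be unbounded; hence it suffices to produce such mass blow-up on a neighbourhood of a point of $\cC\cH^+$ for generic data.

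The second step is the interior characteristic analysis, structurally parallel to the present paper. One first uses Dafermos's earlier interior results to exhibit a bifurcate Cauchy horizon $\cC\cH^+$ emanating from $i^+$ with $r$ bounded below, and imports from the exterior theory the $L^2$ control of the relevant quantities along the event horizon $\cH^+$. One then runs the redshift/noshift/blueshift decomposition: the \emph{upper} bounds propagate essentially as in this work (they even yield $C^0$-extendibility of the metric and of $\phi$), while the genuinely new ingredient is the \emph{lower} bound. Starting from an $L^2$-averaged polynomial lower bound $\int_v^{2v}(\partial_v\phi)^2(-\infty,v')\,dv'\gtrsim v^{-p}$ along $\cH^+$, one must show that the blueshift region amplifies it so that $\int(\partial_V\phi)^2\,dV$ diverges as an ingoing null ray reaches $\cC\cH^+$. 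Feeding this divergence into the Raychaudhuri-type equation governing $\partial_u\log\Omega$ (equivalently the transport equation for $r^{-1}\partial_v m$), and using $r\gtrsim 1$ near $i^+$, forces $m\to+\infty$ along $\cC\cH^+$ near $i^+$.

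The third step is genericity. Here one invokes the exterior result of Luk and Oh, \cite{luk_oh2}, that a generic set of Cauchy data produces, along $\cH^+$, \emph{two-sided} polynomial bounds for $\phi$: a Price-law upper bound together with a matching $L^2$-averaged lower bound. The exceptional set on which the lower bound fails is cut out by the vanishing of the leading late-time asymptotic coefficient --- a fixed functional of the data --- and is therefore non-generic. Combining with the second step yields mass blow-up on a neighbourhood of a point of $\cC\cH^+$ for every such generic solution; a Cauchy-stability and compactness argument then upgrades $C^2$-inextendibility to a full neighbourhood of $\cC\cH^+$, and finally to the entire maximal globally hyperbolic development.

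The main obstacle I expect is the fragility of the lower bound and its interaction with genericity. Unlike the upper bounds exploited in the present paper, which are robust under the error terms generated by the nonlinear coupling and by the interior geometry, the lower bound can in principle be destroyed by destructive interference between the amplified ``main term'' carried in from $\cH^+$ and these error terms; ruling this out --- and simultaneously identifying exactly which functional of the Cauchy data governs the leading-order exterior asymptotics, so that its non-vanishing is a generic condition --- is the technically decisive part. Establishing such sharp two-sided asymptotics in the exterior, rather than merely the one-sided upper bounds sufficient for boundedness results of the type proved here, is the crucial input and the hardest step of the argument.
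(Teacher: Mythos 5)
This statement is not proved in the paper at all: Theorem \ref{luk_oh_sccc} appears in the Outlook (Section \ref{outlook}) purely as a citation of the work of Luk and Oh, \cite{luk_oh1}, combined with their exterior result \cite{luk_oh2}, so there is no ``paper proof'' against which your argument can be checked. What you have written is a plausible high-level roadmap of the Luk--Oh two-paper strategy (interior conditional $C^2$-inextendibility given lower bounds on $\cH^+$, plus exterior genericity of those lower bounds), but it is an outline of someone else's theorem rather than a proof, and as a proof it has genuine gaps.

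Concretely: (i) the reduction of the $C^2$-formulation of strong cosmic censorship to ``$m\to\infty$ near one point of $\cC\cH^+$'' is too quick. The $C^2$ statement requires ruling out $C^2$ extensions across \emph{every} portion of the boundary of the maximal globally hyperbolic development, whose global structure in the two-ended charged-scalar-field setting (bifurcate Cauchy horizon, possible $r=0$ spacelike pieces, null portions emanating from the center, cf.\ \cite{kommemi}) must first be controlled; curvature blow-up in a neighbourhood of a single point of $\cC\cH^+$ near $i^+$ does not by itself preclude extensions elsewhere, and your ``Cauchy-stability and compactness argument'' upgrading this to the full statement is precisely where the real work lies. (ii) Your second step assumes that the $L^2$-averaged polynomial lower bound on $\cH^+$ can be fed through the blueshift region to force mass inflation in the same way as Dafermos's argument under \emph{pointwise} lower bounds; with only averaged lower bounds this propagation can in principle be destroyed by the interior error terms (you acknowledge this), and Luk--Oh's handling of exactly this point — including the possibility that mass inflation itself is not directly recovered and $C^2$-inextendibility must be argued by contradiction against a hypothetical extension — is the substantive content of \cite{luk_oh1}, not something your sketch supplies. (iii) The genericity claim, that the exceptional data set is cut out by the vanishing of a single leading asymptotic coefficient, is asserted rather than established; the characterization of the generic set and the proof of the two-sided bounds along $\cH^+$ is the content of \cite{luk_oh2}. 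So the proposal correctly identifies the architecture of the known proof but does not constitute one, and none of the missing steps can be filled in from the estimates developed in the present paper, which concern the linear wave equation on a fixed Kerr background rather than the coupled spherically symmetric system.
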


A first non-linear step toward investigation of null singularities in vacuum spacetime was achieved by Luk.
In \cite{luk3} he managed to construct examples of local patches of vacuum spacetimes with weak null singular boundary.
In subsequent work of Dafermos and Luk \cite{m_luk} it was shown that fully non-linear perturbations of Kerr spacetime do not lead to formation of a strong spacelike singularity. In fact $C^0$ stability holds up to $\cC\cH^+$. Putting together insights from preceding investigations for the wave equation on fixed background as well as coupled scalar field analyzes, see \cite{m_price, m_stab, m_interior, anne}, they were able to prove the following Theorem as already stated in \cite{m_survey}.
\begin{thm}
\lb{kerr_cauchy}
(Global stability of the Kerr Cauchy horizon, Dafermos and Luk, \cite{m_luk}). Consider characteristic
initial data for \eqref{EF2} on a bifurcate null hypersurface $\cH^+ \cup \cH^+$, where $\cH^{\pm}$
have future-affine complete null generators and their induced geometry dynamically
approaches that of the event horizon of Kerr with \mbox{$0 < |a| < M$} at a sufficiently
fast polynomial rate. Then the maximal development can
be extended beyond a bifurcate Cauchy horizon $\cC\cH^+$ as a Lorentzian manifold with
$C^0$ metric. All finitely-living observers pass into the extension.
\end{thm}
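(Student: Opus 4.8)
The plan is to upgrade the linear scheme of the present paper to the full vacuum system \eqref{EF2} by working in the double-null gauge \eqref{kerrmetric} and running a large bootstrap argument in a characteristic rectangle emanating from the bifurcate event horizon and reaching up to (and beyond) $\cC\cH^+$. First I would decompose $R_{\mu\nu}=0$ with respect to the null frame associated to the optical functions $u,v$: this produces the null structure equations (the two Raychaudhuri equations, the transport equations for the connection coefficients $\mathrm{tr}\chi$, $\hat\chi$, $\mathrm{tr}\underline\chi$, $\hat{\underline\chi}$ and the remaining Ricci coefficients, and the Gauss--Codazzi constraints) together with the Bianchi equations for the null components of the Weyl curvature. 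The initial-data hypothesis---future affine-complete generators on $\cH^+\cup\underline{\cH}^+$ whose induced geometry converges at a sufficiently fast polynomial rate to that of the Kerr event horizon---supplies, as in Theorem \ref{anfang} and Proposition \ref{initialdataprop}, the seed bounds for these Ricci coefficients and curvature components and their angular derivatives on the two initial null segments.

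Next I would propagate these bounds inward, reusing the three-region structure of Section \ref{bnrsection}. In the redshift region $\cR$ one integrates the transport equations against a timelike commutator/multiplier pair adapted to $\kappa_+>0$, the nonlinear analogue of Proposition \ref{mi}, gaining the polynomial decay with no loss. In the no-shift region $\cN$ one exploits the approximate $\partial_t$-invariance and the bounded affine length to control all bulk and error contributions, as in Lemma \ref{higher_r}. The essential region is the blueshift region $\cB$ adjacent to $\cC\cH^+$: there I would run weighted energy estimates for the renormalised curvature and Ricci-coefficient fluxes with precisely the $|u|^p$, $v^p$ weights of \eqref{Sfield}, and I would re-derive from the Raychaudhuri equation the pointwise decay of $\Omega^2$ stated in \eqref{spacevolumedecay}, together with the finiteness $\operatorname{Vol}(J^+(\gamma))<C$ of \eqref{finitevol}. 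The point is that although the curvature generically blows up at the blueshift rate, the chosen weights are exactly strong enough to keep $\Omega^2$, $\gin_{\theta_C\theta_D}$ and $b^{\theta_C}$ bounded in $C^0$ and to force them to converge as $v\to\infty$ (and, symmetrically, as $u\to\infty$), which is what continuous extendibility of $g$ requires.

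The hard part will be closing the top-order energy estimates in the presence of the nonlinear and commutator error terms. In the vacuum problem every Bianchi and transport equation carries quadratic terms in the unknowns, and commuting the system with the angular operators $Y_i$ of Section \ref{angular} produces further cubic errors of the schematic form \eqref{error_structure3}; these must be absorbed using the smallness of $\Omega^2$ deep in $\cB$, the finite spacetime volume to the future of $\gamma$, and anisotropic Cauchy--Schwarz exactly as in Lemma \ref{K_S_0_E}, Lemma \ref{K_spher} and Proposition \ref{kastle}, while keeping the weights sharp enough that the estimates still reach $v=\infty$. A further difficulty absent from the linear problem is the gauge itself: one must show within the bootstrap that $u$ and $v$ remain regular optical functions all the way to $\cC\cH^+$, i.e.\ that the double-null foliation does not degenerate prematurely. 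I expect this interplay---controlling nonlinear top-order errors without sacrificing the sharp polynomial and exponential weights, and simultaneously propagating the gauge---to be the main obstacle, precisely the place where the analysis of the characteristic rectangle $\Xi$ in Section \ref{energy_interior} must be genuinely strengthened.

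Finally, having shown that $\Omega^2$, $\gin_{\theta_C\theta_D}$ and $b^{\theta_C}$ extend continuously to the limiting null hypersurfaces $\{v=\infty\}$ and $\{u=\infty\}$ and to a bifurcate Cauchy horizon $\cC\cH^+$, I would construct an ambient Lorentzian manifold with $C^0$ metric containing $(\cM,g)$ and in which $\cC\cH^+$ is an interior null hypersurface, simply by gluing the limiting data. The claim that all finitely-living observers pass into the extension then follows: any causal curve of finite affine length in $(\cM,g)$ must, by the uniform bounds on $\Omega^2$ and $\gin$, accumulate at a point of $\cC\cH^+$ in finite parameter, and the $C^0$ extension provides a spacetime neighbourhood of that point into which the curve continues.
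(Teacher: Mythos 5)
There is no proof of Theorem \ref{kerr_cauchy} in this paper for you to be measured against: the theorem is quoted in the outlook (Section \ref{outlook}) as a result of Dafermos and Luk \cite{m_luk}, whose proof is an independent, book-length nonlinear analysis. Judged on its own terms, your text is a research programme rather than a proof. Every genuinely hard step is named and then deferred: ``the hard part will be closing the top-order energy estimates,'' ``I expect this interplay \dots to be the main obstacle.'' Writing down the null structure and Bianchi equations, invoking the $\cR$/$\cN$/$\cB$ splitting and the $|u|^p,v^p$ weights, and asserting that the weights are ``exactly strong enough'' to keep $\Omega^2$, $\gin$ and $b^{\theta_C}$ bounded in $C^0$ is precisely the content of the theorem, asserted rather than established; nothing in the linear scheme of Sections \ref{energy_interior}--\ref{globalenergy} closes the quadratic and cubic error terms of the vacuum system, and you give no mechanism for doing so.

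Two steps would moreover fail as stated. First, the noshift argument you import from Lemma \ref{Kr} rests on exact invariance under the Killing field $\partial_t$; in the dynamical vacuum setting there is no Killing field, only convergence to Kerr along the data, so uniformity of the bulk constants must be re-proven from quantitative closeness, not from symmetry. Second, your plan to run ``weighted energy estimates for the renormalised curvature \dots fluxes'' up to $\cC\cH^+$ and deduce $C^0$ convergence of the metric from them ignores that, by the very blueshift instability results quoted alongside this theorem (Luk--Oh, Luk--Sbierski), non-degenerate curvature and energy quantities are expected to blow up at the Cauchy horizon. The actual strategy of \cite{m_luk} must therefore be arranged so that only suitably degenerate, weighted quantities are estimated, with the $C^0$ bounds on $\Omega^2$, $\gin$ and $b$ extracted from transport equations fed by those degenerate energies, and with the regularity of the double-null gauge propagated inside the same bootstrap; your sketch acknowledges the gauge issue but supplies no argument. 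In short: correct general orientation, but the proposal contains no proof of the statement, and the paper itself offers none to compare it with.
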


\section*{Acknowledgements}
I wish to express my gratitude towards Mihalis Dafermos for inspiring this problem and for comments on the manuscript. 
Further, I would like to thank Pedro Gir\~ao and Jos\'e Nat\'ario for sharing their insights that helped me completing this work. I also benefited from discussions with Dejan Gajic, Philipp Kuhn, George Moschidis and Sashideep Gutti.   
This work was partially supported by FCT/Portugal through UID/MAT/04459/2013, grant (GPSEinstein) PTDC/MAT-ANA/1275/2014 and SFRH/BPD/115959/2016.

\begin{appendix}

\section{The $K$-current}
\lb{Kcurrents}
In order to compute all scalar currents according to \eqref{K}
in $(u,v)$ coordinates we first derive the components of the deformation tensor which is given by
\bea
\lb{deftensor}
(\pi^X)^{\mu \nu}=\frac12(g^{\mu \lambda}\partial_{\lambda}X^{\nu}+g^{\nu \sigma}\partial_{\sigma}X^{\mu}+g^{\mu \lambda}g^{\nu \sigma}g_{\lambda \sigma, \delta}X^{\delta}),
\eea
where $X$ is an arbitrary vector field, \mbox{$X=X^u \partial_u+X^v \partial_v+X^{\theta_C}\partial_{\theta_C}$}, with $X^u$, $X^v$ and $X^{\theta_C}$ depending on $u,v,\theta^{\star}, \tilde{\phi}$.
From this we obtain
\bea
\lb{pi-terme}
(\pi^X)^{v v}&=&-\frac{1}{2\Omega^2}\partial_u X^v,\\
(\pi^X)^{u u}&=&-\frac{1}{2\Omega^2}\partial_v X^u,\\
(\pi^X)^{u v}&=&-\frac{1}{4\Omega^2}\partial_uX^{\theta_C}-\frac{b^{\theta_C}}{4\Omega^2}\partial_uX^v +\frac12 (\gin^{-1})^{{\theta_C}{\theta_D}}\partial_{\theta_D} X^v ,\\
(\pi^X)^{u {\theta_C}}&=&-\frac{b^{\theta_D}}{4\Omega^2}\partial_{{\theta_D}}X^{\theta_C}-\frac{1}{4\Omega^2}\partial_v X^{\theta_C}-\frac{b^{\theta_C}}{4\Omega^2}\partial_uX^u+\frac12 (\gin^{-1})^{{\theta_C}{\theta_D}}\partial_{\theta_D} X^u \nonumber\\
&&-\frac{b^{\theta_C}}{2\Omega^2}\frac{\partial_{\eta} \Omega }{\Omega}X^{\eta} 
+\frac{1}{4\Omega^2}(\gin^{-1})^{{\theta_C}{\theta_D}}\partial_{\eta}b_{\theta_D} X^{\eta}\nonumber\\
&&-\frac{b^{\theta_D}}{4\Omega^2}(\gin^{-1})^{{\theta_C}{\theta_A}}\partial_{\eta}\gin_{{\theta_D}{\theta_A}} X^{\eta},\\
(\pi^X)^{{\theta_C} {\theta_D}}&=&-\frac{b^{\theta_C}}{4\Omega^2}\partial_U X^{\theta_D}+ \frac12(\gin^{-1})^{{\theta_C}{\theta_A}}\partial_{{\theta_A}}X^{\theta_D}\nonumber\\
&&-\frac{b^{\theta_D}}{4\Omega^2}\partial_u X^{\theta_C}+ \frac12(\gin^{-1})^{{\theta_D}{\theta_A}}\partial_{{\theta_A}}X^{\theta_C}\nonumber\\
&&+\frac12 (\gin^{-1})^{{\theta_C}{\theta_A}}(\gin^{-1})^{{\theta_D}{\theta_B}}\partial_{\eta} \gin_{{\theta_A}{\theta_B}} X^{\eta},
\eea
with $A,B,C,D=1,2$, $\theta_1=\theta^{\star}$, $\theta_2=\tilde{\phi}$ ,$\zeta=u,v$ and $\eta=u,v,\theta^{\star}, \tilde{\phi}$.
From \eqref{energymomentum} we calculate the components of the energy momentum tensor in $(u,v)$ coordinates as
\bea
T^{KG}_{v v}&=&(\partial_v \psi)^2+\frac{1}{2}|b|^2\left(\frac{1}{\Omega^2}(\partial_u\psi)( \partial_v \psi+{b^{\theta_C}}\partial_{{\theta_C}} \psi)-|\nabb \psi|^2\right),\\
T^{KG}_{u u}&=&(\partial_u \psi)^2,\\
T^{KG}_{u v}&=&T^{KG}_{v u}=\Omega^2|\nabb \psi|^2-{b^{\theta_C}}(\partial_u\psi \partial_{{\theta_C}} \psi),\\
T^{KG}_{v {\theta_C}}&=&(\partial_v \psi\partial_{{\theta_C}} \psi)-\frac{b_{\theta_C}}{2}\left(\frac{1}{\Omega^2}(\partial_u\psi)( \partial_v \psi+{b^{\theta_D}}\partial_{{\theta_D}} \psi)-|\nabb \psi|^2\right),\\
T^{KG}_{u {\theta_C}}&=&(\partial_u \psi\partial_{{\theta_C}} \psi),\\
T^{KG}_{{\theta_C} {\theta_D}}&=&(\partial_{{\theta_C}} \psi\partial_{{\theta_D}} \psi)+\frac{1}{2}\gin_{{\theta_C}{\theta_D}}\left(\frac{1}{\Omega^2}(\partial_u\psi)( \partial_v \psi+{b^{\theta_A}} \partial_{{\theta_A}} \psi)-|\nabb \psi|^2
\right),
\eea
with $b^{\theta_C}b_{\theta_C}=|b|^2$ and $\gin_{{\theta_C}{\theta_D}}b^{\theta_D}=b_{\theta_C}$.

Multiplying the components according to \eqref{K} 
in Eddington--Finkelstein-like coordinates 
we get 
\bea
\lb{Kplug_edd-f}
K^X&=& -\left[\frac{1}{2\Omega^2}\partial_u X^v \right](\partial_v \psi)^2\nonumber\\
&&-\left[\frac{1}{2\Omega^2}\partial_v X^u \right](\partial_u \psi)^2\nonumber\\
&&+\left[-\frac{\partial_{\eta} X^{\eta}}{2} -\frac{\partial_{\eta} \Omega}{\Omega} X^{\eta}-\frac14 (\gin^{-1})^{{\theta_C}{\theta_D}}\partial{\eta}\gin_{{\theta_C}{\theta_D}}X^{\eta}+b^{\tilde{\phi}}\partial_{\tilde{\phi}} X^v \right]|\nabb \psi|^2\nonumber\\
&&+\left[\frac{\partial_{\theta_C} X^{\theta_C}}{2}+ \frac14 (\gin^{-1})^{{\theta_C}{\theta_D}}\partial{\eta}\gin_{{\theta_C}{\theta_D}}X^{\eta}-b^{\tilde{\phi}}\partial_{\tilde{\phi}} X^v\right]\frac{1}{\Omega^2}(\partial_u \psi)(\partial_v \psi+b^{\tilde{\phi}}\partial_{\tilde{\phi}}\psi)\nonumber\\
&&+\left[-\frac{b^{\tilde{\phi}}}{2\Omega^2}\partial_{{\tilde{\phi}}}X^{\theta_C}-\frac{1}{2\Omega^2}\partial_v X^{\theta_C}-\frac{b^{\theta_C}}{2\Omega^2}\partial_u X^u+ (\gin^{-1})^{{\theta_C}{\theta_D}}\partial_{\theta_D} X^u\right.\nonumber \\
&&\left.\qquad -\frac{b^{\theta_C}}{\Omega^2}\frac{\partial_{\eta} \Omega }{\Omega}X^{\eta} 
+\frac{\delta^{\theta_C}_{\; \; \tilde{\phi}}}{2\Omega^2}\partial_{\eta}b^{\tilde{\phi}} X^{\eta}
\right](\partial_u \psi\partial_{{\theta_C}} \psi)\nonumber\\
&&+\left[-\frac{1}{2\Omega^2}\partial_uX^{{\theta_C}}-\frac{b^{\theta_C}}{2\Omega^2}\partial_uX^{v}+ (\gin^{-1})^{{\theta_C}{\theta_D}}\partial_{\theta_D} X^v \right](\partial_v \psi\partial_{{\theta_C}} \psi)\nonumber\\
&&+\left[-\frac{b^{\theta_C}}{4\Omega^2}\partial_uX^{\theta_D}-\frac{b^{\theta_D}}{4\Omega^2}\partial_uX^{\theta_C}+\frac12 (\gin^{-1})^{{\theta_C}{\theta_A}}\partial_{\theta_A} X^{\theta_D}+\frac12 (\gin^{-1})^{{\theta_D}{\theta_A}}\partial_{\theta_A} X^{\theta_C}\right.\nonumber\\
&&\left. \quad+ \frac12(\gin^{-1})^{{\theta_C}{\theta_A}}(\gin^{-1})^{{\theta_D}{\theta_B}}\partial{\eta}\gin_{{\theta_A}{\theta_B}}X^{\eta} \right](\partial_{{\theta_C}} \psi\partial_{{\theta_D}} \psi),
\eea
also recall \eqref{det_regel}.

\section{The $J$-currents and normal vectors}
\lb{Jcurrents}
For the $J$-currents, according to equation \eqref{J}, we obtain
\bea
\lb{Jplugnv}
J^X_{\mu}n^{\mu}_{v=const}&=& \frac{1}{2\Omega^2}\left[X^u(\partial_u \psi)^2+[X^{\theta_C}-X^vb^{\theta_C}](\partial_u\psi \partial_{{\theta_C}} \psi)+ \Omega^2 X^v|\nabb \psi|^2\right],\\
\lb{Jplugnu}
J^X_{\mu}n^{\mu}_{u=const}&=&\frac{1}{2\Omega^2}\left[X^v(\partial_v\psi)^2
+\left[X^{\theta_C}+ X^vb^{\theta_C}\right](\partial_v\psi \partial_{{\theta_C}} \psi)+{b^{\theta_C}}X^{\theta_D}(\partial_{{\theta_C}}\psi \partial_{{\theta_D}} \psi)\right.\nonumber\\
&&\left. +X^u{\Omega^2}|\nabb \psi|^2\right],\\
\lb{Jplugnr}
J^X_{\mu}n^{\mu}_{r^{\star}=const}&=&\frac{1}{\sqrt{2\Omega^2}}\left[X^v(\partial_v\psi)^2 +  X^u(\partial_u \psi)^2+[X^{\theta_C}-X^vb^{\theta_C}](\partial_u\psi \partial_{{\theta_C}} \psi)
\right.\nonumber\\
&&\left.+\left[X^{\theta_C}+ X^vb^{\theta_C}\right](\partial_v\psi \partial_{{\theta_C}} \psi)+{b^{\theta_C}}X^{\theta_D}(\partial_{{\theta_C}}\psi \partial_{{\theta_D}} \psi)\right. \nonumber\\
&&\left.+\left(X^u+X^v\right){\Omega^2}|\nabb \psi|^2\right].
\eea
The normal vectors $n^{\mu}_{u=const}$ and $n^{\mu}_{v=const}$ where already stated in \eqref{nu} and \eqref{nv}, respectively. Further, we have
\bea
\lb{normal_r}
n^{\mu}_{r^{\star}=const}&=&\frac{1}{\sqrt{2\Omega^2}}(\partial_u+\partial_v+b^{\theta_C}\partial_{\theta_C}). 
\eea
Moreover, note that for large $v$ the current $J^X_{\mu}n^{\mu}_{\gamma}$ approximates $J^X_{\mu}n^{\mu}_{r^{\star}=const}$  and likewise the normal vector $n^{\mu}_{\gamma}$ approximates $n^{\mu}_{r^{\star}=const}$, where $\gamma$ is as defined in Section \ref{gamma_curve}.

\section{The redshift vector field}
\lb{redshift_app}
We construct the redshift vector field by defining
\bea
\lb{N_red}
N=N^u \partial_u+N^v (\partial_v+b^{\tilde{\phi}}\partial_{\tilde{\phi}}),
\eea
where $N^u$ and $N^v$ depend on $(u,v,\theta^{\star})$. We further require that $N$ is timelike so that we have the condition
\bea
\lb{ncon}
g(N,N)<0.
\eea
From the above we obtain, that $N^u$ and $N^v$ have to have the same sign and should be positive, so that the vector field is future directed. 
Moreover, we have already introduced the vector field $T_{\cH^+}$ in the end of Section \ref{ambient}, see \eqref{T_H}. Adjusting to our coordinates, the vector field 
\bea
\lb{T_H2}
T_{\cH^+}=\frac{\partial_v}{2}-\frac{\partial_u}{2}+\omega_+ \partial_{\tilde{\phi}}
\eea
satisfies the condition of being null at $\cH^+$, spacelike in the interior and timelike in the exterior. The constant $\omega_+$ is defined by $\omega_B$, see equation \eqref{omega_B}, evaluated at $r=r_+$.  
We can now choose $N$ such that
\bea
\lb{ntcon}
g(N,T_{\cH^+})|_{\cH^+}=-2.
\eea
Further, we use \eqref{Kplug_edd-f} to calculate
\bea
\lb{KplugN}
K^N&=& -\left[\frac{1}{2\Omega^2}\partial_u N^v \right](\partial_v \psi)^2\nonumber\\
&&-\left[\frac{1}{2\Omega^2}\partial_v N^u \right](\partial_u \psi)^2\nonumber\\
&&+\left[-\frac{\partial_{u} N^{u}}{2}-\frac{\partial_{v} N^{v}}{2} -\frac{\partial_{u} \Omega}{\Omega} N^{u}-\frac{\partial_{v} \Omega}{\Omega} N^{v}\right.\nonumber\\
&& \quad \left.-\frac14 (\gin^{-1})^{{\theta_C}{\theta_D}}\left(\partial_{u}\gin_{{\theta_C}{\theta_D}}N^{u}+\partial_{v}\gin_{{\theta_C}{\theta_D}}N^{v}\right)\right]|\nabb \psi|^2\nonumber\\
&&+\left[\frac14 (\gin^{-1})^{{\theta_C}{\theta_D}}\left(\partial_{u}\gin_{{\theta_C}{\theta_D}}N^{u}+\partial_{v}\gin_{{\theta_C}{\theta_D}}N^{v}\right)\right]\frac{1}{\Omega^2}(\partial_u \psi)(\partial_v \psi+b^{\tilde{\phi}}\partial_{\tilde{\phi}}\psi)\nonumber\\
&&+\left[-\frac{b^{\tilde{\phi}}}{2\Omega^2}\left( \partial_v N^v +\partial_u N^u\right)+ (\gin^{-1})^{{\tilde{\phi}}\theta^{\star}} \partial_{\theta^{\star}}N^u
-\frac{b^{\tilde{\phi}}}{\Omega^2}\left[\frac{\partial_u \Omega}{\Omega}N^u+\frac{\partial_v \Omega}{\Omega}N^v\right] \right.\nonumber\\
&&\left.\qquad+\frac{1}{2\Omega^2}\partial_u b^{\tilde{\phi}} N^u\right](\partial_u \psi\partial_{\tilde{\phi}} \psi)\nonumber\\
&&+\left[ (\gin^{-1})^{\theta^{\star} \theta^{\star}} \partial_{\theta^{\star}}N^u
\right](\partial_u \psi\partial_{\theta^{\star}} \psi)\nonumber\\
&&+\left[-\frac{1}{2\Omega^2} \partial_v(N^v b^{\tilde{\phi}}) -\frac{b^{\tilde{\phi}}}{2\Omega^2}\partial_u N^v+ (\gin^{-1})^{{\tilde{\phi}}\theta^{\star}} \partial_{\theta^{\star}}N^v\right](\partial_v \psi\partial_{\tilde{\phi}} \psi)\nonumber\\
&&+\left[(\gin^{-1})^{\theta^{\star}\theta^{\star}} \partial_{\theta^{\star}}N^v\right](\partial_v \psi\partial_{\theta^{\star}} \psi)\nonumber\\
&&+\left[-\frac{b^{\tilde{\phi}}}{2\Omega^2}\partial_u\left(N^vb^{\tilde{\phi}}\right)+(\gin^{-1})^{{\tilde{\phi}} \theta^{\star}}\partial_{\theta^{\star}}\left(N^vb^{\tilde{\phi}}\right)\right.\nonumber\\
&&\left. \quad+ \frac12(\gin^{-1})^{{\tilde{\phi}}{\theta_C}}(\gin^{-1})^{{\tilde{\phi}}{\theta_D}}\left(\partial_{u}\gin_{{\theta_C}{\theta_D}}N^{u}+\partial_{v}\gin_{{\theta_C}{\theta_D}}N^{v}\right) \right](\partial_{\tilde{\phi}} \psi)^2\nonumber\\
&&+\left[\frac12(\gin^{-1})^{{\theta^{\star}}{\theta_C}}(\gin^{-1})^{{\theta^{\star}}{\theta_D}}\left(\partial_{u}\gin_{{\theta_C}{\theta_D}}N^{u}+\partial_{v}\gin_{{\theta_C}{\theta_D}}N^{v}\right) \right](\partial_{\theta^{\star}} \psi)^2\nonumber\\
&&+\left[(\gin^{-1})^{{\theta^{\star}}{\theta^{\star}}}\partial_{\theta^{\star}}\left(N^vb^{\tilde{\phi}}\right)+ \frac12(\gin^{-1})^{{\theta^{\star}}{\theta_C}}(\gin^{-1})^{{\tilde{\phi}}{\theta_D}}\left(\partial_{u}\gin_{{\theta_C}{\theta_D}}N^{u}+\partial_{v}\gin_{{\theta_C}{\theta_D}}N^{v}\right)  \right]\nonumber\\
&& \qquad \times (\partial_{\theta^{\star}} \psi\partial_{\tilde{\phi}} \psi).
\eea
Recall Paragraph \ref{bounded} which showed that derivatives of the metric coefficients can be bounded by $\Delta$ which is zero at $\cH^+$. The same holds for derivatives of $b^{\tilde{\phi}}$. 
With the choice $N^u, N^v$ positive, $\partial_u N^u$ negative and with large absolute value and $\partial_u N^v$ negative and with big enough absolute value we can prove Proposition \ref{mi} and Lemma \ref{mi_k}. This can be seen from noticing that all dominating terms are rendered positive, remember \eqref{lowerboundu} and applying the Cauchy--Schwarz inequality. 

\section{Commutators}
\lb{commutators}
The following proposition was already proven in \cite{m_lec} and can also be found in \cite{alinhac} Section 6.2. For the sake of completeness will briefly repeat it here.
\begin{prop}
Let $\psi$ be a solution of the scalar wave equation
\bea
\Box_g \psi=f,
\eea
and Y be an arbitrary vector field. Then
\bea
\lb{comm}
\Box_g(Y\psi)=Y(f)+2(\pi^Y)^{\alpha \beta}\nabla_{\alpha}\nabla_{\beta}\psi+2\nabla^{\alpha}(\pi^Y)_{\alpha \mu}\nabla^{\mu}\psi-\nabla_{\mu}(\pi^Y)_{~\alpha}^{\alpha}\nabla^{\mu}\psi.
\eea
\end{prop}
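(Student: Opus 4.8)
\textbf{Proof proposal for the commutator formula \eqref{comm}.}

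The plan is to compute $\Box_g(Y\psi)$ directly by expanding both sides in covariant derivatives and tracking the failure of $Y$ to commute with $\Box_g$. First I would write $\Box_g \phi = g^{\alpha\beta}\nabla_\alpha\nabla_\beta \phi$ for any scalar $\phi$, and apply this to $\phi = Y\psi = Y^\mu \nabla_\mu \psi$. Since $Y\psi$ is a scalar, $\nabla_\beta(Y\psi) = (\nabla_\beta Y^\mu)\nabla_\mu\psi + Y^\mu\nabla_\beta\nabla_\mu\psi$. Applying $\nabla_\alpha$ once more and contracting with $g^{\alpha\beta}$ produces a sum of terms: one in which both derivatives hit $Y$, one in which both hit $\psi$ with a $Y$ prefactor, and two mixed terms. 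The term where both derivatives land on $\psi$ gives $Y^\mu g^{\alpha\beta}\nabla_\alpha\nabla_\beta\nabla_\mu\psi$, which I would rewrite using the hypothesis $\Box_g\psi=f$: commuting $\nabla_\mu$ through the box costs a Ricci term, $g^{\alpha\beta}\nabla_\alpha\nabla_\beta\nabla_\mu\psi = \nabla_\mu(\Box_g\psi) + R_\mu{}^\nu\nabla_\nu\psi = \nabla_\mu f + R_\mu{}^\nu\nabla_\nu\psi$. On Kerr, $R_{\mu\nu}=0$ by \eqref{EF2}, so this term collapses to exactly $Y^\mu\nabla_\mu f = Y(f)$, which is the first term on the right-hand side of \eqref{comm}.

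Next I would collect the remaining terms and re-express them through the deformation tensor $(\pi^Y)_{\mu\nu} = \tfrac12(\mathcal L_Y g)_{\mu\nu} = \nabla_{(\mu}Y_{\nu)}$. The key algebraic identities are: the symmetric part of $\nabla_\alpha Y_\beta$ is $(\pi^Y)_{\alpha\beta}$, and the antisymmetric part, when contracted against the symmetric $g^{\alpha\beta}\nabla_\alpha\nabla_\beta\psi$-type objects or against a symmetrized second-covariant-derivative, drops out or reorganizes into divergence terms. The mixed term $2 g^{\alpha\beta}(\nabla_\alpha Y^\mu)\nabla_\beta\nabla_\mu\psi$ I would split as $2(\pi^Y)^{\beta\mu}\nabla_\beta\nabla_\mu\psi$ plus an antisymmetric piece; the antisymmetric piece contracted with $\nabla_\beta\nabla_\mu\psi$ equals $g^{\alpha\beta}\nabla_{[\alpha}Y^{\mu]}\nabla_\beta\nabla_\mu\psi$, and using the Ricci identity $\nabla_\beta\nabla_\mu\psi = \nabla_\mu\nabla_\beta\psi$ (second derivatives of a scalar are symmetric) one sees it can be absorbed. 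The term $g^{\alpha\beta}(\nabla_\alpha\nabla_\beta Y^\mu)\nabla_\mu\psi = (\Box_g Y^\mu)\nabla_\mu\psi$ must be rewritten using the standard identity relating $\Box_g Y^\mu$ to divergences of the deformation tensor: $\Box_g Y_\mu = 2\nabla^\alpha(\pi^Y)_{\alpha\mu} - \nabla_\mu(\pi^Y)^\alpha{}_\alpha - R_{\mu}{}^\nu Y_\nu$, which again (Ricci-flat) yields $2\nabla^\alpha(\pi^Y)_{\alpha\mu} - \nabla_\mu(\pi^Y)^\alpha{}_\alpha$. This produces the second, third, and fourth terms on the right-hand side of \eqref{comm}.

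The main obstacle is bookkeeping: keeping track of which index contractions symmetrize automatically (because second covariant derivatives of scalars commute and because $g$ is symmetric), and verifying that every antisymmetric remnant of $\nabla_\alpha Y^\mu$ genuinely cancels rather than contributing a spurious term. A clean way to organize this is to observe that the identity \eqref{comm} is tensorial and linear in $\psi$, so it suffices to verify it; and since a proof already appears in \cite{m_lec} and in \cite{alinhac}, Section 6.2, I would present the computation in the compact form above and refer the reader there for the fully expanded index manipulation. One should also note that \eqref{comm} as stated does \emph{not} assume Ricci-flatness in its general form — the Ricci terms cancel between the $\nabla_\mu\Box_g\psi$ commutation and the $\Box_g Y^\mu$ identity — so the formula holds on any Lorentzian manifold, and in particular on Kerr the Ricci terms are simply absent from the start.
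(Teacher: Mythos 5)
Your proposal is correct, but it takes a genuinely different computational route from the paper's. You expand $\Box_g(Y\psi)$ directly: the symmetry of the Hessian of a scalar kills the antisymmetric part of $\nabla_\alpha Y_\beta$ in the mixed term (leaving $2(\pi^Y)^{\alpha\beta}\nabla_\alpha\nabla_\beta\psi$), the Ricci identity converts $Y^\mu g^{\alpha\beta}\nabla_\alpha\nabla_\beta\nabla_\mu\psi$ into $Y(f)$ plus a curvature term, and the divergence identity $\Box_g Y_\mu = 2\nabla^\alpha(\pi^Y)_{\alpha\mu}-\nabla_\mu(\pi^Y)^{\alpha}{}_{\alpha}-R_{\mu}{}^{\nu}Y_\nu$ supplies the first-order terms; the two Ricci contributions then cancel by symmetry of the Ricci tensor, so the identity is unconditional, as you observe at the end. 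The paper instead Lie-differentiates the equation: it writes $Y(\Box_g\psi)=\cL_Y(g^{\alpha\beta}\nabla_\alpha\nabla_\beta\psi)$, uses $\cL_Y g^{\alpha\beta}=-2(\pi^Y)^{\alpha\beta}$, the standard formula expressing $\cL_Y\nabla_\alpha\nabla_\beta\psi-\nabla_\alpha\cL_Y\nabla_\beta\psi$ through first derivatives of $\pi^Y$, and $\cL_Y\nabla_\beta\psi=\nabla_\beta(Y\psi)$, and then solves for $\Box_g(Y\psi)$; in that formulation no curvature terms appear explicitly, being absorbed into the Lie-derivative-of-connection identity. Your route is more elementary (only the Ricci identity is needed) at the cost of verifying the curvature cancellation; the paper's is more compact and manifestly curvature-free. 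One stylistic remark: invoking Ricci-flatness of Kerr in your first two paragraphs is unnecessary and sits slightly at odds with your (correct) closing observation that the formula holds on any Lorentzian manifold --- since the proposition is stated for an arbitrary vector field with no curvature hypothesis, it is cleaner to carry both Ricci terms and let them cancel.
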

\begin{proof}
First we state
\bea
\lb{weq}
Y(\Box_g \psi)=\cL_Y(g^{\alpha \beta}\nabla_{\alpha}\nabla_{\beta}\psi)&=&-2(\pi^Y)^{\alpha \beta}\nabla_{\alpha}\nabla_{\beta}\psi+g^{\alpha \beta}\cL_Y(\nabla_{\alpha}\nabla_{\beta} \psi)=Y(f),\\
\lb{lie1}
\cL_Y(\nabla_{\alpha}\nabla_{\beta} \psi)-\nabla_{\alpha}\cL_Y\nabla_{\beta} \psi&=&\left[(\nabla_{\beta}(\pi^Y)_{\alpha \mu})-(\nabla_{\mu}(\pi^Y)_{\beta\alpha})+(\nabla_{\alpha}(\pi^Y)_{\mu\beta})\right]\nabla^{\mu}\psi,\\
\lb{lie2}
\cL_Y\nabla_{\beta} \psi&=&\nabla_{Y}\nabla_{\beta} \psi+\nabla_{\beta}Y^{\mu}\nabla_{\mu} \psi=\nabla_{\beta}(Y\psi).
\eea
Now we use equation \eqref{lie2} in \eqref{lie1} and the result of that in \eqref{weq}. It then only remains to solve the equation for $\Box_g(Y\psi)$ to obtain \eqref{comm}.
\end{proof}

\section{Error terms}
\lb{error_app}
\subsection{The general structure of error terms}
\lb{error}
In order to prove pointwise boundedness we need to commute with all angular operators, as explained in Section \ref{angular}, which are unfortunately not all Killing. Therefore, we are interested in the error term 
\bea
\lb{errorexpression}
\cE^V(Y\psi)=\Box_g(Y \psi) V(Y\psi),
\eea
resulting from commutation with the vector field $Y$ as defined in \eqref{angular_comm}, according to \eqref{E}.
Since the commutator $[\Box_g, Y]\psi$ is the defining quantity for the first order error term it will be useful to analyze it further. First of all, recall \eqref{comm} and notice, that for the given vector field multiplier only $(\pi^Y)^{uu},(\pi^Y)^{vv}$ and $(\pi^Y)^{v\theta_C}$ are zero and all other terms will contribute, see \eqref{pi-Y-termevv} of Section \ref{S_error}. 
We can easily deduce that \eqref{comm} of Section \ref{commutators} will leave us with the following terms
\bea
\lb{error_structure}
[\Box_g, Y]\psi&=& \tilde{E}_1 (\partial_u \psi) +\tilde{E}_2 (\partial_v \psi) +\tilde{E}_3 (\partial_{\tilde{\phi}} \psi) +\tilde{E}_4 (\partial_{\theta^{\star}} \psi)\nonumber\\
&& +\tilde{F}_1 (\partial_u \partial_{\tilde{\phi}}\psi)+ \tilde{F}_2 (\partial_v \partial_{\tilde{\phi}}\psi) + \tilde{F}_3 (\partial_{\theta^{\star}} \partial_{\tilde{\phi}} \psi)+\tilde{F}_4 (\partial_{\tilde{\phi}}^2 \psi)\nonumber\\
&& +\tilde{G}_1 (\partial_u \partial_{\theta^{\star}} \psi) +\tilde{G}_2 (\partial_v \partial_{\theta^{\star}} \psi)+ \tilde{G}_3 (\partial_{\theta^{\star}}^2\psi)+ \tilde{G}_4 (\partial_u \partial_v \psi).
\eea
For the last term we can use the wave equation itself. See Appendix \ref{wave_equation}, equation \eqref{uv_term} and notice that we can add all these terms to equation \eqref{error_structure}, to obtain 
\bea
\lb{error_structure2}
[\Box_g, Y]\psi&=& {E}_1 (\partial_u \psi) +{E}_2 (\partial_v \psi) +{E}_3 (\partial_{\tilde{\phi}} \psi) +{E}_4 (\partial_{\theta^{\star}} \psi)\nonumber\\
&& + {F}_1 (\partial_u \partial_{\tilde{\phi}}\psi)+ {F}_2 (\partial_v \partial_{\tilde{\phi}}\psi) + {F}_3 (\partial_{\theta^{\star}} \partial_{\tilde{\phi}} \psi)+{F}_4 (\partial_{\tilde{\phi}}^2 \psi)\nonumber\\
&& +  {G}_1 (\partial_u \partial_{\theta^{\star}} \psi)+  {G}_2 (\partial_v \partial_{\theta^{\star}} \psi) + {G}_3 (\partial_{\theta^{\star}}^2\psi).
\eea
with all coefficients changed. Further, we require
\bea
\lb{errorexpression}
\cE^V(Y^2\psi)=\Box_g(Y^2 \psi) V(Y^2\psi),
\eea
So we can express the second commutation via \eqref{comm} as
\bea
\lb{error_structure3}
[\Box_g, Y^2 \psi]&=&\left[ {E}_1 (\partial_v \psi) +{E}_2 (\partial_u \psi) +{E}_3 (\partial_{\tilde{\phi}} \psi) +{E}_4 (\partial_{\theta^{\star}} \psi)\right.\nonumber\\
&& + {F}_1 (\partial_u \partial_{\tilde{\phi}} \psi) + {F}_2 (\partial_v \partial_{\tilde{\phi}} \psi) +{F}_3 (\partial_{\theta^{\star}} \partial_{\tilde{\phi}} \psi)+{F}_4 (\partial_{\tilde{\phi}}^2 \psi)\nonumber\\
&& +  {G}_1 (\partial_u \partial_{\theta^{\star}} \psi) + {G}_2 (\partial_v \partial_{\theta^{\star}} \psi) +{G}_3 (\partial_{\theta^{\star}}^2\psi) \nonumber\\
&& +  {H}_1 (\partial_u \partial_{\theta^{\star}} \partial_{\tilde{\phi}} \psi) + {H}_2 (\partial_v \partial_{\theta^{\star}} \partial_{\tilde{\phi}} \psi) +{H}_3 (\partial_{\theta^{\star}}^2 \partial_{\tilde{\phi}}\psi) \nonumber\\
&& + {I}_1 (\partial_u \partial_{\tilde{\phi}}^2 \psi) + {I}_2 (\partial_v \partial_{\tilde{\phi}}^2 \psi)+{I}_3 (\partial_{\theta^{\star}} \partial_{\tilde{\phi}}^2 \psi)+{I}_4 (\partial_{\tilde{\phi}}^3 \psi)\nonumber\\
&& \left.+  {J}_1 (\partial_u \partial_{\theta^{\star}}^2 \psi) +  {J}_2 (\partial_v \partial_{\theta^{\star}}^2 \psi)+ {J}_3 (\partial_{\theta^{\star}}^3\psi)
\right]. 
\eea
Please note, that the coefficients $E_1, E_2,..$ etc.~ are not the same as in \eqref{error_structure2}.

\subsection{Relevant terms appearing in the error terms}
\lb{S_error}
Looking at \eqref{comm}, it is evident, that the higher order terms of the error terms are defined by the following:
\bea
\lb{pi-Y-termevv}
(\pi^Y)^{vv}&=&(\pi^Y)^{uu}=(\pi^Y)^{v{\theta_C}}=0,\\
(\pi^Y)^{uv}&=&-\frac{1}{2\Omega^2}\frac{\partial_{\theta^{\star}} \Omega }{\Omega}Y^{\theta^{\star}} ,\\
(\pi^Y)^{u {\theta_C}}&=&-\frac{b^{\tilde{\phi}}}{4\Omega^2}\partial_{\tilde{\phi}}Y^{\theta_C}-\frac{b^{\theta_C}}{2\Omega^2}\frac{\partial_{\theta^{\star}} \Omega }{\Omega}Y^{\theta^{\star}}\nonumber\\
&&+\frac{1}{4\Omega^2}(\gin^{-1})^{{\theta_C}{\theta_D}}\partial_{\theta^{\star}}b_{\theta_D} Y^{\theta^{\star}}-\frac{b^{\theta_D}}{4\Omega^2}(\gin^{-1})^{{\theta_C}{\theta_A}}\partial_{\theta^{\star}}\gin_{{\theta_D}{\theta_A}} Y^{\theta^{\star}},\\
\lb{pi-Y-termeAB}
(\pi^Y)^{{\theta_C} {\theta_D}}&=& \frac12(\gin^{-1})^{{\theta_C}{\theta_A}}\partial_{{\theta_A}}Y^{\theta_D}+ \frac12(\gin^{-1})^{{\theta_C}{\theta_D}}\partial_{{\theta_A}}Y^{\theta_C}\nonumber\\
&&+\frac12 (\gin^{-1})^{{\theta_C}{\theta_A}}(\gin^{-1})^{{\theta_D}{\theta_B}}\partial_{\theta^{\star}} \gin_{{\theta_A}{\theta_B}} Y^{\theta^{\star}},
\eea
where $Y$ is as in Section \ref{angular}.

\section{The wave equation and higher derivatives}
\lb{wave_equation}
The wave on fixed background is given by
\bea 
\lb{box_curved} 
\Box_g \psi=\frac{1}{\sqrt{-g}} \frac{\partial}{\partial
x^{\mu}}\left(g^{\mu \nu}\sqrt{-g}\frac{\partial
\psi}{\partial x^{\nu}}\right), 
\eea 
where $\sqrt{-g}=2\Omega^2L\sin \theta$.
Therefore, 
the wave equation in Eddington--Finkelstein-like coordinates on fixed Kerr background yields
\bea
\lb{wave_eq_eddf}
\Box_g \psi&=&\frac{1}{2\Omega^2}\left[-\frac{\partial_u |L\sin \theta|}{|L\sin\theta|}\partial_v \psi-\frac{\partial_v |L\sin \theta|}{|L\sin\theta|}\partial_u \psi-\frac{\partial_u |b^{\tilde{\phi}}L\sin \theta|}{|L\sin\theta|}\partial_{\tilde{\phi}} \psi\right]\nonumber\\
&&-\frac{1}{\Omega^2}\partial_u\partial_v \psi-\frac{b^{\tilde{\phi}}}{\Omega^2}\partial_u\partial_{\tilde{\phi}} \psi+\Dell\psi\nonumber\\
&&+\frac{2\partial_{\theta^{\star}}\Omega}{\Omega}\left[\frac{R^2}{L^2}\partial_{\theta^{\star}}\psi-\left(\frac{\partial h}{\partial \theta^{\star}}\right)\frac{R^2}{L^2}\partial_{\tilde{\phi}}\psi\right]=0,
\eea
where
\bea 
\lb{dell_curved} 
\Dell \psi=\frac{1}{\sqrt{-\gin}} \frac{\partial}{\partial
x^{{\theta_C}}}\left(\gin^{{\theta_C} {\theta_D}}\sqrt{-\gin}\frac{\partial
\psi}{\partial x^{{\theta_D}}}\right), 
\eea 
with $\sqrt{-\gin}=L\sin \theta$ and $C,D=1,2$ and $\theta_1=\theta^{\star}$, $\theta_2=\tilde{\phi}$.
So we get
\bea 
\lb{dell_expression} 
\Dell \psi&=&\frac{\partial_{\theta^{\star}}|L\sin\theta|}{|L\sin\theta|}\left[\frac{R^2}{L^2}\partial_{\theta^{\star}}\psi-\left(\frac{\partial h}{\partial \theta^{\star}}\right)\frac{R^2}{L^2}\partial_{\tilde{\phi}}\psi\right]\nonumber\\
&&+\partial_{\theta^{\star}}\left(\frac{R^2}{L^2}\right)\partial_{\theta^{\star}}\psi-\partial_{\theta^{\star}}\left[\left(\frac{\partial h}{\partial \theta^{\star}}\right)\frac{R^2}{L^2}\right]\partial_{\tilde{\phi}}\psi\nonumber\\
&&+\frac{R^2}{L^2}\partial_{\theta^{\star}}\partial_{\theta^{\star}}\psi-2\left(\frac{\partial h}{\partial \theta^{\star}}\right)\frac{R^2}{L^2}\partial_{\theta^{\star}}\partial_{\tilde{\phi}}\psi\nonumber\\
&&+\left(\frac{1}{R^2\sin\theta}+\left(\frac{\partial h}{\partial \theta^{\star}}\right)\frac{R^2}{L^2}\right)\partial_{\tilde{\phi}}\partial_{\tilde{\phi}}\psi.
\eea 
Solving \eqref{wave_eq_eddf} to
\bea
\lb{uv_term}
\frac{1}{\Omega^2}\partial_u\partial_v \psi+\frac{b^{\tilde{\phi}}}{\Omega^2}\partial_u\partial_{\tilde{\phi}} \psi&=&\frac{1}{2\Omega^2}\left[-\frac{\partial_u |L\sin \theta|}{|L\sin\theta|}\partial_v \psi-\frac{\partial_v |L\sin \theta|}{|L\sin\theta|}\partial_u \psi-\frac{\partial_u |b^{\tilde{\phi}}L\sin \theta|}{|L\sin\theta|}\partial_{\tilde{\phi}} \psi\right]\nonumber\\
&&+\Dell\psi+\frac{2\partial_{\theta^{\star}}\Omega}{\Omega}\left[\frac{R^2}{L^2}\partial_{\theta^{\star}}\psi-\left(\frac{\partial h}{\partial \theta^{\star}}\right)\frac{R^2}{L^2}\partial_{\tilde{\phi}}\psi\right],
\eea
will enable us to control the mixed term $\partial_u\partial_v \psi$ by using the right hand side of the equation, once it is shown that all coefficients are bounded. This will be needed in order to estimate error terms.

\end{appendix}

 
\end{document}